\pgfplotsset{compat=newest}
\newtheorem{theorem}{Theorem}
\newtheorem{corollary}{Corollary}
\newenvironment{manualcorollary}[1]{%
  \manualtheoreminner
}{\endmanualtheoreminner}
\newtheorem{lemma}{Lemma}
\newtheorem{proposition}{Proposition}
\newtheorem{remark}{Remark}
\theoremstyle{definition}
\newtheorem{assumption}{Assumption}
\newtheorem{example}{Example}
\newcommand{\ul}{\underline}
\newcommand{\ol}{\overline}
\newcommand{\df}{\mathrm{d}}
\newcommand{\bdis}{\begin{displaymath}}
\newcommand{\edis}{\end{displaymath}}
\newcommand{\beq}{\begin{equation}}
\newcommand{\eeq}{\end{equation}}
\newcommand{\bea}{\begin{eqnarray*}}
\newcommand{\eea}{\end{eqnarray*}}
\newcommand{\bean}{\begin{eqnarray}}
\newcommand{\eean}{\end{eqnarray}}
\newcommand{\R}{\mathbb{R}}
\newcommand{\N}{\mathbb{N}}
\newcommand{\E}{\mathbb{E}}
\DeclareMathOperator\supp{supp}
\DeclareMathOperator*{\argmax}{arg\,max}
\DeclareMathOperator*{\argmin}{arg\,min}
\newcommand{\1}{\mbox{\bf 1}}
\def \a {\alpha }
\def \b {\beta }
\def \c {\gamma }
\begin{document}
\title[Persuasion and Matching]{Persuasion and Matching: \linebreak Optimal Productive Transport}
\author[\uppercase{Kolotilin, Corrao, and Wolitzky}]{\larger \textsc{Anton Kolotilin, Roberto Corrao, and Alexander Wolitzky}}
\date{\today}
\thanks{\\
\textit{Kolotilin}: School of Economics, UNSW Business School. \\
\textit{Corrao} and \textit{Wolitzky}: Department of Economics, MIT.
\ \\
This paper supercedes earlier papers titled ``Persuasion with Non-Linear Preferences,'' ``Persuasion as Matching,'' and ``Assortative Information Disclosure.'' For helpful comments and suggestions, we thank Jak{\v{s}}a Cvitani{\'c}, Laura Doval, Piotr Dworczak, Jeffrey Ely, Drew Fudenberg, Emir Kamenica, Elliot Lipnowski, Stephen Morris, Paula Onuchic, Eran Shmaya, and Andriy Zapechelnyuk, as well as many seminar participants. We thank Daniel Clark and Yucheng Shang for excellent research assistance. Anton Kolotilin gratefully acknowledges support from the
Australian Research Council Discovery Early Career Research Award
DE160100964 and from MIT Sloan's Program on Innovation in Markets and
Organizations. Alexander Wolitzky gratefully acknowledges support from NSF CAREER Award 1555071 and Sloan Foundation Fellowship 2017-9633.}

\begin{abstract}
We consider general Bayesian persuasion problems where the receiver's utility is single-peaked in a one-dimensional action. We show that a signal that pools at most two states in each realization is always optimal, and that such \emph{pairwise} signals are the only solutions under a non-singularity condition (the \emph{twist condition}). Our core results provide conditions under which riskier prospects induce higher or lower actions, so that the induced action is \emph{single-dipped} or \emph{single-peaked} on each set of nested prospects. We also provide conditions for the optimality of either full disclosure or \emph{negative assortative disclosure}, where all prospects are nested. Methodologically, our results rely on novel duality and complementary slackness theorems. Our analysis extends to a general problem of assigning one-dimensional inputs to productive units, which we call \emph{optimal productive transport}. This problem covers additional applications including club economies (assigning workers to firms, or students to schools), robust option pricing (assigning future asset prices to price distributions), and partisan gerrymandering (assigning voters to districts).\\

\noindent\emph{JEL\ Classification:}\ C78, D82, D83 \\ 

\noindent\emph{Keywords:} Bayesian persuasion, information design, first-order approach, 
optimal transport, duality, complementary slackness, pairwise signal, 
single-dipped signal, negative assortative disclosure, club economies, option pricing, gerrymandering

\end{abstract}

\maketitle
\thispagestyle{empty}

\let\MakeUppercase\relax %

\newpage
\clearpage
\pagenumbering{arabic} 
\section{Introduction}

\label{s:intro}

Following the seminal papers of \citet{RS} and \citet{KG}, the past decade has witnessed an explosion of interest in the design of optimal information disclosure policies, or Bayesian persuasion. While significant progress has been made in the special case where the sender's and receiver's utilities are linear in the unknown state (e.g., \citealt{GK-RS}, \citealt{KMZL}, \citealt{Kolotilin2017}, \citealt{DM}, \citealt{KMS})---so that a distribution over states can be summarized by its mean---general results beyond this simple case remain scarce. The literature to date thus has little to say about the qualitative implications of economically natural curvature properties of utilities, or about the robustness of optimal disclosure patterns uncovered in the linear case when utilities are non-linear.

This paper studies persuasion with non-linear preferences as an instance of a general class of economic models that we call \emph{optimal productive transport}. In the persuasion context, we consider a standard setting with one sender and one receiver, where the receiver's action and the state of the world are both one-dimensional. We assume that the sender always prefers higher actions, the receiver prefers higher actions at higher states, and the receiver's expected utility is single-peaked in his action for any belief about the state. In this model, the receiver's action is optimal iff his expected marginal utility from increasing his action equals zero: that is, iff the receiver's first-order condition holds. This \emph{first-order approach} is key for tractability. We provide three types of results, all of which have general analogues beyond persuasion.

First, we show that it is always without loss to focus on \emph{pairwise signals}, where each induced posterior belief has at most binary support. Moreover, under a non-singularity condition on the sender's and receiver's utilities---which we call the \emph{twist condition}---every optimal signal is pairwise. 

Second, we ask when it is optimal for the sender to induce higher actions with riskier or safer prospects. In particular, when the sender pools two extreme states $x_1 < x_4$ and separately pools two moderate states $x_2 \leq x_3$ such that $x_1 < x_2 \leq x_3 < x_4$, do the extreme states induce a higher action---in which case we say that disclosure is \emph{single-dipped}, as the receiver's action is single-dipped on the set $\{x_1 ,x_2 ,x_3 ,x_4 \}$---or a lower action---in which case we say that disclosure is \emph{single-peaked}? This question turns out to be key for understanding optimal disclosure patterns with non-linear preferences. Our core results provide general conditions for the optimality of single-dipped disclosure (and, similarly, single-peaked disclosure). The conditions are based on the following simple idea. If disclosure is not single-dipped, then there must exist a \emph{single-peaked triple}: a pair of pooled states $x_1 <x_4$ and an intervening state $ x_2 \in ( x_1,  x_4)$ such that the induced action at $ x_2$ (say, action $y_2$) is greater than the induced action at $\{ x_1,  x_4\}$ (say, action $y_1$). Our conditions ensure that any single-peaked triple can be profitably perturbed in the direction of single-dippedness by shifting weight on $ x_1$ and $ x_4$ from $y_1$ to $y_2$, while shifting weight on $ x_2$ in the opposite direction. 

Third, we provide conditions for the optimality of either \emph{full disclosure}, where the state is always disclosed, or (more interestingly) \emph{negative assortative disclosure}, where all states are paired in a negatively assortative manner, so that all prospects can be ordered from safests to riskiest, and only a single state ``in the middle'' is disclosed. Intuitively, full disclosure and negative assortative disclosure represent the extremes of maximum disclosure (disclosing all states) and minimal pairwise disclosure (disclosing only one state). There is a unique full disclosure outcome, but there are many negative assortative disclosure outcomes, depending on the weights on the states in each pair. We further characterize the optimal negative assortative disclosure pattern as the solution of a pair of ordinary differential equations, and provide examples where these equations admit an explicit solution.

While this paper is mainly motivated by Bayesian persuasion, the theory we develop applies equally to several other applications. We consider three: club economies (e.g., assigning workers to firms to maximize output, or assigning students to schools to maximize welfare), robust option pricing (assigning future asset prices to price distributions to bound the price of a derivative), and partisan gerrymandering (assigning voters to districts to maximize expected seat share). To facilitate the analysis of these applications, in Section \ref{s:applications} we recast our model in terms of assigning general ``inputs'' to ``productive units.'' Table 1 explains how our general model maps to each of our applications.

\begin{table} 
\centering
\small
\begin{tabular}{|p{3.1cm}|p{2.4cm}|p{2.9cm}|p{2.7cm}|p{3.3cm}|}
\hline
\textbf{Application} & Input ($x$) & Productive Unit ($\mu$) & Output ($y$) & Meaning of Single-Dippedness \\
\hline
\textbf{Persuasion:} & state & posterior & receiver action & riskier prospects induce higher actions \\
\hline
\textbf{Worker-Firm Matching:} & worker with ability $x$ & firm & intra-firm spillover & diverse firms are more productive \\
\hline
\textbf{Student-School Matching:} & student with ability $x$ & school & peer effect & diverse schools are more desirable \\
\hline
\textbf{Option Pricing:} & period-2 asset price & period-2 asset price distribution & period-1 asset price & riskier assets are more expensive \\
\hline
\textbf{Gerrymandering:} & voter with partisanship $x$ & district & probability that designer's party wins the district & polarized districts are stronger \\
\hline
\end{tabular}
\caption{Atlas of Our Applications}
\end{table}

Mathematically, our model combines aspects of a production problem (combining inputs to produce output) and a transportation problem (matching inputs and outputs to generate utility)---hence the name ``optimal productive transport.'' The model is a new kind of optimal transport problem. Our key technical results are duality and complementary slackness theorems for this problem. The closest strand of the optimal transport literature is that on \emph{martingale optimal transport} (e.g., \citealt{BHP}, \citealt{GHT}, \citealt{BJ}), which we discuss in Section \ref{s:duality}.\footnote{A few recent papers apply optimal transport to persuasion, but these works are not very related to ours either methodologically or substantively. \citet{PRS} and \citet{LinLiu} consider limited sender commitment; \citet{ABS22} consider persuasion with multiple receivers; \citet{MCS} focus on the question of when optimal signals partition a multidimensional state space.} 

The optimal productive transport framework nests a great deal of prior work, both in persuasion and in the contexts of the other applications we cover. Some key prior works are \citet{RS}, \citet{GL}, and \citet{GS2018} (on persuasion); \citet{ArnottRowse} and \citet{SaintPaul} (on matching); \citet{BJ} (on option pricing); and \citet{FH} (on gerrymandering). In light of our analysis, some of the main results in these papers can be viewed as showing that single-dipped or single-peaked disclosure is optimal---that is, that riskier or safer prospects induce higher actions---in some special settings. For instance, \citeauthor{FH}'s \citeyear{FH}\textquotedblleft matching slices\textquotedblright\ gerrymandering
solution, where a gerrymanderer creates electoral districts that pool extreme
supporters with similarly extreme opponents, and wins those districts with
the most extreme supporters and opponents with the highest probability, is an example of single-dipped disclosure. \citeauthor{GL}'s \citeyear{GL} non-monotone stress tests, where a regulator designs a test that pools the weakest banks that it wants to receive funding with the strongest banks (and subsequently pools less weak banks with less strong ones), such that the weakest and strongest banks receive the highest funding, is another such example. On the other hand, \citeauthor{GS2018}'s \citeyear{GS2018} \textquotedblleft nested intervals\textquotedblright\ disclosure rule, where a designer pools favorable states with similarly
unfavorable states, and persuades the receiver to take her preferred action
with higher probability at more moderate states, is an example of single-peaked disclosure.

\textbf{Organization.} The paper is organized as follows. Section \ref{s:model} presents our model in the context of persuasion. Section \ref{s:optimality} formulates primal and dual versions of our problem and establishes strong duality and complementary slackness. Section \ref{s:pairwise} shows that pairwise signals are without loss. Sections \ref{s:assortative} and \ref{s:MNAD} present our main substantive results: Section \ref{s:assortative} provides conditions for single-dipped or single-peaked disclosure to be optimal, and Section \ref{s:MNAD} provides conditions for full disclosure or negative assortative disclosure to be optimal. Section \ref{s:applications} reframes our model as ``optimal productive transport,'' and applies it to matching, option pricing, and gerrymandering, as well as some specific persuasion problems. Section \ref{s:conclusion} concludes. Additional results, as well as all proofs, are deferred to the Appendix or Online Appendix

\section{Persuasion with Non-Linear Preferences}

For concreteness, we exposit our model and main results in the context of Bayesian persuasion. In Section \ref{s:general}, we rephrase the model as a general problem of assigning inputs to productive units, which we call \emph{optimal productive transport}. This more general framing covers our matching, option pricing, and gerrymandering applications.

\label{s:model}

\subsection{Model} 

We consider a standard persuasion problem, where a sender chooses a signal to reveal information to a receiver, who then takes an action. The sender's utility $V(y,x)$ and the receiver's utility $U(y,x)$ depend on the receiver's action $y\in Y:=[0,1]$ and the state of the world $x\in [0,1]$. The sender and receiver share a common prior $\phi \in \Delta([0,1])$, with support $X:=\supp (\phi)$.\footnote{Throughout, for any compact metric space $X$, $\Delta(X)$ denotes the set of Borel probability measures on $X$, endowed with the weak* topology. For any $\mu\in \Delta (X)$, its support $\supp (\mu)$ is the smallest compact set of measure one.} A \textit{signal} $\tau\in \Delta (\Delta (X))$ is a distribution over posterior beliefs $\mu \in \Delta (X)$ such that the average posterior equals the prior: $\int\mu \mathrm{d} \tau=\phi$ (\citealt{AM}, \citealt{KG}). An \emph{outcome} $\pi \in \Delta( Y \times X)$ is a joint distribution over actions and states. As we will see, it is equivalent to view the sender as choosing a signal $\tau$ (the \emph{signal-based problem}) or as directly choosing an outcome $\pi$ subject to an obedience constraint (the \emph{outcome-based problem}).

We impose four standard assumptions on preferences, which are similar to those in canonical unidimensional models of communication such as signaling (\citealt{Spence}), cheap talk (\citealt{CS82}), and hard information disclosure (\citealt{SeidmannWinter}). First, utilities are smooth.

\begin{assumption}
\label{a:smooth} $V(y,x)$ and $u(y,x):=\partial U(y,x)/\partial y$ are three times differentiable.
\end{assumption}
Apart from the receiver's marginal utility $u$, we denote partial derivatives with subscripts: e.g., $V_{y}(y,x)=\partial V(y,x)/\partial y$.

Second, the receiver's expected utility is single-peaked in his action for any posterior belief. 

\begin{assumption}
\label{a:qc} $u(y,x)$ satisfies \emph{strict aggregate single-crossing} in $y$: for all posteriors $\mu\in \Delta( X)$, 
\begin{equation*}
\int_{X} u(y,x)\df\mu(x)=0 \implies \int_{X} u_{y}(y,x)\df\mu (x)<0.
\end{equation*}
\end{assumption}

\citeauthor{Quah2012} (\citeyear*{Quah2012}) and \citeauthor{CS} (\citeyear*{CS}) characterized a weak version of aggregate single-crossing. We provide an analogous characterization of strict aggregate single-crossing in Appendix \ref{a:ad}.  A sufficient condition is strict monotonicity of $u$ (or equivalently strict concavity of $U$): i.e., $u_{y}(y,x)<0$ for all $(y,x)$. In fact, Appendix \ref{a:ad} shows that strict aggregate single-crossing is equivalent to strict monotonicity up to a normalization.

Third, the receiver's optimal action satisfies an interiority condition.\footnote{The substance of Assumption \ref{a:int} is that for each $x$, there exists $y$ such that $u(y,x)=0$. Note that it can never be optimal for the receiver to take any $y$ such that $u(y, x)$ has a constant sign for all $ x$. We can then remove all such $y$ from $Y$ and renormalize $Y$ to $[0,1]$, so that Assumption \ref{a:int} holds.}

\begin{assumption}
\label{a:int} $\min_{x \in [0,1]
}u(0,x )=\max_{x \in [0,1]}u(1,x)=0$.
\end{assumption}

The key implication of Assumptions \ref{a:smooth}--\ref{a:int} is that for any posterior $\mu$, the receiver's optimal action $\gamma(\mu ):= \argmax_{y\in [0,1]} \int U(y,x)\df \mu$ is unique and is characterized by the first-order condition 
\begin{equation}
\int_X u(\gamma(\mu ),x )\df\mu(x)=0.    \label{e:FOCy}
\end{equation} 
Our assumptions thus allow a ``first-order approach'' to the persuasion problem, similar to the approach of \citet{Mirrlees75} and \citet{Holmstrom79} to the classical moral hazard problem.\footnote{The first-order approach to persuasion was introduced in \citet{Kolotilin2017}.} %

Uniqueness of the receiver's optimal action implies that any signal $\tau$ induces a unique outcome $\pi _{\tau }$, and that we can define the sender's indirect utility from inducing posterior $\mu$ as 
\[W(\mu)=\int_X V(\gamma(\mu),x)\df \mu(x).\]

Fourth, the sender prefers higher actions, and the receiver's utility is supermodular.
\begin{assumption}\label{a:ord}
$V_y(y,x)>0$ and $u_{x}(y,x)>0$.
\end{assumption}

Together with Assumptions \ref{a:smooth}--\ref{a:int}, Assumption \ref{a:ord} ensures that for each action $y$ there is a unique state $\chi(y)$ such that $u(y,\chi(y))=0$ (i.e., the receiver's optimal action at $\chi(y)$ is $y$), and that $\chi(y)$ is a strictly increasing, continuous function with range $[0,1]$. 

A common interpretation of the receiver's action $y \in [0,1]$ is that the receiver has a private type and makes a binary choice---say, whether to accept or reject a proposal---and $y$ is the receiver's choice of a cutoff type below which he accepts. This interpretation is especially useful for some special cases of the model, as we see next.\footnote{To spell out this interpretation, let $g(t| x )$ be the conditional density of the receiver's type $t\in\lbrack 0,1]$ given the state $ x \in \lbrack 0,1]$. The sender's and receiver's utilities from rejection are normalized to zero. The sender's and receiver's utilities from acceptance are functions $\tilde{v}(t, x )$ and $\tilde{u}(t, x)$, with $\tilde{u}(t, x )g(t| x )$ satisfying Assumption \ref{a:qc}. For $y \in [0,1]$ (interpreted as the cutoff such that the receiver accepts iff $t \leq y$), we recover our model with $V(y , x )=\int_{0}^{y }\tilde{v}(t, x )g(t| x )\df t$ and $U(y , x )=\int_{0}^{y }\tilde{u}(t, x)g(t| x )\df t$.}

\subsection{Special Cases} \label{s:special}

We list some leading special cases of the model, which we return to periodically to illustrate our results.

(1) The \emph{linear case} (\citealt{GK-RS}): $u(y, x)= x-y$ and $V(y, x)=V(y)$. That is, $\gamma (\mu)=\mathbb{E}_{\mu}[ x]$ and $V$ is state-independent. This is the well-studied case where the sender's indirect utility $W(\mu)$ depends only on $\E_\mu[ x]$.

(2) The \emph{linear receiver case}: %
$u(y, x)= x-y$ but $V$ is arbitrary (e.g., possibly state-dependent). Here the receiver's preferences are as in the linear case, while the sender's preferences are general. 

(2a) The \emph{separable subcase} (\citealt{RS}): $V(y, x)=w( x)G(y)$ with $w> 0$ and $G> 0$. An interpretation is that the receiver has a private type $t$ with distribution $G$ and accepts a proposal iff $\mathbb{E}_{\mu}[ x]\geq t$, and the sender's utility when the proposal is accepted is $w( x)$.\footnote{\citeauthor{RS} focused on the sub-subcase with the uniform distribution $G(y)=y$. They assume that the state $(x,z )$ is two-dimensional, that the sender's and receiver's marginal utilities are $V_y(y, x ,z )=z $ and $u(y, x )= x -y$, and that there are finitely many states $( x ,z )$, so generically the sender's utility can be written as $V_y(y, x )=w( x)$. \citet{Rayo}, \citet{NP}, and \citet{OR} consider the separable subcase where $ x$ is continuous and $(x,z )$ is supported on the graph of $ x\rightarrow w( x)$. \citet{Rochetvila}, \citet{T18}, \citet{KX}, and \citet{DK} allow more general distributions of $(x,z)\in\R^2$.}

(2b) The \emph{translation-invariant subcase}: %
$V(y, x)=P(y- x)$. An interpretation is that the receiver ``values'' the proposal at $\mathbb{E}_{\mu}[ x]$, and the sender's utility depends on the amount by which the proposal is ``over-valued,'' $\mathbb{E}_{\mu}[ x]- x$. For example, a school may care about the extent to which its students are over- or under-placed. These preferences are similar to those in \citeauthor{GL}'s \citeyear{GL} model of stress tests (see Appendix \ref{s:GL}).

(3) The \emph{state-independent sender case}: %
$V(y, x)=V(y)$ but $u$ is arbitrary. Here the sender's preferences are as in the linear case, while the receiver's preferences are general.

(3a) The \emph{separable subcase}: $u(y, x)=I( x)( x-y)$, with $I>0$. This subcase extends the linear case by letting the receiver put more weight on some states than others.

(3b) The \emph{translation-invariant subcase}: $u(y, x)=T( x-y)$, with $T(0)=0$.
 An example that fits this subcase is that the sender's utility when the proposal is accepted is $1$, and accepting the proposal corresponds to the receiver undertaking a project that can either succeed or fail, where the receiver's payoff is $1-\kappa$ when the project succeeds and $-\kappa$ when it fails (and $0$ when it is not undertaken), with $\kappa\in (0,1)$. The difficulty of the project is $1- x$, the receiver's ability is $1-t$, the receiver's ``bad luck'' $\varepsilon$ has distribution $J$, and the project succeeds iff $1- x \leq 1-t -\varepsilon$, or equivalently $\varepsilon \leq  x-t$. This example fits the current subcase with $V$ equal to the distribution of $t$ and $T( x -y)=J( x -y)-\kappa$.

(3c) The \emph{quantile sub-subcase}: $u(y, x)=\1 \{ x\geq y\}-\kappa$, with $\kappa\in (0,1)$. This subcase corresponds to the previous example with $J( x -y)=\1 \{ x\geq y\}$, so the project succeeds iff the receiver's ability exceeds the project's difficulty. While $u$ is now discontinuous, we admit this subcase as a limit of the translation-invariant case. %
\citet{YangZentefis} also study the quantile sub-subcase.

\section{Optimality Conditions} \label{s:optimality}

This section establishes optimality conditions that form the basis for our analysis. Section \ref{s:duality} formulates signal-based and outcome-based primal and dual problems, and shows that they are equivalent. We will make use of both formulations. Section \ref{s:contact} establishes our key complementary slackness theorem. %

\subsection{Primal and Dual Programs}

\label{s:duality}

The sender's signal-based primal problem is to find a signal $\tau \in \Delta(\Delta(X))$ to
\begin{gather}
\text{maximize} \quad \int_{\Delta (X)} W(\mu) \df \tau (\mu)\tag{P}\label{PS}\\
\text{subject to}\quad \int_{\Delta(X)} \mu \df \tau (\mu) = \phi. \tag{BP}\label{PS1}
\end{gather}
Here, the primal constraint \eqref{PS1} is the usual \emph{Bayes plausibility} constraint \citep{KG}.

Next, let $L(X)$ denote the set of Lipschitz continuous functions on $X$. The signal-based dual problem is to find a \emph{price function} $p\in L(X)$ to
\begin{gather}
\text{minimize}\quad  \int_X p(x)\df \phi (x) \tag{D}\label{DS}\\
\text{subject to}\quad \int_X p(x)\df \mu(x)\geq W(\mu), \quad \text{for all $\mu\in \Delta(X)$.}  \tag{ZP}\label{DS1}	
\end{gather}
The interpretation is that $p(x)$ is the shadow price of state $x$, and the dual constraint \eqref{DS1} is the \emph{zero profit} condition that the sender's indirect utility from inducing any posterior $\mu$ cannot exceed the expectation of $p(x)$ under $\mu$. This interpretation will become clearer in the general framework of Section \ref{s:general}.

A preliminary result is that strong duality holds: (optimal) solutions to \eqref{PS} and \eqref{DS} exist and give the same value.
\begin{lemma} \label{l:dual} There exists $\tau\in \Delta(\Delta(X))$ that solves \eqref{PS}; there exists $p\in L(X)$ that solves \eqref{DS}; and the values of \eqref{PS} and \eqref{DS} are equal: for any solutions $\tau$ of \eqref{PS} and $p$ of \eqref{DS}, we have 
\begin{equation*}
\int_{\Delta(X) }W(\mu)\df\tau (\mu )=\int_{X}p(x)\df\phi (x).
\end{equation*}
\end{lemma}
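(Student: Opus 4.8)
Weak duality is immediate: if $\tau$ is feasible for \eqref{PS} and $p$ is feasible for \eqref{DS}, then integrating the zero-profit inequality \eqref{DS1} against $\tau$ and using Bayes plausibility \eqref{PS1} gives $\int_{\Delta(X)}W(\mu)\,\df\tau(\mu)\le\int_{\Delta(X)}\big(\int_X p\,\df\mu\big)\df\tau(\mu)=\int_X p\,\df\phi$. So the real content is to (i) exhibit an optimal $\tau$, (ii) rule out a duality gap, and (iii) exhibit an optimal $p$ that is \emph{Lipschitz}, not merely continuous.

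The plan hinges on first showing that the sender's indirect utility $W$ is Lipschitz on $\Delta(X)$ with respect to the Wasserstein-$1$ metric $W_1$. The receiver's best reply $\gamma(\mu)$ is characterized by the first-order condition \eqref{e:FOCy}; Assumption \ref{a:qc} (together with compactness of $X\times Y$) makes $\int_X u_y(\gamma(\mu),x)\,\df\mu(x)$ bounded away from $0$ near the solution set, and combining this with the implicit function theorem and the elementary estimate $\big|\int_X u(y,x)\,\df(\mu-\nu)(x)\big|\le(\sup|u_x|)\,W_1(\mu,\nu)$ shows $\gamma$ is Lipschitz in $W_1$. Substituting into $W(\mu)=\int_X V(\gamma(\mu),x)\,\df\mu(x)$ and using smoothness of $V$ then gives a Lipschitz bound on $W$, with some constant $K$. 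In particular $W$ is weak$^*$-continuous; since the primal feasible set $\{\tau:\int\mu\,\df\tau=\phi\}$ is a weak$^*$-closed subset of the weak$^*$-compact set $\Delta(\Delta(X))$ and $\tau\mapsto\int W\,\df\tau$ is weak$^*$-continuous, the supremum in \eqref{PS} is attained, which is (i).

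For (ii) I would identify the primal value with $\ol{W}(\phi)$, where $\ol{W}$ is the concave envelope of $W$ on $\Delta(X)$ (the standard concavification identity for Bayesian persuasion), and identify the dual value with the least value at $\phi$ of a continuous affine minorant of $W$ on $\Delta(X)$; these agree by a separating-hyperplane argument applied to the hypograph of $\ol{W}$ at $\phi$, or equivalently by Sion's minimax theorem applied to the Lagrangian $\int W\,\df\tau+\int p\,\df\big(\phi-\int\mu\,\df\tau\big)$ over the weak$^*$-compact set $\Delta(\Delta(X))$ and over prices $p$, using weak$^*$-continuity of the Lagrangian in $\tau$. This delivers strong duality and an optimal \emph{continuous} price. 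For (iii), the $W_1$-Lipschitz property of $W$ is what upgrades continuity to Lipschitz continuity: $\ol{W}$ inherits concavity and the $K$-Lipschitz bound in $W_1$, and because $W_1$ is exactly the norm dual to Lipschitz functions (Kantorovich--Rubinstein), a supporting affine functional to $\ol{W}$ at $\phi$ can be represented by a price $p\in L(X)$ with Lipschitz constant at most $K$; this $p$ is feasible for \eqref{DS} and attains the dual value.

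I expect the main obstacle to be step (iii), and relatedly the infinite-dimensional bookkeeping in (ii): one must ensure $\ol{W}$ is genuinely supported by a continuous affine functional at $\phi$. Because $\Delta(X)$ has empty interior in the space of Borel measures, this does not follow from finite-dimensional convex analysis; it is precisely where Lipschitz continuity of $\ol{W}$ in the Kantorovich--Rubinstein norm must be invoked, both to produce a supporting functional and to control its Lipschitz constant. The remaining pieces — weak$^*$-continuity of $W$, compactness of the primal feasible set, and closedness of the dual feasible set under uniform convergence (so that a minimizing sequence of $K$-Lipschitz, uniformly bounded prices has, by Arzel\`a--Ascoli, a convergent subsequence with feasible Lipschitz limit) — are routine.
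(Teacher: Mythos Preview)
Your proposal is correct and follows the paper's approach exactly: the paper's proof consists of (a) showing $W$ is Lipschitz in the Kantorovich--Rubinstein metric via the implicit function theorem applied to the receiver's first-order condition---precisely your argument for Lipschitz continuity of $\gamma$ and then of $W$---and (b) invoking Corollary~2 of \citet{DK} as a black box for the strong-duality-with-Lipschitz-price conclusion. You unpack (b) rather than cite it; your outline is sound, but the two sub-claims in step~(iii) that you treat as immediate each require a short argument you do not supply: that $\ol W$ inherits the $K$-Lipschitz bound (transport each posterior in a near-optimal decomposition of $\mu$ along an optimal $\mu$--$\nu$ coupling and use $K$-Lipschitzness of $W$), and that a $K$-Lipschitz supporting functional exists despite the empty-interior obstruction you correctly flag (apply the sandwich form of Hahn--Banach between the concave $\nu\mapsto \ol W(\phi+\nu)-\ol W(\phi)$ and the sublinear $\nu\mapsto K\|\nu\|_{KR}$ on zero-mass measures, then use Kantorovich--Rubinstein duality to identify the resulting bounded functional with a Lipschitz $p$). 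Your closing Arzel\`a--Ascoli remark is circular as a standalone argument---it presupposes a $K$-Lipschitz minimizing sequence, which is exactly what (iii) must produce---but it is harmless once the supporting-functional argument is in place.
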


Lemma \ref{l:dual} follows by showing that $W(\mu)$ is Lipschitz continuous and applying Corollary 2 of \citet{DK}, which in turn generalizes Theorem 2 of \citet{DM} from linear persuasion problems to non-linear ones.\footnote{Corollary 2 of \citet{DK} is proved using strong duality in an optimal transport problem, as in \citet{villani}. Further duality results for persuasion problems include those of \citet{DizdarKovac}, \citet{KX}, \citet{GLP}, and \citet{SY}.}

Next, the outcome-based primal problem is to find an outcome $\pi \in \Delta(Y \times X)$ to
\begin{gather}
\text{maximize}\quad  \int_{Y\times X }V(y,x)\df\pi (y,x) 
\tag{P'}\label{PO} \\
\text{subject to}\quad  \int_{Y\times \widetilde{X }}\df\pi (y,x)
=\int_{\widetilde{X}}\df\phi (x),\quad  \text{for all measurable }\widetilde{X }\subset X ,  \tag{BP'}\label{PO1} \\
 \int_{\widetilde{Y}\times X }u(y,x)\df\pi (y,x)=0,\quad 
\text{for all measurable } \widetilde{Y}\subset Y.  \tag{OB}\label{PO2}
\end{gather}

Here, \eqref{PO1} is an outcome-based version of Bayes plausibility, which says that the marginal of $\pi$ on $X$ equals the prior, $\phi$; and \eqref{PO2} is the \textit{obedience} constraint that the receiver's action at each posterior $\mu$ is $\gamma(\mu)$. A joint distribution $\pi$ that violates \eqref{PO2} is inconsistent with optimal play by the receiver, as there exists $\widetilde{Y}\subset Y$ such that the receiver's play is suboptimal conditional on the event $\{y \in \widetilde{Y}\}$. Conversely, for any joint distribution $\pi$ that satisfies \eqref{PO1} and \eqref{PO2}, if the sender designs a mechanism that draws $(y,x)$ according to $\pi$ and recommends action $y$ to the receiver, it is optimal for the receiver to obey the recommendation. We therefore say that an outcome $\pi$ is \emph{implementable} iff it satisfies \eqref{PO1} and \eqref{PO2}, and \emph{optimal} iff it solves \eqref{PO}.

Finally, letting $B(Y)$ denote the set of bounded, measurable functions on $Y$, the outcome-based dual problem is to find $p\in L(X)$ and $q\in B(Y)$ to
\begin{gather}
\text{minimize}\quad \int_{X}p(x)\df\phi (x)  \tag{D'}\label{DO} \\
\text{subject to}\quad p(x) \geq V(y,x)+q(y)u(y,x),\quad
\text{for all }(y,x)\in Y\times X.  \tag{ZP'}\label{DO1}
\end{gather}

The interpretation is that $p(x)$ is the shadow price of state $x$; $q(y)$ is the value of relaxing the obedience constraint at action $y$; and \eqref{DO1} says that $p(x)$ is no less than the sender's value from assigning state $x$ to any action $y$, where this value is the sum of the sender's utility, $V(y,x)$, and the product of $q(y)$ and the amount by which obedience at $y$ is relaxed when state $x$ is assigned to action $y$, $u(y,x)$.

We now establish that the feasible (and optimal) price functions in the signal-based and outcome-based formulations are the same. In particular, by Lemma \ref{l:dual}, strong duality holds in the outcome-based formulation, as well as the signal-based one.\footnote{Strong duality in the outcome-based formulation is established under slightly different assumptions in Lemmas 1 and 2 of \citet{Kolotilin2017}. However, a key step in the proof---that $q$ can be taken to be bounded---is incomplete in \citet{Kolotilin2017}.}

\begin{lemma}\label{l:Deq}
A price function $p\in L(X)$ is feasible (optimal) for \eqref{DS} iff there exists $q\in B(Y)$ such that $(p,q)$ is feasible (optimal) for \eqref{DO}.
\end{lemma}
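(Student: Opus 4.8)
The plan is to prove the two directions separately, exploiting the fact that the only coupling between $x$ and $y$ in \eqref{DO1} is through the pointwise inequality, so that for a fixed $q$ the optimal (smallest feasible) $p$ is obtained by taking a supremum over $y$.

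\textbf{From \eqref{DO} to \eqref{DS}.} Suppose $(p,q)$ is feasible for \eqref{DO}. Fix any posterior $\mu\in\Delta(X)$ and let $y=\gamma(\mu)$, so that $\int_X u(y,x)\df\mu(x)=0$ by \eqref{e:FOCy}. Integrating \eqref{DO1} against $\mu$ at this particular $y$ gives $\int_X p(x)\df\mu(x)\geq \int_X V(\gamma(\mu),x)\df\mu(x)+q(\gamma(\mu))\int_X u(\gamma(\mu),x)\df\mu(x)=W(\mu)$, which is exactly \eqref{DS1}. So $p$ is feasible for \eqref{DS}, and the objective $\int_X p\,\df\phi$ is literally the same in both problems; hence feasibility (and, combined with Lemma \ref{l:dual} and the reverse direction, optimality) transfers.

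\textbf{From \eqref{DS} to \eqref{DO}.} This is the substantive direction. Given $p\in L(X)$ feasible for \eqref{DS}, I need to construct $q\in B(Y)$ with $p(x)\geq V(y,x)+q(y)u(y,x)$ for all $(y,x)$. Fix $y$. By Assumptions \ref{a:smooth}--\ref{a:ord}, $u(y,\cdot)$ is continuous, strictly increasing in $x$, and vanishes exactly at $x=\chi(y)$; so $u(y,x)$ is negative for $x<\chi(y)$ and positive for $x>\chi(y)$. The constraint to be satisfied, rearranged, is $q(y)\leq (p(x)-V(y,x))/u(y,x)$ for $x>\chi(y)$ and $q(y)\geq (p(x)-V(y,x))/u(y,x)$ for $x<\chi(y)$. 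Thus a feasible $q(y)$ exists iff
\[
\sup_{x\in X,\,x<\chi(y)}\frac{p(x)-V(y,x)}{u(y,x)}\ \leq\ \inf_{x\in X,\,x>\chi(y)}\frac{p(x)-V(y,x)}{u(y,x)},
\]
and one then sets $q(y)$ equal to (say) the left-hand side. The key claim is that this inequality follows from \eqref{DS1}: given $x_1<\chi(y)<x_2$ in $X$, the binary posterior $\mu$ putting weight $\lambda$ on $x_1$ and $1-\lambda$ on $x_2$, with $\lambda$ chosen so that $\lambda u(y,x_1)+(1-\lambda)u(y,x_2)=0$ (possible since the two values have opposite signs), induces action $\gamma(\mu)=y$ by \eqref{e:FOCy}, so $W(\mu)=\lambda V(y,x_1)+(1-\lambda)V(y,x_2)$; plugging into \eqref{DS1} and rearranging yields precisely $(p(x_1)-V(y,x_1))/|u(y,x_1)|\leq (p(x_2)-V(y,x_2))/u(y,x_2)$, which is the desired ordering. (If $X$ has no states on one side of $\chi(y)$, the corresponding sup or inf is over the empty set and the constraint is vacuous on that side, so any sufficiently large or small constant works; $X$ always contains $\chi(y)$ itself only if $\chi(y)\in X$, in which case $y=\gamma(\delta_{\chi(y)})$ and \eqref{DS1} at $\mu=\delta_{\chi(y)}$ already forces $p(\chi(y))\geq V(y,\chi(y))$, handled by the $u=0$ row of \eqref{DO1} directly.)

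\textbf{Boundedness and measurability of $q$.} The remaining work---and the step I expect to be the main obstacle, as the footnote flags that this is where \citet{Kolotilin2017} is incomplete---is to check that the $q$ just defined lies in $B(Y)$. Measurability should follow because $q(y)$ is a sup over a countable dense subset of $\{x<\chi(y)\}$ (using continuity in $x$) of jointly measurable functions of $(x,y)$, composed with the continuous map $y\mapsto\chi(y)$. Boundedness is more delicate: as $x\to\chi(y)$ the ratio $(p(x)-V(y,x))/u(y,x)$ is a $0/0$ form only if $p(\chi(y))=V(y,\chi(y))$, and in general one must bound it away from $\pm\infty$ uniformly in $y$. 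Here I would use that $p$ is Lipschitz and $V$ is $C^1$ with $u_x(y,x)>0$ bounded away from $0$ on the compact set $Y\times X$ (by Assumption \ref{a:ord} and continuity): writing $p(x)-V(y,x) = (p(\chi(y))-V(y,\chi(y))) + O(|x-\chi(y)|)$ and $u(y,x) = u_x(y,\xi)(x-\chi(y))$ with $u_x$ bounded below, the ratio stays bounded near $\chi(y)$ provided $p(\chi(y))-V(y,\chi(y))$ is controlled — and this quantity is bounded because $p$ is bounded (continuous on compact $X$, extended/evaluated along the continuous curve $\chi$) and $V$ is bounded. Away from $\chi(y)$, $|u(y,x)|$ is bounded below by a positive constant, so the ratio is bounded by $(\|p\|_\infty+\|V\|_\infty)/\min|u|$. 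Assembling these estimates gives a uniform bound on $|q(y)|$, so $q\in B(Y)$, completing the construction; optimality then transfers because the objectives coincide and Lemma \ref{l:dual} pins down the common optimal value.
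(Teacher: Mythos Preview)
Your structural setup is correct and matches the paper: the (DO)$\Rightarrow$(DS) direction is exactly as you write, and for (DS)$\Rightarrow$(DO) the construction of the feasible interval $[\underline q(y),\overline q(y)]$ from binary posteriors is right. The gap is in your boundedness argument, and it is the substantive one that the footnote warns about.

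You claim that near $\chi(y)$ the ratio $(p(x)-V(y,x))/u(y,x)$ ``stays bounded provided $p(\chi(y))-V(y,\chi(y))$ is controlled,'' and then argue this quantity is bounded. But boundedness of that constant term is the wrong condition: if $p(\chi(y))-V(y,\chi(y))>0$ (which \eqref{DS1} permits---it only forces $\geq$), the ratio is $c/0$ and diverges. It happens to diverge to $-\infty$ for $x\uparrow\chi(y)$ and $+\infty$ for $x\downarrow\chi(y)$, so for that \emph{fixed} $y$ the sup and inf are unaffected; but then your fallback ``away from $\chi(y)$, $|u|$ is bounded below'' produces a bound $(\|p\|_\infty+\|V\|_\infty)/\min|u|$ over a $y$-dependent region, and this is not uniform in $y$. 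Concretely, for $y$ with $p(\chi(y))-V(y,\chi(y))$ small but positive, the zone where your first estimate applies shrinks, forcing reliance on the crude bound where $|u|$ can be arbitrarily small. (Separately, $\chi(y)$ need not lie in $X$, so $p(\chi(y))$ and the ``extension'' of $p$ along $\chi$ are not defined without further work.)

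The paper's fix is to stop trying to bound the ratio involving $p$ directly and instead use the pointwise consequence $p(x)\geq V(\gamma(\delta_x),x)$ of \eqref{DS1} at $\mu=\delta_x$. For $u(y,x)<0$ this gives
\[
\frac{p(x)-V(y,x)}{u(y,x)}\ \leq\ \frac{V(\gamma(\delta_x),x)-V(y,x)}{u(y,x)}\ =:\ \tilde q(y,x),
\]
and the reverse inequality for $u(y,x)>0$. The function $\tilde q$ depends only on $V$ and $u$ (not on $p$), and a mean-value-theorem argument shows it extends continuously across the set $\{u(y,x)=0\}$ to the value $V_y(y,x)/(-u_y(y,x))$; hence $\tilde q$ is continuous on the compact set $Y\times X$ and uniformly bounded by some $C$. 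This immediately yields $\underline q(y)\leq C$ and $\overline q(y)\geq -C$ uniformly in $y$, after which a measurable-selection argument (the paper applies the measurable maximum theorem to the correspondence $y\mapsto[\underline q(y),\overline q(y)]\cap[-C,C]$, taking $q(y)=\argmin_{r\in Q(y)}|r|$) delivers $q\in B(Y)$. Your Lipschitz/$u_x$-bounded-below observations are the right ingredients for the continuity of $\tilde q$, but you applied them to the wrong numerator.
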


\subsection{Complementary Slackness}
\label{s:contact}
Letting $p$ be the optimal price function (which we will see in Remark \ref{r:unique} is unique), define the set
\begin{gather}\label{e:L}
\Lambda=\left \{\mu\in \Delta(X):\int_X p(x)\df \mu(x)=W(\mu)\right\}.	
\end{gather}
Note that $\Lambda$ is compact, because $\int p(x)\df \mu$ and $W(\mu)$ are continuous in $\mu$.

By Lemma \ref{l:dual}, together with \eqref{PS1}, a signal $\tau$ is optimal iff
\[
\int_{\Delta(X)}\left(\int_X p(x)\df \mu(x)-W(\mu) \right)\df \tau (\mu)=0.
\]
Hence, since the integrand is non-negative by \eqref{DS1} and $\Lambda$ is compact, $\tau$ is optimal iff $\supp (\tau)\subset \Lambda$. Any posterior $\mu \notin \Lambda$ is thus excluded from the support of any optimal signal. In analogy with the optimal transport literature (e.g., Section 3 in \citealt{abs}), we refer to the set $\Lambda$ as the \emph{contact set}.

The following is our main technical result.

\begin{theorem}\label{t:contact}
There exists $q\in B(Y)$ such that

(1) $(p,q)$ is optimal for \eqref{DO}; 

(2) for all $\mu$ in $\Lambda$ (and, thus, in the support of any optimal signal $\tau$), we have
\begin{gather}\label{e:q}
q(\gamma(\mu))=-\frac{\int_X V_y(\gamma(\mu),x)\df\mu(x) }{\int_X u_y(\gamma(\mu),x)\df\mu(x)};
\end{gather} %

(3) for all non-degenerate $\mu$ in $\Lambda$ (and, thus, in the support of any optimal signal $\tau$), the function $q$ has derivative $q'(\gamma(\mu))$ at $\gamma(\mu)$ satisfying, for all $x\in \supp (\mu)$,
\begin{gather}\label{e:FOC}
V_y(\gamma(\mu),x)+q(\gamma(\mu))u_y(\gamma(\mu),x)+q'(\gamma(\mu))u(\gamma(\mu),x)=0.
\end{gather}
\end{theorem}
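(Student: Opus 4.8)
The plan is to exploit the structure that the outcome-based dual constraint \eqref{DO1} must be tight precisely where mass is assigned, which is exactly the contact set $\Lambda$ translated into the language of $(y,x)$ pairs. First I would start from Lemma \ref{l:Deq}: an optimal $(p,q)$ for \eqref{DO} exists with $p$ the (unique) optimal price function for \eqref{DS}. This gives part (1) immediately, modulo choosing a good version of $q$. The key observation is complementary slackness between the outcome-based primal \eqref{PO} and dual \eqref{DO}: for an optimal outcome $\pi$ (which exists and pairs with $(p,q)$), the inequality $p(x)\geq V(y,x)+q(y)u(y,x)$ holds with equality $\pi$-almost surely. Now for any $\mu\in\Lambda$, I would argue that $\mu$ arises as (the $X$-marginal of) the conditional of some optimal $\pi$ on the event $\{y=\gamma(\mu)\}$; indeed, take any optimal $\tau$ with $\mu\in\supp(\tau)$ (such $\tau$ exists since $\Lambda$ equals the set of possible posteriors and one can always include any point of $\Lambda$ in the support of an optimal signal), push forward to the induced optimal $\pi_\tau$, and use that $\gamma$ is the obedient action. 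So the map $x\mapsto V(x,\gamma(\mu))+q(\gamma(\mu))u(\gamma(\mu),x)$ equals $p(x)$ for $\mu$-a.e.\ $x$, hence (by continuity of both sides in $x$ and the fact that $\supp(\mu)$ is the smallest closed full-measure set) for all $x\in\supp(\mu)$.

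Given that, part (2) is a short computation: integrate the equality $p(x)=V(\gamma(\mu),x)+q(\gamma(\mu))u(\gamma(\mu),x)$ against $\mu$, use the zero-profit tightness $\int p\,\df\mu=W(\mu)=\int V(\gamma(\mu),x)\df\mu$, and use the receiver's first-order condition \eqref{e:FOCy}, $\int u(\gamma(\mu),x)\df\mu=0$. This forces $q(\gamma(\mu))\int u(\gamma(\mu),x)\df\mu = 0$, which is vacuous, so instead I differentiate in $y$: the function $y\mapsto \int \big(V(y,x)+q(\gamma(\mu))u(y,x)\big)\df\mu(x)$ is maximized at $y=\gamma(\mu)$ over all $y$ — because $p(x)\geq V(y,x)+q(y)u(y,x)$ for all $y$, but here we want the version holding $q$ fixed at $q(\gamma(\mu))$, which needs a little care. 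The cleaner route for (2) is: for each fixed $x$, $y=\gamma(\mu)$ need not maximize $V(y,x)+q(y)u(y,x)$, but summing over $\supp(\mu)$ the quantity $\int\big(p(x)-V(y,x)-q(y)u(y,x)\big)\df\mu(x)\geq 0$ with equality at $y=\gamma(\mu)$, and this holds for the true $q(\cdot)$. Differentiating this in $y$ at $y=\gamma(\mu)$ (using Assumption \ref{a:smooth} to justify differentiating under the integral, and that $q$ is at least differentiable at this point — which I will need to establish) gives $\int\big(V_y+q(\gamma(\mu))u_y+q'(\gamma(\mu))u\big)\df\mu=0$; combined with $\int u\,\df\mu=0$ this yields both (2) and the integrated form of (3).

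To get the \emph{pointwise} statement (3) rather than just its $\mu$-average, I would use that $\gamma$ restricted to the set of non-degenerate posteriors in $\Lambda$ has, locally, enough freedom: for $x,x'\in\supp(\mu)$ one can consider the binary posterior $\mu_t$ supported on a small neighborhood, or more simply use that the tangency condition $p(x)=V(\gamma(\mu),x)+q(\gamma(\mu))u(\gamma(\mu),x)$ holds for \emph{all} $x\in\supp(\mu)$, and that as $\mu$ varies within $\Lambda$ the induced action $\gamma(\mu)$ varies, so differentiating the identity $p(x)=V(y(\mu),x)+q(y(\mu))u(y(\mu),x)$ along a curve in $\Lambda$ through $\mu$ and using that $p$ does not depend on $\mu$ kills the $\df p$ term and leaves exactly \eqref{e:FOC} at each $x\in\supp(\mu)$, provided $y(\mu)=\gamma(\mu)$ actually moves. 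The differentiability of $q$ at $\gamma(\mu)$ for non-degenerate $\mu$ is the delicate point and I expect it to be the main obstacle: one knows from \eqref{DO1} that $q(y)\leq \inf_{x}\frac{p(x)-V(y,x)}{u(y,x)}$ split by the sign of $u(y,x)$, i.e.\ $q$ is sandwiched between an upper envelope (inf over $x$ with $u(y,x)>0$) and a lower envelope (sup over $x$ with $u(y,x)<0$), and at $y=\gamma(\mu)$ with $\mu$ non-degenerate there are states in $\supp(\mu)$ on both sides of $\chi^{-1}$, i.e.\ with $u(\gamma(\mu),x)$ of both signs (since $\int u\,\df\mu=0$ and $\mu$ is non-degenerate), so the two envelopes touch and both are differentiable there by the envelope theorem and Assumption \ref{a:smooth}; a squeeze argument then forces $q$ to be differentiable at $\gamma(\mu)$ with the common derivative. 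I would present this squeeze/envelope argument carefully, as it is where Assumptions \ref{a:smooth}, \ref{a:qc}, and \ref{a:ord} all get used, and it is the only genuinely nontrivial analytic step; the rest is bookkeeping with complementary slackness and the first-order condition.
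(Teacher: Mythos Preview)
Your squeeze argument for part (3) is essentially the paper's approach: for non-degenerate $\mu\in\Lambda$ pick $x_1<\chi(\gamma(\mu))<x_2$ in $\supp(\mu)$, and use the constraint \eqref{DO1} at nearby $\tilde y$ with $x_1$ (where $u<0$) and $x_2$ (where $u>0$) to bound the difference quotients of $q$ from both sides, forcing the four Dini derivatives to coincide at a common value $C_1=C_2=-[V_y+q u_y]/u$. This is exactly the sandwich you describe, though the paper executes it directly on Dini derivatives rather than invoking an envelope theorem for $\ol q,\ul q$; your appeal to ``the envelope theorem'' is unjustified as stated (the minimizer need not be unique or continuous), so you would end up doing the Dini computation anyway. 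The ``curve in $\Lambda$'' idea is unnecessary and may fail since $\Lambda$ need not contain any nontrivial arcs.

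The genuine gap is part (2) for \emph{degenerate} $\mu=\delta_x\in\Lambda$. Here $u(\gamma(\mu),x)=0$, so the tight constraint $p(x)=V(\gamma(\mu),x)+q(\gamma(\mu))u(\gamma(\mu),x)$ places no restriction on $q(\gamma(\mu))$; your integration attempt is vacuous, as you noted, and your differentiation route is circular (it presupposes differentiability of $q$ at $\gamma(\mu)$, which part (3) does not even claim for degenerate $\mu$). The paper handles this by \emph{choosing} $q$: starting from any feasible $q$ (via Lemma \ref{l:Deq}), it redefines $q(y)=-V_y(y,\chi(y))/u_y(y,\chi(y))$ at every $y$ with $p(\chi(y))=V(y,\chi(y))$, and then must prove this redefined $(p,q)$ is still feasible for \eqref{DO}. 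That verification is a nontrivial limiting argument (perturb $\delta_{\chi(y)}$ toward $\delta_x$ and pass to the limit) and is the step you are missing. Once this choice is made, (2) holds by construction for degenerate $\mu$, and for non-degenerate $\mu$ it follows from (3) by integrating \eqref{e:FOC} against $\mu$ and using \eqref{e:FOCy}---the reverse of the order you attempted.
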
 
 
Theorem \ref{t:contact} is our key tool for characterizing optimal signals. %
Intuitively, by complementary slackness, the support of any optimal outcome $\pi$ is contained in the set of points $(y,x)$ that satisfy \eqref{DO1}. Thus, if it is ever optimal to induce action $y$ at state $x$---i.e., if $y$ maximizes $V(y,x)+q(y)u(y,x)$---then $y$ must satisfy the first-order condition
\[
V_y(y,x)+ q(y)u_y(y,x)+q'(y)u(y,x)=0,
\]
which is just \eqref{e:FOC} with $\gamma(\mu)=y$. Moreover, taking the expectation of this equation with respect to posterior $\mu$ yields \eqref{e:q}. This equation simply says that $q(y)$ equals the product of the sender's expected marginal utility at $y$ and the rate at which $y$ increases as obedience is relaxed, where the latter term equals $-1/\E_{\mu}[u_y(y,x)]$ by the implicit function theorem applied to obedience.

As shown in Appendix \ref{a:r:unique}, another implication of Theorem \ref{t:contact} is:

\begin{remark}\label{r:unique}
There is a unique solution $p$ to (D).
\end{remark}

Lemmas \ref{l:dual}--\ref{l:Deq} and Theorem \ref{t:contact} can be compared to results in optimal transport. In standard optimal transport, two marginal distributions are given (e.g., of men and women, or workers and firms), and the problem is to find an optimal joint distribution with the given marginals. In our problem, the marginal distribution over states is given (by the prior $\phi$), and the problem is to find an optimal joint distribution with this marginal, where for each action the conditional distribution over states satisfies obedience. Strong duality and complementary slackness theorems are likewise key tools in optimal transport (e.g., \citealt{villani}, Theorem 5.10), but the relevent versions of these results differ from ours.\footnote{For example, in standard optimal transport, both dual variables appear in the dual objective function, and they are both uniquely determined.}

The most relevant strand of the optimal transport literature is that on \emph{martingale optimal transport (MOT)}. The MOT problem is to find an optimal joint distribution of two variables (say, $y$ and $x$) with given marginals, subject to the martingale constraint that the expectation of $x$ given $y$ is $y$. This problem coincides with our linear receiver case, but with an exogenously fixed distribution of the receiver's action. Motivated by problems in mathematical finance, \citet{BHP} (see also \citealt{BNT}) introduce MOT and prove that the primal and dual problems have the same value; however, they also show that their dual problem may not have a solution, unlike in our model with endogenous actions (or in standard optimal transport). Results in MOT also do not establish compactness of the contact set, which holds in our model as well as in standard optimal transport. Thus, MOT is related to our linear receiver case, but the endogenous action distribution apparently makes our model more tractable.\footnote{The MOT literature uses the contact set of an outcome-based dual problem. See \citet{KCW} for an alternative development of the results in the current paper that relies on the contact set of our outcome-based dual, \eqref{DO}. The approach in the current version, which is based on the contact set $\Lambda$ of the signal-based dual, \eqref{DS}, turns out to be simpler.}

\section{Pairwise Disclosure and the Twist Condition}

\label{s:pairwise}

Our first substantive result is that there is always an optimal signal that pools at most two states in every realized posterior, and that under an additional condition every optimal signal has this property. This result simplifies the persuasion problem to a generalized matching problem, where the sender chooses what pairs of states to match together, and with what weights.

Formally, a set of posteriors $M \subset \Delta(X)$ is \emph{pairwise} if $|\supp (\mu) |\leq 2$ for all $\mu \in M$. A signal $\tau$ is \emph{pairwise} if $\supp (\tau)$ is pairwise: that is, a pairwise signal induces posterior beliefs with at most binary support. For example, with a uniform prior, for any cutoff $\hat x \in [0,1]$ the signal that reveals states below the cutoff and pools each pair of states $x$ and $1+\hat x-x$ for $x \in [\hat x,(1+\hat x)/2]$ to induce posterior $\mu =\delta _{x}/2+\delta _{1+\hat x-x}/2$ is pairwise. The special case where $\hat x =1$ is full disclosure, which is also pairwise. In contrast, no disclosure, where $\tau(\phi)=1$, is not pairwise.\footnote{See Figure \ref{f:DP}. The ``disclose-pair'' pattern in Panel d.\ is reminiscent of this example, but with different weights on the states in each pair.}

If the receiver's utility is not quasi-concave, pairwise signals may be suboptimal. For example, suppose the sender rules three castles, one of which is undefended. The state $x$---the identity of the undefended castle---is uniformly distributed. Suppose the receiver can attack any two castles, and payoffs are $(-1,+1)$ for the sender and receiver, respectively, if the receiver attacks the undefended castle, and are $(+1,-1)$ otherwise. Then any pairwise signal narrows the set of possibly undefended castles to at most two, so the receiver always wins. But if the sender discloses nothing, the receiver wins only with probability $2/3$.\footnote{Pairwise signals are also suboptimal in the price-discrimination problem of \citeasnoun{BBM}, as well as in \citeasnoun{Brzutowski}, where $U(y,x)=\mathbf{1}\{ y \geq x \} -y$. In these models, the receiver's utility is not quasi-concave.}

In contrast, pairwise signals are without loss under Assumptions \ref{a:smooth}--\ref{a:int}. Moreover, equation \eqref{e:FOC} implies that if it is optimal to induce the same action $y$ at three states $x_1$, $x_2$, and $x_3$, then the vector $(V_y(y,x_1),V_y(y,x_2),V_y(y,x_3))$ must be a linear combination of the vectors $(u(y,x_1),u(y,x_2),u(y,x_3))$ and $(u_y(y,x_1),u_y(y,x_2),u_y(y,x_3))$. This observation gives a condition---which we call the \emph{twist condition}---under which pooling more than two states is suboptimal, so that every optimal signal is pairwise.\footnote{The term ``twist condition'' is in analogy to optimal transport, where the twist condition is an analogous non-singularity condition (e.g., Definition 1.16 in \citealt{santambrogio}).}

\textbf{Twist Condition} \emph{For any action $y$ and any triple of states $x_1<x_2<x_3$ such that $x_1<\chi(y)<x_3$, we have $|S| \neq 0$,\footnote{Here $|\cdot |$ denotes the determinant of a matrix; we also use the same notation for the cardinality of a set.} where}
\begin{equation}\label{e:sing}
S:=
\begin{pmatrix}
V_y(y,x_1) & V_y(y,x_2) & V_y(y,x_3)\\
u(y,x_1) &u(y,x_2) &u(y,x_3)\\
u_y(y,x_1) &u_y(y,x_2) &u_y(y,x_3)
\end{pmatrix}
.%
\end{equation}

We will apply this condition extensively in Section \ref{s:assortative}.

\begin{theorem}\label{p:S}
\label{t:pairwise} For any signal $\tau$ (whether optimal or not), there exists a pairwise signal $\hat{\tau}$ that induces the same outcome. Moreover, if the twist condition holds, then the contact set is pairwise, and hence so is any optimal signal.
\end{theorem}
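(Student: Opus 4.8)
The plan is to establish the two claims of Theorem~\ref{t:pairwise} separately. For the first claim (pairwise signals are without loss), I would take an arbitrary signal $\tau$ with induced outcome $\pi_\tau$ and show that the same outcome can be generated by a signal whose posteriors all have support of size at most two. The key observation is that the outcome $\pi_\tau$ can be disintegrated over actions: for each action $y$ in the support of the action-marginal of $\pi_\tau$, there is a conditional distribution $\nu_y \in \Delta(X)$ over states, and obedience \eqref{PO2} forces $\int_X u(y,x)\,\df\nu_y(x)=0$, i.e. $\gamma(\nu_y)=y$. So it suffices to show that each such $\nu_y$ can itself be written as a mixture $\int \mu \,\df\lambda_y(\mu)$ of binary-support posteriors $\mu$, each satisfying $\gamma(\mu)=y$ (equivalently $\int u(y,\cdot)\,\df\mu = 0$), and with the right marginal over states when integrated against the action-marginal. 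This is a one-dimensional ``splitting'' argument: the set of $\mu\in\Delta(X)$ with $\int_X u(y,x)\,\df\mu(x)=0$ is a closed convex subset of $\Delta(X)$ cut out by a single linear constraint, and by Choquet-type / Carath\'eodory-in-infinite-dimensions reasoning (or an explicit mean-preserving-spread construction in the variable $u(y,\cdot)$), any such $\mu$ is a mixture of extreme points, which are the Dirac masses (when $u(y,x)=0$ has a solution $x$ already in $\supp\nu_y$) and the two-point distributions balancing a state with $u(y,\cdot)>0$ against one with $u(y,\cdot)<0$. Reassembling these splittings across $y$, weighted by the action-marginal of $\pi_\tau$, yields the desired pairwise signal $\hat\tau$; one checks Bayes plausibility holds because $\int\mu\,\df\hat\tau = \int \nu_y \,\df(\text{action-marginal}) = \phi$ by construction, and the induced outcome is unchanged because $\gamma$ is single-valued (Assumptions~\ref{a:smooth}--\ref{a:int}) so each $\mu$ with $\gamma(\mu)=y$ contributes mass to action $y$ exactly as $\nu_y$ did.

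For the second claim, suppose the twist condition holds and let $\mu\in\Lambda$ be a posterior in the contact set with $|\supp(\mu)|\geq 3$; I will derive a contradiction. Let $y=\gamma(\mu)$. Since $\mu$ is non-degenerate and in $\Lambda$, part~(3) of Theorem~\ref{t:contact} applies: there is a number $q'(y)$ such that \eqref{e:FOC} holds for every $x\in\supp(\mu)$, namely
\[
V_y(y,x) + q(y)\,u_y(y,x) + q'(y)\,u(y,x) = 0 \quad\text{for all } x\in\supp(\mu).
\]
Pick any three distinct states $x_1<x_2<x_3$ in $\supp(\mu)$. Writing the three instances of the above display as a single matrix equation, the vector $(V_y(y,x_1),V_y(y,x_2),V_y(y,x_3))$ equals $-q(y)$ times $(u_y(y,x_1),u_y(y,x_2),u_y(y,x_3))$ minus $q'(y)$ times $(u(y,x_1),u(y,x_2),u(y,x_3))$; hence the three rows of the matrix $S$ in \eqref{e:sing} are linearly dependent, so $|S|=0$. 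It remains to verify the hypothesis of the twist condition, i.e. that $x_1<\chi(y)<x_3$. This follows from obedience: $\int_X u(y,x)\,\df\mu(x)=0$ together with $u_x>0$ (Assumption~\ref{a:ord}) forces $u(y,\cdot)$ to change sign on $\supp(\mu)$, so there are states in $\supp(\mu)$ on either side of the unique zero $\chi(y)$ of $u(y,\cdot)$; choosing $x_1$ and $x_3$ to straddle $\chi(y)$ (and any $x_2$ strictly between them in $\supp(\mu)$, which exists since $|\supp(\mu)|\geq 3$ — if necessary re-label so that $x_2$ is an interior support point) gives $x_1<\chi(y)<x_3$. But then $|S|\neq 0$ by the twist condition, a contradiction. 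Therefore every $\mu\in\Lambda$ has $|\supp(\mu)|\leq 2$, so $\Lambda$ is pairwise, and since $\supp(\tau)\subset\Lambda$ for any optimal $\tau$ (shown in Section~\ref{s:contact}), every optimal signal is pairwise.

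The main obstacle I anticipate is the first claim: making the infinite-dimensional splitting rigorous. One must be careful that the map $y\mapsto\nu_y$ (a disintegration of $\pi_\tau$) is measurable, that the family of binary splittings $\lambda_y$ can be chosen measurably in $y$ (a measurable-selection issue), and that the extreme points of $\{\mu:\int u(y,\cdot)\,\df\mu=0\}$ are genuinely just the Diracs and two-point measures described — this uses that $u(y,\cdot)$ is continuous and strictly increasing in $x$ (Assumptions~\ref{a:smooth},~\ref{a:ord}), so the constraint is ``one-dimensional'' in the sense that it only sees the pushforward of $\mu$ under the injective map $x\mapsto u(y,x)$, reducing the problem to the classical fact that a probability measure on $\R$ with given mean is a mixture of two-point measures with that mean. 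The second claim, by contrast, is essentially immediate from Theorem~\ref{t:contact}(3) plus the sign argument for $\chi(y)$, and I would expect its proof to be short.
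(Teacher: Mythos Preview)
Your proposal is correct and follows essentially the same strategy as the paper: for the first claim, both arguments rest on a Choquet/Richter--Rogosinsky decomposition of the moment set $\{\mu:\int u(y,\cdot)\,\df\mu=0\}$ into its extreme points (which have support of size $\le 2$), followed by a measurable-selection step to glue the splittings; for the second claim, both arguments simply apply the first-order condition \eqref{e:FOC} from Theorem~\ref{t:contact}(3) at three support points to force $|S|=0$, after choosing $x_1,x_3$ to straddle $\chi(y)$.

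Two small tactical differences are worth flagging. First, you disintegrate the \emph{outcome} $\pi_\tau$ over actions $y$ and split each conditional $\nu_y$, whereas the paper works with the \emph{signal} $\tau$ and splits each posterior $\mu$ directly; since the moment set $P(\mu)$ depends on $\mu$ only through $\gamma(\mu)$, these are equivalent, and the paper's measurable-selection argument (showing the correspondence $\mu\mapsto\Sigma(\mu)$ has a closed graph) would translate to your indexing without difficulty. Second, your identification of the extreme points via the pushforward under $x\mapsto u(y,x)$ implicitly uses that this map is injective (Assumption~\ref{a:ord}); the paper instead invokes the Richter--Rogosinsky theorem directly, which gives the $\le 2$-point support conclusion from the single linear constraint alone and thereby dispenses with Assumption~\ref{a:ord} for this part (as the paper notes explicitly). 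Under the standing assumptions this makes no difference, but it is a slight loss of generality.
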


The intuition for the first part of the theorem is that for any posterior $\mu$, there exists a hyperplane passing through it such that all posteriors on the hyperplane induce the same action, and the extreme points of the hyperplane in the simplex have at most binary support. Thus, any posterior that puts weight on more than two states can be split into posteriors with at most binary support without affecting the induced outcome. Figure 1 illustrates this argument for a posterior with weight on three states.

The intuition for the second part is that this splitting leaves an extra degree of freedom, which can be profitably exploited under the twist condition. Consider a posterior $\mu $ with $\supp (\mu)=\{x_1,x_2,x_3\}$. We can split $\mu$ into posteriors $\mu'$ and $\mu''$ with at most binary support that both induce action $\gamma (\mu)$. For example, suppose that $\supp (\mu')=\{x_1 ,x_2\}$ and $\supp(\mu'')=\{x_1 ,x_3\}$. Now consider a perturbation that moves probability mass $\varepsilon$ on $x_1 $ from $ \mu'$ to $\mu''$. This perturbation induces non-zero marginal changes in the action at $\mu'$ and $ \mu''$. Under the twist condition, these changes have a non-zero marginal effect on the planner's expected utility, by the implicit function theorem. Therefore, either this perturbation or the reverse perturbation, where $\varepsilon$ is replaced with $-\varepsilon$, is strictly profitable.\footnote{Formally, the second part of Theorem \ref{t:pairwise} directly follows from Theorem \ref{t:contact}.}

\begin{figure}[t]
\centering
\begin{tikzpicture}[scale=0.7]
\node[circle,fill=black,inner sep=0pt,minimum size=3pt,label=below:{$x_1$}] (a) at (0,0) {};
\node[circle,fill=black,inner sep=0pt,minimum size=3pt,label=above:{$x_2$}] (b) at (4,6.928) {};
\node[circle,fill=black,inner sep=0pt,minimum size=3pt,label=below:{$x_3$}] (c) at (8,0) {};
\draw (a) -- (b);
\draw (b) -- (c);
\draw (c) -- (a);
\node[circle,fill=black,inner sep=0pt,minimum size=3pt,label=left:{$\mu'$}] (d) at (2.664,4.616) {};
\node[circle,fill=black,inner sep=0pt,minimum size=3pt,label=below:{$\mu''$}] (e) at (3.2,0) {};
\draw (d) -- (e);
\node[circle,fill=black,inner sep=0pt,minimum size=3pt] (f) at (2.936,2.312) {};
\node (g) at (4,3) {$\leftarrow \gamma(\mu)$};
\node(h) at (2.6,2) {$\mu$};
\end{tikzpicture}
\caption{Pairwise Signals are Without Loss}
\caption*{\emph{Notes:} The optimal action at any posterior on the line between $\mu'$ and $\mu''$ equals $\gamma(\mu)$, so splitting $\mu$ into $\mu'$ and $\mu''$ eliminates a non-binary-support posterior without changing the outcome.}
\label{f:simplex}
\end{figure}

Prior results by \citet{RS}, \citet{AC}, and \citet{ZZ} also give conditions under which all optimal signals are pairwise. Theorem \ref{p:S} easily implies these earlier results.\footnote{Proposition 4 in \citet{AC} states that if $u(y,x)=x - y$ and there do not exist $\zeta\leq 0$ and $\iota\in \R $ such that $V_y(y,x_i)=\zeta x_i+\iota$ for $i=1,2,3$, then it is not optimal to induce action $y$ at states $x_1$, $x_2$, and $x_3$. This result is too strong as stated, and it is not correct unless $\zeta$ is also allowed to be positive. Theorem \ref{p:S} implies this corrected version of \citeauthor{AC}'s result.} Note that the twist condition always fails in the linear case, where $|S|=0$. Hence, in the linear case, Theorem \ref{p:S} never rules out pooling multiple states, and indeed pooling multiple states is often optimal (e.g., \citealt{KMZL}).\footnote{Of course, Theorem \ref{t:pairwise} shows that even when pooling multiple states is optimal, there also exists an optimal pairwise signal, where the ``multi-state pool'' is split into pairs. Conversely, if multiple posteriors all induce the same action, they can be pooled without affecting the outcome.} 

An immediate corollary of Theorem \ref{t:pairwise} is that no disclosure is generically suboptimal when there are at least three states, because for a fixed action $y$, a generic vector $(V_y(y,x))_{x \in X}$ with $|X|\geq 3$ coordinates cannot be expressed as a linear combination of two fixed vectors $(u(y,x))_{x \in X}$ and $(u_y(y,x))_{x \in X}$, as is required by \eqref{e:FOC}.

\begin{corollary}\label{c:nodisc}
For any prior $\phi$ with $|\supp(\phi)|\geq 3$ and any $u$, no disclosure is suboptimal for generic $V_y$.
\end{corollary}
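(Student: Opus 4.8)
\textbf{Proof proposal for Corollary~\ref{c:nodisc}.} The plan is to push through the informal argument sketched just above the statement. Suppose no disclosure is optimal, i.e.\ $\tau=\delta_\phi$ solves \eqref{PS}; the goal is to show that this forces $V_y$ into a negligible subset of the admissible marginal utilities. By the complementary‑slackness observation following the definition of $\Lambda$---an optimal signal must have support contained in $\Lambda$---we get $\phi\in\Lambda$. Since $|\supp(\phi)|\geq 3>1$, the posterior $\phi$ is non-degenerate, so part~(3) of Theorem~\ref{t:contact} applies with $\mu=\phi$: writing $y^\star:=\gamma(\phi)$ and letting $a:=q(y^\star)$ and $b:=q'(y^\star)$ be the real numbers furnished by parts~(2)--(3), equation \eqref{e:FOC} gives
\[
V_y(y^\star,x)=-a\,u_y(y^\star,x)-b\,u(y^\star,x)\qquad\text{for every }x\in\supp(\phi).
\]
The crucial point is that $y^\star=\gamma(\phi)$ is pinned down by $u$ and $\phi$ through the first-order condition \eqref{e:FOCy}, so $y^\star$---and hence the two functions $u_y(y^\star,\cdot)$ and $u(y^\star,\cdot)$ on the right-hand side---do not move as $V$ varies.

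Next I would localize to three states. Choose any three distinct states $x_1,x_2,x_3\in\supp(\phi)$, which exist since $|\supp(\phi)|\geq 3$. The displayed identity forces the triple $\big(V_y(y^\star,x_1),V_y(y^\star,x_2),V_y(y^\star,x_3)\big)$ into the subspace $\Sigma\subseteq\R^3$ spanned by the two fixed vectors $\big(u_y(y^\star,x_i)\big)_{i=1,2,3}$ and $\big(u(y^\star,x_i)\big)_{i=1,2,3}$; equivalently, the determinant $|S|$ of the matrix in \eqref{e:sing} evaluated at $(y^\star,x_1,x_2,x_3)$ vanishes. Since $\dim\Sigma\leq 2<3$, the subspace $\Sigma$ is proper, closed, and nowhere dense in $\R^3$ (and if those two $u$-vectors are themselves linearly dependent, so that $|S|$ vanishes identically in $V_y$, the constraint $(V_y(y^\star,x_i))_{i=1,2,3}\in\Sigma$ is only more restrictive). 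Hence the set of admissible $V_y$ for which no disclosure can be optimal is contained in $\mathcal B:=\{V_y:(V_y(y^\star,x_i))_{i=1,2,3}\in\Sigma\}$.

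It remains to argue that $\mathcal B$ is negligible, which is where ``generic'' has to be made precise. Only $V_y$ matters for the problem---replacing $V$ by $V$ plus a function of $x$ changes the objective in \eqref{PS} by a constant (via Bayes plausibility) and hence leaves the set of optimizers unchanged---so I would work directly on the space of admissible marginal utilities, e.g.\ the open positive cone in $C^2([0,1]^2)$. On that space the evaluation map $f\mapsto\big(f(y^\star,x_1),f(y^\star,x_2),f(y^\star,x_3)\big)$ is a continuous linear surjection onto $\R^3$, hence open; therefore $\mathcal B$---the preimage of the proper subspace $\Sigma$---is closed with empty interior, and it is Lebesgue-null along any finite-dimensional family of perturbations that maps onto $\R^3$. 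Consequently, for generic $V_y$ we have $\phi\notin\Lambda$, so $\tau=\delta_\phi$ is not optimal; that is, no disclosure is suboptimal.

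I do not expect a genuine obstacle. The one step that carries all the weight is the appeal to Theorem~\ref{t:contact}(3): it is exactly that result which upgrades ``obedience holds at $\phi$'' into the statement that an $\R^3$-vector determined by $V_y$ lies in an at-most-two-dimensional subspace fixed by $u$ and $\phi$. Everything else is bookkeeping, and the only decision is which standard notion of ``generic'' to record (complement of a closed nowhere-dense set, or prevalence), which affects only the last paragraph.
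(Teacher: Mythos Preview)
Your proposal is correct and follows essentially the same approach as the paper: both assume no disclosure is optimal, invoke part~(3) of Theorem~\ref{t:contact} at $\mu=\phi$ to force $V_y(\gamma(\phi),\cdot)$ into the at-most-two-dimensional span of $u(\gamma(\phi),\cdot)$ and $u_y(\gamma(\phi),\cdot)$ (a space fixed by $u$ and $\phi$ alone), and then conclude that this constraint is non-generic since $|\supp(\phi)|\geq 3$. Your treatment of the genericity step is in fact more careful than the paper's, which simply asserts that a generic $V_y(\gamma(\phi),\cdot)$ lives in a space of dimension at least three.
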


Given \citeauthor{KG}'s concavification result, Corollary \ref{c:nodisc} implies that, for generic utilities, the sender's indirect utility is not concave in the posterior when there are more than two states. Note that Corollary \ref{c:nodisc} allows the case where $u$ and $V_y$ always have the opposite sign, so the sender's and receiver's ordinal preferences over actions are diametrically opposed. Hence, even in this case no disclosure is generically suboptimal.

\section{Single-Dipped and Single-Peaked Disclosure} \label{s:assortative}

The next two sections present our main results, which characterize optimal disclosure patterns. The current section asks when it is optimal for riskier or safer prospects to induce higher actions: that is, when optimal signals are ``single-dipped'' or ``single-peaked.'' As we will see, this is a key question, which unifies and generalizes much of what is known about special cases of the persuasion problem with non-linear preferences, as well as other models that fit our optimal productive transport framework.\footnote{In the MOT context, \citet{BJ} argue that single-dippedness/-peakedness are canonical properties analogous to positive/negative assortativity in standard matching models. Mathematically, positive/negative assortativity corresponds to monotonicity in the FOSD order, while single-dippedness/-peakedness corresponds to monotonicity in a variability order that depends on $u$; when $u(y,x)=x-y$, this variability order is the usual convex order.}

\subsection{Single-Dippedness/-Peakedness} A signal $\tau$ is \emph{single-dipped} (\emph{-peaked}) if for any $\mu_1,\mu_2\in \supp(\tau)$ such that $\supp(\mu_1)$ contains $x_1<x_3$ and $\supp (\mu_2)$ contains $x_2\in (x_1,x_3)$, we have $\gamma(\mu_1)\geq (\leq ) \gamma (\mu_2)$. Similarly, $\tau$ is \emph{strictly single-dipped} (\emph{-peaked}) if for any $\mu_1,\mu_2\in \supp(\tau)$ such that $\supp(\mu_1)$ contains $x_1<x_3$ and $\supp (\mu_2)$ contains $x_2\in (x_1,x_3)$ we have $\gamma(\mu_1)> (<)\gamma (\mu_2)$. We also apply these definitions to an arbitrary set of posteriors $M\subset \Delta(X)$ by replacing $\supp(\tau)$ with $M$ in the definitions. In particular, a pairwise signal is single-dipped if the induced receiver action is single-dipped on each set of nested pairs of states.

An equivalent definition is that a signal $\tau$ is single-dipped if it never induces a \emph{strictly single-peaked triple} $(y_1,x_1)$, $(y_2,x_2)$, $(y_1,x_3)$, with $x_1<x_2<x_3$ and $y_1<y_2$, in that there exist $\mu_1,\mu_2\in \supp (\tau)$ such that $x_1,x_3\in \supp (\mu_1)$ and $y_1=\gamma(\mu_1)$, and $x_2\in \supp (\mu_2)$ and $y_2=\gamma(\mu_2)$. (Otherwise, such a triple would witness a violation of single-dippedness.)

\begin{figure}
\centering
\begin{tabular}{cc}
 \begin{tikzpicture}[scale = 0.9]
	\small
	\begin{axis}		
		[axis x line = middle,
		axis y line = middle,
		xmin = 0, xmax = 1.1,
		ymin = 0, ymax = 1.1,
		xlabel=$x$,
		ylabel=$y$,
		xtick={0.001, 0.5, 1},
		xticklabels={$0$, $\frac{1}{2}$, $1$},
		ytick={0.001, 0.5, 1},
		yticklabels={$0$, $\frac{1}{2}$, $1$},
		clip=false]
		\draw [thick, solid, black]		(0, 0) -- (1,1);
		\draw [thin, dotted, black]			(0, 0) -- (1, 1);
	\end{axis}
\end{tikzpicture}
&\qquad \quad
  \begin{tikzpicture}[scale = 0.9]
	\small
	\begin{axis}		
		[axis x line = middle,
		axis y line = middle,
		xmin = 0, xmax = 1.1,
		ymin = 0, ymax = 1.1,
		xlabel=$x$,
		ylabel=$y$,
		xtick={0.001, 0.5, 1},
		xticklabels={$0$, $\frac{1}{2}$, $1$},
		ytick={0.001, 0.5, 1},
		yticklabels={$0$, $\frac{1}{2}$, $1$},
		clip=false]
		\draw [thick, solid, black]		(0, 0.5) -- (1,0.5);
		\draw [thin, dotted, black]			(0, 0) -- (1, 1);
		\draw [dashed, black]			(1/2, 0) -- (1/2, 1/2);
	\end{axis}
\end{tikzpicture}
\\
\small\textbf{a. Full Disclosure} &\qquad \small\textbf{b. No Disclosure} 
\bigskip
\\
\begin{tikzpicture}[scale = 0.9]
	\small
	\begin{axis}		
		[axis x line = middle,
		axis y line = middle,
		xmin = 0, xmax = 1.1,
		ymin = 0, ymax = 1.1,
		xlabel=$x$,
		ylabel=$y$,
		xtick={0.001, 1/3, 1},
		xticklabels={$0$, $\frac{1}{3}$, $1$},
		ytick={0.001, 1/3, 2/3,1},
		yticklabels={$0$, $\frac{1}{3}$,$\frac{2}{3}$, $1$},
		clip=false]
		
		\draw [thin, dotted, black]			(0, 0) -- (1, 1);
			\draw [thin, dotted, black]			(0, 2/3) -- (1, 2/3);

		\draw [thick, solid, black]		(0, 2/3) -- (1/3, 1/3);
		\draw [thick, solid, black]		(1/3,1/3) -- (1, 2/3);
		\draw [dashed, black]			(1, 0) -- (1, 2/3);
		\draw [dashed, black]			(1/3, 0) -- (1/3, 1/3);
		\draw [dashed, black]			(0,1/3) -- (1/3, 1/3);
	\end{axis}
\end{tikzpicture}
&\qquad \quad
 \begin{tikzpicture}[scale = 0.9]
	\small
	\begin{axis}		
		[axis x line = middle,
		axis y line = middle,
		xmin = 0, xmax = 1.1,
		ymin = 0, ymax = 1.1,
		xlabel=$x$,
		ylabel=$y$,
		xtick={0.001,1/3, 0.5, 1},
		xticklabels={$0$,$\frac{1}{3}$, $\frac{1}{2}$, $1$},
		ytick={0.001,1/3, 0.5, 5/6,1},
		yticklabels={$0$, $\frac{1}{3}$,$\frac{1}{2}$,$\frac{5}{6}$, $1$},
		clip=false]
		
		\draw [thin, dotted, black]			(0, 0) -- (1, 1);
		\draw [thin, dotted, black]			(0,5/6) -- (1, 5/6);

		\draw [thick, solid, black]		(0, 0) -- (1/3, 1/3);
		\draw [thick, solid, black]		(1/3, 5/6) -- (1/2, 1/2);
		\draw [thick, solid, black]		(1/2,1/2) -- (1, 5/6);
		\draw [thin, dotted, black]		(1/9, 1/9) -- (5/9, 5/9);
    	\draw [dashed, black]			(1/3, 0) -- (1/3, 5/6);
		\draw [dashed, black]			(1, 0) -- (1, 5/6);
		\draw [dashed, black]			(1/3, 0) -- (1/3, 1/3);
		\draw [dashed, black]			(0,1/3) -- (1/3, 1/3);
		\draw [dashed, black]			(1/2, 0) -- (1/2, 1/2);
		\draw [dashed, black]			(0,1/2) -- (1/2, 1/2);
	\end{axis}
\end{tikzpicture}\\
\small\textbf{c. Negative Assortative Disclosure } &\qquad \quad \quad \small\textbf{d. Disclose-Pair} 
\bigskip
\\

\begin{tikzpicture}[scale = 0.9]
	\small
		\begin{axis}		
		[axis x line = middle,
		axis y line = middle,
		xmin = 0, xmax = 1.1,
		ymin = 0, ymax = 1.1,
		xlabel=$x$,
		ylabel=$y$,
		xtick={0.001,2/5,1},
		xticklabels={$0$,$\frac{2}{5}$,$1$},
		ytick={ 0.001,1/4, 7/10,9/10,1},
		yticklabels={$0$,,,,$1$},
		clip=false]
		\draw [thin, dotted, black]			(0, 0) -- (1, 1);

		\draw [black, thick, smooth, tension=3] (1/8, 1/8) to [out=10,in=240] (1/3, 1/4);
        \draw [black,thick, smooth, tension=3] (0, 1/4) to [out=300,in=170] (1/8, 1/8);
		
		\draw [thick, solid, black]		(1/2.5, 1/2.5) -- (3/5, 3/5);
		
		\draw [black, thick, smooth, tension=3] (3/5, 3/5) to [out=0,in=240] (4/5, 3.5/5);
		
		\draw [black, thick, smooth, tension=3] (3/5, 3/5) to [out=180,in=310] (1/2.5, 3.5/5);
		
 		\draw [black, thick, smooth, tension=3] (1/2.5, 4.5/5) to [out=90,in=360] (1/3, 4.8/5);

		\draw [thick, solid, black]		(4/5, 4/5) -- (4.5/5, 4.5/5);
		
		\draw [black, thick, smooth, tension=3] (4.5/5, 4.5/5) to [out=20,in=250] (1,4.8/5);

		\draw [thin, dotted, black]			(1/3, 0) -- (1/3, 1);
		\draw [thin, dotted, black]			(0, 1/4) -- (1, 1/4);
			\draw [thin, dotted, black]	(1/2.5, 0) -- (1/2.5, 1);
				\draw [thin, dotted, black]	(4/5, 0) -- (4/5, 1);
			\draw [thin, dotted, black]	(0, 3.5/5) -- (1, 3.5/5);
			
			\draw [thin, dotted, black]	(0, 4.5/5) -- (1, 4.5/5);
			
			\node [circle, fill = red, scale=0.5]		at (1/2.5, 1/2.5)	{};
	\node [circle, fill = red, scale=0.5]		at (1/2.5, 3.5/5)	{};
	\node [circle, fill = red, scale=0.5]		at (1/2.5, 4.5/5)	{};
	\end{axis}
\end{tikzpicture}
&\qquad \quad
\begin{tikzpicture}[scale = 0.9]
	\small
	\begin{axis}		
		[axis x line = middle,
		axis y line = middle,
		xmin = 0, xmax = 1.1,
		ymin = 0, ymax = 1.1,
		xlabel=$x$,
		ylabel=$y$,
		xtick={0.001,1/4,1/2,3/4,1},
		xticklabels={$0$,$\frac{1}{4}$,$\frac{1}{2}$,$\frac{3}{4}$,$1$},
		ytick={0.001, 1/4,1/2, 3/4,1},
		yticklabels={$0$,$\frac{1}{4}$,$\frac{1}{2}$, $\frac{3}{4}$,$1$},
		clip=false]
		
		\draw [thin, dotted, black]			(0, 0) -- (1, 1);

		\draw [thick, solid, black]		(0,1/4) -- (1/2, 3/4);
		\draw [thick, solid, black]		(1/2,1/4) -- (1, 3/4);
		
    	\draw [thin, dotted, black]			(0, 1/2) -- (1, 1/2);
    		\draw [thin, dotted, black]			(0, 3/4) -- (1, 3/4);
		\draw [thin, dotted, black]			(1/2,0) -- (1/2, 1);
		\draw [thin, dotted, black]			(1/4,0) -- (1/4, 1);
		\draw [thin, dotted, black]			(3/4,0) -- (3/4, 1);
		\draw [thin, dotted, black]			(0,1/4) -- (1,1/4);
	
	\node [circle, fill = red, scale=0.5]		at (1/4, 0.5)	{};
	\node [circle, fill = red, scale=0.5]		at (1/2, 3/4)	{};
	\node [circle, fill = red, scale=0.5]		at (3/4, 0.5)	{};
	\draw [dashed, red]	(1/4, 0.5) -- (1/2, 3/4);
	\draw [dashed, red]	(1/2, 3/4) -- (3/4, 0.5);
	\end{axis}
\end{tikzpicture}\\
\small\textbf{e. A Complicated Single-Dipped Set} &\qquad \quad \small\textbf{f. Median Matching is not Single-Dipped} \\
\end{tabular}
\caption{Some Single-Dipped Disclosure Patterns}
\label{f:DP}
\end{figure}

Each panel in Figure \ref{f:DP} illustrates a signal in the linear receiver case ($u(y,x)=x-y$). Panel a.\ is full disclosure, which is trivially strictly single-dipped, as no states are paired. Panel b.\ is no disclosure, which is single-dipped but not strictly single-dipped. Panels c., d., and e.\ are all strictly single-dipped. Panel c.\ is an example of negative assortative disclosure, where state $x=1/3$ is disclosed and the other states are paired with weight $2/3$ on the higher state in each pair. Panel d.\ shows a signal where all states below $1/3$ (as well as state $1/2$) are disclosed, and the other states are paired with weight $3/4$ on the higher state in each pair. This ``disclose-pair'' pattern is a strictly single-dipped analogue of \emph{upper-censorship}, where all states below a cutoff are disclosed, and all states above the cutoff are pooled (e.g., \citealt{KMZ}). Upper-censorship is only weakly single-dipped, whereas disclose-pair splits up the pooling region in upper-censorship to obtain strict single-dippedness. Panel e.\ shows a more complicated strictly single-dipped signal. While strict single-dippedness implies that each action is induced at at most two states, Panel e.\ shows that more than two actions can be induced at a single state (here, state $2/5$).\footnote{Also, while the function $\chi_2$ is always monotone under strict single-dippedness, Panel e.\ shows that the function $\chi_1$ can be non-monotone.} Finally, Panel f.\ shows ``matching across the median'' (e.g., \citealt{KM}), which is not single-dipped, for example because it contains the strictly single-peaked triple $\left\{(1/4,1/2),(1/2,3/4),(3/4,1/2)\right\}$.

All of our results (and all proofs, except for the proof of Theorem \ref{t:SDPD}) are symmetric between the single-dipped and single-peaked cases. We thus present our results and proofs only for the single-dipped case (except for the proof of Theorem \ref{t:SDPD}), omitting the analogous results for the single-peaked case.

\begin{remark}\label{r:SDD}
A strictly single-dipped set can be described by two functions $\chi_1$ and $\chi_2$ that specify the states $\chi_1(y)$ and $\chi_2(y)$ which are pooled together to induce each action $y$. Specifically, for any strictly single-dipped set $\Lambda$, there exist unique functions $\chi_1$ and $\chi_2$ from $Y_\Lambda=\{\gamma(\mu):\mu\in \Lambda\}$ to $X$ such that $\supp(\mu)=\{\chi_1(\gamma(\mu)),\chi_2(\gamma(\mu))\}$ for all $\mu\in \Lambda$, $\chi_{1}(y)= \chi(y)= \chi_{2}(y)$ or $\chi_{1}(y)< \chi(y)< \chi_{2}(y)$ for all $y$, and $\chi_{2}(y)\leq \chi_{2}(y^{\prime })$ and $\chi_{1}(y^{\prime })\notin (\chi_{1}(y),\chi_{2}(y))$ for all $y<y^{\prime }$.\footnote{This remark follows from Corollary 1.6 and Lemma A.9 of \citet{BJ}. For completeness, we provide a simple self-contained proof in the appendix.}
\end{remark}

\subsection{Variational Theorem}

The next result captures the core economic logic behind single-dippedness. It is also our key tool for determining when optimal plans are single-dipped: we use it to establish our main sufficient condition for single-dipped disclosure to be optimal (Theorem \ref{t:SDPD}, in the next subsection), and also use it directly to study some applications in Appendix \ref{s:persuasion}.\footnote{Theorem \ref{l:ssdd} provides conditions under which \emph{every} optimal signal is single-dipped. In addition, Lemma \ref{l:stab} in Appendix \ref{a:SDPD} provides weaker conditions under which \emph{some} optimal signal has this property.}

\begin{theorem}\label{l:ssdd} 
Suppose that for any pair of actions $y_1<y_2$ and any triple of states $x_1<x_2<x_3$ such that $x_1< \chi(y_1)< x_3$, there exists a vector $\b \geq 0$ such that $R\b\geq 0$ and $R\b\neq 0$, where
\begin{equation*}\label{e:A}
R:=
\begin{pmatrix}
V(y_2,x_1)-V(y_1,x_1) &-(V(y_2,x_2)-V(y_1,x_2)) &V(y_2,x_3)-V(y_1,x_3)\\
-u(y_1,x_1) &u(y_1,x_2) &-u(y_1,x_3) \\
u(y_2,x_1) &-u(y_2,x_2) &u(y_2,x_3)
\end{pmatrix}.
\end{equation*}
Then the contact set is single-dipped, and hence so is every optimal signal.
\end{theorem}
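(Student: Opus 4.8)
The plan is to argue by contradiction using the complementary slackness machinery of Theorem~\ref{t:contact}: suppose the contact set $\Lambda$ is \emph{not} single-dipped, so it contains a strictly single-peaked triple $(y_1,x_1),(y_2,x_2),(y_1,x_3)$ with $x_1<x_2<x_3$, $y_1<y_2$, and $x_1<\chi(y_1)<x_3$ (the inequality $x_1<\chi(y_1)<x_3$ holds because the posterior inducing $y_1$ pools $x_1$ and $x_3$, so by the first-order condition~\eqref{e:FOCy} and Assumption~\ref{a:ord} we must have $u(y_1,x_1)<0<u(y_1,x_3)$, i.e. $x_1<\chi(y_1)<x_3$). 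I will then use the existence of the dual multipliers $p\in L(X)$ and $q\in B(Y)$ from Theorem~\ref{t:contact} to derive a contradiction with dual feasibility~\eqref{DO1}, namely $p(x)\ge V(y,x)+q(y)u(y,x)$ for all $(y,x)$, together with the equalities that hold on the contact set.

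The key observation is that on the contact set, \eqref{DO1} holds with equality at the relevant $(y,x)$ pairs: since $(y_1,x_1)$, $(y_1,x_3)$ lie in the support of an optimal outcome (the posterior inducing $y_1$ is in $\Lambda$ and has $x_1,x_3$ in its support), and likewise $(y_2,x_2)$, we have
\begin{align*}
p(x_1)&=V(y_1,x_1)+q(y_1)u(y_1,x_1),\\
p(x_3)&=V(y_1,x_3)+q(y_1)u(y_1,x_3),\\
p(x_2)&=V(y_2,x_2)+q(y_2)u(y_2,x_2),
\end{align*}
while dual feasibility at the ``crossed'' pairs gives the inequalities
\begin{align*}
p(x_1)&\ge V(y_2,x_1)+q(y_2)u(y_2,x_1),\\
p(x_3)&\ge V(y_2,x_3)+q(y_2)u(y_2,x_3),\\
p(x_2)&\ge V(y_1,x_2)+q(y_1)u(y_1,x_2).
\end{align*}
Subtracting the equalities from the inequalities eliminates $p$ and yields three scalar inequalities in the two unknown real numbers $q(y_1)$, $q(y_2)$: namely $V(y_2,x_i)-V(y_1,x_i)\le q(y_1)u(y_1,x_i)-q(y_2)u(y_2,x_i)$ for $i=1,3$, and the reverse inequality (with a sign flip) for $i=2$. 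Writing $a=q(y_1)$, $b=q(y_2)$, these say exactly that the vector $(1,a,b)^\top$ has nonnegative inner product with each column of the matrix $R^\top$ up to the built-in sign pattern — more precisely, that $R\,(1,a,b)^\top\le 0$ componentwise is violated, i.e. that $(1,a,b)$ certifies infeasibility of $R\beta\ge 0$ in a Farkas sense. I would set this up so that the hypothesis — existence of $\beta\ge 0$ with $R\beta\ge 0$ and $R\beta\ne 0$ — is precisely the Farkas-lemma alternative that rules out the existence of such a ``bad'' vector $(1,a,b)$ with first coordinate fixed at $1$. Concretely: the three inequalities derived above can be rearranged as $R(1,a,b)^\top \le 0$; pairing with $\beta\ge 0$ gives $\beta^\top R^\top (1,a,b)^\top \le 0$; but $R\beta\ge 0$, $R\beta\ne 0$ forces the first component of $R\beta$ (the one multiplying the fixed coordinate $1$) — or rather the full expression — to be strictly positive, a contradiction. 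I need to be slightly careful that the ``$1$'' coordinate is genuinely pinned (it is, since the $V$-differences enter with coefficient $1$, not $q$), which is what makes the one-sided Farkas statement the right one.

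The main obstacle I anticipate is getting the sign bookkeeping in $R$ exactly right and confirming that the alternating signs in the second and third rows of $R$ (the $-u(y_1,\cdot)$ in positions $1,3$ versus $+u(y_1,x_2)$, and symmetrically for $y_2$) line up with the direction of the contact-set inequalities. This is the step where the single-peaked orientation $x_1<x_2<x_3$ with the ``outer'' states $x_1,x_3$ at the \emph{lower} action $y_1$ and the ``inner'' state $x_2$ at the \emph{higher} action $y_2$ matters — it determines which of the six relations above are equalities and which are inequalities, and hence the signs in $R$. Once that is pinned down, the rest is a routine application of the Farkas/Stiemke lemma (the same linear-algebra fact underlying why $R\beta\ge 0$, $R\beta\ne0$, $\beta\ge0$ having a solution is incompatible with $R^\top v\le 0$ having a solution with a prescribed nonzero coordinate). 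I would also remark that the hypothesis is stated for \emph{all} admissible triples and action pairs, so the contradiction rules out \emph{every} single-peaked triple in $\Lambda$, giving that $\Lambda$ — and hence the support of every optimal signal, which lies in $\Lambda$ — is single-dipped.
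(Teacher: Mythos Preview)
Your overall architecture is exactly the paper's: assume a single-peaked triple sits in $\Lambda$, use Theorem~\ref{t:contact} to get equalities $p(x_i)=V(y,x_i)+q(y)u(y,x_i)$ on the contact set and inequalities from \eqref{DO1} at the crossed pairs, subtract to eliminate $p$, and then invoke a theorem-of-alternative against the hypothesis on $R$. The paper packages the three resulting inequalities as $\alpha R\le 0$ with the \emph{row} vector $\alpha=(1,q(y_1),q(y_2))$, and then applies a Stiemke/Gordan-type alternative (their Lemma~\ref{l:alter}): either $\exists\,\alpha>0$ with $\alpha R\le 0$, or $\exists\,\beta\ge 0$ with $R\beta\ge 0,\ R\beta\ne 0$.

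There is a genuine gap in your version, and you can see yourself circling it in the phrase ``or rather the full expression --- to be strictly positive.'' Your pairing argument computes $\alpha(R\beta)=(\alpha R)\beta\le 0$, but to get a contradiction from $R\beta\ge 0$, $R\beta\ne 0$ you need $\alpha(R\beta)>0$, and for that you need \emph{every} coordinate of $\alpha$ to be strictly positive, not just the first one. The hypothesis allows the nonzero coordinate of $R\beta$ to sit in row $2$ or $3$, where it multiplies $q(y_1)$ or $q(y_2)$; if either of those were $\le 0$ your contradiction evaporates. The paper closes this by invoking \eqref{e:q}: for any $\mu\in\Lambda$,
\[
q(\gamma(\mu))=-\frac{\int V_y(\gamma(\mu),x)\,\df\mu}{\int u_y(\gamma(\mu),x)\,\df\mu}>0,
\]
with positivity coming from $V_y>0$ (Assumption~\ref{a:ord}) and $\int u_y<0$ (Assumption~\ref{a:qc}). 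That one line is the missing idea; without it the Farkas step does not go through.

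Two smaller points. First, your matrix-vector orientation is off: the derived inequalities say $\alpha R\le 0$ (row vector times matrix), not $R(1,a,b)^\top\le 0$; this matters for reading off which alternative you want. Second, your justification of $x_1<\chi(y_1)<x_3$ is not quite right as stated: $x_1,x_3\in\supp(\mu_1)$ does not by itself force $u(y_1,x_1)<0<u(y_1,x_3)$, since $\supp(\mu_1)$ may be larger. The paper handles this by redefining $x_1=\min\supp(\mu_1)$ and $x_3=\max\supp(\mu_1)$, which preserves the single-peaked configuration while guaranteeing the needed sign pattern.
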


\begin{figure}[t]
\centering
\begin{tikzpicture}
\node (a1) at (0,0) {};
\node[label=right:{$y_1$}] (a2) at (10,0) {};
\node[circle,fill=black,inner sep=0pt,minimum size=3pt,label=below:{$x_1$}] (a3) at (1,0) {};
\node[circle,fill=black,inner sep=0pt,minimum size=3pt,label=below:{$x_2$}] (a4) at (5,0) {};
\node[circle,fill=black,inner sep=0pt,minimum size=3pt,label=below:{$x_3$}] (a5) at (9,0) {};
\draw (a1) -- (a2);
\node (b1) at (0,2) {};
\node[label=right:{$y_2$}] (b2) at (10,2) {};
\draw (b1) -- (b2);
\node[circle,fill=black,inner sep=0pt,minimum size=3pt] (b3) at (1,2) {};
\node[circle,fill=black,inner sep=0pt,minimum size=3pt] (b4) at (5,2) {};
\node[circle,fill=black,inner sep=0pt,minimum size=3pt] (b5) at (9,2) {};
\draw[->] (a3) -- (b3);
\node (c3) at (1.3,1) {$\b_1$};
\draw[->] (b4) -- (a4);
\node (c3) at (5.3,1) {$\b_2$};
\draw[->] (a5) -- (b5);
\node (c3) at (9.3,1) {$\b_3$};
\end{tikzpicture}
\caption{A Profitable Perturbation of a Non-Single-Dipped Signal}
\caption*{\emph{Notes:} The figure shows a perturbation that shifts weights $\b_1$ and $\b_3$ on $x_1$ and $x_3$ from a posterior inducing action $y_1 $ to a posterior inducing action $y_2$, and shifts weight $\b_2$ on $x_2$ in the opposite direction. This perturbation is profitable if it increases the receiver's expected marginal utility at $y_1$ and $y_2$ and also increases the sender's expected utility for fixed $y_1$ and $y_2$.}
\label{f:sdd}
\end{figure}

The intuition behind Theorem \ref{l:ssdd} is very simple, and is illustrated in Figure \ref{f:sdd}. The condition in Theorem \ref{l:ssdd} says that a signal that induces a strictly single-peaked triple $(y_1,x_1)$, $(y_2,x_2)$, $(y_1,x_3)$ with positive probability can be improved by re-allocating mass $\b_1$ on $x_1$ and mass $\b_3$ on $x_3$ from $y_1$ to $y_2$, while re-allocating mass $\b_2$ on $x_2$ from $y_2$ to $y_1$. This re-allocation is profitable for the sender, because the sender's expected utility increases when $y_1$ and $y_2$ are held fixed (i.e., the first coordinate of $R\b $ is non-negative); the receiver's marginal utility conditional on being recommended $y_1$ increases (i.e., the second coordinate of $R\b $ is non-negative), which increases the receiver's action, and hence increases the sender's expected utility; and the receiver's marginal utility conditional on being recommended $y_2$ also increases (i.e., the third coordinate of $R\b $ is non-negative), which again increases the sender's expected utility. Moreover, at least one of these improvements is strict (i.e., $R\b  \neq 0$). The same logic also applies for any signal that induces a strictly single-peaked triple, even if this triple occurs with $0$ probability, except now mass must be re-allocated from small intervals around $x_1$, $x_2$, and $x_3$.\footnote{Formally, this step relies on our complementary slackness theorem, Theorem \ref{t:contact}.}

\subsection{Sufficient Conditions for Single-Dipped Disclosure}
We can now give our main sufficient condition on utilities for single-dipped disclosure to be optimal. This is a central result of our paper. As we will see, our condition covers several prior models, as well as some new applications.

\begin{theorem}\label{t:SDPD}
If
$u_{yx}(y,x)/u_{x}(y,x)$ and  $V_{yx} (y_2,x)/u_x(y_1,x)$  are increasing in $x$ for any $y$ and $y_1 \leq y_2$,
then there exists an optimal single-dipped signal.

Moreover, if in addition either $u_{yx}(y,x)/u_{x}(y,x)$ or $V_{yx} (y_2,x)/u_x(y_1,x)$  is strictly increasing in $x$ for any $y$ and $y_1 \leq y_2$, then the contact set is strictly single-dipped, and hence so is every optimal signal.
\end{theorem}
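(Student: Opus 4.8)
The plan is to deduce Theorem \ref{t:SDPD} from the variational criterion of Theorem \ref{l:ssdd}. Fix actions $y_1 < y_2$ and states $x_1 < x_2 < x_3$ with $x_1 < \chi(y_1) < x_3$. I need to produce a vector $\b = (\b_1,\b_2,\b_3) \geq 0$ with $R\b \geq 0$ and $R\b \neq 0$, where $R$ is the $3\times 3$ matrix in Theorem \ref{l:ssdd}. The natural approach is to first try to find $\b$ making the \emph{last two} coordinates of $R\b$ equal to zero --- i.e. $\b$ lies in the kernel direction of the bottom $2\times 2$ block of $R$ involving the marginal utilities $u(y_1,\cdot)$ and $u(y_2,\cdot)$ --- and then check that with this choice the first coordinate of $R\b$ is nonnegative (and that the whole vector is nonzero, which will follow since $\b \neq 0$ and the sign conditions on $V$ derivatives are strict once one of the monotonicity hypotheses is strict). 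So the key computation is: solve the homogeneous system
\begin{align*}
-u(y_1,x_1)\b_1 + u(y_1,x_2)\b_2 - u(y_1,x_3)\b_3 &= 0,\\
u(y_2,x_1)\b_1 - u(y_2,x_2)\b_2 + u(y_2,x_3)\b_3 &= 0,
\end{align*}
and show the solution can be taken with all $\b_i \geq 0$, then show the first coordinate of $R\b$ has the right sign.

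For the first step --- nonnegativity of the kernel solution --- I would write $\b_i$ (up to scale) as the appropriate $2\times 2$ minors: $\b_1 \propto u(y_1,x_2)u(y_2,x_3) - u(y_1,x_3)u(y_2,x_2)$, and cyclically for $\b_2,\b_3$. The signs here should follow from the single-crossing structure of $u$: because $\chi(y_1)\in(x_1,x_3)$ and $u_x>0$, the sign of $u(y_i,x_j)$ is governed by the position of $x_j$ relative to $\chi(y_i)$, and since $\chi$ is increasing, $\chi(y_1)\leq\chi(y_2)$. The crucial structural input is the hypothesis that $u_{yx}/u_x$ is increasing in $x$: this is exactly a log-supermodularity-type condition on the ``variability'' direction, and it should translate into a statement that the ratios $u(y_2,x_j)/u(y_1,x_j)$ are monotone in $x_j$ in a way that forces the relevant minors to have consistent signs. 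I expect one has to case-split on where $\chi(y_1)$ and $\chi(y_2)$ fall among $x_1,x_2,x_3$ (there are a handful of cases: $\chi(y_2)$ could be $<x_2$, in $(x_2,x_3)$, or $\geq x_3$, etc.), and in each case verify the minor signs; the condition $u_{yx}/u_x$ increasing is what makes the ``mixed'' minors cooperate rather than fight.

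For the second step --- sign of the first coordinate of $R\b$ --- I would write
\[
(R\b)_1 = \bigl(V(y_2,x_1)-V(y_1,x_1)\bigr)\b_1 - \bigl(V(y_2,x_2)-V(y_1,x_2)\bigr)\b_2 + \bigl(V(y_2,x_3)-V(y_1,x_3)\bigr)\b_3,
\]
express each increment $V(y_2,x)-V(y_1,x) = \int_{y_1}^{y_2} V_y(y,x)\,\df y$, and then use the kernel relations to substitute. Here the hypothesis that $V_{yx}(y_2,x)/u_x(y_1,x)$ is increasing in $x$ enters: combined with the $\b_i$ expressions (which are themselves built from $u$ differences, i.e. integrals of $u_x$), this ratio condition should let me rewrite $(R\b)_1$ as an integral of something manifestly nonnegative --- morally, $\int\int V_{yx} \geq (\text{price-like term})\cdot u_x$ summed against signed weights that telescope. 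The cleanest route is probably to integrate $V_{yx}$ in $x$ as well, writing everything in terms of $V_{yx}$ and $u_x$ over the rectangle $[x_1,x_3]\times[y_1,y_2]$, and invoke the two monotonicity assumptions as a single Chebyshev/FKG-type inequality. I expect this to be the main obstacle: getting the bookkeeping of signs and the telescoping of the three $\b$-weighted $V$-increments to line up exactly so that the two hypotheses suffice, and no more. The strictness addendum is then immediate: if one of the two ratios is \emph{strictly} increasing, the corresponding inequality is strict (using that at least two of the $\b_i$ are strictly positive, which holds generically in each case, and that the inducing triple is genuinely interior), so $(R\b)_1 > 0$, hence $R\b \neq 0$ a fortiori, and Theorem \ref{l:ssdd} yields strict single-dippedness of the contact set.

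Finally, I would note that the reduction to Theorem \ref{l:ssdd} already delivers ``there exists an optimal single-dipped signal'' and ``every optimal signal is single-dipped'' in the non-strict and strict cases respectively, so no separate argument is needed for those conclusions; and the degenerate subcase $\chi_1(y)=\chi(y)=\chi_2(y)$ (full disclosure at that action) is consistent with single-dippedness by definition, so it requires no special treatment. The one place to be slightly careful is that $R\b \neq 0$ must be checked even in the non-strict part of the theorem: there it should follow from the kernel construction giving $\b\neq 0$ together with the fact that $V_y>0$ (Assumption \ref{a:ord}) makes the increments $V(y_2,x_j)-V(y_1,x_j)$ strictly positive, so the first coordinate cannot vanish identically unless the signed combination cancels --- and I would rule that out using the same sign analysis, or else fall back on perturbing within the solution cone.
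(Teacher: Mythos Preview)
Your core strategy---choosing $\b$ as the $2\times 2$ minors of the last two rows of $R$, so that $(R\b)_2=(R\b)_3=0$ and $(R\b)_1=|R|$---is exactly what the paper does, and the sign analyses you sketch for $\b_i\geq 0$ and for $|R|\geq 0$ are the content of the paper's Lemmas~\ref{l:ySDD} and~\ref{l:R>0}. But there are two genuine gaps in the finishing steps.

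\textbf{The non-strict part.} Your claim that $R\b\neq 0$ can be secured from $V_y>0$, or by ``perturbing within the solution cone,'' fails. Take the linear case $u(y,x)=x-y$, $V(y,x)=V(y)$: both ratios are constant, so the weak hypotheses hold, yet row~1 of $R$ equals $-\tfrac{V(y_2)-V(y_1)}{y_2-y_1}\,(\text{row 2}+\text{row 3})$. Hence $R\b\geq 0$ forces $R\b=0$ for \emph{every} $\b$, and Theorem~\ref{l:ssdd} is simply inapplicable. The paper does not try to salvage $\b$; instead it perturbs $V$ to a nearby $V^n$ satisfying the \emph{strict} hypothesis, applies the strict case to get single-dipped contact sets $\Lambda^n$, and passes to the limit via a compactness lemma (Lemma~\ref{l:stab}) to extract a single-dipped optimal signal for $V$.

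\textbf{Strict single-dippedness.} Theorem~\ref{l:ssdd} concludes only that $\Lambda$ is \emph{single-dipped}; the condition $R\b\neq 0$ is already needed for that weak conclusion, and making $(R\b)_1>0$ does not upgrade it. Strict single-dippedness additionally requires $\Lambda$ to be pairwise: if some $\mu\in\Lambda$ has three states $x_1<x_2<x_3$ in its support, then $\mu_1=\mu_2=\mu$ gives $\gamma(\mu_1)=\gamma(\mu_2)$ with nested supports, violating strictness. The paper obtains pairwiseness separately by verifying the twist condition $|S|\neq 0$ under the strict hypothesis (Lemma~\ref{l:S><0}) and invoking Theorem~\ref{t:pairwise}. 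Your proposal omits this step entirely.
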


The proof establishes single-dippedness by constructing perturbations that satisfy the conditions in Theorem \ref{l:ssdd}, and further establishes strict single-dippedness by verifying the twist condition from Theorem \ref{t:pairwise}.

\begin{figure}
\begin{flushleft}
\begin{tabular}{cc}
 \pgfplotsset{width = 8 cm, compat = 1.18}
\begin{tikzpicture}
\begin{axis}[
    axis y line = left,
    axis x line = middle,
    ylabel style = {rotate=-90},
    xmin = -1,
    xmax = 10,
    ymin = -10,
    ymax = 10,
    xticklabels = {,,},
    yticklabels = {,,},
    xtick style={draw=none},
    ytick style={draw=none},
    legend style={at={(2.0,0.65)}},
    ]
\addplot[name path=C1,mark=none,solid,thick] [
    domain=2.3:9, 
    samples=100, 
]
    {1.1*(0.7*x^2-8*x)+23};
\addplot[name path=C1,mark=none,solid,thick] [
    domain=2.3:9, 
    samples=100, 
]
    {0.4*(0.7*x^2-8*x)+4};
\draw[dashed,-] (2.3,-2.) to (2.3,6.9);
 \node (a) at (2,-1){$x_1$};
\draw[dashed,-] (5.66,0) to (5.6,-5.29);
\node (b) at (5.3,-1){$x_2$};
\draw[dashed,-] (9,-2.15) to (9,6.3);
\node (c) at (8.65,-1){$x_3$};
\end{axis}
\node (d) at (1.15,4.4){$V\left(y_2,x\right)$};
\node (e) at (1.15,2){$V\left(y_1,x\right)$};
\node (f) at (-.4,5.6){$V$};
 \node (g) at (6.3,3){$x$};
\end{tikzpicture}
&
\pgfplotsset{width = 8 cm, compat = 1.18}
\begin{tikzpicture}
\begin{axis}[
    axis y line = left,
    axis x line = middle,
    ylabel style = {rotate=-90},
    xmin = -1,
    xmax = 10,
    ymin = -10,
    ymax = 10,
    xticklabels = {,,},
    yticklabels = {,,},
    xtick style={draw=none},
    ytick style={draw=none},
    legend style={at={(2.0,0.65)}},
]
\addplot[name path=C3,mark=none,solid,thick] [
    domain=2:8, 
    samples=100, 
]
{exp(0.20*x)*(x-7)+3};
\addplot[name path=C3,mark=none,solid,thick] [
    domain=2:8, 
    samples=100, 
]
{exp(0.05*x)*(1.4*x-5)};
\draw[dashed,-] (2,0) to (2,-4.6);
\node (a) at (1.66,-0.8){$x_1$};
\draw[dashed,-] (5,2.5) to (5,-2.5);
\node (b) at (4.63,-1){$x_2$};
\draw[dashed,-] (8,0) to (8,9.4);
\node (c) at (8,-1){$x_3$};
\end{axis}
\node (d) at (1,2){$u\left(y_1,x\right)$};
\node (e) at (1,1.4){$u\left (y_2,x\right )$};
\node (f) at (-.4,5.6){$u$};
 \node (g) at (6.3,3){$x$};
\end{tikzpicture}
\\
\small\textbf{a. Linear Receiver Case.} & \small\textbf{b. State-Independent Sender Case.}
\end{tabular}
\caption{The Intuition for Theorem \ref{t:SDPD} in Two Special Cases}
\caption*{\emph{Notes:} Panel a.\ In the linear receiver case, when the sender's utility increment $V(y_2,x)-V(y_1,x)$ is convex in the state, more extreme states should induce higher actions.
\\
Panel b.\ In the state-independent sender case, when the receiver's marginal utility $u(y,x)$ is more convex in the state at higher actions, more extreme states should induce higher actions.}
\label{f:lsp}
\end{flushleft}
\end{figure}
The intuition for Theorem \ref{t:SDPD} is relatively straightforward in the linear receiver and state-independent sender cases. (See Figure \ref{f:lsp}.) In the linear receiver case, we have $u_{yx}(y,x)/u_{x}(y,x)=0$ and $V_{yx} (y_2,x)/u_x(y_1,x)=V_{yx} (y_2,x)$, so our sufficient conditions for single-dipped disclosure to be optimal are satisfied iff $V_y$ is convex in $x$.\footnote{In the separable and translation-invariant subcases, convexity of $V_y$ simplifies to convexity of $w$ and $P'$, respectively.} To see why, note that for any strictly single-peaked triple $(y_1,x_1)$, $(y_2,x_2)$, $(y_1,x_3)$, the perturbation that moves mass on $x_1$ and $x_3$ from $y_1$ to $y_2$ and moves mass on $x_2$ in the opposite direction, so as to hold fixed the receiver's marginal utility conditional on being recommended either action, has the effect of also holding fixed the probability of each recommendation, while spreading out the state conditional on action $y_2$ and concentrating the state conditional on action $y_1$. This perturbation is profitable when the difference $V(y_2,x)-V(y_1,x)$ is convex in $x$, which holds whenever $V_y$ is convex in $x$.%

In the state-independent sender case, we have $V_{yx} (y_2,x)/u_x(y_1,x)=0$, so our sufficient conditions for single-dipped disclosure to be optimal are satisfied iff $u_x$ is log-supermodular in $(y,x)$, or equivalently $u$ is more convex in $x$ at higher actions $y$.\footnote{In the separable and translation-invariant subcases, log-supermodularity of $u_x$ simplifies to $2I'(x)^2\geq I(x)I''(x)$ and log-concavity of $T'$, respectively.} To see why, note that for any strictly single-peaked triple $(y_1,x_1)$, $(y_2,x_2)$, $(y_1,x_3)$, the perturbation that moves mass on $x_1$ and $x_3$ from $y_1$ to $y_2$ and moves mass on $x_2$ in the opposite direction, so as to hold fixed the receiver's marginal utility conditional on being recommended $y_1$ as well as the total probability of each recommendation, has the effect of increasing the receiver's marginal utility conditional on being recommended $y_2$. This follows because, by log-supermodularity of $u_x$, for the receiver's expected marginal utility the marginal rate of substitution between ``shifting weight from $x_1 $ to $x_2$'' and ``shifting weight from $x_2$ to $x_3$'' is higher at $y_1$ than $y_2$. Finally, when $V$ is state-independent, this perturbation increases the sender's expected utility.\footnote{In the linear receiver and state-independent sender cases, the sufficient conditions for the optimality of strict single-dipped disclosure in Theorem \ref{t:SDPD} are ``almost necessary,'' because the condition $|S|\neq 0$ on  $Y\times [0,1]$ implies that $|S|$ has a constant sign on $Y\times [0,1]$, which can be shown to be equivalent to strict convexity of $V_y$ in the linear receiver case, and to strict log-supermodularity of $u_x$ in the state-independent sender case. By Theorem \ref{t:pairwise}, a necessary condition for the optimality of strictly single-dipped disclosure is that $|S|\neq 0$ on the restricted domain where $x_1<\chi (y)<x_3$.}

As we explain in Section \ref{s:applications}, there are close antecedents to the conditions in Theorem \ref{t:SDPD} for the linear receiver and state-independent sender cases, in non-persuasion settings that nonetheless fall in our general optimal productive transport framework. In particular, results in the MOT literature (e.g., Theorem 6.1 of \citealt{BJ}) can be translated to our framework to imply the linear receiver case of Theorem \ref{t:SDPD}, while results in the gerrymandering literature (Lemma 1 of \citealt{FH}) can be translated to our framework to imply the state-independent sender case of Theorem \ref{t:SDPD}. Theorem \ref{t:SDPD} thus unifies and generalizes these disparate contributions.

We also establish an additional result in Appendix \ref{s:Brenier}: under our conditions for strictly single-dipped disclosure to be optimal (and a regularity condition), the optimal signal is unique.\footnote{This result is somewhat akin to \emph{Brenier's theorem} in optimal transport, which shows that the optimal transport plan is unique under a suitable complementarity-type condition, called the twist or generalized Spence-Mirrlees condition (\citealt{Brenier}, \citealt{GangboMcCann}; or see Section 1.3 in \citealt{santambrogio}).}

\section{Full Disclosure and Negative Assortative Disclosure}\label{s:MNAD}

While single-dippedness is an important property, it remains important to fully characterize optimal signals, when this is tractable.\footnote{Recall that many different disclosure patterns can be single-dipped, as illustrated in Figure \ref{f:DP}.} The current section does this for the polar cases of ``maximum'' and ``minimal'' pairwise disclosure. The former case corresponds to \emph{full disclosure}, where each state is disclosed; while the latter case corresponds to \emph{negative assortative disclosure}, where all states are paired in a negatively assortative manner, so all posteriors can be ordered from least to most extreme. Here our results on full disclosure extend existing results, whereas our results on negative assortative disclosure are entirely novel.

\subsection{Full Disclosure}
\emph{Full disclosure} is the (unique) signal $\tau$ where every $\mu \in \supp(\tau)$ is degenerate.

If for all states $x_1$ and $x_2$, and all probabilities $\rho $, the sender prefers to split the posterior $\mu=\rho \delta_{x_1}+(1-\rho )\delta_{x_2}$ into degenerate posteriors $\delta_{x_1}$ and $\delta_{x_2}$, then the sender prefers full disclosure to any pairwise signal. Since pairwise plans are without loss by Theorem \ref{t:pairwise}, full disclosure is then optimal. Conversely, if the sender strictly prefers not to split $\mu =\rho \delta_{x_1 }+(1-\rho )\delta _{x_2}$ into $\delta _{x_1 }$ and $\delta _{x _2}$ for some states $x_1$ and $x _2$ and some probability $\rho $, then the sender strictly prefers the pairwise signal that differs from full disclosure only in that it pools states $x_1$ and $x_2$ into $\mu $; so full disclosure is not optimal.\footnote{This argument is valid when $\phi $ has finite support. The general case (Theorem \ref{p:full}) uses duality and is adaptated from part (2) of Proposition 1 in \citet{Kolotilin2017}; we give a slightly simpler proof and also establish uniqueness.} Recalling that belief $\mu=\rho \delta _{x_1 }+(1-\rho )\delta _{x_2}$ induces action $\gamma(\mu )$ satisfying $\rho u(\gamma(\mu ),x_1 )+(1-\rho )u(\gamma(\mu ) ,x_2)= 0$, we obtain the following result.

\begin{theorem}
\label{p:full}
Full disclosure is optimal iff, for all $\mu=\rho \delta _{x_1}+(1-\rho )\delta _{x_2}$ with $x_1 <x_2$ in $X$ and $\rho \in (0,1)$, we have
\begin{equation}
\rho V(\gamma(\mu),x_1) +(1-\rho )V(\gamma(\mu),x_2)\leq \rho V(\gamma(\delta_{x_1}),x_1) +(1-\rho )V(\gamma(\delta_{x_2}),x_2).   \label{e:full}
\end{equation}
Moreover, full disclosure is uniquely optimal if \eqref{e:full} holds with strict inequality for all such $\mu$.
\end{theorem}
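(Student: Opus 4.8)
The plan is to prove the ``iff'' by a duality certificate, taking as candidate optimal price function $p^{*}(x):=W(\delta_{x})=V(\gamma(\delta_{x}),x)$, the sender's payoff under full disclosure; here $\gamma(\delta_{x})=\chi^{-1}(x)$ is the unique action at which $u(\cdot,x)$ vanishes, and $p^{*}$ is continuous because $\chi$ is a continuous strictly increasing surjection of $[0,1]$ and $V$ is smooth. First I would record a sign identity for $g(y,x):=p^{*}(x)-V(y,x)$: since $g_{y}=-V_{y}<0$, the function $g(\cdot,x)$ is strictly decreasing and vanishes at $y=\gamma(\delta_{x})$, exactly where $u(\cdot,x)$ vanishes; and $u_{x}>0$ gives $u(y,x)>0\iff x>\chi(y)\iff y<\gamma(\delta_{x})$. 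Hence $g(y,x)$ and $u(y,x)$ always have the same sign and vanish together. A short calculation then shows that, for a binary posterior $\mu=\rho\delta_{x_{1}}+(1-\rho)\delta_{x_{2}}$ with $x_{1}<x_{2}$ in $X$ inducing action $y=\gamma(\mu)$ --- which by the sign structure forces $x_{1}<\chi(y)<x_{2}$ --- inequality \eqref{e:full} is equivalent to $u(y,x_{2})g(y,x_{1})-u(y,x_{1})g(y,x_{2})\ge0$; letting $\rho$ sweep $(0,1)$, \eqref{e:full} for all such $\mu$ is equivalent to this determinant-type inequality holding for every $x_{1},x_{2}\in X$ with $x_{1}<\chi(y)<x_{2}$.

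The key step is then a reduction to binary posteriors. For any $\mu\in\Delta(X)$ with $y:=\gamma(\mu)$, split $\mu$ by the sign of $u(y,\cdot)$ and set $m:=\int_{\{x>\chi(y)\}}u(y,x)\df\mu(x)=-\int_{\{x<\chi(y)\}}u(y,x)\df\mu(x)$, so that $m>0$ exactly when $\mu$ is non-degenerate (and $m=0$ forces $\mu=\delta_{\chi(y)}$, giving $\int g\df\mu=0$). A Fubini rearrangement yields, for non-degenerate $\mu$,
\begin{equation*}
\int_{X}p^{*}(x)\,\df\mu(x)-W(\mu)
=\int_{X}g(y,x)\,\df\mu(x)
=\frac{1}{m}\iint_{\{x''<\chi(y)\}\times\{x'>\chi(y)\}}\bigl(u(y,x')g(y,x'')-u(y,x'')g(y,x')\bigr)\,\df\mu(x'')\,\df\mu(x').
\end{equation*}
By the first step, under \eqref{e:full} the integrand is non-negative, so $\int_{X}p^{*}\df\mu\ge W(\mu)$ for every $\mu\in\Delta(X)$. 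The forward implication then follows by weak duality: for any signal $\tau$, Bayes plausibility \eqref{PS1} and the interchange $\int(\int_{X}p^{*}\df\mu)\df\tau(\mu)=\int_{X}p^{*}\df\phi$ give $\int W\df\tau\le\int_{X}p^{*}\df\phi=\int_{X}W(\delta_{x})\df\phi(x)$, the value of full disclosure, so full disclosure is optimal.

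For the converse, suppose full disclosure is optimal, so the value of \eqref{PS} equals $\int_{X}W(\delta_{x})\df\phi=\int_{X}p^{*}\df\phi$. By Lemma~\ref{l:dual} there is an optimal $p\in L(X)$ for \eqref{DS} with $\int_{X}p\df\phi$ equal to this value; feasibility at $\mu=\delta_{x}$ forces $p(x)\ge W(\delta_{x})=p^{*}(x)$ for all $x\in X$, and since $p-p^{*}\ge0$ is continuous and integrates to zero against $\phi$ with $\supp\phi=X$, we conclude $p=p^{*}$ on $X$. Evaluating $\int_{X}p\df\mu\ge W(\mu)$ at binary $\mu=\rho\delta_{x_{1}}+(1-\rho)\delta_{x_{2}}$, $x_{1}<x_{2}$ in $X$, and rearranging gives exactly \eqref{e:full}. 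For the uniqueness claim, assume \eqref{e:full} holds strictly for all non-degenerate binary $\mu$; then full disclosure is optimal, so the unique (Remark~\ref{r:unique}) optimal price again satisfies $p=p^{*}$ on $X$, and the contact set is $\Lambda=\{\mu\in\Delta(X):\int_{X}g(\gamma(\mu),x)\df\mu(x)=0\}$. Strict \eqref{e:full} makes the integrand in the display strictly positive on its (non-empty) domain of integration whenever $\mu$ is non-degenerate, so $\int_{X}g(\gamma(\mu),x)\df\mu>0$ there; hence $\Lambda$ contains only Dirac measures. Since every optimal signal $\tau$ satisfies $\supp\tau\subseteq\Lambda$ (Section~\ref{s:contact}), every optimal signal is full disclosure.

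The hard part is handling priors $\phi$ without finite support: the naive perturbation --- ``pool $x_{1}$ with $x_{2}$ and compare against disclosing both'' --- only makes literal sense when $\phi$ has atoms at $x_{1}$ and $x_{2}$, so the argument must route through duality, and the real content is verifying that $p^{*}$ satisfies the zero-profit constraint \eqref{DS1} at \emph{every} posterior, not merely binary ones. The sign identity relating $g$ and $u$, together with the rearrangement in the display, is precisely what reduces this to the binary condition \eqref{e:full}, and it simultaneously delivers uniqueness by pinning down the contact set. The remaining points --- continuity of $p^{*}$ (and boundedness of $g/u$, which uses Assumption~\ref{a:smooth} and $u_{y}(\gamma(\delta_{x}),x)<0$ from Assumption~\ref{a:qc}), measurability of the relevant restrictions, and the interchange of integrals used above --- are routine given Assumptions~\ref{a:smooth}--\ref{a:ord} and the Lipschitz continuity of $W$ from Lemma~\ref{l:dual}.
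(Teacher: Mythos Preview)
Your argument is correct and takes a genuinely different route from the paper's. The paper works through the outcome-based dual \eqref{DO}: by Lemma~\ref{l:Deq}, full disclosure is optimal iff there exists $q\in B(Y)$ with $V(\gamma(\delta_x),x)\geq V(y,x)+q(y)u(y,x)$ for all $(y,x)$, which it then rewrites as a two-sided bound on $q(y)$ and shows is equivalent to \eqref{e:full}. You instead stay with the signal-based dual \eqref{DS}, propose $p^{*}(x)=W(\delta_x)$ directly, and reduce the global constraint \eqref{DS1} to the binary case via the Fubini identity
\[
\int_X g(y,x)\,\df\mu(x)=\frac{1}{m}\iint_{\{x''<\chi(y)\}\times\{x'>\chi(y)\}}\bigl(u(y,x')g(y,x'')-u(y,x'')g(y,x')\bigr)\,\df\mu(x'')\,\df\mu(x'),
\]
which is clean and self-contained (it bypasses Lemma~\ref{l:Deq} entirely). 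Your converse, pinning down $p=p^{*}$ on $X$ from $p\geq p^{*}$, $\int(p-p^{*})\df\phi=0$, continuity, and $\supp\phi=X$, is also more direct than the paper's route. The trade-off is generality: the paper's proof is stated to go through without Assumption~\ref{a:ord} and for arbitrary compact metric $X$, whereas your sign identity for $g$ and $u$ leans on $V_y>0$ and $u_x>0$. Under the standing assumptions of the theorem, however, your proof is complete; the Fubini rearrangement is the non-obvious step, and it checks out since $\int_A u\,\df\mu=-\int_B u\,\df\mu=m$ with $A=\{x>\chi(y)\}$, $B=\{x<\chi(y)\}$, and $g(y,\chi(y))=0$.
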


In the linear case, condition \eqref{e:full} holds iff $V$ is convex in $y$. In the state-independent sender case, condition \eqref{e:full} simplifies as follows:
\setcounter{corollary}{1}
\begin{corollary}
\label{c:fd} In the state-independent sender case, full disclosure is optimal iff, for all $\mu =\rho \delta _{x_1 }+(1-\rho )\delta _{x_2}$ with $x_1,x_2\in X$ and $\rho\in (0,1) $, we have
\begin{equation}
V\left(\gamma\left( \mu \right) \right) \leq \rho V\left(\gamma\left( \delta _{x_1 }\right) \right) +\left(1-\rho \right) V\left(\gamma\left( \delta _{x_2}\right)
\right).
\label{e:fullsi}
\end{equation}
\end{corollary}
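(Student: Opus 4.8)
The plan is to derive this as an immediate specialization of Theorem \ref{p:full}. First I would substitute the state-independence hypothesis $V(y,x)=V(y)$ into condition \eqref{e:full}. For any belief $\mu$ and any state $x$ we then have $V(\gamma(\mu),x)=V(\gamma(\mu))$, so the left-hand side of \eqref{e:full} collapses to $\rho V(\gamma(\mu))+(1-\rho)V(\gamma(\mu))=V(\gamma(\mu))$, while the right-hand side becomes $\rho V(\gamma(\delta_{x_1}))+(1-\rho)V(\gamma(\delta_{x_2}))$. Thus in the state-independent sender case \eqref{e:full} is literally \eqref{e:fullsi}, and Theorem \ref{p:full} says full disclosure is optimal iff this holds for all $\mu=\rho\delta_{x_1}+(1-\rho)\delta_{x_2}$ with $x_1<x_2$ in $X$ and $\rho\in(0,1)$.

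The only remaining point is that the corollary quantifies \eqref{e:fullsi} over arbitrary $x_1,x_2\in X$ rather than over $x_1<x_2$, so I would check these two families of constraints are the same. If $x_1=x_2$ then $\mu=\delta_{x_1}$ and $\gamma(\mu)=\gamma(\delta_{x_1})=\gamma(\delta_{x_2})$, so \eqref{e:fullsi} holds with equality and imposes nothing. If $x_1>x_2$, then $\mu=\rho\delta_{x_1}+(1-\rho)\delta_{x_2}=(1-\rho)\delta_{x_2}+\rho\delta_{x_1}$, and relabeling shows that \eqref{e:fullsi} for $(x_1,x_2,\rho)$ is the same inequality as \eqref{e:fullsi} for the ordered pair $(x_2,x_1)$ with weight $1-\rho$. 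Hence the constraint family over all $x_1,x_2\in X$, $\rho\in(0,1)$ coincides with the family over $x_1<x_2$, $\rho\in(0,1)$, and the equivalence in Theorem \ref{p:full} transfers to give the corollary.

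There is no genuine obstacle here: the content is entirely in Theorem \ref{p:full}, and this step is just the observation that a state-independent $V$ makes the weighted average on the left of \eqref{e:full} trivial. The only thing to be slightly careful about is the cosmetic matching of quantifiers noted above; the same bookkeeping also transfers the uniqueness clause of Theorem \ref{p:full}, since \eqref{e:full} holds strictly for all $x_1<x_2$ exactly when \eqref{e:fullsi} holds strictly for all distinct $x_1,x_2$ in $X$.
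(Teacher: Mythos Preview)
Your proposal is correct and matches the paper's approach: the paper does not give a separate proof of this corollary, treating it as an immediate specialization of Theorem \ref{p:full} obtained by substituting $V(y,x)=V(y)$ into \eqref{e:full}. Your additional bookkeeping on the quantifiers (arbitrary $x_1,x_2$ versus $x_1<x_2$) is careful and correct, though the paper does not spell this out.
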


In a classical one-to-one matching model, \citeasnoun{B73} showed that if the utility from matching two types $h\left( x_1 ,x _2\right) $ is supermodular, then it is optimal to match like types. \citeasnoun{LN} refer to this extreme form of positive assortative matching as \emph{segregation}. Their Propositions 4 and 9 show that segregation is optimal iff $h\left( x_1,x_1 \right) +h\left( x _2,x_2\right) \geq 2h\left( x _1,x_2\right) $ for all $x_1 ,x_2$ (which is a strictly weaker property than supermodularity). In the persuasion setting, segregation corresponds to full disclosure. Note that if we fix $\rho=1/2$ and let $h\left( x_1 ,x _2\right) =V\left( \gamma\left( \delta _{x_1 }/2+\delta _{x _2}/2\right) \right) $, then (\ref{e:fullsi}) reduces to \citeauthor{LN}'s condition. Intuitively, full disclosure is ``less likely'' to be optimal in persuasion than segregation is in classical matching, because in persuasion the designer has an extra degree of freedom $\rho $ in designing matches.

In the linear receiver case, there is a simple sufficient condition for \eqref{e:full}:
\begin{manualcorollary}{\ref{c:fd}'}
\label{c:fds} In the linear receiver case, full disclosure is optimal if $V(y ,x )$ is convex in $y$ and satisfies $V(x_1 ,x_2)+V(x_2,x_1 )\leq V(x_1 ,x_1 )+V(x_2,x_2)$ for all $x_1,x_2\in X$.
\end{manualcorollary}

A sufficient condition for $V(x_1 ,x_2)+V(x_2,x_1 )\leq V(x_1 ,x_1 )+V(x_2,x_2)$
is supermodularity of $V$: for all $x_1 <x_2$ and $y_1<y_2$, $V(y_1 ,x_1 )+V(y_2,x_2)\geq V(y_1 ,x_2)+V(y_2,x_1 )$. Thus, in the linear receiver case, full disclosure is optimal whenever the sender's utility is convex in $y$ and supermodular in $\left(y ,x \right) $. This sufficient condition for full disclosure generalizes that given by \citet{RS} for the separable subcase.\footnote{Their condition is that $w$ is increasing in $x$ and $G$ is convex in $y$, where $V(y,x)=w(x)G(y)$. In the sub-subcase with $G(y)=y$, \eqref{e:full} holds iff $w$ is increasing in $x$, because \eqref{e:full} simplifies to $\rho (1-\rho)(w(x_2)-w(x_1))(x_2-x_1)\geq 0$.}

In addition, in Appendix \ref{s:Brenier} we show that when the prior has full support and the twist condition holds, full disclosure is uniquely optimal whenever it is optimal.

\subsection{Negative Assortative Disclosure}
A pairwise signal $\tau$ is \emph{negative assortative} if the supports of any $\mu , \mu' \in \supp(\tau)$ are nested: that is, denoting $\supp(\mu)=\{x_1 , x_2\}$ and $\supp(\mu')=\{x'_1 , x'_2\}$, where without loss $x_1 \leq x_2$ and $x'_1 \leq x'_2$, we have either $x_1 \leq x'_1 \leq x'_2 \leq x_2$ or $x'_1 \leq x_1 \leq x_2 \leq x'_2$. We also apply the same definition to an arbitrary set of posteriors $M \in \Delta(X)$ by replacing $\supp(\tau)$ with $M$. In particular, a strictly single-dipped contact set $\Lambda$ is negative assortative if $\chi_1 :Y_\Lambda \to X$ is decreasing and $\chi_2: Y_\Lambda \to X$ is increasing.\footnote{Recall that, as defined in Remark \ref{r:SDD}, $\chi_1 (y)$ and $\chi_2 (y)$ are the smaller and larger states that are pooled together to induce action $y\in Y_\Lambda$.}

The main result of this section is that if strictly single-dipped disclosure is optimal and the sender strictly prefers to pool any two states, then negative assortative disclosure is optimal. Moreover, if the prior has a density, then the optimal signal is unique (by Theorem \ref{t:brenier} in Appendix \ref{s:Brenier}) and is characterized as the solution to a system of two ordinary differential equations.

To see the intuition, note that if strictly single-dipped disclosure is optimal, then any two pairs of pooled states $\{x_1,x_3\}$ and $\{x'_1,x'_3\}$ with $x_1<x_3$, $x'_1<x'_3$, and $x_1\leq x'_1$, must be either ordered (i.e., $x_1<x_3\leq x'_1< x'_3$) or nested (i.e., $x_1\leq x'_1<x'_3\leq x_3$). This follows because if the pairs overlap (i.e., $x_1<x'_1<x_3<x'_3$), then either $(x_1, x'_1, x_3)$ or $(x'_1, x_3, x'_3)$, together with the corresponding actions, would form a single-peaked triple. Hence, for any pair of pooled states $\{x_1,x_3\}$, there must exist a disclosed state $x_2 \in (x_1,x_3)$: intuitively, there must exist pairs of pooled states in the interval $(x_1,x_3)$ that are closer and closer together, until the pair degenerates into a single disclosed state. Therefore, if any two pairs of pooled states $\{x_1,x_3\}$ and $\{x'_1,x'_3\}$ are ordered, there would exist two distinct disclosed states $x_2 \in (x_1,x_3)$ and $x'_2 \in (x'_1,x'_3)$. But if the sender strictly prefers to pool any two states, this is impossible. Finally, if pairs of pooled states cannot overlap or be ordered, the only remaining possibility is that all pairs of pooled states are nested: that is, disclosure is negative assortative.\footnote{In this argument, the existence of the two disclosed states relies on the assumption that $\supp(\phi)=[0,1]$. The formal proof again relies on complementary slackness.} %

\begin{theorem}\label{t:NAD}
Assume that $X=[0,1]$. If $\Lambda$ is strictly single-dipped and for all $x_1<x_2$ there exists $\rho\in (0,1)$ such that
\begin{equation}
\rho V(\gamma(\mu),x_1) +(1-\rho )V(\gamma(\mu),x_2)> \rho V(\gamma(\delta_{x_1})),x_1) +(1-\rho )V(\gamma(\delta_{x_2}),x_2),   \label{e:nd}
\end{equation}
with $\mu=\rho \delta _{x_1}+(1-\rho )\delta _{x_2}$, then $\Lambda$ is negative assortative. Moreover, if the prior $\phi$ has a density $f$, then the optimal signal is unique, and the functions $\chi_1$ and $\chi_2$ are continuous and solve the system of differential equations,
\begin{gather}
u(y,\chi_1(y))(-\df \phi ([0,\chi_1(y)]))+u(y,\chi_2(y))\df \phi ([0,\chi_2(y)])=0, \label{e:obed} \\
\begin{gathered}
\frac{\df}{\df y} \left(\frac{V_y(y,\chi_1(y))u(y,\chi_2(y))-V_y(y,\chi_2(y))u(y,\chi_1(y))}{u(y,\chi_1(y))u_y(y,\chi_2(y))-u(y,\chi_2(y))u_y(y,\chi_1(y))}\right) \\
=\frac{V_y(y,\chi_1(y))u_y(y,\chi_2(y))-V_y(y,\chi_2(y))u_y(y,\chi_1(y))}{u_y(y,\chi_1(y))u(y,\chi_2(y))-u_y(y,\chi_2(y))u(y,\chi_1(y))},	\label{e:q'} 
\end{gathered}	
\end{gather}
for all $y>\ul y$ in $Y_\Lambda$, with the boundary conditions
\begin{equation}
\begin{gathered}
	(\chi_1(\ol y),\chi_1(\ul y),\chi_2(\ul y),\chi_2(\ol y))=(0,\chi (\ul y),\chi (\ul y),1), \label{e:boundary}
\end{gathered}
\end{equation}
where $\ul y=\min Y_\Lambda$ and $\ol y=\max Y_\Lambda$.
\end{theorem}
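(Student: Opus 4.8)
The plan is to prove the two assertions of Theorem \ref{t:NAD} in turn: first that a strictly single-dipped contact set $\Lambda$ satisfying \eqref{e:nd} is negative assortative, and then, under a density, that the optimal signal is unique and that the associated functions $\chi_1,\chi_2$ are continuous and solve \eqref{e:obed}--\eqref{e:boundary}.

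For the first part I would begin from the structural description of a strictly single-dipped contact set in Remark \ref{r:SDD}: $\Lambda$ consists of pairwise posteriors, which we identify with the intervals $[\chi_1(y),\chi_2(y)]$ for $y\in Y_\Lambda$, with $\chi_1(y)\le\chi(y)\le\chi_2(y)$, $\chi_2$ monotone, and $\chi_1(y')\notin(\chi_1(y),\chi_2(y))$ for $y<y'$; in particular, any two of these intervals are either nested or ordered (``overlapping'' pairs are impossible, since they would exhibit a strictly single-peaked triple). The first key step is that there is \emph{at most one disclosed state}: if $\delta_{x_1},\delta_{x_2}\in\Lambda$ with $x_1\neq x_2$, then $p(x_i)=W(\delta_{x_i})=V(\gamma(\delta_{x_i}),x_i)$, so for the $\rho$ supplied by \eqref{e:nd} the posterior $\mu=\rho\delta_{x_1}+(1-\rho)\delta_{x_2}$ would satisfy $W(\mu)=\rho V(\gamma(\mu),x_1)+(1-\rho)V(\gamma(\mu),x_2)>\rho p(x_1)+(1-\rho)p(x_2)=\int_X p\,\df\mu$, contradicting dual feasibility \eqref{DS1}. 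The second step uses $\supp(\phi)=[0,1]$: inside any non-degenerate pooled interval $(\chi_1(y),\chi_2(y))$ there is $\phi$-mass, and by laminarity every posterior in an optimal $\tau$ carrying such mass is itself a pair (or singleton) nested inside $[\chi_1(y),\chi_2(y)]$; a short-interval/compactness argument (using that $\Lambda$ is compact and $\gamma$ continuous) then forces a degenerate posterior of $\Lambda$ to lie in $[\chi_1(y),\chi_2(y)]$, for otherwise the infimum of lengths of nested non-degenerate sub-pairs is positive and a whole interval of states is disclosed, contradicting the at-most-one-disclosed-state conclusion.

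Combining these, any two \emph{ordered} (non-nested) non-degenerate pairs in $\Lambda$ would yield two distinct disclosed states, a contradiction; the knife-edge case of two pairs sharing a single endpoint (with a disclosed state exactly there) I would rule out by a complementary-slackness perturbation at that endpoint---re-pooling a small amount of mass across it, which is strictly profitable by \eqref{e:nd} together with Theorem \ref{t:contact}. Hence every pair of posteriors in $\Lambda$ is nested, i.e.\ $\Lambda$ is negative assortative, $\chi_1$ is decreasing and $\chi_2$ increasing on $Y_\Lambda$, and there is a unique disclosed state, at $\ul y:=\min Y_\Lambda$.

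For the second part, assume $\phi$ has density $f$. Uniqueness of the optimal signal is exactly Theorem \ref{t:brenier} of Appendix \ref{s:Brenier}, whose hypotheses (strict single-dippedness, a density) hold, so I invoke it; this lets me identify the optimal $\tau$ with $\Lambda$ and its marginal $g$ on $Y_\Lambda$, and write the posterior at action $y$ as $w_1(y)\delta_{\chi_1(y)}+w_2(y)\delta_{\chi_2(y)}$. Continuity of $\chi_1,\chi_2$ (and $Y_\Lambda=[\ul y,\ol y]$) follows from monotonicity plus the observation that a jump would strand a $\phi$-positive interval of states with no admissible nested assignment. Equation \eqref{e:obed} is then Bayes plausibility plus obedience: because $\chi_1$ is decreasing and $\chi_2$ increasing, $\phi([0,\chi_1(y)))=\int_y^{\ol y}w_1(t)g(t)\,\df t$ and $\phi((\chi_2(y),1])=\int_y^{\ol y}w_2(t)g(t)\,\df t$; differentiating in $y$ gives $w_1(y)g(y)=-f(\chi_1(y))\chi_1'(y)$ and $w_2(y)g(y)=f(\chi_2(y))\chi_2'(y)$, and substituting these into the obedience condition \eqref{e:FOCy}, $w_1(y)u(y,\chi_1(y))+w_2(y)u(y,\chi_2(y))=0$, and clearing $g(y)$ yields \eqref{e:obed}. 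Equation \eqref{e:q'} comes directly from Theorem \ref{t:contact}(3): \eqref{e:FOC} holds at both $x=\chi_1(y)$ and $x=\chi_2(y)$, giving two linear equations in $(q(y),q'(y))$; solving by Cramer's rule, the expression for $q(y)$ is precisely the quantity differentiated on the left of \eqref{e:q'} (and agrees with \eqref{e:q}), while the expression for $q'(y)$ is precisely the right-hand side of \eqref{e:q'}, so \eqref{e:q'} is exactly ``$q'(y)=q'(y)$'' written out. Finally, for \eqref{e:boundary}: $\chi_1(\ol y)=0$ and $\chi_2(\ol y)=1$ because states near $0$ and near $1$ carry $\phi$-mass and, by nesting, can lie only in the outermost pair; and $\chi_1(\ul y)=\chi_2(\ul y)=\chi(\ul y)$ because at $\ul y$ the pair degenerates to the unique disclosed state, where obedience \eqref{e:FOCy} forces that state to be $\chi(\ul y)$.

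The main obstacle is the second step of the first part: converting ``no two disclosed states'' and ``$\supp(\phi)=[0,1]$'' into ``every pooled interval hides a disclosed state'', and cleanly dispatching the shared-endpoint configuration via complementary slackness. By contrast, the second part is essentially bookkeeping with Theorem \ref{t:contact} and the measure-preservation identities, the only real work being the regularity of $\chi_1,\chi_2$ (continuity from monotonicity and the density; differentiability inherited from that of $q$ in Theorem \ref{t:contact}(3), using the non-singularity already invoked for strict single-dippedness to solve \eqref{e:FOC} for $\chi_1(y)$ and $\chi_2(y)$).
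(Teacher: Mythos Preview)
Your proposal is correct and follows essentially the same route as the paper: the key ingredients are (i) at most one degenerate posterior in $\Lambda$ (the paper gets this from \eqref{DO1}, you from \eqref{DS1}---equivalent), (ii) every non-degenerate pair in $\Lambda$ contains a degenerate one (this is exactly Lemma~\ref{l:nad}), hence no two ordered pairs; and then, under a density, Theorem~\ref{t:brenier} for uniqueness and Theorem~\ref{t:contact}(3) solved by Cramer's rule for \eqref{e:q'}. Two small corrections to your sketch: in step (ii) the contradiction is not that ``a whole interval of states is disclosed'' but that the infimum over nested sub-pairs is attained (by compactness of $\Lambda$) and then violated by mass strictly inside; and the shared-endpoint ``knife-edge'' needs no perturbation---the paper shows combinatorially that it forces $\chi_1(y_1)=\chi_1(y_2)$, contradicting the assumed ordering.
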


Like equation \eqref{e:full} in the previous subsection, equation \eqref{e:nd} simplifies in special cases. In the linear case, \eqref{e:nd} holds iff $V$ is strictly concave in $y$.\footnote{In the linear case, $V$ is strictly concave iff no disclosure is uniquely optimal for all priors, by Corollary 1 in \citet{KMZ}.} In the state-independent sender case, it holds iff $V(\gamma(\mu))> \rho V(\gamma(\delta_{x_1})) +(1-\rho )V(\gamma(\delta_{x_2}))$. In the linear receiver case, it holds if $V(y,x)$ is concave in $y$ and satisfies $V(x_1 ,x_2)+V(x_2,x_1 )> V(x_1 ,x_1 )+V(x_2,x_2)$ for all $x_1<x_2$; a sufficient condition for the latter property is strict submodularity of $V$. These conditions generalize the sufficient condition for pooling given by \citet{RS} for the separable subcase.\footnote{Their condition is that $w$ is strictly decreasing in $x$ and $G$ is concave in $y$, where $V(y,x)=w(x)G(y)$.  In the sub-subcase with $G(y)=y$, \eqref{e:nd} holds iff $w$ is strictly decreasing in $x$.}
 
To understand the differential equations, note that if $\chi_1$ and $\chi_2$ are differentiable then \eqref{e:obed} can be written as 
\[u(y,\chi_1(y))f(\chi_1(y))\chi_1'(y)=u(y,\chi_2(y))f(\chi_2(y))\chi_2'(y).\] 
This follows from \eqref{e:FOCy} and \eqref{PS1}, as each posterior $\mu$ inducing $y=\gamma(\mu)$ is 
\[
\mu=\frac{-f(\chi_1(y))\chi_1'(y)}{-f(\chi_1(y))\chi_1'(y)+f(\chi_2(y))\chi_2'(y)}\delta_{\chi_1(y)}+\frac{f(\chi_2(y))\chi_2'(y)}{-f(\chi_1(y))\chi_1'(y)+f(\chi_2(y))\chi_2'(y)}\delta_{\chi_2(y)}.\footnote{This equation is a version of the Monge-Ampere equation in optimal transport (e.g., Section 1.7.6 in \citealt{santambrogio}).}
\]
In addition, \eqref{e:q'} results from solving the system of equations (from the sender's FOC, \eqref{e:FOC}), 
\begin{gather*}
v(y,\chi_1(y))+ q(y)u_y(y,\chi_1(y))+q'(y)u(y,\chi_1(y))=0, \\
v(y,\chi_2(y))+ q(y)u_y(y,\chi_2(y))+q'(y)u(y,\chi_2(y))=0,
\end{gather*}
for $q(y)$ and $q'(y)$, and recalling that $q'$ is the derivative of $q$. Finally, the boundary condition \eqref{e:boundary} follows because the lowest induced action $\underline{y}$ is induced at the disclosed state $\chi (\ul{y})=\chi_1 (\ul{y})=\chi_2 (\ul{y})$, and the highest induced action $\ol y$ is induced at states $0=\chi_1 (\ol y)$ and $1=\chi_2 (\ol y)$.\footnote{In the linear receiver case, \eqref{e:q'} simplifies to
\[
\frac{\df}{\df y} \left(V_y(y,\chi_1(y))\frac{\chi_2(y)-y}{\chi_2(y)-\chi_1(y)}+V_y(y,\chi_2(y))\frac{y-\chi_1(y)}{\chi_2(y)-\chi_1(y)}\right) =-\frac{V_y(y,\chi_2(y))-V_y(y,\chi_1(y))}{\chi_2(y)-\chi_1(y)}.
\]
Geometrically, this says that the slope of the curve $\gamma(\mu)\mapsto \E_{\mu} [V_y(\gamma(\mu),x)]$ is equal to the negative of the slope of the secant passing through the points $(\chi_1(\gamma(\mu)),V_y(\gamma(\mu),\chi_1(\gamma(\mu))))$ and  $(\chi_2(\gamma(\mu)),V_y(\gamma(\mu),\chi_2(\gamma(\mu))))$. \citet{NP} derive this condition for the separable sub-subcase with $V_y(y,x)=w(x)$.}

We can also give primitive conditions on $V$ and $u$ for \eqref{e:nd} to hold, and hence for the unique optimal signal to be negative assortative.

\begin{corollary}\label{c:ndSDD}
Assume that the condition for strict single-dippedness given in Theorem \ref{t:SDPD} holds. Then for all $x_1<x_2$ there exists $\rho\in (0,1)$ such that \eqref{e:nd} holds iff, for all $y\in Y$,
\begin{equation}\label{e:ndSDD}
\begin{gathered}
V_{yy}(y,\chi (y)) \leq \tfrac{V_y(y,\chi (y)) u_{yy}(y,\chi (y))}{u_y(y,\chi (y))}
+2\tfrac{V_{yx}(y,\chi (y)) u_y(y,\chi (y)) - V_y(y,\chi (y)) u_{yx}(y,\chi (y))}{u_x(y,\chi (y))}.
\end{gathered}
\end{equation}

\end{corollary}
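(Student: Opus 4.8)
The plan is to handle \eqref{e:nd} one pair of states at a time, reduce it to a curvature comparison for the sender's ``pooling payoff'' as a function of the pooling weight, prove the ``only if'' direction by a local expansion in the width of the pair, and prove the ``if'' direction from the structure of the optimal strictly single-dipped plan. \emph{Reduction.} Fix $x_1<x_2$ and, for $\rho\in[0,1]$, let $y_\rho:=\gamma(\rho\delta_{x_1}+(1-\rho)\delta_{x_2})$, i.e.\ the unique solution of the obedience identity $\rho\,u(y_\rho,x_1)+(1-\rho)\,u(y_\rho,x_2)=0$; by Assumptions \ref{a:smooth}--\ref{a:ord}, $\rho\mapsto y_\rho$ is smooth and strictly decreasing, with $y_0=\gamma(\delta_{x_2})$ and $y_1=\gamma(\delta_{x_1})$. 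Set $g(\rho):=\rho V(y_\rho,x_1)+(1-\rho)V(y_\rho,x_2)$ and let $\ell(\rho):=\rho V(\gamma(\delta_{x_1}),x_1)+(1-\rho)V(\gamma(\delta_{x_2}),x_2)$ be the affine function on the right of \eqref{e:nd}. Since $g(0)=\ell(0)$ and $g(1)=\ell(1)$, condition \eqref{e:nd} holds for the pair $(x_1,x_2)$ and some $\rho$ if and only if $g(\rho)>\ell(\rho)$ for some $\rho\in(0,1)$.

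\emph{Local expansion; the ``only if'' direction.} Differentiating the obedience identity once and twice in $\rho$ gives $y'_\rho$ and $y''_\rho$ via the implicit function theorem, and hence closed forms for $g'(\rho)$ and $g''(\rho)$; the quantities $u_{yy},u_{yx}$ enter through $y''_\rho$ and $V_{yy},V_{yx}$ through the curvature of $\rho\mapsto V(y_\rho,\cdot)$. Shrinking the pair to a point, $x_1,x_2\to\bar x:=\chi(\bar y)$, all of these converge to their values at $(\bar y,\bar x)$, $g''(\rho)\to0$, and carrying the expansion to second order in the width $d:=x_2-x_1$ yields
\[
g(\rho)-\ell(\rho)=\frac{u_x(\bar y,\bar x)^2}{2\,u_y(\bar y,\bar x)^2}\,\Delta(\bar y)\,d^{2}\,\rho(1-\rho)+o(d^{2}),
\]
where $\Delta(y)$ denotes the right-hand side minus the left-hand side of \eqref{e:ndSDD} at $y$ and the prefactor is strictly positive. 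Hence, if \eqref{e:ndSDD} fails strictly at some $\bar y$ (so $\Delta(\bar y)<0$), then $g<\ell$ on all of $(0,1)$ for every narrow pair straddling $\chi(\bar y)$, so \eqref{e:nd} fails for those pairs; contrapositively, \eqref{e:nd} for all pairs forces \eqref{e:ndSDD} for all $y$. The same computation shows that $\Delta(y)=0$ is exactly the statement that full disclosure ($\chi_1\equiv\chi_2\equiv\chi$) satisfies the sender's first-order condition \eqref{e:FOC}---equivalently, solves the negative-assortative equation \eqref{e:q'} in the degenerate limit $\chi_1,\chi_2\to\chi$---which is the conceptual reading of \eqref{e:ndSDD}.

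\emph{The ``if'' direction and the main obstacle.} Assume \eqref{e:ndSDD} for all $y$. The expansion already delivers \eqref{e:nd} for all sufficiently narrow pairs: directly where $\Delta(\bar y)>0$, and where $\Delta(\bar y)=0$ from the next-order term, which turns out to be proportional to $\rho(1-\rho)(1-2\rho)$ and hence strictly positive for $\rho<\tfrac12$. Extending this to arbitrary pairs $x_1<x_2$ is the hard part. Here I would use that, under the strict single-dippedness condition of Theorem \ref{t:SDPD}, every optimal signal is strictly single-dipped with contact set described by $\chi_1\le\chi\le\chi_2$ (Remark \ref{r:SDD}) and, when the prior has a density, unique; a ``disclosure segment'' of the optimal plan (an interval on which $\chi_1=\chi=\chi_2$) is possible, by the degenerate-limit identity just noted, only where \eqref{e:ndSDD} holds with equality, and one shows such a segment cannot be optimal by a strictly profitable pairwise perturbation of it in the spirit of the variational theorem, Theorem \ref{l:ssdd}, the sign of the perturbation being controlled by the strictness in Theorem \ref{t:SDPD}'s hypothesis (equivalently by $|S|\ne0$ on the relevant domain). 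The crux---and where I expect the genuine work to lie---is this perturbation argument together with the continuation from ``narrow pairs'' to ``all pairs,'' both of which lean on complementary slackness (Theorem \ref{t:contact}) and the non-singularity of $|S|$; granting them, \eqref{e:nd} holds for every pair, which completes the equivalence.
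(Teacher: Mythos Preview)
Your ``only if'' direction is essentially the paper's: a Taylor expansion of the pooling payoff around a narrow pair centered at $\chi(\bar y)$, whose leading term is exactly the quantity in \eqref{e:ndSDD}. That part is fine.

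The ``if'' direction, however, has a genuine gap, and your route is both circuitous and incomplete. You propose to deduce \eqref{e:nd} for \emph{all} pairs from structural properties of optimal signals (no disclosure segment can be optimal, etc.). But \eqref{e:nd} is a pointwise inequality about $V$ and $u$ alone, independent of the prior $\phi$; invoking optimal signals brings $\phi$ back in, and you never explain how varying $\phi$ or appealing to complementary slackness delivers the inequality for an \emph{arbitrary} pair $x_1<x_2$. Your own hedging (``the crux---and where I expect the genuine work to lie'') is accurate: the continuation from narrow pairs to all pairs is not supplied, and the perturbation you sketch does not obviously control the global sign of $g-\ell$.

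The paper's argument is direct and does not go through optimal signals at all. It fixes $x$ and the candidate multiplier $q(y)=-V_y(y,\chi(y))/u_y(y,\chi(y))$, writes
\[
V(y,x)+q(y)u(y,x)-V(\gamma(\delta_x),x)=\int_{\gamma(\delta_x)}^{y}\bigl[V_y(\tilde y,x)+q(\tilde y)u_y(\tilde y,x)+q'(\tilde y)u(\tilde y,x)\bigr]\,\df\tilde y,
\]
uses \eqref{e:ndSDD} to bound $q'(\tilde y)$ below by the ``diagonal'' combination $\dfrac{V_y u_{yx}}{u_y u_x}-\dfrac{V_{yx}}{u_x}$ evaluated at $(\tilde y,\chi(\tilde y))$, and then rewrites the resulting integrand as a double integral over $[x,\chi(\tilde y)]$ whose sign is pinned down precisely by the monotonicity hypotheses of Theorem \ref{t:SDPD} (that $V_{yx}/u_x$ and $u_{yx}/u_x$ are increasing in $x$). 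This yields, for every $x$ and every $y>\gamma(\delta_x)$, the strict inequality
\[
V(y,x)+q(y)u(y,x)>V(\gamma(\delta_x),x),
\]
from which a one-line Taylor expansion at $y=\gamma(\delta_{x_2})$ shows that \eqref{e:nd} holds for every pair $x_1<x_2$ at sufficiently small $\rho>0$. The point you missed is that the single-dippedness hypotheses are used \emph{analytically}, to sign an integrand, not structurally via properties of an optimal plan.
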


Equation \eqref{e:ndSDD} is a local necessary condition for \eqref{e:nd}: if  \eqref{e:ndSDD} fails, then  \eqref{e:nd} also fails for $x_1<x_2$ sufficiently close to $\chi(y)$. When the condition for strict single-dippednes holds, this local necessary condition turns out to be globally sufficient for \eqref{e:nd}. Equation \eqref{e:ndSDD} simplifies dramatically in some special cases. In the linear receiver case, \eqref{e:ndSDD} simplifies to $V_{yy}(y,y)+2V_{yx}(y,y) \leq 0$; in the translation-invariant subcase of the linear receiver case, this simplifies further to $P''(0)\geq 0$. In the separable (resp., translation-invariant) subcase of the state-independent sender case, \eqref{e:ndSDD} simplifies to ${V_{yy}(y)}/{V_y(y)}\leq 2I'(y)/I(y)$ (resp., ${V_{yy}(y)}/{V_y(y)}\leq {T''(0)}/{T'(0)}$).

In Appendix \ref{s:examples}, we give three examples of optimal single-dipped negative assortative disclosure. Example \ref{e:RS} illustrates how the differential equations \eqref{e:obed}--\eqref{e:q'} can sometimes be explicitly solved to find the optimal signal. Example \ref{e:quantile} characterizes the optimal signal in quantile persuasion (i.e., Case (3c) in Section \ref{s:special}). In quantile persuasion, our sufficient conditions for strictly single-dipped disclosure to be optimal are not satisfied, and there are multiple optimal signals; however, one optimal signal is strictly single-dipped negative assortative. Finally, Example \ref{ex:segpair} illustrates that in some cases the unique optimal signal randomizes conditional on the state, even when the prior is atomless.\footnote{In contrast, \citet{Zheng} shows that there is always a deterministic optimal signal in the separable subcase of the linear receiver or state-independent sender case.}

\section{Optimal Productive Transport: Theory and Applications} \label{s:applications}

We now give a general interpretation of our framework in terms of assigning inputs to productive units, and describe the implications of our results for matching, option pricing, and partisan gerrymandering, as well as some specific persuasion models. We pay particular attention to the implications of single-dippedness summarized in Table 1 in the Introduction.

\subsection{Optimal Productive Transport}

\label{s:general}

Our signal-based primal problem \eqref{PS} may be generalized as follows: given a distribution $\phi \in \Delta([0,1])$ with support $\supp (\phi)=X$ of \emph{inputs} $x$, find a distribution $\tau \in \Delta(\Delta(X))$ of \emph{productive units} $\mu \in \Delta(X)$ to 
\begin{gather*}
\text{maximize} \quad \int_{\Delta (X)} \int_{X} V(\tilde{\gamma}(\mu),x) \df \mu(x) \df \tau (\mu)\\
\text{subject to}\quad \int_{\Delta(X)} \mu \df \tau (\mu) = \phi,
\end{gather*}
where $\tilde{\gamma}:\Delta(X) \to Y$ is an arbitrary \emph{production function} that specifies the \emph{output} $y $ produced by unit $\mu$.\footnote{The interpretation of the assumption that the domain of $\tilde{\gamma}$ is probability measures on $X$ rather than arbitrary measures is that production exhibits constant returns to scale: i.e., nothing is gained by varying the scale of a productive unit.} This \emph{optimal productive transport} problem is the same as \eqref{PS}, except that the value of an arbitrary production function $\tilde{\gamma}(\mu)$ is not necessary given by the first-order condition \eqref{e:FOCy} for some function $u$.

But now suppose that $\tilde{\gamma}$ satisfies the following two properties:
\begin{description}
\item[Betweenness] For any $\mu,\eta \in \Delta(X)$ satisfying $\tilde{\gamma}(\mu)>\tilde{\gamma}(\eta)$, and any $\rho \in (0,1)$, we have
$\tilde{\gamma}(\mu)>\tilde{\gamma}(\rho \eta+ (1-\rho) \mu)>\tilde{\gamma}(\eta).$
\item[Continuity] $\tilde{\gamma}$ is continuous on $\Delta(X)$.
\end{description}
The key property here is the first one, which says that mixing two units produces an output in between those produced by each of them in isolation. The following result---which adapts Proposition A.1 of \citet{dekel}---shows that these properties ensure the existence of a function $u$ such that $\tilde{\gamma}$ is given by \eqref{e:FOCy}. Thus, under Betweenness and Continuity, the optimal productive transport problem is exactly the same as our persuasion problem.

\begin{proposition} \label{p:dekel}
A function $\tilde{\gamma}$ satisfies Betweenness and Continuity iff there exists a continuous function $u:Y \times X \to \mathbb{R}$ such that, for any $\mu \in \Delta(X)$,
\[
\int_X u(y,x)\df \mu (x) = (<) 0 \iff y = (>)\tilde\gamma(\mu).
\]
\end{proposition}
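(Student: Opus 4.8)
\emph{The ``if'' direction is easy.} Suppose such a $u$ exists. For Betweenness, fix $\mu,\eta$ with $\tilde\gamma(\mu)>\tilde\gamma(\eta)$ and $\rho\in(0,1)$, and set $\nu=\rho\eta+(1-\rho)\mu$. By the stated equivalences (and the trichotomy of the real number $\int_X u(y,x)\df\mu(x)$), $\tilde\gamma(\mu)>\tilde\gamma(\eta)$ forces $\int_X u(\tilde\gamma(\mu),x)\df\eta(x)<0$ and $\int_X u(\tilde\gamma(\eta),x)\df\mu(x)>0$. Since $\mu\mapsto\int_X u(y,x)\df\mu(x)$ is affine, $\int_X u(\tilde\gamma(\mu),x)\df\nu(x)=\rho\int_X u(\tilde\gamma(\mu),x)\df\eta(x)<0$, so $\tilde\gamma(\nu)<\tilde\gamma(\mu)$; symmetrically $\int_X u(\tilde\gamma(\eta),x)\df\nu(x)=(1-\rho)\int_X u(\tilde\gamma(\eta),x)\df\mu(x)>0$, so $\tilde\gamma(\nu)>\tilde\gamma(\eta)$. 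For Continuity, let $\mu_n\to\mu$ weak*; since $Y$ is compact it suffices to show every convergent subsequence of $(\tilde\gamma(\mu_n))$ has limit $\tilde\gamma(\mu)$. If $\tilde\gamma(\mu_{n_k})\to y^{*}$, then by uniform continuity of $u$ on the compact set $Y\times X$ the functions $u(\tilde\gamma(\mu_{n_k}),\cdot)$ converge uniformly to $u(y^{*},\cdot)$, so $0=\int_X u(\tilde\gamma(\mu_{n_k}),x)\df\mu_{n_k}(x)\to\int_X u(y^{*},x)\df\mu(x)$, whence $y^{*}=\tilde\gamma(\mu)$.

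\emph{The ``only if'' direction} is where the work is; the plan is to follow Proposition A.1 of \citet{dekel}, adapted to an infinite state space and the weak* topology. Let $[a,b]:=\tilde\gamma(\Delta(X))$, a compact interval since $\Delta(X)$ is compact and connected. The heart of the argument is to show that the level sets $L_y:=\{\mu:\tilde\gamma(\mu)=y\}$ are ``flat'', i.e.\ that for each interior $y\in(a,b)$ the strict contour sets $A_y:=\{\mu:\tilde\gamma(\mu)>y\}$ and $B_y:=\{\mu:\tilde\gamma(\mu)<y\}$ are \emph{convex}. Betweenness enters twice here: first, strict Betweenness implies that any local maximum (minimum) of $\tilde\gamma$ is global — if $\tilde\gamma(\nu)>\tilde\gamma(\mu_0)$ then $\tilde\gamma(\rho\mu_0+(1-\rho)\nu)>\tilde\gamma(\mu_0)$ with $\rho\mu_0+(1-\rho)\nu\to\mu_0$ — so that $L_y\subseteq\mathrm{cl}(A_y)\cap\mathrm{cl}(B_y)$ for interior $y$; second, strict Betweenness together with Continuity yields the ``indifference'' form of betweenness (if $\tilde\gamma(\mu)=\tilde\gamma(\eta)$ then $\tilde\gamma(\rho\mu+(1-\rho)\eta)=\tilde\gamma(\mu)$ for all $\rho$) by an Archimedean-type argument, and this is precisely what rules out the mixture of two equal-output units producing a strictly higher or lower output, which is what makes $A_y$ and $B_y$ convex. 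Granting convexity, $A_y$ and $B_y$ are disjoint, nonempty, open, convex subsets of $\Delta(X)\subseteq M(X)=C(X)^{*}$ with the locally convex weak* topology, so Hahn--Banach separation (using that both are open) gives a nonzero weak*-continuous linear functional $\ell_y$ and a scalar $c_y$ with $\ell_y>c_y$ on $A_y$ and $\ell_y<c_y$ on $B_y$; weak*-continuous functionals on $C(X)^{*}$ are integration against elements of $C(X)$, so $\ell_y(\cdot)=\int_X h_y\,\df(\cdot)$ for a unique $h_y\in C(X)$, and I set $u(y,\cdot):=h_y-c_y$. On $\Delta(X)$, $\int_X u(y,x)\df\mu(x)=\ell_y(\mu)-c_y$ is then $>0$ on $A_y$, $<0$ on $B_y$, and $=0$ on $L_y$ (by continuity, since $L_y\subseteq\mathrm{cl}(A_y)\cap\mathrm{cl}(B_y)$); and since $A_y$, $B_y$ are open and a linear functional that vanishes at an interior point of a set on which it has constant sign vanishes identically, the inequalities are in fact equivalences — the desired characterization at every interior $y$.

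It remains to handle $y\notin(a,b)$ and to get joint continuity of $u$. For $y\le a$ (resp.\ $y\ge b$), the claim reduces to ``$\int_X u(y,x)\df\mu(x)$ has a fixed sign for all $\mu$, vanishing only on the possibly-empty $L_a$ (resp.\ $L_b$)'': one takes a supporting functional of $\Delta(X)$ along $L_a$, resp.\ $L_b$, together with the limits of $u(y,\cdot)$ as $y\downarrow a$, resp.\ $y\uparrow b$, to extend $u$ continuously. After normalizing $u(y,\cdot)$ for each $y$ (say by $\|u(y,\cdot)\|_\infty=1$ plus the sign convention above), the cutting functional is unique, and continuity of $\tilde\gamma$ makes $y\mapsto L_y$ vary continuously, hence pins down $y\mapsto u(y,\cdot)$ continuously in $\|\cdot\|_\infty$; this gives joint continuity of $(y,x)\mapsto u(y,x)$.

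\emph{Main obstacle.} The separation step, the identification of $\ell_y$ with integration against a continuous function, and the upgrade from inequalities to equivalences are all routine. The genuinely substantive step — and the one truly inherited from \citet{dekel} — is the convexity of the contour sets, equivalently deriving the indifference form of betweenness from the strict form plus Continuity. A secondary nuisance is the boundary of the range: at $y\in\{a,b\}$ (and $y$ outside $[a,b]$) the ``cutting hyperplane'' degenerates into a merely supporting hyperplane of $\Delta(X)$, which must still be selected so that $u$ remains continuous in $y$.
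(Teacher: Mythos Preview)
Your ``if'' direction is fine. For ``only if'', you take a different route from the paper: the paper simply reduces to Dekel's Proposition~A.1 (together with his Section~3.C extension to general outcome spaces) by defining $\mu\succsim\eta\iff\tilde\gamma(\mu)\ge\tilde\gamma(\eta)$, checking that Betweenness and Continuity correspond to Dekel's axioms, reading off the implicit-utility representation $\int\hat u(\hat y,x)\,\df\mu=\hat y$, and then composing with a continuous strictly increasing $\check\gamma$ to set $u(y,x)=\hat u(\check\gamma(y),x)-\check\gamma(y)$. You instead attempt to re-prove Dekel's representation via Hahn--Banach separation of the strict contour sets.

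This strategy is sound in outline, but your joint-continuity step is a genuine gap. You assert that, after a sup-norm normalization, the separating functional $u(y,\cdot)$ is unique and that ``continuity of $\tilde\gamma$ makes $y\mapsto L_y$ vary continuously, hence pins down $y\mapsto u(y,\cdot)$ continuously in $\|\cdot\|_\infty$''. Neither claim is justified. Uniqueness of the separating hyperplane in the affine hull of $\Delta(X)$ requires that the affine span of $L_y$ has codimension one there; indifference betweenness gives convexity of $L_y$, not its affine dimension. More seriously, even granting uniqueness at each $y$, the passage from ``$L_y$ varies continuously'' (in an unspecified topology) to ``the normalized $u(y,\cdot)$ varies continuously in $C(X)$'' needs a separate compactness-plus-uniqueness argument that you have not supplied. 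Dekel's own proof avoids this entirely by constructing $\hat u(y,x)$ explicitly via a calibration/mixture procedure that delivers joint continuity directly---so the continuity of the selection is itself a substantive step inherited from Dekel, not just the convexity of contour sets you flag. Your handling of $y\notin(a,b)$ is likewise under-specified: for $y<a$ you need $u(y,x)>0$ for \emph{every} $x$, and ``a supporting functional \ldots\ together with the limits of $u(y,\cdot)$'' presupposes the continuity you are trying to establish.
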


In the persuasion context, the function $u$ in Proposition \ref{p:dekel} is the receiver's marginal utility. More generally, $u(y,x)$ can be viewed as a measure of the ``efficacy'' of input $x$ in the production of output $y$. For the remainder of this section, we assume Betweenness and Continuity, as well as that $u$ satisfies Assumptions \eqref{a:smooth}--\eqref{a:ord}.

The general interpretation of the signal-based primal problem \eqref{PS} is that a planner obtains utility $V(y,x)$ from using input $x$ in the production of output $y$, and assigns inputs to productive units according to a \emph{production plan} $\tau \in \Delta(\Delta(X))$ to maximize the expectation of $V(\gamma(\mu),x)$. The corresponding interpretation of the signal-based dual problem \eqref{DS} is that there is a decentralized economy with price $p(x)$ attached to input $x$, where the zero-profit condition \eqref{DS1} says that an entrepreneur who obtains utility $V(y,x)$ from using input $x$ in the production of output $y$ cannot create a unit $\mu \in \Delta(X)$ that leaves her with a positive utility after paying for its inputs.

More precisely, a pair $(\tau,p)\in \Delta(\Delta(X))\times L(X)$ is a \emph{competitive equilibrium} if\\
(i) all inputs are assigned to productive units:  $\int \mu \df \tau (\mu)=\phi$;\\
(ii) operating units make zero profits: $\int p(x)\df \mu =W(\mu)$ for all $\mu \in \supp (\tau)$;\\
(iii) no entrant can make strictly positive profits: $\int p(x)\df \mu \geq W(\mu)$ for all $\mu\in \Delta(X)$.\\
Then, by strong duality (Lemma \ref{l:dual}), we have
\begin{remark} \label{r:ce}
A competitive equilibrium exists, and a pair $(\tau,p)$ is a competitive equilibrium iff $\tau$ solves \eqref{PS} and $p$ solves \eqref{DS}.
\end{remark}
We will make use of the interpretation of optimal plans and prices as competitive equilibria in the matching context in Section \ref{s:matching}.

Similarly, the general interpretation of the outcome-based primal problem \eqref{PO} is that a planner chooses an outcome $\pi \in \Delta(Y\times X)$ to maximize her expected utility, subject to the constraints that all inputs are utilized and that each output $y$ is produced by a unit $\mu$ satisfying $\gamma(\mu)=y$. The corresponding dual can again be interpreted as a decentralized economy, where now the zero-profit condition \eqref{DO1} says that an entrepreneur who produces any output $y$ cannot profitably employ any input in the production of this output, after accounting for the input's price, $p(x)$, and the marginal effect of its use on the output, $q(y)u(y,x)$. Finally, complementary slackness (Theorem \ref{t:contact}) says that any entrepreneur who breaks even must employ only inputs that satisfy \eqref{DO1} with equality.

\subsection{Matching and Clubs} \label{s:matching}

Assigning workers with heterogeneous abilities to firms with workplace peer effects (i.e., intra-firm spillovers) is an important topic in labor economics (\citealt{KM}, \citealt{SaintPaul}, \citealt{Eeckhout}, \citealt{BTZ}). In \citet{SaintPaul}, there is a continuum of workers indexed by ability $x \in [0,1]$. The population distribution of $x$ is $\phi$, with support $X\subset [0,1]$. Workers sort into firms, which are ex ante homogeneous and face constant returns to scale, so that a firm can be identified with its distribution of workers $\mu \in \Delta(X)$. Workplace peer effects depend on the mean worker ability in a firm, $\gamma(\mu)=\mathbb{E}_{\mu}[x]$, so the output of a worker with ability $x$ in firm $\mu$ can be written as $V(\gamma(\mu),x)$. The planner's problem of assigning workers to firms to maximize total output is thus precisely \eqref{PS}, in the linear receiver case where $\gamma(\mu)=\mathbb{E}_{\mu}[x]$. Moreover, the problem of finding competitive equilibrium wages $p(x)$ for workers with ability $x$ (where, as in Section \ref{s:general}, a competitive equilibrium is an assignment of workers to firms and wages such that operating firms make zero profits and no entrant can make strictly positive profits), is precisely \eqref{DS}. In this context, Lemma \ref{l:dual} says that a competitive equilibrium exists and maximizes total output.\footnote{This generalizes Proposition 3 of \citet{SaintPaul}.}

\citet{SaintPaul} considered the special case of this model where $V(y,x)=xG(y)$ for an increasing function $G$. In this case, the total output of a firm $\mu$ equals $\mathbb{E}_{\mu}[x]G(\mathbb{E}_{\mu}[x])$. Since this is a function only of $\mathbb{E}_{\mu}[x]$, \citet{SaintPaul} coincides with the linear case of our model. We now summarize the implications of our results for the general linear receiver case of our model in the worker assignment context. We pay special attention to the separable case where $V(y,x)=w(x)G(y)$ for increasing functions $w$ and $G$, which may be particularly natural in this context.

In the worker assignment context, an assignment is pairwise if each firm contains at most two worker types, and a pairwise assignment is strictly single-dipped if firms with more heterogeneous workers have higher average worker ability. Since we are in the linear receiver case, Theorem \ref{t:SDPD} implies that a strictly single-dipped assignment is optimal whenever $V_y$ is strictly convex in $x$---or, in the separable case, $w$ is strictly convex. Intuitively, $V_y$ is the marginal benefit of having higher-ability coworkers, so when this is convex in a worker's own ability, it is optimal for firms where the distribution of worker abilities is more spread out to have higher average worker ability. Moreover, when in addition $V$ is strictly increasing in both arguments and convex in $x$, firms with more heterogeneous workers also produce higher output, as if $\mathbb{E}_{\mu}[x]=y<y'=\mathbb{E}_{\mu'}[x]$ and $\supp(\mu')$ is more spread out than $\supp(\mu)$, then
\[\mathbb{E}_{\mu}[V(y,x)]<\mathbb{E}_{\mu}[V(y',x)]<\mathbb{E}_{\mu'}[V(y',x)].\footnote{Here, the first inequality follows because $V_y >0$, and the second inequality follows because $V_{x},V_{xx}>0$ and $\mu'$ can be obtained from $\mu$ by increasing its mean and then taking a mean-preserving spread (i.e., $\mu'$ is larger than $\mu$ in the increasing convex order).}\]
Thus, in this case firms with more heterogeneous workforces are more productive, and hence also pay higher average wages (by zero profit).

Our conditions for full disclosure (i.e., \emph{segregation}, where each firm has a homogeneous workforce) and negative assortative disclosure/matching (where all firms can be ordered from least to most heterogeneous) are also interesting in the worker assignment context. By Corollary \ref{c:fds}, segregation is optimal if $V$ is convex in $y$ and supermodular---or, in the separable case, $G$ is convex. On the other hand, by Corollary \ref{c:ndSDD}, negative assortative matching is optimal if $V_y$ is strictly convex in $x$ and $V_{yy}+2V_{yx}\leq 0$---or, in the separable case, $w$ is strictly convex and $w(x)G''(y)+2w'(x)G'(y)\leq 0$. Intuitively, these results say that segregation is optimal if peer effects are convex, and that negative assortative matching is optimal if single-dipped assignment is optimal and peer effects are sufficiently concave.

We also note that $p(x)$ is convex whenever $V$ is convex in $x$---or, in the separable case, $w$ is convex. This follows because $p(x)=\sup_{y\in Y} V(y,x)+q(y)(x-y)$, so if $V_{xx}>0$ then $p$ is the supremum of a set of convex functions. Recalling that $p(x)$ is the equilibrium wage of a worker with ability $x$, this says that wages rise more than one-for-one with ability. This result reflects the fact that higher-ability workers are not only better workers, but also better coworkers.

A model that is equivalent to worker-firm matching can capture the assignment of students to schools with peer effects, or more generally the assignment of heterogeneous agents to clubs. In \citet{ArnottRowse}, there is a continuum of students indexed by ability $x$, who must be assigned to ex ante identical schools, which can be identified with their student bodies $\mu \in \Delta(X)$. A student with ability $x$ who attends a school $\mu$ attains an education that she values at $V(\gamma(\mu),x)$, where again $\gamma(\mu)=\mathbb{E}_{\mu}[x]$. \citet{ArnottRowse} study the planner's problem of assigning students to schools to maximize total educational attainment: this problem is equivalent to the linear receiver case of \eqref{PS}. The ``decentralized'' version of this problem is considered by \citet{EppleRomano}, who study competitive equilibrium in a market for private schooling, where a school $\mu$ with mean student ability $y=\gamma(\mu)$ charges tuition $t(y,x)$ to students of ability $x$. Here, a competitive equilibrium may be defined precisely as in Section \ref{s:general}, with the interpretation that $p(x)$ is the equilibrium utility of a student with ability $x$, and that the tuition charged to a student with ability $x$ to attend a school with mean student ability $y$ is $t(y,x)=V(y,x)-p(x)$. (The assumption here that schools take student utility levels as given when setting tuition is called \emph{utility taking} in the literature on club economies: e.g., \citealt{EGSZ}.) In this context, Lemma \ref{l:dual} says that a competitive equilibrium exists and maximizes total educational attainment.\footnote{\citet{ArnottRowse} additionally endogenize public spending on schools, while \citet{EppleRomano} let students differ in income as well as ability. \citet{ArnottRowse} focus on the Cobb-Douglas production function $V(y,x)=x^\alpha y^\beta$, and provide conditions for the optimality of segregation (``perfect streaming'') or no disclosure (``complete mixing'').}

The conditions on $V$ under which an optimal assignment of students to schools is single-dipped, segregation, or negative assortative are the same as in the worker assignment context. Indeed, bearing in mind that $p(x)$ is the equilibrium utility of an agent with ability $x$ in either model---so that the wage of a worker with ability $x$ is $p(x)$, and the tuition paid by a student with ability $x$ who attends a school with mean ability $y$ is $V(y,x)-p(x)$---the models are identical. In particular, if $V_y$ is strictly convex in $x$, then a strictly single-dipped assignment is optimal, so that schools with more heterogeneous student bodies are more desirable for all students.

Finally, the outcome-based dual \eqref{DO} has a particularly natural interpretation in the student assignment/club economy setting. In a competitive equilibrium, a school with mean student ability $y$ charges tuition $t(y,x)=q(y)(y-x)$ to students with ability $x$. (Thus, a school subsides its students with above-average ability, and charges students with below-average ability.) A student with ability $x$ attends the school $y$ that gives her the highest utility, $p(x)=\sup_{y \in Y} V(y,x)-q(y)(y-x)$. All operating schools break even, and no entrepreneur can turn a positive profit by starting a new school.

\subsection{Option Pricing} \label{s:option}

In mathematical finance, the literature on martingale optimal transport (e.g., \citealt{BHP}, \citealt{GHT}, \citealt{BJ}) studies the following problem. An underlying asset will be marketed in two future periods, 1 and 2. In period 0, an exotic option is for sale, which will pay $V(y,x)$ if the realized asset price is $y$ in period 1 and $x$ in period 2. An analyst knows the marginal distributions of $y$ and $x$, but her only information regarding their joint distribution is that it satisfies $\mathbb{E}[x|y]=y$ for every $y$. The interpretation of this assumption is that there are liquid markets for European call options on the asset price in each period, from which the analyst can infer the marginal distributions of the asset price (by \citealt{BreedenLitzenberger}); and the analyst believes that the market satisfies no-arbitrarge, which implies that the asset price is a martingale under the risk-neutral measure. The analyst's problem is to find the joint distribution $\pi \in \Delta (Y\times X)$ that maximizes the expected value of the option (and, thus, the maximum option price consistent with no-arbitrage), subject to the two marginal constraints and the martingale constraint.

Now consider the variant of this problem where the marginal distribution of $y$ is also unknown. The interpretation is that there is a liquid market for call options only on the period 2 asset price: for example, perhaps the asset is a share in a firm that is expected to go public after period 1, and there are only liquid options markets for the prices of publicly traded firms. Then the analyst's problem of determining the maximum option price, subject to constraint that the marginal distribution of the period 2 price $x$ is $\phi$, and the martingale constraint $\mathbb{E}[x|y]=y$, is precisely \eqref{PO}, in the linear receiver case where $\gamma(\mu)=\mathbb{E}_{\mu}[x]$. Lemma \ref{l:dual} establishes strong duality for this problem.\footnote{The possibility that the period 1 marginal may be unknown, and the resulting problem \eqref{PO}, are briefly considered in Corollary 1.5 of \citet{Acciaio}. That result establishes weak duality and primal attainment, but not dual attainment, which as we discuss is an important issue.}

In this context, the optimal dual variables $(p,q)$ have an important interpretation. Recall that the option to be priced pays $V(y,x)$ when the asset price is $y$ in period 1 and $x$ in period 2. An alternative to buying this exotic option is to buy a simple option that pays $p(x)$ when the period 2 asset price is $x$, and in addition to plan to sell $q(y)$ units of the asset itself in period 1 when the period 1 asset price is $y$. Since selling $q(y)$ units at price $y$ yields a profit of $q(y)(y-x)$ when the period 2 asset price turns out to be $x$, this alternative strategy is sure to outperform---or \emph{super-replicate}---the exotic option iff
\[p(x)+q(y)(y-x) \geq V(y,x) \quad \text{ for all } (y,x)\in Y \times X.\]
Note that this condition is precisely \eqref{DO1}. Thus, Lemma \ref{l:dual} implies that the maximum option price can be calculated as either $\int_{Y \times X}V(y,x) \df \pi(y,x)$ under the joint distribution of asset prices $\pi$ that solves \eqref{PO} (i.e., the maximum expected value of the exotic option), or as $\int_{X} p(x) \df \phi(x)$, for the simple option payouts $p(x)$ that solve \eqref{DO} (i.e., the price of the cheapest simple option that super-replicates the exotic option).

In the option pricing context, a joint distribution of asset prices is pairwise if it is a \emph{binomial tree}: each period 1 price $y$ can be followed by at most two distinct period 2 prices $x$. Moreover, a pairwise joint distribution is strictly single-dipped if more dispersed period 2 prices follow higher period 1 prices: that is, if riskier assets are more expensive. Since we are in the linear receiver case, Theorem \ref{t:SDPD} implies that the option price is maximized by a strictly single-dipped distribution whenever $V_y$ is strictly convex in $x$. This condition is known as the ``martingale Spence-Mirrlees condition'' in the MOT literature, which \citet{BJ}, \citet{HT}, and \citet{BHT} show implies that a strictly single-dipped distribution (which they call a ``left-curtain coupling'') is optimal in the standard MOT problem (where the period 1 asset price distribution is fixed exogenously).\footnote{Specifically, \citet{BJ} show that the unique optimal outcome is single-dipped in the translation-invariant subcase if $P'$ is strictly convex (Theorem 6.1), and in the separable subcase if $w$ is strictly convex (Theorem 6.3); while Theorem 5.1 in \citet{HT} and Theorem 3.3 in \citet{BHT} extend this conclusion to the general linear receiver case where $V_y$ is strictly convex in $x$. All these papers concern the MOT context, where the distribution of $y$ is fixed exogenously.} Moreover, by Corollary \ref{c:fds}, full disclosure (where $x=y$ with probability 1) is optimal if $V$ is strictly convex in $y$ and supermodular; while Corollary \ref{c:ndSDD} implies that negative assortative matching (where higher period 1 prices are always followed by more dispersed period 2 prices, so more expensive assets are riskier) is optimal if $V_y$ is strictly convex in $x$ and $V_{yy}+2V_{yx}\leq 0$.

The formula for $q(y)$ also has an interesting interpretation in the option pricing context. By equation \eqref{e:q}, for every period 1 price $y$ in the support of the marginal of an optimal joint distribution $\pi$, we have $$q(y)=\mathbb{E}_{\pi}[V_{y}(y,x)|y].$$ This is a version of Shephard's lemma: the amount of the asset sold at period 1 price $y$ under the cheapest super-replicating strategy equals the derivative of the option price with respect to $y$. Moreover, in finance, the derivative of the option price with respect to the underlying asset price is known as the option's ``Delta.'' Thus, in the option pricing context, $q(y)$ is simply Delta.

\subsection{Partisan Gerrymandering} \label{s:gerry}

Partisan gerrymandering---where a partisan designer assigns voters to districts to maximize her party's seat share---is an important feature of American politics. \citet{KW} develop and calibrate a model of partisan gerrymandering, which generalizes the leading earlier models of \citet{OG}, \citet{FH}, and \citet{GP}. %
In this model, there is a continuum of voters indexed by their partisanship $x \in [0,1]$. The population distribution of $x$ is $\phi$, with support $X \subset [0,1]$. The designer chooses a \emph{districting plan} $\tau \in \Delta(\Delta(X))$ that assigns voters to equipopulous districts $\mu \in [0,1]$, prior to the realization of an aggregate shock $y \in \mathbb{R}$ with cdf $V$. The share of type-$x$ voters who vote for the designer's party when the aggregate shock takes value $y$ is deterministic and is denoted by $v(y,x) \in [0,1]$.\footnote{Among other notational differences, the order of the arguments of $v$ is reversed in \citet{KW}.}
The function $v(y,x)$ is assumed to be strictly decreasing in $y$ and strictly increasing in $x$: that is, higher aggregate shocks are less favorable for the designer, while voters with higher partisanship are more favorable. The designer wins a district $\mu$ iff she receives a majority of the district vote. Thus, defining $u(y,x):=v(y,x)-1/2$, note that the designer wins a district $\mu$ iff $y \leq \gamma(\mu)$, where $\gamma(\mu)$ is given by \eqref{e:FOCy}. The designer thus wins a district $\mu$ with probability $V(\gamma(\mu))$. Finally, the designer chooses $\tau$ to maximize her expected seat share, subject to the constraint that all voters are assigned to equipopulous districts: i.e., $\int_{\Delta(X)}\mu \df \tau(\mu)=\phi$.\footnote{As discussed in \citet{KW}, the equipopulation constraint is strictly enforced in practice, while other constraints on districting (such as geographic continuity of districts) are often relatively slack, and are thus neglected in much of the gerrymandering literature.} The designer's problem is thus precisely \eqref{PS}, in the state-independent sender case where $V(y,x)=V(y)$. Note that the designer's preferences are state-independent because she cares only about the probability of winning each district, and not directly about a district's composition.

In the gerrymandering context, a districting plan is pairwise if each district contains at most two voter types, and a pairwise districting plan is strictly single-dipped if more polarized districts are more favorable for the designer (i.e., if $\gamma(\mu)>\gamma(\mu')$ for all $\mu,\mu' \in \supp (\tau)$ such that $\supp(\mu)$ contains $x_1 < x_3$ and $\supp(\mu')$ contains $x_2 \in (x_1 ,x_3)$.) Since $V$ is state-independent, Theorem \ref{t:SDPD} implies that strictly single-dipped districting is optimal whenever $u_x$ is strictly log-supermodular. This result generalizes a main result of \citet{FH} (their Lemma 1), which shows that strictly single-dipped districting is optimal under an ``informative signal property'' that is equivalent to log-supermodularity of $u_x$.\footnote{\citet{FH} additionally assume that there is a finite number of districts and that $u$ satisfies another condition called ``central unimodality.''} As explained in \citet{KW}, the intuition for this result is that log-supermodularity of $u_x$ means that moderate voters ``swing more'' with the aggregate shock $y$ than more extreme voters, so that a marginal voter is less likely to be pivotal in a district consisting of moderates than in a district that is evenly divided between left-wing and right-wing extremists. The designer then optimally exploits this difference in pivot probabilities by assigning more favorable marginal voters to more polarized districts: i.e., by creating a single-dipped districting plan. 

\citet{KW} go on to apply the duality and complementary slackness developed in the current paper to derive further properties of optimal districting plans, and in particular ask whether it is optimal for the designer to segregate the strongest opposing voters (as in traditional ``pack-and-crack'' districting plans) or more moderate voters (as in an alternative plan proposed by \citealt{FH}, which resembles negative assortative disclosure).\footnote{As argued by \citet{CH}, this is a key question for assessing the likely consequences of restrictions on districting such as those instituted by the Voting Rights Act of 1965.} The reader is referred to that paper for these additional results.

\subsection{Specific Persuasion Models}

Our analysis covers most persuasion models with non-linear preferences considered to date, including \citeauthor{ZZ}'s \citeyear{ZZ} model of information disclosure in contests; \citeauthor{GS2018}'s \citeyear{GS2018} model of persuading a receiver with affiliated private information; and \citeauthor{GL}'s \citeyear{GL} model of optimal stress tests. The main results in the latter two papers show that single-peaked negative assortative disclosure is optimal in their models. In Appendix \ref{s:persuasion}, we describe how our analysis covers these prior models and in some cases provides additional results.

\section{Conclusion}
\label{s:conclusion}

This paper has developed a general model of assigning inputs to productive outputs, which we call \emph{optimal productive transport}. Our leading application is Bayesian persuasion, but the model also covers other applications including matching, option pricing, and partisan gerrymandering. In the persuasion context, our substantive results provide conditions for all optimal signals to be pairwise, for riskier or safer prospects to induce higher actions, and for full or negative assortative disclosure to be optimal. In some cases, we can characterize optimal signals as the solution to a pair of ordinary differential equations, or even solve them in closed form. Methodologically, we develop novel duality and complementary slackness theorems, which form the basis of all of our proofs.

We mention a few open issues. First, while the persuasion literature has made progress by allowing unrestricted disclosure policies, the pairwise signals that we highlight are not always realistic. (For example, in reality it is probably not feasible to design a stress test that pools only the weakest and strongest banks.) An alternative, complementary approach is to restrict the sender to partitioning the state space into intervals, as in \citet{Rayo} and \citet{OR}. An interesting observation is that, at least in the separable subcase of our model considered by \citeauthor{Rayo} and \citeauthor{OR}, our condition \eqref{e:nd} is equivalent to the condition that complete pooling is uniquely optimal among monotone partitions for all prior distributions. This suggests that, under our conditions for the optimality of single-dipped/-peaked disclosure, negative assortative disclosure might be the optimal unrestricted disclosure policy for all priors iff no disclosure is the optimal monotone policy for all priors. More generally, analyzing the relationship between the optimal pairwise signals we have characterized and simpler signals such as monotone partitions is an important direction for future research.

Second, in the informed receiver interpretation of our model mentioned in Section \ref{s:model}, our analysis pertains to disclosure mechanisms that do not first elicit the receiver's type, or \emph{public persuasion} in the language of \citet{KMZL}. Public persuasion turns out to be without loss in \citet{KMZL}, as well as in \citet{GS2018}. It would be interesting to investigate conditions for the optimality of public persuasion in our more general model, and in particular to see how they relate to our conditions for the optimality of full or negative assortative disclosure.

Third, while we have taken some steps toward fully characterizing optimal signals by deriving the differential equations \eqref{e:obed}--\eqref{e:q'} and solving them in a couple examples, much more remains to be done. Equations \eqref{e:obed}--\eqref{e:q'} are closely related to the optimality and Monge-Ampere equations in optimal transport (e.g., Section 1.7.6 in \citealt{santambrogio}). The rich mathematical literature on these equations may hold some insights for fully characterizing optimal signals in certain settings.

Finally, our model could be generalized to allow multidimensional states or actions. We suspect that our results on duality (Lemma \ref{l:dual}), complementary slackness (Theorem \ref{t:contact}), and pairwise signals (Theorem \ref{t:pairwise}) generalize up to some technicalities.\footnote{For example, our proof of Theorem \ref{t:contact} is facilitated by the existence of a bijection between actions $y$ and states $\chi (y)$ such that $u(y,\chi (y))=0$, cf.\ Assumption \ref{a:ord}.} Generalizing our other results would require a more general notion of single-dippedness. With a unidimensional action and a multidimensional state, one can still define a notion of single-dippedness as inducing higher actions at more extreme states; with multidimensional actions, the appropriate generalization is unclear.\footnote{Possibly relevant recent work on multidimensional martingale optimal transport includes \citet{GKL} and \citet{DeMarchTouzi}.} For results on multidimensional persuasion focusing on the linear case, see \citet{DK}.

\bibliographystyle{econometrica}
\bibliography{persuasionlit}

\renewcommand{\thesection}{A}

\section{Characterization of Strict Aggregate Single-Crossing}
\label{a:ad} 

We present two alternative conditions that are equivalent to strict aggregate single-crossing of $u$. Condition (2) is analogous to the ``signed-ratio monotonicity'' conditions for weak aggregate single-crossing in Theorem 1 of \citet{Quah2012} and Corollary 2 of \citet{CS}. We give a shorter proof based on the optimality of pairwise signals (see Appendix \ref{proof:ASC}). Condition (3) is novel. It corresponds to strict monotonicity of $u$ (i.e., $u_y(y,x)<0$), up to a normalizing factor $g(y)>0$. 
\begin{lemma}\label{l:ASC}
Let Assumption \ref{a:smooth} hold. The following statements are equivalent:
\begin{enumerate}
	\item Assumption \ref{a:qc} holds.
	\item For all $x$, $x'$, and $y$, we have
\begin{align}
u(y,x)=0 &\implies u_y(y,x)<0,\label{1}	\\
u(y,x)<0<u(y,x') &\implies u(y,x')u_y(y,x)-u(y,x)u_y(y,x')<0.\label{2}
\end{align}
	\item There exists a differentiable function $g(y)>0$ such that $\tilde u(y,x)=u(y,x)/g(y)$ satisfies $\tilde u_y(y,x)<0$ for all $(y,x)$.
\end{enumerate} 
\end{lemma}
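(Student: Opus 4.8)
The plan is to prove the three-way equivalence in a cycle, $(1)\Rightarrow(2)\Rightarrow(3)\Rightarrow(1)$, since the implication $(3)\Rightarrow(1)$ is essentially immediate and $(1)\Rightarrow(2)$ can be done by choosing two-point posteriors. For $(1)\Rightarrow(2)$: condition \eqref{1} follows from Assumption \ref{a:qc} by taking $\mu=\delta_x$. For \eqref{2}, suppose $u(y,x)<0<u(y,x')$; then there is a unique $\rho\in(0,1)$ with $\rho u(y,x)+(1-\rho)u(y,x')=0$, i.e. the posterior $\mu=\rho\delta_x+(1-\rho)\delta_{x'}$ satisfies $\int u(y,\cdot)\df\mu=0$. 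Strict aggregate single-crossing gives $\rho u_y(y,x)+(1-\rho)u_y(y,x')<0$; substituting $\rho=u(y,x')/(u(y,x')-u(y,x))$ and clearing the (positive) denominator yields exactly $u(y,x')u_y(y,x)-u(y,x)u_y(y,x')<0$.

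For $(3)\Rightarrow(1)$: if $g(y)>0$ and $\tilde u=u/g$ has $\tilde u_y<0$, then for any posterior $\mu$ with $\int u(y,\cdot)\df\mu=0$ we also have $\int\tilde u(y,\cdot)\df\mu=0$, and differentiating under the integral, $\int\tilde u_y(y,\cdot)\df\mu<0$; but $u_y=g\tilde u_y+g'\tilde u$, so $\int u_y(y,\cdot)\df\mu=g(y)\int\tilde u_y(y,\cdot)\df\mu+g'(y)\int\tilde u(y,\cdot)\df\mu=g(y)\int\tilde u_y(y,\cdot)\df\mu<0$, which is Assumption \ref{a:qc}.

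The substantive step is $(2)\Rightarrow(3)$: construct the normalizing factor $g$. The idea is that \eqref{2} is precisely a ``signed-ratio monotonicity'' statement forcing, wherever $u(y,x)\neq 0$, the quantity $-u_y(y,x)/u(y,x)$ to behave like the logarithmic derivative in $y$ of a single function of $y$ alone. Concretely, fix any $y_0$ and any reference pair; the candidate is $g(y)=\exp\!\big(\int_{y_0}^{y} h(s)\,\df s\big)$ where $h(s)$ is chosen so that $\partial_y\log|u(y,x)| = h(y)$ along the locus where $u(y,x)\neq 0$. One must check this is well-defined: differentiating $u(y,x')u_y(y,x)=u(y,x)u_y(y,x')$ — the equality case that \eqref{2} rules out strictly but which governs the boundary — shows $\partial_y\log|u(y,x)|$ is independent of $x$ on each region where $u$ has constant sign, so $h$ is a genuine function of $y$; Assumption \ref{a:int} and $u_x>0$ (Assumption \ref{a:ord}) guarantee that for each $y$ there exist states with $u(y,x)<0$ and with $u(y,x)>0$, so the sign regions cover $Y$ and patch consistently. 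Then $\tilde u=u/g$ satisfies $\partial_y\log|\tilde u(y,x)|=0$ at interior points, and the strict inequality \eqref{2} (equivalently \eqref{1} at the zero locus) upgrades this to $\tilde u_y<0$ everywhere, using $u_x>0$ to handle the crossing point.

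The main obstacle I anticipate is making the construction of $g$ rigorous: showing that $h(y):=\partial_y\log|u(y,x)|$ really is independent of $x$ (this needs the equality version of the inequality in \eqref{2}, obtained as a limiting/continuity argument, plus smoothness from Assumption \ref{a:smooth}), that $g$ so defined is differentiable and strictly positive, and that the resulting $\tilde u$ has a strictly negative $y$-derivative uniformly, including across the state $\chi(y)$ where $u(y,\chi(y))=0$. I expect the cleanest route is to exploit the bijection $y\leftrightarrow\chi(y)$ from Assumption \ref{a:ord}: define $g(y)$ via an ODE anchored at $\chi(y)$, or alternatively invoke the already-proven optimality of pairwise signals (Theorem \ref{t:pairwise}) as the paper promises, which reduces verifying Assumption \ref{a:qc} to two-point posteriors and thereby makes the equivalence with \eqref{2} transparent and the construction of $g$ a one-dimensional integration.
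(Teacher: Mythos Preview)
Your implications $(1)\Rightarrow(2)$ and $(3)\Rightarrow(1)$ are correct and match the paper. The gap is in $(2)\Rightarrow(3)$.

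Your construction of $g$ rests on the claim that $\partial_y\log|u(y,x)|$ is independent of $x$ on each region where $u$ has constant sign. This is false, and condition \eqref{2} does not say it. Dividing \eqref{2} by $u(y,x)u(y,x')<0$ yields the \emph{strict inequality}
\[
-\frac{u_y(y,x)}{u(y,x)} \;<\; -\frac{u_y(y,x')}{u(y,x')}\qquad\text{whenever }u(y,x)<0<u(y,x'),
\]
i.e.\ an ordering of $-u_y/u$ \emph{across} the sign change, not constancy within sign regions. There is no equality case ``governing the boundary'' to differentiate; the equality you invoke is precisely what \eqref{2} rules out, so nothing forces $\partial_y\log|u|$ to agree at two different states. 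The right object to build is not $h(y)=\partial_y\log|u(y,x)|$ for some (or all) $x$, but any $h(y)$ lying strictly between $\sup_{x:\,u(y,x)<0}(-u_y/u)$ and $\inf_{x':\,u(y,x')>0}(-u_y/u)$; the inequality above says this interval is nonempty, and then $g(y)=\exp\!\big(-\!\int_0^y h\big)$ works. You also lean on $u_x>0$ and the bijection $y\leftrightarrow\chi(y)$, but the lemma assumes only Assumption~\ref{a:smooth}; Assumptions~\ref{a:int} and~\ref{a:ord} are not available here.

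The paper does not attempt $(2)\Rightarrow(3)$ at all. It proves $(2)\Rightarrow(1)$ by splitting an arbitrary $\mu$ with $\int u(y,\cdot)\,\df\mu=0$ into binary-support posteriors with the same induced action (Choquet/Richter--Rogosinsky), reducing to the two-point case where \eqref{1}--\eqref{2} apply directly. It then proves $(1)\Rightarrow(3)$ by a convex-cone separation: for each fixed $y$, the cone $C=\{(\int u\,\df\mu,\;\int u_y\,\df\mu - z,\;\int\df\mu):\mu\ge 0,\,z\ge 0\}$ does not contain $(0,0,1)$ by Assumption~\ref{a:qc}, so a separating hyperplane yields coefficients producing $h(y)$ with $u_y(y,x)+h(y)u(y,x)<0$ for all $x$. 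A continuous selection of $h$ then comes from Browder's theorem, and $g(y)=\exp(-\int_0^y h)$ is the desired normalizer. This route avoids both the sup/inf boundary analysis and any reliance on $u_x>0$.
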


\renewcommand{\thesection}{B}

\section{Uniqueness} \label{s:Brenier}

This appendix presents a notable technical result: under a regularity condition, strict single-dippedness implies that there is a unique optimal signal. Moreover, it also shows that conditions for uniqueness are much weaker when full disclosure is optimal. 

We say that a strictly single-dipped set $\Lambda$ is \emph{regular} if for each $y\in Y_\Lambda$, there exists $\varepsilon>0$ such that either (i) $\chi_1(\tilde y) =\chi_2(\tilde y)$ for all $\tilde y\in (y-\varepsilon,y)\cap Y_\Lambda$ or (ii) $\chi_1(\tilde y)< \chi_2(\tilde y)$ for all $\tilde y\in (y-\varepsilon,y)\cap Y_\Lambda$. This regularity condition rules out pathological cases where states switch infinitely many times from being disclosed to being paired. This condition is satisfied in every example in the literature that we know of.

\begin{theorem}\label{t:brenier}
If $X=[0,1]$, $\phi $ has a density, and $\Lambda$ is strictly single-dipped and regular, then there is a unique optimal signal.
\end{theorem}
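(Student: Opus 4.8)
### Proof Proposal for Theorem \ref{t:brenier}

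\textbf{Overall strategy.} The plan is to prove uniqueness by a contradiction argument of the classical optimal-transport flavor: suppose $\tau$ and $\tau'$ are both optimal signals, and consider their ``midpoint'' $\bar\tau = \tfrac12(\tau+\tau')$, which is also optimal and hence supported on the contact set $\Lambda$. Since $\Lambda$ is strictly single-dipped, its geometry is rigidly pinned down by the two functions $\chi_1,\chi_2$ of Remark \ref{r:SDD}. The goal is to show that the Bayes-plausibility constraint \eqref{PS1} together with this rigid geometry leaves no freedom: any optimal signal must induce exactly the same joint distribution over $(\chi_1(y),\chi_2(y))$ and hence the same $\tau$.

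\textbf{Key steps, in order.} First I would record that every optimal signal is supported on $\Lambda$ (shown in Section \ref{s:contact}) and that $\Lambda$ strictly single-dipped means each $\mu\in\Lambda$ is either a Dirac mass $\delta_{\chi(y)}$ or a binary posterior on $\{\chi_1(y),\chi_2(y)\}$ with $\chi_1(y)<\chi(y)<\chi_2(y)$, where by Remark \ref{r:SDD} the map $y\mapsto\chi_2(y)$ is increasing and $\chi_1(y')\notin(\chi_1(y),\chi_2(y))$ for $y<y'$. Second, I would translate an optimal signal $\tau$ into a measure $\nu$ on the action set $Y_\Lambda$ recording how much total prior mass is routed through each action, together with, for each $y$, the split weight $\rho(y)$ on $\chi_1(y)$; obedience \eqref{e:FOCy} forces $\rho(y)u(y,\chi_1(y))+(1-\rho(y))u(y,\chi_2(y))=0$, so $\rho(y)$ is determined by $y$ alone. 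Thus an optimal signal is fully encoded by the single measure $\nu$ on $Y_\Lambda$. Third, I would push $\nu$ forward through $\chi_1$ and $\chi_2$ and impose \eqref{PS1}: the prior $\phi$ must equal $\chi_{1\#}(\rho\,\nu) + \chi_{2\#}((1-\rho)\nu)$ (with the degenerate part handled separately). The regularity hypothesis guarantees that near any action the structure is locally either ``all disclosed'' or ``all paired,'' so on each paired interval $\chi_2$ is a strictly increasing (hence invertible) map and $\chi_1$ has bounded variation; this lets me differentiate the constraint. Fourth, using that $\phi$ has a density $f$, the constraint becomes the ODE \eqref{e:obed}, i.e. $u(y,\chi_1(y))f(\chi_1(y))\chi_1'(y) = u(y,\chi_2(y))f(\chi_2(y))\chi_2'(y)$ on paired intervals, with the two functions tied together at the boundary between a disclosed region and a paired region by continuity (the pair degenerates to a point). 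Finally, I would argue that this ODE system, together with the monotonicity constraints from Remark \ref{r:SDD} and the boundary behavior forced by $X=[0,1]$ (the extreme states $0$ and $1$ must lie in $\Lambda$ and hence be paired with each other at the top action, as in \eqref{e:boundary}), has at most one solution: reading the contact-set geometry from the top action $\bar y$ downward, $\chi_2$ is determined as the inverse of the cumulative-mass relation and $\chi_1$ is then forced by \eqref{e:obed}, so $\nu$, and therefore $\tau$, is unique.

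\textbf{Main obstacle.} The delicate point is the passage from ``two optimal signals'' to ``same routing measure $\nu$'' when $\chi_1$ is allowed to be non-monotone (as Panel e of Figure \ref{f:DP} shows it can be) and when disclosed and paired regions interleave. Monotonicity of $\chi_2$ is clean, but $\chi_1$ going up and down means the pushforward $\chi_{1\#}((1-\rho)\nu)$ is not simply read off as an inverse function, so the ``differentiate the Bayes constraint'' step needs care: I would handle it by decomposing $Y_\Lambda$ into the (at most countably many, by regularity) maximal paired intervals and maximal disclosed intervals, run the ODE argument on each paired interval separately, and use continuity of $\chi_1,\chi_2$ at the endpoints plus the global constraint that the total mass assigned to each Borel subset of $[0,1]$ matches $\phi$ to glue the pieces together uniquely. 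Absolute continuity of $\phi$ is what rules out the atoms/plateaus that would otherwise allow reshuffling mass within a level set of $\chi_1$. The argument is essentially a Monge–Ampère uniqueness argument specialized to this one-dimensional paired structure, and the regularity condition is exactly what makes the specialization legitimate.
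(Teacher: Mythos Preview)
Your high-level framing is sound: encoding any optimal signal by its action distribution $\nu$ on $Y_\Lambda$ (since obedience pins down $\rho(y)$ and strict single-dippedness pins down the posterior at each $y$) and then arguing that Bayes plausibility determines $\nu$ uniquely. That is exactly the skeleton of the paper's proof. But your steps 4--5 go off track in a way that would not close the argument.

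The main issue is your reliance on the ODE \eqref{e:obed}. That equation is derived in Theorem \ref{t:NAD} under the \emph{additional} assumption that $\Lambda$ is negative assortative, so that $\chi_1$ is decreasing; it is a relationship between the \emph{already-fixed} functions $\chi_1,\chi_2$ and $\phi$, not a constraint that determines $\nu$. In the general strictly single-dipped case $\chi_1$ need not be monotone (as you note), and then the ``ODE on each paired interval'' move does not get off the ground, because the pushforward of $\nu$ via $\chi_1$ cannot be written as a simple change of variables. Relatedly, the boundary claim that ``$0$ and $1$ must be paired at the top action'' is a feature of negative assortative disclosure (equation \eqref{e:boundary}), not of arbitrary strictly single-dipped contact sets. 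Also, the midpoint $\bar\tau=\tfrac12(\tau+\tau')$ plays no role in what follows and can be dropped.

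What actually does the work in the paper's proof is a direct mass-accounting argument using only the monotonicity of $\chi_2$. One sets $\hat y=\sup\{y:\alpha([0,y])\neq\alpha'([0,y])\}$ for the two candidate action distributions, and then uses regularity (together with a short lemma: every paired action dominates a disclosed action in $Y_\Lambda$ with state between $\chi_1$ and $\chi_2$) to obtain an $\varepsilon>0$ such that on $(\hat y-\varepsilon,\hat y)\cap Y_\Lambda$ either all actions are disclosed or $\chi_1(\tilde y_1)<\chi_2(\tilde y_2)$ for all $\tilde y_1,\tilde y_2$ in the interval. The key observation is that states $x>\chi_2(\tilde y)$ can only induce actions $y>\tilde y$; in the paired case, those states are reached on $[\tilde y,\hat y]$ only via $\chi_2$, never via $\chi_1$. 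Since $\alpha=\alpha'$ above $\hat y$, this forces $\int_{[\tilde y,\hat y]}(1-\rho_y)\,d\alpha=\int_{[\tilde y,\hat y]}(1-\rho_y)\,d\alpha'$ (the density assumption kills the boundary atom at $\chi_2(\tilde y)$), and since $1-\rho_y>0$ this yields $\alpha=\alpha'$ on $(\hat y-\varepsilon,\hat y)$, contradicting the definition of $\hat y$. No ODE is needed; the argument is a descending sweep that uses regularity precisely to rule out $\chi_1$-interference on a short interval just below $\hat y$.
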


In martingale optimal transport, the optimal solution is unique under the martingale Spence-Mirrlees condition (e.g., Proposition 3.5 in \citealt{BHT}), which coincides with our condition for the optimality of strict single-dippedness in the linear receiver case. The key implication of Theorem \ref{t:brenier} is that the optimal marginal distribution of actions is unique. There is no analogue of this result in martingale optimal transport, where this marginal distribution is fixed.

To see the intuition for Theorem \ref{t:brenier}, consider the case where $\phi$ is discrete and $\chi_2$ is strictly increasing. Let $\ol y=\max_{\mu\in \Lambda} \gamma (\mu)$ be the highest action that can be optimally induced.  Since $\Lambda$ is strictly single-dipped, there is a unique posterior $\ol \mu$ in $\Lambda$ inducing action $\ol y$, namely $\ol \mu=\delta_{\chi_2 (\ol y)}$ if $\chi_1(\ol y)=\chi_2(\ol y)$ and $\ol \mu=\ol \rho \delta_{\chi_1(\ol y)}+(1-\ol \rho)\delta_{\chi_2(\ol y)}$ where $\ol \rho \in (0,1)$ is uniquely determined by \eqref{e:FOCy} if $\chi_1(\ol y)<\chi_2(\ol y)$. Since $\chi_2$ is strictly increasing, for any optimal signal $\tau$, the state $\chi_2(\ol y)$ can only induce action $\ol y$ and thus $\tau (\ol \mu)=\phi (\chi_2(\ol y))$ if $\chi_1(\ol y)=\chi_2(\ol y)$ and $\tau (\ol \mu)=\phi (\chi_2(\ol y))/(1-\ol \rho) $ if $\chi_1(\ol y)<\chi_2(\ol y)$. Working our way through the support of $\phi$ from the highest state to the lowest in this fashion, we obtain the unique value of $\tau (\mu)$ for each $\mu \in \Lambda$. When $\phi$ has a density, the possibility that $\chi_2$ may be only weakly increasing does not threaten uniqueness of the optimal signal, because the set of states corresponding to flat regions of $\chi_2$ is at most countable and thus has $\phi$-measure $0$. Finally, our regularity condition ensures that the above argument extends easily from the discrete case to the continuous one.

In addition, when the prior has full support and the contact set is pairwise (e.g., the twist condition holds), full disclosure is uniquely optimal whenever it is optimal. To see the intuition, suppose full disclosure is optimal, and suppose there is another optimal signal that pools some states $x_1$ and $x_2$ to induce an action $y$. Then the signal that discloses all other states while pooling $x_1$ and $x_2$ to induce $y$ is also optimal. But then the signal that discloses all other states while pooling $x_1$, $x_2$, and the third state $\chi (y)\neq x_1, x_2$  to induce $y$ would also be optimal---but this signal is not pairwise, which is a contradiction.
\begin{remark}\label{t:FDu} Assume that $X=[0,1]$. If the contact set is pairwise and full disclosure is optimal, then it is uniquely optimal.
\end{remark}

\renewcommand{\thesection}{C}

\section{Examples of Negative Assortative Disclosure} \label{s:examples}

\begin{example}[Solving the Differential Equations] \label{e:RS} Consider the linear receiver case with $Y=X=[1/e,e]$, $f(x)=1/(2x)$, and $V(y,x)=y/x$. We claim that the unique optimal outcome matches each state $x_1\in [1/e, 1]$ with state $x_2=1/x_1$ with equal weights, so that the induced action is $y={x_1}/{2}+{1}/({2x_1})$. Thus, $\chi_1(y)=y-\sqrt{y^2-1}$, and $\chi_2(y)=y+\sqrt{y^2-1}$ for all $y\in [1,{e}/{2}+{1}/(2e)]$.

 Indeed, by Theorem \ref{t:SDPD}, the optimal outcome is strictly single-dipped, since $w(x)=1/x$ is strictly convex. By Corollary \ref{c:ndSDD}, \eqref{e:nd} holds, since $w'<0$. Hence, by Theorem \ref{t:NAD}, the optimal outcome is single-dipped negative assortative and satisfies \eqref{e:obed}--\eqref{e:boundary}. Now, for $x_2=1/x_1$ and $y={x_1}/{2}+{1}/({2x_1})$, \eqref{e:obed} holds because
\begin{gather*}
u(y,x_2)=\left( \frac{1}{2x_1}-\frac {x_1}{2}\right)=-\left(\frac{x_1}{2}-\frac{1}{2x_1}\right)=-u(y,x_1),\\
f(x_2)\frac {\df x_2}{\df y}=\frac{1}{{2}/{x_1}} \left(-\frac {1}{x_1^2}\frac{\df x_1}{\df y}\right)=-\frac {1}{2x_1}\frac{\df x_1}{\df y}=-f(x_1)\frac{\df x_1}{\df y},
\end{gather*}
\eqref{e:q'} holds because
\begin{gather*}
\frac{\df}{\df y} \left(w(x_1)\frac{1}{2}+w(x_2)\frac 12\right)=\frac{\df}{\df y}\left(\frac{1}{2x_1}+\frac {x_1}{2}\right)=\frac{\df }{\df y}y=1,\\
\frac{w(x_2)-w(x_1)}{x_2-x_1}=\frac{x_1-{1}/{x_1}}{{1}/{x_1}-x_1}=-1,
\end{gather*}
and \eqref{e:boundary} holds because $1/(1/e)=e$ and $1/1=1$.\footnote{We can also solve this example by directly applying Theorem \ref{t:contact}), because, for $q(y)=y$, the function
$V(y,x)+q(y)u(y,x)={y}/{x} +y(x -y)$ is maximized at $y=x/2+1/(2x)$ for all $x \in [1/e,e]$.}
\end{example}
  
\begin{example}[Quantile Persuasion] \label{e:quantile} Consider the quantile sub-subcase of the state-independent sender case, where $u(y,x)= \1 \{x\geq y\}-\kappa$ with $\kappa\in (0,1)$.
Let $\phi$ have a density on $[0,1]$. Assuming that the receiver breaks ties in favor of the sender, we have, for $x_1<x_2$,
\[
\gamma(\rho \delta_{x_1}+(1-\rho )\delta_{x_2})=
\begin{cases}
x_2, &\rho \leq 1-\kappa,\\
x_1, &\rho >1-\kappa.
\end{cases}
\]
Note that \eqref{e:nd} always holds for $\rho \in (0,1-\kappa)$. We claim that there exists an optimal single-dipped negative assortative signal where the induced distribution over actions $\alpha$ satisfies $\alpha([y,1])=\phi ([y,1])/\kappa$, and the posterior inducing any action $y\in [\ul y,1]$ is $(1-\kappa)\delta_{\chi_1(y)}+\kappa \delta_{\chi_2(y)}$, where $\chi_2(y)=y$, $\chi_1(y)$ solves $\kappa \phi ([0,\chi_1(y)])=(1-\kappa)\phi ([y,1])$, and $\ul y$ solves $\kappa \phi ([0,\ul y])=(1-\kappa)\phi ([\ul y,1])$.\footnote{See Appendix \ref{s:quantileproof} for the proof.} A notable feature of this signal is that, with the informed receiver interpretation, it would remain optimal even if the sender knew the receiver's type and could condition disclosure on it.
\end{example}

\begin{example}[A Stochastic Optimal Signal] \label{ex:segpair}
In the following example, for some priors negative assortative disclosure is optimal; and for other priors, the unique optimal signal randomizes conditional on the state, even though the prior is atomless.

\begin{figure}[t]
\centering
\begin{tikzpicture}[scale = 1]
	\small
	\begin{axis}		
		[axis x line = middle,
		axis y line = middle,
		xmin = -1.3, xmax = 3.3,
		ymin = -1.3, ymax = 1.3,
		xlabel=$x$,
		ylabel=$y$,
		xtick={-1.01, 0, 3.01},
		xticklabels={$-1$, $0$, $3$},
		ytick={-1.01, 0, 1.01},
		yticklabels={$-1$, , $1$},
		clip=false]

		\draw [dashed, black]			(-1, -1) -- (-1, 1);
		\draw [dashed, black]			(-1, -1) -- (0, -1);
		\draw [dashed, black]			(3, 0) -- (3,1);

		\draw [thick, solid, black]		(-1,-1) -- (0,0);
		\draw [thick, solid, black]		(-1,1) -- (0,0);
		\draw [thick, solid, black]		(0,0) -- (3,1);

		\draw [solid, red]				(-1, 1) -- (3,1);
		\draw [solid, red]				(-3/4, 3/4) -- (9/4, 3/4);
		\draw [solid, red]				(-2/4, 2/4) -- (6/4, 2/4);
		\draw [solid, red]				(-1/4, 1/4) -- (3/4, 1/4);
	\end{axis}
\end{tikzpicture}

	\caption{The Optimal Signal in Example \ref{ex:segpair}}
	\caption*{\emph{Notes:} The three black line segments depict the graphs of $\chi_1$ and $\chi_2$. The red line segments indicate pairs of states $\chi_1(y)$ and $\chi_2(y)$ that are optimal to pool to induce action $y\in (0,1]$. If the prior density satisfies $f(-y)>3f(3y)$ for all $y\in (0,1]$, then, for each state $x <0$, the optimal signal randomizes between disclosing $x$ (and inducing action $x$) and pooling $x$ with state $-3x$ (and inducing action $-x$).}
	\label{f:segpair}
\end{figure}

Consider the translation-invariant subcase of the state-independent sender case. Let $Y=X=[-1,3]$, let $\phi$ have a density $f$ with $f(-y)\geq 3f(3y)$ for all $y\in (0,1]$, let $u(y,x)=T(x-y$) with $T(0)=0$ and strictly log-concave $T'$, and let $V(y,x)=T(2y)$. With the informed receiver interpretation, this captures a case where, for example, $\kappa=1/2$, the distribution of $\varepsilon$ is $N(0,\sigma^2)$, and the distribution of $t$ is $N(0,(\sigma/2)^2)$.\footnote{By symmetry and strict log-concavity of $T'$, $V_{yy}(y)/V_y(y)=2T''(2y)/T'(2y) >(<)T''(0)/T'(0)=0$ for $0>(<)y,$ showing that \eqref{e:ndSDD} fails for $y<0$, and thus Theorem \ref{t:NAD} does not apply.}

 We claim that 
\[
\chi_1(y)=
\begin{cases}
y, &y\in [-1,0],\\
-y, &y\in (0,1],
\end{cases}
\quad \text{and}\quad 
\chi_2(y)=
\begin{cases}
y, &y\in [-1,0],\\
3y, &y\in (0,1],
\end{cases}
\]
so that the posterior inducing any action $y\in [-1,1]$ is $\delta_{\chi_1(y)}/2+\delta_{\chi_2(y)}/2$, and the distribution over actions $\alpha$ has density $a$ given by
\begin{align*}
a(y)&=
\begin{cases}
6f(3y), &y\in (0,1],\\
f(-y)-3f(3y), &y\in [-1,0). 
\end{cases}
\end{align*}
The unique optimal signal is single-dipped negative assortative iff $f(-y)=3f(3y)$ for all $y\in (0,1]$. In contrast, if $f(-y)>3f(3y)$ for all $y\in (0,1]$, then each state $x\in [-1,0)$ is mixed between inducing actions $y=x$ and $y=-x$.\footnote{See Appendix \ref{s:segpair} for the proof.} %
See Figure \ref{f:segpair}.

\end{example}

\renewcommand{\thesection}{D}

\section{Specific Persuasion Models} \label{s:persuasion}

This appendix shows how our analysis covers some well-known prior persuasion models, where single-dipped or single-peaked disclosure is optimal.\footnote{The applications in this section also illustrate some technical points. Section \ref{s:ZZ} illustrates how directly applying Theorem \ref{l:ssdd} can yield weaker sufficient conditions for the optimality of single-dipped disclosure than those in Theorem \ref{t:SDPD}. Sections \ref{s:GS} and \ref{s:GL} illustrate how our analysis extends when some of our assumptions are violated: in Section \ref{s:GS}, Assumption \ref{a:int} fails, so the receiver's optimal action may be at the boundary and thus violate the first-order condition; in Section \ref{s:GL}, Assumption \ref{a:ord} fails, as the sender only weakly prefers higher actions.}

\subsection{Contests}\label{s:ZZ}\mbox{} \citet{ZZ} study information disclosure in contests. In their model, two contestants, $A$ and $B$, compete for a prize by exerting efforts $z_A$ and $z_B$. The probability that contestant $i=A,B$ wins is $z_i/(z_A+z_B)$. Everyone knows contestant A's value $v_A=1$. Contestant B's value $v_B$ is known to contestant B and the designer. The sender designs a signal about $v_B$ to maximize expected total effort.

It is convenient to parameterize $x = 1/\sqrt{v_{B}}$ and $y=\sqrt{z_{A}}$. With this parameterization, \citeauthor{ZZ}'s Proposition 1 shows that, given a posterior $\mu$, contestant A exerts effort $z_A^\star=\gamma (\mu)^2$ determined by $\E_{\mu} \left [x-\left(1+x^2 \right)\gamma (\mu) \right] =0$, and contestant B (who knows $x$) exerts effort $z_B^\star(x)=\gamma (\mu)/x-\gamma (\mu)^2$, so the sender's expected utility is $z_A^\star+\E_\mu \left[z_B^\star(x)\right]=\E_\mu \left[\gamma (\mu)/x \right]$. We thus recover our model with $V(y,x)=y/x$ and $u(y,x)=x-(1+x^2)y$. 

\citeauthor{ZZ} give results on optimality of pairwise disclosure, full disclosure, and no disclosure. Our approach easily yields the following result, which additionally gives conditions for optimality of single-dipped/-peaked disclosure and negative assortative disclosure (which were not considered by \citeauthor{ZZ}).%

\begin{proposition}\label{p:ZZ} In \citeauthor{ZZ}'s contest model where the prior $\phi$ has a positive density on $X=[\ul x ,\ol x]$, where $0<\ul x<\ol x$, if $\ul x\geq 1$ then the unique optimal signal is full disclosure; and if  $\ol x \leq 1/\sqrt{3}$ ($1/\sqrt{3}\leq \ul x<\ul x\leq 1$), then the unique optimal signal is single-dipped (-peaked) negative assortative disclosure.
\end{proposition}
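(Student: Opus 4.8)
The plan is to treat the three regimes separately, in each reducing the statement to the general results of Sections~\ref{s:pairwise}--\ref{s:MNAD} plus two elementary computations tailored to $V(y,x)=y/x$ and $u(y,x)=x-(1+x^2)y$. Throughout I would use that $\mu=\rho\delta_{x_1}+(1-\rho)\delta_{x_2}$ induces $\gamma(\mu)=m_1/(1+m_2)$ with $m_1=\mathbb{E}_\mu[x]$, $m_2=\mathbb{E}_\mu[x^2]$, and that $V(\gamma(\delta_x),x)=1/(1+x^2)$. \emph{Computation 1 (twist determinant).} For any triple $x_1<x_2<x_3$ and any $y$, subtracting $y$ times the third row of $S$ in \eqref{e:sing} from the second row replaces that row by $(x_1,x_2,x_3)$; after scaling column $i$ by $x_i$ and recognising a generalised Vandermonde determinant one gets $|S|=(1-e_2)\prod_{i<j}(x_j-x_i)/(x_1x_2x_3)$, where $e_2:=x_1x_2+x_1x_3+x_2x_3$. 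Thus the twist condition holds, with $|S|$ of one sign, exactly when $e_2\neq 1$ on all triples; moreover $e_2<1$ on all triples iff $\ol x\le 1/\sqrt3$, while $e_2>1$ on all triples iff $\ul x\ge 1/\sqrt3$. \emph{Computation 2 (perturbation matrix).} Since $V(y,x)=y/x$, one has the bilinear identity $u(y_1,x_i)u(y_2,x_j)-u(y_1,x_j)u(y_2,x_i)=(y_2-y_1)(x_j-x_i)(1-x_ix_j)$; feeding this into the matrix $R$ of Theorem~\ref{l:ssdd}, the cross product of its second and third rows has components $(y_2-y_1)(x_j-x_i)(1-x_ix_j)$, and $\det R=(y_2-y_1)^2(1-e_2)\prod_{i<j}(x_j-x_i)/(x_1x_2x_3)$.

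\textbf{Full disclosure ($\ul x\ge 1$).} I would invoke Theorem~\ref{p:full}, which rests only on Assumptions~\ref{a:smooth}--\ref{a:int} and duality and so is available even if Assumption~\ref{a:ord} fails. Substituting the above, \eqref{e:full} reduces, after clearing $1+m_2$, to $\rho(1-\rho)\big(\tfrac{x_1}{x_2}+\tfrac{x_2}{x_1}\big)\le\rho(1-\rho)\big(\tfrac{1+x_1^2}{1+x_2^2}+\tfrac{1+x_2^2}{1+x_1^2}\big)$; cancelling $\rho(1-\rho)>0$ and using that $z\mapsto z+1/z$ is increasing on $[1,\infty)$, this is equivalent to $(1+x_1^2)x_2\le(1+x_2^2)x_1$, i.e.\ (dividing by $x_2-x_1>0$) to $x_1x_2\ge 1$. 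This is $\rho$-independent, so full disclosure is optimal iff $x_1x_2\ge1$ for all $x_1<x_2$ in $X$, i.e.\ iff $\ul x\ge1$; and when $\ul x\ge1$ the inequality is strict for every $x_1<x_2$ (as $x_1x_2>\ul x^2\ge1$), so the strict clause of Theorem~\ref{p:full} delivers uniqueness.

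\textbf{Single-dipped and single-peaked regimes.} In the single-dipped regime ($\ol x\le 1/\sqrt3$), Assumptions~\ref{a:smooth}--\ref{a:ord} hold: induced actions lie in $[\ul x/(1+\ul x^2),\ol x/(1+\ol x^2)]\subset(0,\sqrt3/4]$, so $u_x=1-2xy\ge 1/2>0$. To apply Theorem~\ref{l:ssdd}, for any $y_1<y_2$ and $x_1<x_2<x_3$ I would take $\beta$ equal to the cross product of the second and third rows of $R$: by Computation~2 its components are $(y_2-y_1)(x_j-x_i)(1-x_ix_j)>0$ (each $x_ix_j\le\ol x^2\le 1/3<1$), so $\beta\ge0$; and $R\beta=(\det R,0,0)$ with $\det R>0$ since $e_2<1$, so $R\beta\ge0$ and $R\beta\neq0$. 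Hence the contact set is single-dipped, and since the twist condition holds ($|S|\neq0$) it is strictly single-dipped (no action is induced at three states). Because \eqref{e:full} fails strictly for every $x_1<x_2$ here ($x_1x_2<1$), \eqref{e:nd} holds for all pairs, so Theorem~\ref{t:NAD} (with $[0,1]$ replaced by $[\ul x,\ol x]$) yields negative assortativity and, since $\phi$ has a density, uniqueness, with $\chi_1,\chi_2$ solving \eqref{e:obed}--\eqref{e:boundary}. The single-peaked regime ($1/\sqrt3\le\ul x$, $\ol x\le1$) is symmetric: the matrix in the single-peaked analogue of Theorem~\ref{l:ssdd} is exactly $-R$, so taking the same $\beta\ge0$ (still valid, since $x_ix_j<x_j^2\le\ol x^2\le1$) gives $(-R)\beta=(-\det R,0,0)$, which is $\ge0$ and $\neq0$ because now $e_2>1$ forces $\det R<0$; again $|S|\neq0$ upgrades to strict single-peakedness, and \eqref{e:nd} ($x_1x_2<1$ again) gives negative assortativity and uniqueness through the single-peaked analogues of Theorems~\ref{t:NAD} and~\ref{t:brenier}.

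\textbf{Main obstacle.} The hard part will be Computations~1 and~2: obtaining $|S|$ and $\det R$ in closed form, noticing that both are governed by the single quantity $1-e_2$, and that the relevant cross product has components governed by $1-x_ix_j$. Tracking the signs correctly---in particular recognising that the single-peaked analogue of $R$ is precisely $-R$, which is what flips the conclusion from single-dipped to single-peaked as $e_2$ crosses $1$---is the delicate step; once this is in place, everything collapses to the elementary facts that $e_2\lessgtr1$ and $x_ix_j<1$ on the stated parameter ranges. A secondary, essentially bookkeeping, issue is justifying Theorem~\ref{p:full} in the full-disclosure regime despite the possible failure of Assumption~\ref{a:ord} (and, at the boundary $\ol x=1$, a single point where $u_x$ vanishes off the support of the optimal outcome), together with the routine extension of Theorems~\ref{t:NAD} and~\ref{t:brenier} from $X=[0,1]$ to a subinterval $X=[\ul x,\ol x]$.
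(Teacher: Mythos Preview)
Your proposal is correct and follows essentially the same route as the paper: the same closed form for $|S|$ (governed by $1-e_2$), the same bilinear identity $u(y_1,x_i)u(y_2,x_j)-u(y_1,x_j)u(y_2,x_i)=(y_2-y_1)(x_j-x_i)(1-x_ix_j)$, and the same choice of $\beta$ (the cross product of the last two rows of $R$, giving $R\beta=(\det R,0,0)=((y_2-y_1)^2|S|,0,0)$) to feed into Theorem~\ref{l:ssdd} and its single-peaked analogue; negative assortativity and uniqueness then come from Theorem~\ref{t:NAD} exactly as you indicate.

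The one substantive difference is your verification of \eqref{e:full}/\eqref{e:nd}. The paper proves a separate lemma (Lemma~\ref{l:ZZ1}) by showing that $\varphi(\rho)=\gamma(\mu)(\rho/x_1+(1-\rho)/x_2)$ is strictly convex or concave in $\rho$ according as $x_1x_2\gtrless 1$, using explicit first and second derivatives of $\gamma(\mu)$. Your direct algebraic reduction---multiplying through by $1+m_2=\rho(1+x_1^2)+(1-\rho)(1+x_2^2)$ and expanding to obtain $\rho(1-\rho)(x_1/x_2+x_2/x_1)\le \rho(1-\rho)\bigl((1+x_1^2)/(1+x_2^2)+(1+x_2^2)/(1+x_1^2)\bigr)$, hence $x_1x_2\ge 1$ via monotonicity of $z\mapsto z+1/z$---is correct and arguably cleaner; it also makes transparent that the condition is $\rho$-independent, which the paper obtains only implicitly. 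A small caution: your description of the single-peaked analogue as ``the matrix is exactly $-R$'' is morally right but not literally so (the paper instead keeps $R$ and negates $\beta$, with the constraint $x_1<\chi(y_1)<x_3$ unchanged); either bookkeeping yields the same sign condition $|S|<0$, so nothing is lost.
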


The proof of single-dippedness/-peakedness uses Theorem \ref{l:ssdd} with a perturbation that fixes both actions. In contrast, directly applying Theorem \ref{t:SDPD} would yield only the weaker result that single-peaked negative assortative disclosure is optimal if $1/\sqrt 2\leq \ul x<\ol x< 1$.\footnote{To see this, suppose $\ol x< 1$. Then $u_x(y,x)=1-2x y>0$ for $y\leq \ol x/(1+\ol x\,\!^2)=\max Y$. Moreover, $u_{yx}(y,x)/u_x(y,x)=-2x/(1-2x y)$ is always decreasing in $x$, while $V_{yx}(y_2,x)/u_x(y_1,x)=-1/(x^2-2x^3 y_1)$ is decreasing in $x $ iff $3\ul x \min Y= 3\ul x^2/(1+\ul x\,\!^2)\geq  1$, or equivalently $\ul x \geq 1/\sqrt 2$.}

\subsection{Affiliated Information}\label{s:GS}\mbox{} \citet{GS2018} consider a persuasion model with a privately informed receiver, where it is commonly known that the receiver wishes to accept a proposal iff $x$ exceeds a threshold $x_0$, and the receiver's type $t$ is his private signal of $x$. Letting $G(t|x)$ denote the distribution of $t$ conditional on $x$, with corresponding density $g(t|x)$, this setup maps to our model with $V(y,x)=G(y|x)$, $u(y,x)=(x-x_{0})g(y|x)$, and $g(t|x)$ strictly log-submodular in $(t,x)$.\footnote{The ordering convention here is that high $t$ is bad news about $x$. This ordering is opposite to \citeauthor{GS2018}'s, but follows our convention that the receiver accepts for types below a cutoff.}$^,$\footnote{\citeasnoun{IP20} study robust stress test design in a setting with multiple receivers with coordination motives. As they note, the single-receiver version of their model is a special case of \citeasnoun{GS2018}.} These preferences satisfy
Assumptions \ref{a:smooth} and \ref{a:qc} (see Lemma \ref{l:ASC}), but not Assumption \ref{a:int}, as $u(y,x)>0$ for all $y$ when $x>x_0$. Nonetheless, assuming that the receiver breaks ties in the sender's favor, we have $\gamma (\mu)=\max\{y:\int u(y,x)\df \mu\geq 0\}$.

Let us take for granted that Theorem \ref{l:ssdd} holds even though Assumption \ref{a:int} is violated (e.g., this is clearly true with a discrete prior). Applying Theorem \ref{l:ssdd} with a perturbation that fixes one action while increasing the other action and the sender's expected utility (for fixed actions), we obtain the following result, which reproduces \citeauthor{GS2018}'s main qualitative insight.\footnote{When the prior has positive density on $[0,1]$, \citeauthor{GS2018}'s Theorem 3.1 additionally implies that the optimal signal is single-peaked negative assortative.}

\begin{proposition}\label{c:spd1}
In \citeauthor{GS2018}'s model of persuading a privately informed receiver, every optimal signal is single-peaked.
\end{proposition}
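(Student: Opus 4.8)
Recall that the statement is the single-peaked counterpart of our sufficient condition for single-dippedness, and the surrounding text instructs us to derive it from the variational theorem, Theorem~\ref{l:ssdd}, which we take for granted here even though Assumption~\ref{a:int} fails (since $u(y,x)=(x-x_0)g(y\mid x)>0$ for $x>x_0$). So the plan is to verify the single-peaked analogue of the hypothesis of Theorem~\ref{l:ssdd} for \citeauthor{GS2018}'s primitives $V(y,x)=G(y\mid x)$ and $u(y,x)=(x-x_0)g(y\mid x)$, with $g$ strictly log-submodular in $(t,x)$. Two observations frame the task. Since $u(y,x)=0$ iff $x=x_0$, the function ``$\chi$'' in the hypothesis is the constant $x_0$, so the only triples requiring attention are those with $x_1<x_0<x_3$ (and $x_2\in(x_1,x_3)$ arbitrary). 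And by the stated symmetry between the single-dipped and single-peaked cases, what must be shown is: for every $y_1<y_2$ and every such triple there is a non-zero $\b\geq 0$ so that re-allocating mass $\b_1$ on $x_1$ and $\b_3$ on $x_3$ from the posterior inducing $y_2$ to the posterior inducing $y_1$, and mass $\b_2$ on $x_2$ the other way, (a) does not lower the sender's expected payoff when the two induced actions are held fixed, and (b) does not lower the receiver's expected marginal utility at either posterior, with at least one inequality strict.

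The perturbation I would use moves mass only between $x_2$ and the extreme state on the \emph{same} side of $x_0$. If $x_2>x_0$, set $\b_1=0$ and choose $\b_2,\b_3>0$ with $\b_2\,u(y_1,x_2)=\b_3\,u(y_1,x_3)$; this holds the receiver's marginal utility at the $y_1$-posterior fixed (possible because $u(y_1,x_2),u(y_1,x_3)>0$), and I claim it automatically raises the receiver's marginal utility at the $y_2$-posterior, since the induced change there equals a positive multiple of $g(y_1\mid x_3)g(y_2\mid x_2)-g(y_1\mid x_2)g(y_2\mid x_3)$, which is $\geq 0$, and strictly so by strict log-submodularity of $g$ (as $x_2<x_3$, $y_1<y_2$). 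If $x_2<x_0$, symmetrically set $\b_3=0$ and choose $\b_1,\b_2>0$ with $\b_1\,|u(y_2,x_1)|=\b_2\,|u(y_2,x_2)|$, holding the $y_2$-posterior's marginal utility fixed; the $y_1$-posterior's marginal utility then rises automatically for the same reason. If $x_2=x_0$, I simply move $x_2$-mass from the $y_1$-posterior to the $y_2$-posterior: this leaves both obedience constraints untouched (since $u(y,x_0)\equiv 0$) and strictly raises the sender's payoff by $G(y_2\mid x_0)-G(y_1\mid x_0)>0$.

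It remains to check requirement (a). With $x_2>x_0$ and the weights above, the change in the sender's payoff at fixed actions equals a strictly positive multiple of the difference $\tfrac{(x_3-x_0)g(y_1\mid x_3)}{(x_2-x_0)g(y_1\mid x_2)}-\tfrac{\int_{y_1}^{y_2}g(t\mid x_3)\,\df t}{\int_{y_1}^{y_2}g(t\mid x_2)\,\df t}$, which I would show is strictly positive: the first term exceeds $g(y_1\mid x_3)/g(y_1\mid x_2)$ because $x_3-x_0>x_2-x_0>0$, while the ratio of integrals is a weighted average of the pointwise ratios $g(t\mid x_3)/g(t\mid x_2)$ over $t\in[y_1,y_2]$, each of which is at most $g(y_1\mid x_3)/g(y_1\mid x_2)$ because strict log-submodularity makes $t\mapsto g(t\mid x_3)/g(t\mid x_2)$ decreasing. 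The case $x_2<x_0$ is the mirror image, now using that $t\mapsto g(t\mid x_1)/g(t\mid x_2)$ is increasing and $x_0-x_1>x_0-x_2>0$. In every case the constructed $\b\geq 0$ is non-zero, both of the receiver's marginal-utility changes are $\geq 0$, the sender's fixed-action payoff change is $\geq 0$, and at least one change is strict; so Theorem~\ref{l:ssdd} (single-peaked version) yields that the contact set is single-peaked, and hence so is every optimal signal.

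I expect the main obstacle to be locating the right perturbation---in particular, recognizing that one should pair $x_2$ with the same-side extreme state and hold fixed the action at the posterior \emph{not} being fed that extreme state, so that requirement (b) becomes an automatic consequence of log-submodularity rather than an extra hypothesis, and requirement (a) collapses to a one-line monotone-ratio comparison. A secondary point is to make the informal ``re-allocation of mass'' rigorous for triples occurring with probability zero under $\tau$; this is exactly what the proof of Theorem~\ref{l:ssdd} handles via the complementary slackness theorem (Theorem~\ref{t:contact}), and one should note, as the surrounding text does, that the boundary failure of Assumption~\ref{a:int} leaves that machinery intact.
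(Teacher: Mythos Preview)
Your proposal is correct and follows essentially the same approach as the paper: in each of the three cases $x_2\gtrless x_0$ and $x_2=x_0$ you select exactly the perturbation the paper uses (pairing $x_2$ with the extreme state on the same side of $x_0$, and holding fixed the obedience constraint at the appropriate action), and your verification of the sender-payoff inequality via the monotone ratio $t\mapsto g(t\mid x_3)/g(t\mid x_2)$ is the same argument the paper gives, merely phrased as a weighted-average bound rather than as an integral of a pointwise difference. The only cosmetic difference is labeling: the paper calls the action induced by the extreme pair $y_1$ (with $y_2<y_1$), whereas you call it $y_2$ (with $y_1<y_2$).
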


\subsection{Stress Tests}\label{s:GL}\mbox{}
\citeasnoun{GL} consider a model of optimal stress tests. The sender is a bank regulator and the receiver is a perfectly competitive market. The bank has an asset that yields a random cash flow. The asset's quality is $x$, which is observed by the bank and the regulator but not the market, and is normalized to equal the asset's expected cash flow.\footnote{This is the model in Section 5 of their paper, where the bank observes $x$.} The regulator designs a test to reveal information about $x$. After observing the test result, the market offers a competitive price $y$ for the asset. Finally, the bank decides whether to keep the asset and receive the random cash flow, or sell it at price $y$. Letting $z$ denote the bank's final cash holding (equal to either the random cash flow or $y$), the bank's payoff equals $z+\1 \{z \geq x_{0}\}$, where $x_{0}$ is a constant. An interpretation is that the bank faces a run if its cash holding falls below $x_{0}$. The regulator designs the test to maximize expected social welfare, or equivalently to minimize the probability of a run.

\citeauthor{GL} show that a bank with a type-$x$ asset is willing to sell at a price $y$ iff $y$ exceeds a reservation price $\tilde{\sigma} (x)$ that satisfies $\tilde{\sigma} (x) >x $ if $x < x_0$, $\tilde{\sigma} (x) <x $ if $x > x_0$, and $\tilde{\sigma}' (x)\geq 0$. Intuitively, if $x < x_0$ then the bank demands a premium to forego the chance that a lucky cash flow shock pushes its holdings above $x_0$, while if $x > x_0$ then the bank desires insurance against bad cash flow shocks that push its holdings below $x_0$. However, the value of the regulator's problem is unaffected if the reservation price is re-defined as $\sigma (x) =x$ if $x \leq x_0$ and $\sigma (x) =\tilde{\sigma} (x)$ if $x > x_0$, because it is suboptimal for the regulator to induce a bank to sell at a price below $x_0$. It is more convenient to work with the normalized reservation price $\sigma (x)$. 

It is also convenient to restrict attention to tests that, for each $x$, either induce the bank to sell or fully disclose the bank's value: this is without loss because pooling two asset types that do not sell is weakly worse than disclosing these types. For such a test, the price induced by any posterior $\mu$ is $\gamma (\mu)=\mathbb{E}_{\mu}[x]$, so we are in the linear receiver case. We can capture the requirement that the bank always sells if $y\neq x$ by setting $V(y,x)=- \infty$ if $y<\sigma(x)$. Finally, letting $w(x)>0$ equal the social gain when a bank sells a type-$x$ asset at a price above $x_0$ (which equals the probability that a type-$x$ asset yields a cash flow below $x_0$), we obtain the linear receiver case of our model with
\[
V(y,x)=
\begin{cases}
w(x)\1\{y\geq x_0\}, &\text{if $y\geq \sigma(x)$,}\\
-\infty, &\text{otherwise.}
\end{cases}
\]
Note that $V$ violates Assumptions \ref{a:smooth} and \ref{a:ord}, as it is discontinuous and only weakly increasing in $y$. Nonetheless, if we assume a discrete prior (as do \citeauthor{GL}), we recover their main qualitative insight.

\begin{proposition}\label{p:GL} 
In \citeauthor{GL}'s stress test model with a discrete prior, there exists an optimal single-dipped signal.
\end{proposition}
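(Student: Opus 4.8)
The plan is to re-run the variational argument behind Theorem~\ref{l:ssdd} in this setting, where Assumptions~\ref{a:smooth} and~\ref{a:ord} fail and $V$ can equal $-\infty$. As noted in the text preceding the proposition, it is without loss to restrict attention to \emph{nice} signals, in which every posterior is either a point mass or pools asset types that all sell at the induced price; within this class the feasible set of outcomes is compact and the objective is bounded and upper semicontinuous, so an optimal nice signal exists, and by Theorem~\ref{t:pairwise} (whose proof uses only the linear-receiver structure $\gamma(\mu)=\E_\mu[x]$, and which preserves niceness since splitting a posterior leaves its mean, hence its induced price, unchanged) we may take it to be pairwise. Suppose this optimal pairwise signal is not single-dipped. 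Then it realizes, with positive probability, a strictly single-peaked triple: a posterior $\mu_1$ on $\{x_1,x_3\}$ with $x_1<x_3$ inducing action $y_1=\E_{\mu_1}[x]\in(x_1,x_3)$, and a posterior $\mu_2$ with $x_2\in\supp(\mu_2)$ inducing action $y_2>y_1$, where $x_1<x_2<x_3$.

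The key structural observation is that niceness forces $y_1\ge x_0$, hence also $y_2>x_0$: the bank of quality $x_3$ must sell at the price $y_1<x_3$, so $\sigma(x_3)\le y_1<x_3$, which (since $\sigma$ is nondecreasing with $\sigma(x)=x$ for $x\le x_0$) forces $x_3>x_0$ and hence $\sigma(x_3)\ge\sigma(x_0)=x_0$; thus $y_1\ge\sigma(x_3)\ge x_0$. Consequently every relevant pair $(y_j,x_i)$ with $j\in\{1,2\}$ and $i\in\{1,2,3\}$ lies in the region where $V$ is finite and equals $w$ (using $\sigma(x_i)\le\sigma(x_3)\le y_1\le y_2$ together with $y_1,y_2\ge x_0$), so $V(y_2,x_i)-V(y_1,x_i)=0$ for every $i$. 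Thus the first row of the matrix $R$ in Theorem~\ref{l:ssdd} vanishes, and the perturbation hypothesis of that theorem --- which need only be checked for configurations arising from posteriors of a nice signal, for which $y_1\ge x_0$ --- reduces to exhibiting $\beta\ge0$ with $\beta\ne0$ such that the second and third coordinates of $R\beta$ are nonnegative. Since these rows depend only on $u(y,x)=x-y$, this is an elementary linear exercise: taking $\beta$ supported on $\{\beta_1,\beta_2\}$ or $\{\beta_2,\beta_3\}$ with the middle weight $\beta_2$ large, the required inequalities hold precisely because $x_1<x_2<x_3$ (checking separately the cases $x_2\le y_1$, $x_2\ge y_2$, and $y_1<x_2<y_2$).

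With such a $\beta$, the associated perturbation shifts mass on the extreme types $x_1,x_3$ from action $y_1$ to action $y_2$ and mass on the middle type $x_2$ in the opposite direction. Because the first row of $R$ vanishes and all actions involved stay weakly above $x_0$, this perturbation leaves the sender's payoff \emph{exactly} unchanged --- $\int V\,\df\pi=\int_{\{y\ge x_0\}}w(x)\,\df\pi$, and the perturbation does not change the mass of any type in $\{y\ge x_0\}$ --- while moving the signal toward single-dippedness. Iterating such value-preserving perturbations, or equivalently passing to an optimal nice pairwise signal that is extremal with respect to a suitable index of non-single-dippedness, yields an optimal single-dipped signal. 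The main obstacle will be this last step: since $V$ is only weakly increasing, the perturbations are payoff-neutral rather than strictly improving, so one cannot simply invoke optimality to exclude a single-peaked triple (as one would under Assumption~\ref{a:ord}); instead one must argue directly --- using the discreteness of the prior, so that each perturbation is a reallocation of finitely many units of probability mass, with feasibility transparently preserved by monotonicity of $\sigma$ --- that the process can be arranged to terminate. This mirrors the treatment of \citeauthor{GS2018}'s model in Section~\ref{s:GS}, where Theorem~\ref{l:ssdd} is likewise applied despite a violated assumption.
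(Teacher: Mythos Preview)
Your key structural observation matches the paper's: any strictly single-peaked triple in a nice optimal outcome has $y_1\ge x_0$, so the first row of $R$ vanishes and the perturbation is payoff-neutral. The difference is in the choice of $\beta$ and, consequently, in the termination step.

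The paper takes the specific vector $\beta=\big((x_3-x_2)\varepsilon,\ (x_3-x_1)\varepsilon,\ (x_2-x_1)\varepsilon\big)$, which makes \emph{all three} coordinates of $R\beta$ equal to zero in the linear receiver case $u(y,x)=x-y$. Thus both actions $y_1$ and $y_2$ are held fixed. Taking $\varepsilon$ maximal subject to the mass constraints eliminates at least one of the three points $(y_1,x_1),(y_2,x_2),(y_1,x_3)$ from the support. Since no new action values are created, the support of the finite outcome can only shrink, and the iteration terminates after finitely many steps.

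Your proposal instead looks for $\beta\ge 0$ with the second and third rows of $R\beta$ merely nonnegative. Generically this strictly increases at least one action, so the support of the outcome changes: a new action value appears, potentially creating new single-peaked triples. You correctly flag termination as ``the main obstacle,'' but appealing to discreteness of the prior is not enough---discreteness of $X$ does not bound the number of distinct actions that can arise across iterations. Without the action-fixing choice of $\beta$, the argument as written does not close. The fix is simple and is exactly what the paper does: use the $\beta$ above, which reduces the problem to a finite sequence of mass reallocations on a fixed support.
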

To prove the proposition, we use a perturbation that fixes both actions. Since $V$ is only weakly increasing, this perturbation now only weakly increases the sender's expected utility. Nonetheless, when $X$ is finite, repeatedly applying such perturbations eventually yields a single-dipped signal. We also note that, as \citeauthor{GL} show, if $\mathbb{E}_{\phi}[x]<x_{0}$---so that no disclosure does not attain the sender's first-best outcome---then every optimal signal is single-dipped.\footnote{A related model by \citet{GT} studies optimal information disclosure to facilitate trade in an insurance market with adverse selection. Their model can be mapped to the linear receiver case with $V(y,x)=\nu(y)$ if $y\geq \sigma (x)$ and $V(y,x)=-\infty$ otherwise, where $\nu(y)$ is a strictly increasing, strictly concave function, and $\sigma$ is a continuous, strictly increasing function that satisfies $\sigma(x)<x$. Considering a similar perturbation as in \citeauthor{GL} shows that single-dipped negative assortative disclosure is optimal in their model. We also mention \citet{LW}, where a bank regulator discloses information about the design of a stress test to induce banks to make socially desirable investments. In this model, single-peaked disclosure is optimal.}

\renewcommand{\thesection}{E}

\section{Proofs}
\subsection{Proof of Lemma \ref{l:dual}}
The proof of Lemma \ref{l:dual} remains valid without Assumption \ref{a:ord} and when $X$ is an arbitrary compact metric space. 

By Corollary 2 in \citet{DK}, it suffices to show that $W$ is Lipschitz on $\Delta( X)$, endowed with the Kantorovich-Rubinstein distance 
\[
d_{KR}(\mu,\eta)=\sup \left\{\int_X p(x)\df (\mu-\eta)(x):\, p\text{ is $1$-Lipschitz on $X$} \right\},\quad \text{for all $\mu,\eta\in \Delta(X).$}
\]
Recall that the Kantorovich-Rubinstein distance metrizes the weak* topology on $\Delta(X)$ (e.g., Theorem 6.9 in \citealt{villani}).

Let $L_{V_y}$, $L_{V_x}$, and $L_{u_x}$ be the maximum values of  $|V_y|$,  $|V_x|$, and $|u_x|$ on $Y\times [0,1]$, which are well-defined because $V_y$, $V_x$, and $u_x$ are continuous,  by Assumption \ref{a:smooth}, and $Y\times [0,1]$ is compact. Then $V(y,x)$ is $L_{V_y}$-Lipschitz in $y$ for all $x$, $V(y,x)$ is $L_{V_x}$-Lipschitz in $x$ for all $y$, and $u(y,x)$ is $L_{u_x}$-Lipschitz in $x$ for all $y$. Moreover, let $l_{u_y}$ be the minimum value of $-\int_X u_y(\gamma(\mu),x)\df\mu(x)$ on $\Delta(X)$, which is well-defined by Assumption \ref{a:smooth} and is strictly positive (i.e., $l_{u_y}>0$), by Assumption \ref{a:qc}. Note that $\gamma$ is $L_{u_x}/l_{u_y}$-Lipschitz on $\Delta(X)$, because, by the implicit function theorem, the derivative of  $\gamma (\mu+\rho (\eta-\mu))$ with respect to $\rho$ at any $\rho\in [0,1]$ and $\mu,\eta \in \Delta(X)$ satisfies
\begin{align*}
\left|\frac{\df }{\df \rho}\gamma (\mu+\rho (\eta-\mu))\right| &= \left|\frac{\int u(\gamma (\mu+\rho(\eta-\mu)),x)\df (\eta -\mu)(x)}{-\int u_y(\gamma (\mu+\rho(\eta-\mu)),x)\df (\mu+\rho(\eta -\mu))(x)}\right|\\
&\leq \left|\frac{1}{l_{u_y}}\int u(\gamma (\mu+\rho(\eta-\mu)),x)\df (\eta -\mu)(x)\right|\\
&\leq \frac{L_{u_x}}{l_{u_y}}d_{KR}(\eta,\mu),
\end{align*}
where the last inequality holds by the definition of $d_{KR}$ and $L_{u_x}$-Lipschitz continuity of $u(y,x)$ in $x$ for all $y$. Now, for any $\mu,\eta\in \Delta(X)$, we have
\begin{align*}
|W(\eta)-W(\mu)|&=\left|\int (V(\gamma(\eta),x)-V(\gamma(\mu),x))\df \eta(x)+\int V(\gamma(\mu),x)\df (\eta-\mu)(x) \right|\\
&\leq \int \left|V(\gamma(\eta),x)-V(\gamma(\mu),x))\right|\df \eta(x)+\left|\int V(\gamma(\mu),x)\df (\eta-\mu)(x)) \right|\\
&\leq L_{V_y}\frac{L_{u_x}}{l_{u_y}}d_{KR}(\eta,\mu) +L_{V_x}d_{KR}(\eta,\mu),
\end{align*}
showing that $W$ is Lipschitz on $\Delta(X)$.

\subsection{Proof of Lemma \ref{l:Deq}}
The proof of Lemma \ref{l:Deq} remains valid without Assumption \ref{a:ord} and when $X$ is an arbitrary compact metric space. 

One direction is obvious. If $(p,q)$ is feasible for \eqref{DO}, then, for all $\mu\in \Delta (X)$, we have
\begin{gather*}
\int_X p(x)\df \mu(x) \geq \int_X (V(\gamma(\mu),x)+q(\gamma(\mu))u(\gamma(\mu),x))\df \mu(x)\\
=\int_X V(\gamma(\mu),x)\df \mu(x)=W(\mu),	
\end{gather*}
so $p$ is feasible for \eqref{DS}.

Suppose now that $p$ is feasible for \eqref{DS}. By \eqref{DS1} for $\mu=\delta_x$, with $x\in X$, we have 
\begin{equation}\label{e:p>=V}
p(x)\geq V(\gamma(\delta_x)),\quad \text{for all $x\in X$}.
\end{equation}
By \eqref{DS1} for
\[
\mu=\frac{u(y,x_2)}{u(y,x_2)-u(y,x_1)}\delta_{x_1}+\frac{-u(y,x_1)}{u(y,x_2)-u(y,x_1)}\delta_{x_2},
\]
with $x_1,x_2\in X$ and $y\in Y$ such that $u(y,x_1)<0<u(y,x_2)$, we have
\begin{equation}\label{e:p>=V+qu}
\frac{p(x_1)-V(y,x_1)}{u(y,x_1)}\geq \frac{p(x_2)-V(y,x_2)}{u(y,x_2)},\quad \text{if $u(y,x_1)<0<u(y,x_2)$}.
\end{equation}
By \eqref{e:p>=V}, it suffices to show that there exists a bounded, measurable $q$ such that \eqref{DO1} holds for all $y\neq \gamma(\delta_x)$, or equivalently $q(y)\in [\ul q(y),\ol q(y)]$, for all $y\in Y$, where $\ul q$ and $\ol q$ are defined by
\begin{align*}
\ul q(y) &=
\begin{cases}
\sup_{x_1\in X: u(y,x_1)<0}\frac{p(x_1)-V(y,x_1)}{u(y,x_1)}, &y>0,\\
-\infty, &y=0,
\end{cases}\\
\ol q(y) &=
\begin{cases}
\inf_{x_2\in X: u(y,x_2)>0}\frac{p(x_2)-V(y,x_2)}{u(y,x_2)}, &y<1,\\
+\infty, &y=1,
\end{cases}
\end{align*}
That is, it suffices to show that the correspondence $Q$ given by $Q(y)=[\ul q(y),\ol q(y)]$, for all $y\in Y$, admits a bounded, measurable selection $q$. We now show that there exists $C>0$ such that $Q(y)\cap [-C,C]$ is non-empty valued for all $y\in Y$, so $q$ given by $q(y)=\argmin_{r\in Q(y)}|r|$, for all $y\in Y$, is a required selection, by the measurable maximum theorem (Theorem 18.19 in \citealt{aliprantis2006}). By \eqref{e:p>=V+qu}, $Q(y)$ is non-empty for all $y\in (0,1)$, so it suffices to show that there exists $C>0$ such that $\ul q(y)\leq C$ and $\ol q(y)\geq -C$.

Define 
\begin{equation*}
\widetilde{q}(y,x)=
\begin{cases}
\frac{V_y(y,x)}{-u_{y}(y,x)}, & u(y,x)=0, \\
\frac{V(\gamma(\delta _{x}),x)-V(y,x)}{u(y,x)}, & 
u(y,x)\neq 0.
\end{cases}
\end{equation*}
Recall that Assumption \ref{a:qc} requires
that $u_{y}(y,x)<0$ when $u(y,x)=0$; so $\tilde{q}(y,x)$
is well-defined. It suffices to show that there exists $C>0$ such that $|\widetilde q(y,x)|\leq C$ for all $(y,x)\in Y\times X$, because then, by \eqref{e:p>=V}, we have
\begin{gather*}
\frac{p(x_2)-V(y,x_2)}{u(y, x_2)}\geq \frac{V(\gamma(\delta_{x_2}),x_2)-V(y,x_2)}{u(y,x_2)}=\widetilde q(y,x_2),\quad \text{if $u(y,x_2)>0$},\\
\frac{p(x_1)-V(y,x_1)}{u(y,x_1)}\leq \frac{V(\gamma(\delta_{x_1}),x_1)-V(y,x_1)}{u(y,x_1)}=\widetilde q(y,x_1),\quad \text{if $u(y,x_1)<0$},
\end{gather*}
so, by the definition of $\ol q$ and $\ul q$, we get that $\ol q(y)\geq -C$ and $\ul q(y)\leq C.$

Finally, to show that there exists $C>0$ such that $|\widetilde q(y,x)|\leq C$ for all $(y,x)\in Y\times X$, it suffices to show that $\widetilde q$ is continuous on the compact set $Y\times X$. By Berge's theorem, $\gamma(\delta _{x })$ is continuous in $x$, as it is a unique maximizer of a continuous function $U(y,x )=\int_{0}^y u(\tilde y,x)\df \tilde y$. Note that $\widetilde q$ is continuous at each $(y,x)$ such that $u(y,x)\neq 0$, because $V$, $u$, and $\gamma$ are continuous.
Next, consider $(y,x)$ such that $u(y,x)=0$, or equivalently $y=\gamma(\delta_x)$. For each $(y',x')\in Y\times X$, there exists $\hat y$ between $\gamma(\delta_{x'})$ and $y'$ such that
\[
[V(\gamma(\delta_{x'}),x')-V(y',x')]u_y(\hat y,x')=-V_y(\hat y,x')u(y',x'),
\]
by the mean value theorem applied to the function  
\[
[V(\gamma(\delta_{x'}),x')-V(\tilde y,x')]u(y',x')-[V(\gamma(\delta_{x'}),x')-V(y',x')]u(\tilde y,x'),
\]
where the argument $\tilde y$ is between $\gamma(\delta_{x'})$ and $y'$. Thus,
\[
\widetilde q(y',x')-\widetilde q(y,x)=\frac{V_y(\hat y,x')}{-u_y(\hat y,x')}-\frac{V_y(y,x)}{-u_y(y,x)}.
\]
If $(y',x')\rightarrow (y,x)$ then $(\hat y,x')\rightarrow (y,x)$, because $\gamma(\delta_x)$ is continuous in $x$. Hence, $\widetilde q(y',x')\rightarrow \widetilde q(y,x)$, because $V_y$ and $u_y$ are continuous. This shows that $\widetilde{q}$ is continuous on $Y\times X$.

\subsection{Proof of Theorem \ref{t:contact}}
The proof of Theorem \ref{t:contact} remains valid if Assumption \ref{a:ord} is replaced with the weaker requirement that $u(y,x)$ satisfies strict single-crossing in $x$: for all $y$ and $x<x'$, we have $u(y,x)\geq 0 \implies u(y,x')>0$.

By Lemma \ref{l:Deq}, there exists $q\in B(Y)$ such that $(p,q)$ is feasible for \eqref{DO}. Recall that, under strict single-crossing of $u(y,x)$ in $x$, for each action $y$, there is a unique state $\chi(y)$ such that $u(y,\chi(y))=0$.
 First, redefine $q(y)=-V_y(y,\chi(y))/u_y(y,\chi(y))$ for all $y\in Y$ such that $p(\chi(y))=V(y,\chi(y))$. We now show that $(p,q)$ is still feasible for \eqref{DO}. Fix any $y$ such that $p(\chi(y))=V(y,\chi(y))$ and any $x\in X$. For any $\varepsilon \in (0,1)$, define $y_{\varepsilon}\in Y$ as a unique solution to $(1-\varepsilon)u(y_\varepsilon ,\chi (y))+\varepsilon u(y_\varepsilon,x )=0$.
By the implicit function theorem,
\[
\lim_{\varepsilon \downarrow 0}\frac{y_\varepsilon -y}{\varepsilon} = \frac{u(y,x)}{-u_y(y,\chi (y))}.
\]
By \eqref{DO1}, we have
\begin{align*}
	V(y,\chi(y))\geq V(y_\varepsilon,\chi(y))+q(y_\varepsilon) u(y_\varepsilon,\chi(y)) \quad \text{and} \quad
	p(x)\geq V(y_\varepsilon,x)+q(y_\varepsilon) u(y_\varepsilon,x).
\end{align*}
Adding the first inequality multiplied by $1-\varepsilon$ and the second inequality multiplied by $\varepsilon$, and taking into account the definition of $y_\varepsilon$, we get
\[
p(x) \geq V(y,x) +\frac{(1-\varepsilon) [V(y_\varepsilon,\chi(y))-V(y,\chi(y))]+\varepsilon [V(y_\varepsilon,x)-V(y,x)]}{\varepsilon}.
\]
Taking the limit $\varepsilon \rightarrow 0$ gives
\[
p(x)\geq V(y,x)+\frac{V_y(y,\chi(y))}{-u_y(y,\chi(y))}u(y,x),
\]
showing that $(p,q)$ is still feasible with redefined $q$.

Thus, \eqref{e:q} holds, by construction, for all degenerate $\mu\in \Lambda$. Since, for non-degenerate $\mu\in \Lambda$, \eqref{e:FOC} integrated over $\mu$ yields \eqref{e:q}, it remains to show that \eqref{e:FOC} holds for each non-degenerate $\mu\in \Lambda$.

Fix a non-degenerate $\mu\in \Lambda$, so that
\begin{gather*}
\int_X (p(x)-V(\gamma(\mu),x)-q(\gamma(\mu))u(\gamma(\mu),x))\df \mu(x)=0.
\end{gather*}
Since the integrand is non-negative and continuous in $x$, it follows that
\begin{equation}\label{e:p=V+qu}
p(x)=V(\gamma(\mu),x)+q(\gamma(\mu))u(\gamma(\mu),x), \quad \text{for all $x\in \supp (\mu)$}.
\end{equation}
Since $\mu$ is non-degenerate, strict single-crossing of $u(y,x)$ in $x$ implies that there exist $x_1,x_2\in \supp (\mu)$ such that $x_1<\chi(\gamma(\mu))<x_2$. Thus, by \eqref{DO1}, for every $\tilde y \in Y$, we have
\[
p(x_1)=V(\gamma(\mu),x_1)+ q(\gamma(\mu))u(\gamma(\mu),x_1) \geq  V(\tilde y,x_1)+{q}(\tilde y)u(\tilde y,x_1).
\]
Therefore, for every $\tilde y > \gamma(\mu)$, we have
\[
\frac{{q}(\tilde y)-{q}(\gamma(\mu))}{\tilde y -\gamma(\mu)}\geq 
\frac{1}{-u(\tilde y,x_1)}\left[ \frac{V(\tilde y,x_1)-V(\gamma(\mu),x_1)}{\tilde y-\gamma(\mu)}+q(y )\frac{u(\tilde y,x_1)-u(\gamma(\mu),x_1)}{\tilde y-\gamma(\mu)}\right] .
\]
Since $V$ and $u$ have continuous partial derivatives in $y$, we have 
\[
\underline{q}_+'(\gamma(\mu)):=\liminf_{\tilde y
\downarrow \gamma(\mu) }\frac{{q}(\tilde y)-{q}(\gamma(\mu))}{\tilde y -\gamma(\mu)}\geq {C_1},
\]%
where 
\[
{C_1}=-\frac{1}{u(\gamma(\mu) ,x_1)}[V_y(\gamma(\mu) ,x_1)+q(\gamma(\mu))u_y(\gamma(\mu),x_1)].
\]
Applying a similar argument for $x=x_1$ and $\tilde y<\gamma(\mu) $, we get 
\[
\overline{q}_-'(\gamma(\mu) ):=\limsup_{\tilde{y}\uparrow \gamma(\mu)}\frac{q(\tilde y)-q(\gamma(\mu))}{\tilde y-\gamma(\mu) }\leq {C_1}.
\]
Similarly, considering $x=x_2$ with $\tilde y>\gamma(\mu) $
and $\tilde y<\gamma(\mu) $, we get 
\begin{align*}
\overline{q}_+'(\gamma(\mu))&: =\limsup_{\tilde y\downarrow \gamma(\mu)}\frac{q(\tilde y)-q(\gamma(\mu))}{\tilde y-\gamma(\mu) }\leq {C_2},\\
\underline{q}_-'(\gamma(\mu))&: =\liminf_{\tilde y\uparrow \gamma(\mu)}\frac{q(\tilde y)-q(\gamma(\mu))}{\tilde y- \gamma(\mu)}\geq {C_2},
\end{align*}
where 
\[
{C_2}=-\frac{1}{u(\gamma(\mu) ,x_2)}[V_y(\gamma(\mu) ,x_2)+q(\gamma(\mu))u_y(\gamma(\mu) ,x_2)].
\]
In sum, we have
\[
{C_1}\leq \underline{{q}}_+^{\prime }(\gamma(\mu) )\leq \overline{{q}}^{\prime }_+(\gamma(\mu))\leq {C_2}  \quad \text{and} \quad {C_2}\leq \underline{{q}}_-^{\prime }(\gamma(\mu))\leq \overline{{q}}_-^{\prime }(\gamma(\mu) )\leq {C_1}.
\]
We see that $ C_1 =C_2$ and all four Dini derivatives of $q$ at $\gamma(\mu)$ coincide, so ${q}$ has a derivative ${q}'(\gamma(\mu))$
at $\gamma(\mu)$ that satisfies $q^{\prime }(\gamma(\mu) )={C_1}={C_2}$. 

Since $x_1,x_2\in \supp (\mu)$ are arbitrary, \eqref{e:FOC} holds for all $x\in \supp (\mu)$ with $x\neq \chi(\gamma(\mu))$. For $x\in \supp (\mu)$ with $x=\chi(\gamma(\mu))$, \eqref{e:FOC} holds because, as shown above, we have that $q(\gamma(\mu))=-V_y(\gamma(\mu),\chi(\gamma(\mu)))/u_y(\gamma(\mu),\chi(\gamma(\mu)))$.

\subsection{Proof of Remark \ref{r:unique}} \label{a:r:unique}
The proof of Remark \ref{r:unique} remains valid if Assumption \ref{a:ord} is replaced with strict single-crossing of $u(y,x)$ in $x$.

Recall that to define $\Lambda$ we took an arbitrary solution $p$ to \eqref{DS}. Also, recall \eqref{e:p=V+qu} stating that
\[
p(x)=V(\gamma(\mu),x)+q(\gamma(\mu))u(\gamma(\mu),x),\quad \text{for all $\mu \in \Lambda$ and $x\in \supp (\mu)$}.
\]
Fix any solution $\tau$ to \eqref{PS}, so $\supp (\tau)\subset \Lambda$. Let $X^\star =\cup_{\mu\in \supp (\tau)} \supp (\mu)$. Then, by \eqref{PS1}, we have $\phi (X^\star)=1$, so the closure of $X^\star$ is $X$.

Next, take any $x\in X^\star$. If there is $\mu \in \supp (\tau)$ and $x\in \supp (\mu)$ such that $\gamma(\delta_x)=\gamma (\mu)$, then $p(x)=V(\gamma(\delta_x),x)$. Otherwise, there is $\mu\in \supp (\tau)$ and $x,x'\in \supp (\tau)$ such that either $x<\chi(\gamma(\mu))<x'$ or $x'<\chi(\gamma(\mu))<x$. Suppose that $x<\chi(\gamma(\mu))<x'$ (the other case is analogous and omitted). By Theorem \ref{t:contact}, we have
\begin{align*}
V_y(\gamma(\mu),x)+q(\gamma(\mu))u_y(\gamma(\mu),x)+q'(\gamma(\mu))u(\gamma(\mu),x) &=0,\\
V_y(\gamma(\mu),x')+q(\gamma(\mu))u_y(\gamma(\mu),x')+q'(\gamma(\mu))u(\gamma(\mu),x') &=0.	
\end{align*}
Adding the first equation multiplied by $u(\gamma(\mu),x')$ and the second multiplied by $-u(\gamma(\mu),x)$, we obtain
\[
q(\gamma(\mu))=-\frac{v(\gamma(\mu),x)u(\gamma(\mu),x')-v(\gamma(\mu),x')u(\gamma(\mu),x)}{u_y(\gamma(\mu),x)u(\gamma(\mu),x')-u_y(\gamma(\mu),x')u(\gamma(\mu),x)},
\]
which is well-defined because the denominator is strictly negative by Assumption \ref{a:qc}. Consequently, $p(x)=V(\gamma(\mu),x)+q(\gamma(\mu))u(\gamma(\mu),x)$. In sum, for each $x\in X^\star$, an arbitrary solution $p$ to \eqref{DS} is determined by a fixed solution $\tau$ to \eqref{PS}. Moreover, since $X$ is the closure of $X^\star$, there is a unique continuous extension of $p$ from $X^\star$ to $X$. This shows that there is a unique $p\in L(X)$ that solves \eqref{DS}.

\subsection{Proof of Proposition \ref{p:dekel}}
Define the weak order $\succsim$ on $\Delta(X)$ by $\mu\succsim \eta$ if $\tilde\gamma(\mu)\geq \tilde \gamma (\eta)$. Clearly, $\tilde \gamma$ satisfies Betweenness and Continuity iff $\succsim$ satisfies Axioms A1(a), A2', and A4 in \citet{dekel}. Thus, by his Proposition A.1, together with the characterization (**) in Proposition 1 and the argument in Section 3.C, $\tilde \gamma$ satisfies Betweenness and Continuity iff there exists a continuous function $\hat u$ from $[0,1]\times X$ to $[0,1]$ such that $\mu \succsim \eta \iff \hat \gamma(\mu)\geq \hat \gamma(\mu)$, where $\hat  \gamma(\mu)$, for any $\mu\in \Delta(X)$, is defined implicitly as the unique $\hat y\in [0,1]$ satisfying
\[
\int_X \hat u(\hat y,x)\df \mu (x) =(<)\hat y \iff y=(>)\hat y.
\]
In \citeauthor{dekel}'s construction, $\hat \gamma$ is continuous and thus there exists a continuous and strictly increasing function $\check \gamma $ such that $\hat \gamma (\mu) = \check \gamma (\tilde \gamma (\mu))$ for all $\mu\in \Delta(X)$. Then, $u$ defined by $u(y,x)=\hat u(\check \gamma(y),x)-\check \gamma(y)$ for all $(y,x)$ is as stated in the proposition. 

\subsection{Proof of Theorem \ref{p:S}}
We first prove the second part where the twist condition holds. The proof of this part remains valid if Assumption \ref{a:ord} is replaced with strict single-crossing of $u(y,x)$ in $x$.

Suppose by contradiction that there exists $\mu\in \Lambda$ with $x_1<x_2<x_3$ in $\supp (\mu)$. Then, by the definition of $\gamma(\mu)$ and strict single-crossing of $u(y,x)$ in $x$, we have $\min  \supp(\mu)<\chi(\gamma(\mu))<\max \supp (\mu)$. Thus, by redefining $x_1=\min  \supp(\mu)$ and $x_3=\max \supp (\mu)$ if necessary, we can assume that $x_1<\chi(\gamma(\mu))<x_3$, so \eqref{e:sing} holds. But this implies that the rows of the matrix $S$ are linearly independent, which contradicts the fact that \eqref{e:FOC} holds at $(\gamma(\mu),x_1)$, $(\gamma(\mu),x_2)$, and $(\gamma(\mu),x_3)$. Thus, $|\supp(\mu)|\leq 2$ for all $\mu \in \Lambda$, implying that every optimal signal is pairwise. 

We now turn to the first part. The proof of this part does not require Assumption \ref{a:ord}, and it remains valid when $X$ is an arbitrary compact metric space. 

For any $\mu \in \Delta (X )$, denote the set of distributions of posteriors with average posterior equal to $\mu$ by
\begin{equation*}
\Delta _{2}\left( \mu \right) =\left\{ \tau \in \Delta (\Delta (X)):\int_{\Delta (X )}\eta \df \tau \left( \eta \right) =\mu \right\}.
\end{equation*}

Let $\Delta _{2}^{Bin}(\mu )\subset \Delta _{2}(\mu)$ denote the set of such distributions where in addition the posterior is always supported on at most two states: 
\begin{equation*}
\Delta _{2}^{Bin} (\mu)=\left\{ \tau \in \Delta _{2}(\mu):\supp(\tau)\subset  \Delta_{1}^{Bin} \right\},
\end{equation*}
where
\begin{equation*}
\Delta _{1}^{Bin}=\left\{ \eta \in \Delta (X):\left\vert \supp(\eta)\right\vert \leq 2\right\}.
\end{equation*}

We wish to show that for each $\tau\in \Delta_2(\phi)$, there exists $\hat \tau \in \Delta_2^{Bin}(\phi) $ such that $\pi_{\hat \tau}=\pi_\tau$. 

We set the stage by defining some key objects and establishing their properties. Define $\Delta _{1}=\Delta \left( X \right) $ and $\Delta_{2}=\Delta \left( \Delta \left( X \right) \right) $. Since $X$ is compact, the sets $\Delta_1$ and $\Delta_2$ are also compact (in the weak* topology), by Prokhorov's Theorem (Theorem 15.11 in \citealt{aliprantis2006}). Moreover, $\Delta_{2}\left( \mu \right)$ is compact, since it is a closed subset of the compact set $\Delta_2$.

Define the correspondence $P:\Delta _{1}\rightrightarrows \Delta _{1}$ as%
\begin{equation*}
P\left( \mu \right) =\left\{ \eta \in \Delta _{1}:\int u\left( \gamma \left( \mu \right) ,x \right) \df \eta \left( x \right) =0 \right\}.
\end{equation*}%
For each $\mu \in \Delta _{1}$, $P\left( \mu \right) $ is a \emph{moment set}---a set of probability measures $\eta \in \Delta _{1}$ satisfying a given moment condition (e.g., \citealt{winkler}). By Assumption \ref{a:qc}, we have, for all $\mu ,\eta \in \Delta _{1}$,
\begin{equation}  \label{e:Prho*}
\eta \in P\left( \mu \right) \iff \gamma\left( \mu \right) =\gamma\left( \eta \right).
\end{equation}%
Clearly, $P(\mu)$ is nonempty (as $\mu\in P(\mu)$) and convex. Since $u(y,x)$ is continuous in $x$, $P(\mu)$ is a closed subset of $\Delta_1$, and hence is compact. Moreover, the correspondence $P$ has a closed graph. Indeed, consider two sequences $\mu_n\rightarrow \mu \in \Delta_1$ and $\eta_n\rightarrow \eta\in \Delta_1$ with $\mu_n\in \Delta_1$ and $\eta_n\in P(\mu_n),$ so that
\begin{equation*}
\int u\left(\gamma\left( \mu _{n}\right) ,x \right) \df \eta_{n}\left( x \right) =0.
\end{equation*}
Note that $\gamma(\mu)$ is a continuous function of $\mu$, by Berge's theorem (Theorem 17.31 in \citealt{aliprantis2006}). Since $u$ is also continuous, by Corollary 15.7 in \citet{aliprantis2006}, we have
\begin{equation*}
\int u\left( \gamma\left( \mu \right) ,x \right) \df \eta \left(x \right) =0,
\end{equation*}
proving that $\eta\in P(\mu) $, so $P$ has a closed graph. 

Define the correspondence $E:\Delta _{1}\rightrightarrows \Delta _{1}$ as%
\begin{equation*}
E\left( \mu \right) =P\left( \mu \right) \cap \Delta _{1}^{Bin}=\left\{ \eta\in P\left( \mu \right) :\left\vert \supp \eta \right\vert \leq 2\right\}.
\end{equation*}
Notice that for each $\mu\in \Delta_1$, the support of $\mu$ is well-defined, by Theorem 12.14 in \citet{aliprantis2006}. Moreover, from the proof of Theorem 15.8 in \citet{aliprantis2006}, it follows that $\Delta_1^{Bin}$ is a closed subset of $\Delta_1$, so both $\Delta_1^{Bin}$ and $E(\mu)$ are compact.

Define the correspondence $\Sigma :\Delta _{1}\rightrightarrows \Delta _{2}$ as
\begin{equation*}
\Sigma \left( \mu \right) =\left\{ \sigma \in \Delta \left( E\left( \mu \right) \right) :\mu =\int_{E\left( \mu \right) }\eta \df \sigma \left( \eta \right) \right\} .
\end{equation*}
Lemma \ref{lm_binary} shows that the correspondence $\Sigma$ admits a measurable selection. In turn, Lemma \ref{lm_binary} relies on Lemma \ref{l:Choquet}.
\begin{lemma}\label{l:Choquet}
Let Assumptions \ref{a:smooth} and \ref{a:qc} hold. For any $y\in Y$ and $\mu\in \Delta(X)$ such that $\int u(y,x)\df\mu(x)=0$, there exists $\sigma_\mu \in \Delta (\Delta(X))$ such that $\int \eta \df\sigma_\mu(\eta) =\mu$ and for each $\eta\in \supp (\sigma_\mu)$ we have $\int u(y,x) \df\eta(x)=0$ and $|\supp (\eta)|\leq 2$.
\end{lemma}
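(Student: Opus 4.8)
\textbf{Proof plan for Lemma \ref{l:Choquet}.} Fix $y\in Y$ and $\mu\in\Delta(X)$ with $\int_X u(y,x)\,\df\mu(x)=0$. The plan is to realize $\mu$ as the barycenter of a probability measure supported on the extreme points of the ``moment set'' $K:=\{\eta\in\Delta(X):\int_X u(y,x)\,\df\eta(x)=0\}$, after first showing that these extreme points all have support of size at most two. Note that $K$ is nonempty (it contains $\mu$), convex, and weak*-closed, hence weak*-compact since $\Delta_1=\Delta(X)$ is compact by Prokhorov's theorem. Throughout I use only that $u(y,\cdot)$ is continuous on $X$, which follows from Assumption \ref{a:smooth}.

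\emph{Step 1: every extreme point of $K$ has support of size at most $2$.} Suppose $\eta\in K$ has $|\supp(\eta)|\ge 3$; I will exhibit $\eta$ as a nontrivial midpoint of two elements of $K$, so that $\eta\notin\ex(K)$. Choose three distinct points $q_1,q_2,q_3\in\supp(\eta)$ and $r>0$ small enough that $B(q_1,r)$ and $B(q_2,r)$ are disjoint and both exclude $q_3$; set $B_1=B(q_1,r)$, $B_2=B(q_2,r)$, and $B_3=X\setminus(B_1\cup B_2)$, so that $\eta(B_i)>0$ for each $i$ by the definition of the support. Put $\eta_i=\eta|_{B_i}/\eta(B_i)\in\Delta(X)$. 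The homogeneous linear system $\sum_{i=1}^{3}c_i=0$ and $\sum_{i=1}^{3}c_i\int_X u(y,x)\,\df\eta_i(x)=0$ (two equations, three unknowns) has a nonzero solution $(c_1,c_2,c_3)$. Then $\zeta:=\sum_{i=1}^{3}c_i\eta_i$ is a nonzero signed measure with $\zeta(X)=0$ and $\int_X u(y,\cdot)\,\df\zeta=0$, and since the $B_i$ are disjoint, $\eta\pm\varepsilon\zeta\ge 0$ for all small $\varepsilon>0$. Hence $\eta+\varepsilon\zeta$ and $\eta-\varepsilon\zeta$ are distinct elements of $K$ with midpoint $\eta$, as desired.

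\emph{Step 2: barycentric representation.} Since $K$ is a compact convex subset of the locally convex space $C(X)^{*}$ with the weak* topology, the Krein--Milman theorem gives $K=\overline{\operatorname{conv}}(\ex(K))$; as $\Delta_1$ is metrizable, this yields a sequence $\mu_n\to\mu$ with each $\mu_n=\sum_i c_i^{(n)}\eta_i^{(n)}$ a finite convex combination of points of $\ex(K)$. Let $\sigma_n=\sum_i c_i^{(n)}\delta_{\eta_i^{(n)}}\in\Delta(\Delta_1)$; its support lies in $\ex(K)\subseteq K\cap\Delta_1^{Bin}$ (using Step 1 together with the closedness of $\Delta_1^{Bin}$ recorded in the proof of Theorem \ref{p:S}), and its barycenter $\int\eta\,\df\sigma_n(\eta)$ equals $\mu_n$. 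The space $\Delta(\Delta_1)$ is weak*-compact (Prokhorov) and metrizable, so a subsequence $\sigma_{n_k}$ converges weak* to some $\sigma_\mu\in\Delta(\Delta_1)$; by the Portmanteau theorem $\sigma_\mu$ is supported on the closed set $K\cap\Delta_1^{Bin}$, and since the barycenter map is weak*-continuous, $\int\eta\,\df\sigma_\mu(\eta)=\lim_k\mu_{n_k}=\mu$. Thus $\sigma_\mu$ is as required: $\int\eta\,\df\sigma_\mu(\eta)=\mu$, and every $\eta\in\supp(\sigma_\mu)$ satisfies $\int_X u(y,x)\,\df\eta(x)=0$ and $|\supp(\eta)|\le 2$.

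\emph{Main obstacle.} The only genuinely substantive point is Step 1 in the case where $\eta$ is not finitely supported --- that is, constructing the perturbing signed measure $\zeta$ without an explicit finite list of atoms --- which the partition of $X$ into three Borel pieces of positive $\eta$-mass resolves. Step 2 is routine given the (already available) closedness of $\Delta_1^{Bin}$ and the continuity of the barycenter map; alternatively, one may bypass the Krein--Milman argument by invoking the integral-representation theory for moment sets directly (\citet{winkler}).
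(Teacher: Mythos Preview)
Your proof is correct and follows essentially the same approach as the paper. The paper's proof is a one-line citation of the Choquet theorem and the Richter--Rogosinsky theorem from \citet{winkler}; your Step~1 is precisely the standard proof of the Richter--Rogosinsky characterization of extreme points of a moment set, and your Step~2 is an elementary substitute for Choquet's theorem (Krein--Milman plus a compactness/Portmanteau limit), which suffices here because you only need the representing measure to live on the closed set $K\cap\Delta_1^{Bin}\supseteq\ex(K)$ rather than on $\ex(K)$ itself.
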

\begin{proof}
Follows immediately from the Choquet Theorem (Theorem 3.1 in \citealt{winkler}) and Richter-Rogosinsky's Theorem (Theorem 2.1 in  \citealt{winkler})	
\end{proof}

\begin{lemma}\label{lm_binary} There exists a measurable function $\mu \mapsto \sigma_{\mu }\in \Sigma \left( \mu \right) $. 
\end{lemma}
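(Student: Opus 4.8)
The plan is to establish Lemma \ref{lm_binary} by applying the Kuratowski--Ryll-Nardzewski measurable selection theorem to the correspondence $\Sigma$, so the task reduces to checking that $\Sigma$ is nonempty-valued and measurable (e.g., weakly measurable with closed values, or has a measurable graph). Nonemptiness is immediate from Lemma \ref{l:Choquet}: given $\mu\in\Delta_1$, set $y=\gamma(\mu)$; then $\int u(y,x)\df\mu(x)=0$ by \eqref{e:FOCy}, so Lemma \ref{l:Choquet} produces $\sigma_\mu$ supported on $\Delta_1^{Bin}$ with barycenter $\mu$ and every $\eta\in\supp(\sigma_\mu)$ satisfying $\int u(\gamma(\mu),x)\df\eta(x)=0$, i.e.\ $\eta\in E(\mu)$ by \eqref{e:Prho*}; hence $\sigma_\mu\in\Sigma(\mu)$.

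For measurability, I would first show that the graph of $\Sigma$, namely $\mathrm{Gr}(\Sigma)=\{(\mu,\sigma):\sigma\in\Sigma(\mu)\}$, is a closed (hence Borel) subset of the compact metrizable space $\Delta_1\times\Delta_2$. The barycenter constraint $\mu=\int\eta\df\sigma(\eta)$ is closed in $(\mu,\sigma)$ because $\sigma\mapsto\int\eta\df\sigma(\eta)$ is weak*-continuous (integration of the continuous $\Delta_1$-valued identity map). The constraint $\supp(\sigma)\subset E(\mu)$ is the delicate part: I would rewrite it as ``$\sigma(\Delta_1\setminus E(\mu))=0$,'' and argue closedness using that $P$ has closed graph (established above) and that $\Delta_1^{Bin}$ is closed, so $\{(\mu,\eta):\eta\in E(\mu)\}$ is closed in $\Delta_1\times\Delta_1$; then if $(\mu_n,\sigma_n)\to(\mu,\sigma)$ with each $\sigma_n$ concentrated on $E(\mu_n)$, any weak* limit $\sigma$ is concentrated on the closed set $E(\mu)$ by the portmanteau theorem applied along the graph. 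This yields closed graph, hence $\Sigma$ is measurable with nonempty closed values, and Kuratowski--Ryll-Nardzewski (Theorem 18.13 in \citet{aliprantis2006}) delivers a measurable selection $\mu\mapsto\sigma_\mu$.

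The main obstacle I anticipate is the handling of the support constraint ``$\supp(\sigma)\subset E(\mu)$'' under limits, since support is not a continuous operation; the clean route is to pass through the equivalent measure-theoretic condition $\sigma(E(\mu))=1$ and invoke the portmanteau theorem together with the already-proved closedness of $\mathrm{Gr}(E)$. A secondary technical point is confirming that $E(\mu)$ is compact and that $\Delta(E(\mu))$ varies measurably; both follow from compactness of $E(\mu)$ (intersection of the compact moment set $P(\mu)$ with the closed set $\Delta_1^{Bin}$) and the closed-graph property. Once the selection $\mu\mapsto\sigma_\mu$ is in hand, the first part of Theorem \ref{p:S} follows by disintegration: given $\tau\in\Delta_2(\phi)$, define $\hat\tau$ as the pushforward that first draws $\mu\sim\tau$ and then draws $\eta\sim\sigma_\mu$; by construction $\hat\tau\in\Delta_2^{Bin}(\phi)$, and since every $\eta$ in the support of $\sigma_\mu$ satisfies $\gamma(\eta)=\gamma(\mu)$ while having barycenter-average equal to $\mu$, the induced outcome is unchanged, i.e.\ $\pi_{\hat\tau}=\pi_\tau$.
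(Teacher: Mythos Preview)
Your proposal is correct and follows essentially the same route as the paper: nonemptiness via Lemma \ref{l:Choquet}, closed graph of $\Sigma$ (using closedness of $\mathrm{Gr}(E)$ and the Portmanteau theorem), hence measurability of $\Sigma$ (the paper cites Theorem 18.20 in \citet{aliprantis2006} for this step), and finally the Kuratowski--Ryll-Nardzewski selection theorem (Theorem 18.13). The only minor difference is in handling the support constraint under limits: you sketch a product-measure Portmanteau argument via $\delta_{\mu_n}\otimes\sigma_n\to\delta_\mu\otimes\sigma$ on the closed set $\mathrm{Gr}(E)$, whereas the paper instead builds the decreasing sequence of closed sets $\overline{P}(\mu_n)=\overline{\cup_{k\ge n}P(\mu_k)}$, applies Portmanteau to each, and uses continuity of measure to conclude $\sigma(P(\mu))=1$; both arguments are valid and equivalent in spirit.
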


\begin{proof}
The correspondence $\Sigma $ is nonempty-valued, by Lemma \ref{l:Choquet}. Next, fix $\mu \in \Delta _{1}$, and consider a sequence $\sigma_{n}\rightarrow \Sigma\in \Delta_2 $ with $\sigma _{n}\in \Sigma \left(\mu \right)$. By the Portmanteau Theorem (Theorem 15.3 in \citealt{aliprantis2006}), we have
\begin{equation*}
\int_{E\left( \mu \right) }\eta \df \sigma _{n}\left( \eta \right) \rightarrow
\int_{E\left( \mu \right) }\eta \df \sigma \left( \eta \right) \quad \text{and} \quad 
\lim \sup_{n}\sigma _{n}\left( E\left( \mu \right) \right) \leq \sigma
\left( E\left( \mu \right) \right),
\end{equation*}%
where the last inequality holds because $E\left( \mu \right) $ is closed.
Thus, 
\begin{equation*}
\int_{E\left( \mu \right) }\eta \df \sigma \left( \eta \right) =\mu \quad \text{and} \quad 1=\lim \sup_{n}\sigma _{n}\left( E\left( \mu \right) \right) \leq \sigma \left( E\left( \mu \right) \right) \leq 1,
\end{equation*}
proving that $\sigma \in \Sigma \left( \mu \right) $. Thus, $\Sigma $ is closed-valued.

Next, consider two sequences $\mu_n\rightarrow \mu \in \Delta_1$ and $\sigma _n\rightarrow \sigma\in \Delta_2$ with $\mu_n\in \Delta_1$ and $\sigma_n\in \Sigma(\mu_n)$, so that 
\begin{gather*}
\mu _{n}=\int \eta \df \sigma _{n}\left( \eta \right), \quad \sigma _{n}\left( \Delta _{1}^{Bin}\right) =1, \quad \text{ and } \quad \sigma _{n}\left( P\left( \mu _{n}\right) \right) =1.
\end{gather*}
The Portmanteau Theorem implies that $\mu =\int \eta \df \sigma \left( \eta \right)$ and $\sigma \left( \Delta_{1}^{Bin}\right) =1$, since $\Delta_{1}^{Bin}$ is closed. Define $\overline P(\mu_n)$ as the closure of $\cup_{k=n}^\infty P(\mu_k)$. By construction, $P(\mu_k)\subset \overline P(\mu_k)\subset \overline P(\mu_n)$ for $k\geq n$, so the Portmanteau Theorem implies that $\sigma (\overline P(\mu_n))=1.$ Moreover, $\overline P (\mu_n)\downarrow \overline P \subset P(\mu)$, because $P$ has a closed graph. Hence, $\sigma (P(\mu))=1$, by the continuity of probability measures (Theorem 10.8 in \citealt{aliprantis2006}). That is, $\sigma \in \Sigma (\mu)$, showing that the correspondence $\Sigma$ has a closed graph.

Therefore, $\Sigma $ is measurable, by Theorem 18.20 in \citet{aliprantis2006}, as well as nonempty- and closed-valued. Hence, there exists a measurable function $\mu \mapsto \sigma _{\mu }\in \Sigma \left( \mu \right) $, by Theorem 18.13 in \citet{aliprantis2006}.
\end{proof}

Finally, taking a measurable selection, for each $\tau \in \Delta _{2}\left( \phi \right) $, define $\hat{\tau}\in \Delta _2 $ as
\begin{equation}  \label{e:tauhat}
\hat{\tau}\left( \widetilde \Delta_1\right) =\int_{\Delta _{1}}\sigma _{\mu}\left( \widetilde \Delta_1\right) \df \tau ( \mu )
\end{equation}
for every measurable set $\widetilde \Delta_1\subset \Delta _{1}$. By construction, $\hat{\tau}\in \Delta _{2}^{Bin}(\phi)$, since 
\begin{align*}
\hat{\tau}(\Delta_{1}^{Bin})=\int_{\Delta_{1}}\sigma_{\mu}(\Delta_{1}^{Bin})\df \tau(\mu)=1
\end{align*}
and 
\begin{align*}
\phi=\int_{\Delta_1} \mu \df \tau (\mu) = \int_{\Delta_1}\left(\int_{E(\mu)}\eta \df \sigma_\mu(\eta)\right) \df \tau(\mu)=\int_{\Delta_1} \eta \df \hat \tau (\eta),
\end{align*}
where the first equality holds by $\tau \in \Delta_2(\phi)$, the second by $\sigma_\mu\in \Sigma$, and the third by \eqref{e:tauhat}. Similarly, for each measurable $\widetilde Y\subset Y$ and $\widetilde X\subset X$, we have 
\begin{align*}
\pi_{\tau} (\widetilde Y,\widetilde X) &=\int_{\Delta_1}\1\{\gamma(\mu)\in \widetilde Y\}\mu(\widetilde X)\df \tau(\mu) \\
&=\int_{\Delta_1}\1\{\gamma(\mu)\in \widetilde Y\}\left(\int_{E(\mu)}\eta(\widetilde X)\df \sigma_\mu(\eta)\right)\df \tau(\mu) \\
&=\int_{\Delta_1}\left(\int_{E(\mu)}\1\{\gamma(\eta)\in \widetilde Y\}\eta(\widetilde X)\df \sigma_\mu(\eta)\right)\df \tau (\mu) \\
&=\int_{\Delta_1} \1\{\gamma(\eta)\in \widetilde Y\}\eta (\widetilde X)\df \hat \tau (\eta) \\
&=\pi_{\hat\tau} (\widetilde Y, \widetilde X),
\end{align*}
where the second equality holds by $\sigma_\mu\in \Sigma$, the third by %
\eqref{e:Prho*} and $E(\mu)\subset P(\mu)$, and the fourth by %
\eqref{e:tauhat}.

\subsection{Proof of Corollary \ref{c:nodisc}} The proof of Corollary \ref{c:nodisc} remains valid if Assumption \ref{a:ord} is replaced with strict single-crossing of $u(y,x)$ in $x$. Since $|X|\geq 3$, strict single-crossing of $u(y,x)$ in $x$ implies that there exist $x_1<x_2<x_3$ in $X$ such that $x_1<\chi(\gamma(\phi))<x_3$. 

Suppose that no disclosure is optimal.%
Then, by part (3) of Theorem~\ref{t:contact}, it follows that \eqref{e:FOC} holds for $\mu=\phi$ and all $x\in X$, so there exist constants $q(\gamma(\phi)),q'(\gamma(\phi))\in \R$ such that
\[
V_y(\gamma(\phi),x)=-q(\gamma(\phi))u_y(\gamma(\phi),x)-q'(\gamma(\phi))u(\gamma(\phi),x)\quad \text{for all $x\in X$}.
\]
Thus, $V_y(\gamma(\phi),\cdot)$ lies in a linear space $L$ spanned by $u_y(\gamma(\phi),\cdot)$ and $u(\gamma(\phi),\cdot)$, whose dimension is at most 2. But a generic  $V_y(\gamma(\phi),\cdot)$ actually lives in a linear space whose dimension is at least 3, since $|X|\geq 3$, and thus it does not belong to $L$, showing that generically no disclosure is suboptimal.

\subsection{Proof of Remark \ref{r:SDD}} Let $\Lambda$ be strictly single-dipped. Note that there do not exist distinct $\mu$ and $\eta$ in  $\Lambda$ such that $\gamma(\mu)=\gamma(\eta)$, as otherwise $\mu/2+\eta/2$, with $|\supp(\mu/2+\eta/2)|\geq 3$, would also be in $\Lambda$, contradicting that $\Lambda$ is pairwise. Thus, there exist unique functions $\chi_1$ and $\chi_2$ from $Y_\Lambda$ to $X$ such that $\supp(\mu)=\{\chi_1(\gamma(\mu)),\chi_1(\gamma(\mu))\}$ and $\chi_1(\gamma(\mu))= \chi(\gamma(\mu))= \chi_2(\gamma(\mu))$ or $\chi_1(\gamma(\mu))< \chi(\gamma(\mu))< \chi_2(\gamma(\mu))$ for all $\mu \in \Lambda$. Moreover, for all $y<y'$ in $Y_\Lambda$, we have $\chi_2(y)\leq \chi_2(y')$, as otherwise there would exist $\mu,\mu'\in \Lambda$ such that $\gamma (\mu)=y$, $\gamma (\mu')=y'$, and $\chi_1(y)\leq \chi (y)<\chi(y')\leq \chi_2(y')<\chi_2(y)$ contradicting that $\Lambda$ is single-dipped. Likewise, for all $y<y'$ in $Y_\Lambda$, we have $\chi_1(y')\notin (\chi_1(y),\chi_2(y))$, as otherwise there would exist $\mu,\mu'\in \Lambda$ such that $\gamma (\mu)=y$, $\gamma (\mu')=y'$, and $\chi_1(y)<\chi_1(y')<\chi_2(y)$ contradicting that $\Lambda$ is single-dipped.

\subsection{Proof of Theorem \ref{l:ssdd}} The proof of Theorem \ref{l:ssdd} remains valid if the condition $u_x (y,x)>0$ in Assumption \ref{a:ord} is replaced with strict single-crossing of $u(y,x)$ in $x$.

We will prove that $\Lambda$ is single-dipped, which implies that every optimal signal is single-dipped. We start with an appropriate version of the theorem of alternative.
\begin{lemma}\label{l:alter}
Exactly one of the following two alternatives holds.
\begin{enumerate}
	\item There exists $\a  >0$ such that $\a  R\leq 0$.
	\item There exists $\b \geq 0$ such that $R\b \geq 0$ and $R\b \neq 0$.
\end{enumerate}
\end{lemma}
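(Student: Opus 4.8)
The plan is to prove the two alternatives are mutually exclusive by a one‑line bilinear computation, and to prove that at least one of them always holds by a polarity (equivalently, Farkas‑type) argument applied to the finitely generated cone spanned by the columns of $R$.

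First I would rule out that both alternatives hold simultaneously. Given $\a>0$ with $\a R\le 0$ and $\b\ge 0$ with $R\b\ge 0$, $R\b\ne 0$, I would evaluate the scalar $\a R\b$ under the two groupings: read as $(\a R)\b$, it is the product of a nonpositive row with a nonnegative column, hence $\le 0$; read as $\a(R\b)$, it is the product of a strictly positive row with a nonnegative nonzero column, hence $>0$ — a contradiction. This step is immediate and needs nothing further.

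The substantive step is showing that at least one alternative holds. I would set $K:=\{R\b:\b\ge 0\}$, the conical hull of the columns of $R$, which is polyhedral and therefore closed. The key observation is that alternative (2) holds if and only if $K\cap\R^3_{\ge 0}$ contains a nonzero vector, so I may assume $K\cap\R^3_{\ge 0}=\{0\}$. I would then invoke polyhedral polar‑cone calculus: since the sum of the polars of polyhedral cones is again polyhedral, hence closed, $K^\circ+(\R^3_{\ge 0})^\circ=(K\cap\R^3_{\ge 0})^\circ=\R^3$; as $(\R^3_{\ge 0})^\circ=\R^3_{\le 0}$, writing $\1=(1,1,1)$ as $\a+z$ with $\a\in K^\circ$ and $z\le 0$ gives $\a=\1-z>0$, and $\a\in K^\circ$ says precisely $\a(R\b)\le 0$ for all $\b\ge 0$, i.e.\ $\a R\le 0$ — which is alternative (1). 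Equivalently, and perhaps more in keeping with the rest of the paper, one can normalize ``$R\b\ge 0$, $R\b\ne 0$'' to ``$R\b\ge 0$, $\1^\top R\b\ge 1$'' by homogeneity, observe that failure of (2) makes the system $\{\b\ge 0:-R\b\le 0,\ -\1^\top R\b\le -1\}$ infeasible, and read off from the standard Farkas alternative nonnegative multipliers $y_1\in\R^3$ and $y_2>0$ with $R^\top(y_1+y_2\1)\le 0$, so that $\a:=(y_1+y_2\1)^\top>0$ satisfies $\a R\le 0$.

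I expect the only point requiring care — rather than a genuine obstacle — to be the bookkeeping that converts the awkward ``$R\b\ne 0$'' condition in alternative (2) into a single closed inequality system (or, in the polar‑cone route, the passage from ``$K$ meets the nonnegative orthant only at the origin'' to the algebraic existence of $\a$). Once that translation is in place, everything reduces to elementary linear‑programming duality.
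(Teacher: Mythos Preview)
Your proposal is correct and takes essentially the same approach as the paper: both arguments reduce to a standard Farkas/theorem-of-the-alternative step, with the same trick of encoding ``$\alpha>0$'' (or dually ``$R\beta\neq 0$'') as a linear inequality via the all-ones vector. The paper argues from ``not (1)'' to ``(2)'' by augmenting with $-I$ and citing Gale's theorem of the alternative, whereas your second route argues from ``not (2)'' to ``(1)'' by normalizing $\1^\top R\beta\ge 1$; your polar-cone variant is just a geometric rephrasing of the same idea.
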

\begin{proof}
Clearly, (1) and (2) cannot both hold, because premultiplying $R\b \geq 0$ with $R\b \neq0$ by $\a > 0$ yields $\a R\b >0$, whereas postmultiplying $\a R\leq 0$ by $\b \geq 0$ yields $\a R\b \leq 0.$

Now suppose that (1) does not hold. Then there does not exist $\a \geq 0$ such that
\begin{equation*}
\a 
\begin{pmatrix}
	R &-I 
\end{pmatrix}
\leq
\begin{pmatrix}
	0 &-e
\end{pmatrix}
\end{equation*}
where $I$ is an identity matrix and $e$ is a row vector of ones. Thus, by the theorem of alternative (e.g., Theorem 2.10 in \citealt{gale}), there exists $\b \geq 0$ and $\c \geq 0$ such that
\begin{equation*}
\begin{pmatrix}
	R\\
	-I
\end{pmatrix}
\begin{pmatrix}
	\b  &\c 
\end{pmatrix}
\geq 0\quad \text{and}\quad -e\c <0,
\end{equation*}
which in turn shows that (2) holds.
\end{proof}

We prove the theorem by contraposition. Suppose that $\Lambda$ is not single-dipped, so there exist $\mu_1,\mu_2\in \Lambda$ and $x_1<x_2<x_3$ such that $x_1,x_3\in \supp(\mu_1)$, $x_2\in \supp (\mu_2)$, and $\gamma(\mu_1)<\gamma(\mu_2)$.   By strict single-crossing of $u(y,x)$ in $x$, without loss, we can assume that $x_1< \chi(\gamma(\mu_1))< x_3$, by redefining $x_1=\min \supp (\mu_1)$ and $x_3=\max \supp (\mu_1)$ if necessary.

By \eqref{DO1} and Theorem \ref{t:contact}, we have
\begin{align*}
V(\gamma(\mu_1),x_1)+q(\gamma(\mu_1)) u(\gamma(\mu_1),x_1) &\geq V(\gamma(\mu_2),x_1) +q(\gamma(\mu_2)) u(\gamma(\mu_2),x_1),\\
V(\gamma(\mu_2),x_2)+q(\gamma(\mu_2)) u(\gamma(\mu_2),x_2) &\geq V(\gamma(\mu_1),x_2) +q(\gamma(\mu_1)) u(\gamma(\mu_1),x_2),\\
V(\gamma(\mu_1),x_3)+q(\gamma(\mu_1)) u(\gamma(\mu_1),x_3) &\geq V(\gamma(\mu_2),x_3) +q(\gamma(\mu_2)) u(\gamma(\mu_2),x_3).
\end{align*}
By \eqref{e:q}, we have, for $i=1,2$, 
\[
q(\gamma(\mu_i))=-\frac{\E_{\mu_{i}}[V_y(\gamma(\mu_i),x)]}{\E_{\mu_i} [u_y(\gamma(\mu_i),x)]}>0,
\]
where the inequality follows from Assumptions \ref{a:qc} and \ref{a:ord}. Thus, the vector $\a=(1,q(\gamma(\mu_1)),q(\gamma(\mu_2)))$ is strictly positive and satisfies $\a  R\leq 0$. By Lemma \ref{l:alter}, there does not exist a vector $\b \geq 0$ such that $R\b \geq 0$ and $R\b \neq 0$, as desired.

\subsection{Proof of Theorem \ref{t:SDPD}}\label{a:SDPD}
The proof uses the following five lemmas, whose proofs are deferred to Appendix \ref{Additional Proofs}. We start with the second part of the theorem, and we show that $\Lambda$ is strictly single-dipped (-peaked), which implies that every optimal signal is strictly single-dipped (-peaked).

\begin{lemma}\label{l:S><0}
If $u_{y x}(y,x)/u_x(y,x)$ and $V_{yx}(y,x)/u_x(y,x)$ are increasing (decreasing) in $x$ for all $y$, with at least one of them strictly increasing (decreasing), then $|S|>(<)0$ for all $y$ and $x_1<x_2<x_3$ such that $x_1<\chi(y)<x_3$.
\end{lemma}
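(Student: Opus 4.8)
The plan is to change variables so that the two ratio hypotheses turn into curvature (convexity/concavity) statements, to rewrite $|S|$ as the value at an interior node of a determinant that vanishes at the two outer nodes, and then to show that this determinant, viewed as a function of the interior node, is concave (convex) with curvature that does not vanish identically, and hence is strictly positive (negative) on the interior.

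First I would use $u_x>0$ to substitute $\xi:=u(y,x)$: the map $x\mapsto\xi$ is a strictly increasing $C^{3}$ bijection of $[0,1]$ onto an interval containing $0$, with $\xi=0$ corresponding to $x=\chi(y)$. Writing $x(\xi)$ for its inverse, set $G(\xi):=V_y(y,x(\xi))$ and $K(\xi):=u_y(y,x(\xi))$, which are $C^{2}$ by Assumption \ref{a:smooth}. The chain rule gives $G'(\xi)=V_{yx}(y,x(\xi))/u_x(y,x(\xi))$ and $K'(\xi)=u_{yx}(y,x(\xi))/u_x(y,x(\xi))$, so since $x(\cdot)$ is increasing, the hypothesis that $V_{yx}/u_x$ and $u_{yx}/u_x$ are increasing (decreasing) in $x$ makes $G'$ and $K'$ increasing (decreasing) in $\xi$, and the strictness clause makes one of $G'',K''$ not identically zero on any subinterval. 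Put $\xi_i:=u(y,x_i)$, so $\xi_1<\xi_2<\xi_3$, and $x_1<\chi(y)<x_3$ becomes $\xi_1<0<\xi_3$. Since the $i$-th column of $S$ equals $(G(\xi_i),\,\xi_i,\,K(\xi_i))^{\top}$, we have $|S|=\Psi(\xi_2)$, where for $t\in[\xi_1,\xi_3]$
\[
\Psi(t):=\det\begin{pmatrix}G(\xi_1)&G(t)&G(\xi_3)\\ \xi_1&t&\xi_3\\ K(\xi_1)&K(t)&K(\xi_3)\end{pmatrix}.
\]
Only the middle column of $\Psi$ depends on $t$, and $\Psi(\xi_1)=\Psi(\xi_3)=0$ because the determinant then has a repeated column.

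Next I would differentiate the middle column of $\Psi$ twice and expand along it, obtaining
\[
\Psi''(t)=-G''(t)\bigl(\xi_1K(\xi_3)-\xi_3K(\xi_1)\bigr)-K''(t)\bigl(\xi_3G(\xi_1)-\xi_1G(\xi_3)\bigr).
\]
Both bracketed terms are strictly positive. Indeed $\xi_3G(\xi_1)-\xi_1G(\xi_3)>0$, since $G=V_y>0$ by Assumption \ref{a:ord} and $\xi_1<0<\xi_3$; and $\xi_1K(\xi_3)-\xi_3K(\xi_1)>0$, since the binary posterior $\mu=\tfrac{\xi_3}{\xi_3-\xi_1}\delta_{x_1}+\tfrac{-\xi_1}{\xi_3-\xi_1}\delta_{x_3}$ has positive weights and satisfies $\int u(y,\cdot)\df\mu=0$, so strict aggregate single-crossing (Assumption \ref{a:qc}) gives $\int u_y(y,\cdot)\df\mu=\tfrac{\xi_3K(\xi_1)-\xi_1K(\xi_3)}{\xi_3-\xi_1}<0$.

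To finish, in the increasing case $G'',K''\ge0$ force $\Psi''\le0$ on $(\xi_1,\xi_3)$, with $\Psi''$ not identically zero there; since a concave function that vanishes at both endpoints of an interval is nonnegative on it and strictly positive in the interior unless identically zero—the latter being impossible here as it would force $\Psi''\equiv0$—we obtain $|S|=\Psi(\xi_2)>0$. The decreasing case is the mirror image: $\Psi$ is convex, vanishes at the endpoints, and is not identically zero, so $\Psi<0$ on $(\xi_1,\xi_3)$, i.e.\ $|S|<0$. I expect the only real friction to be the sign bookkeeping in the computation of $\Psi''$—in particular recognizing that $\xi_1K(\xi_3)-\xi_3K(\xi_1)>0$ is exactly strict aggregate single-crossing applied to a two-point posterior on $x_1$ and $x_3$—together with confirming that Assumption \ref{a:smooth} supplies enough differentiability to take the two derivatives of $\Psi$ and to pass from ``$G',K'$ monotone'' to ``$G'',K''$ of the right sign.''
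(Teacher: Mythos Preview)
Your proof is correct and takes a genuinely different route from the paper's. The paper works directly in the $x$ variable and proves an algebraic identity: after multiplying $|S|$ by the positive factor $(u(y,x_3)-u(y,x_1))\big/\begin{vmatrix}u(y,x_1)&u(y,x_3)\\ u_y(y,x_1)&u_y(y,x_3)\end{vmatrix}$, it decomposes the result as a sum of two ``second-difference'' $2\times 2$ determinants, each of which is shown nonnegative (one strictly) via double integrals that expose the ratios $V_{yx}/u_x$ and $u_{yx}/u_x$. Your argument instead reparametrizes by $\xi=u(y,x)$ so that the monotone-ratio hypotheses become convexity of $G$ and $K$; you then observe that $|S|=\Psi(\xi_2)$ with $\Psi$ concave and vanishing at $\xi_1,\xi_3$, which immediately forces $\Psi>0$ in the interior once $\Psi''\not\equiv 0$. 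Both proofs rest on the same two sign facts---$\xi_3 G(\xi_1)-\xi_1 G(\xi_3)>0$ from $V_y>0$ and $\xi_1<0<\xi_3$, and $\xi_1 K(\xi_3)-\xi_3 K(\xi_1)>0$ from strict aggregate single-crossing---but your packaging is more geometric and avoids the somewhat opaque algebraic rearrangement; the paper's approach, on the other hand, never invokes second derivatives and therefore does not lean on Assumption~\ref{a:smooth} for the passage from ``$G',K'$ monotone'' to ``$G'',K''\ge 0$.''
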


\begin{lemma}\label{l:R>0}
If $u_{yx}(y,x)/u_{x}(y,x)$ and $V_{yx} (y_2,x)/u_x(y_1,x)$ are increasing (decreasing) in $x$ for all $y$ and $y_2\geq(\leq) y_1$, with at least one of them strictly increasing (decreasing), then $|R|>(<)0$ for all $x_1<x_2<x_3$ and all $y_2>(<)y_1$ such that $x_1< \chi(y_1)< x_3$.
\end{lemma}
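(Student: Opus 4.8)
The plan is to strip the sign pattern off $R$, rewrite the resulting determinant as an iterated integral, and then sign the integrand by the argument underlying Lemma~\ref{l:S><0}. I treat only the increasing case; the decreasing case is symmetric (now $y_2<y_1$, which also flips the orientation of the double integral below).

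First, a purely algebraic reduction. Put $\Delta V(x):=V(y_2,x)-V(y_1,x)$ and $D:=\mathrm{diag}(1,-1,1)$. Checking entries shows $R=DAD$ with
\[
A:=\begin{pmatrix}
\Delta V(x_1) & \Delta V(x_2) & \Delta V(x_3)\\
u(y_1,x_1) & u(y_1,x_2) & u(y_1,x_3)\\
u(y_2,x_1) & u(y_2,x_2) & u(y_2,x_3)
\end{pmatrix},
\]
so $|R|=|D|^{2}|A|=|A|$. Now write $\Delta V(x)=\int_{y_1}^{y_2}V_y(t,x)\,\df t$ and use multilinearity of the determinant in the first row; then subtract the second row from the third (which leaves the determinant unchanged), replace that row by $\bigl(\int_{y_1}^{y_2}u_y(s,x_j)\,\df s\bigr)_j$, and use multilinearity again. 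This gives
\[
|R|=|A|=\int_{y_1}^{y_2}\!\!\int_{y_1}^{y_2}M(t,s)\,\df s\,\df t,\qquad
M(t,s):=\det\begin{pmatrix}
V_y(t,x_1) & V_y(t,x_2) & V_y(t,x_3)\\
u(y_1,x_1) & u(y_1,x_2) & u(y_1,x_3)\\
u_y(s,x_1) & u_y(s,x_2) & u_y(s,x_3)
\end{pmatrix},
\]
so it suffices to show $M(t,s)>0$ for all $(t,s)\in[y_1,y_2]^2$.

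This last step is essentially Lemma~\ref{l:S><0} in slightly greater generality. When $t=s=y_1$, $M(y_1,y_1)$ is exactly the matrix $S$ of that lemma evaluated at $y=y_1$, and since $x_1<\chi(y_1)<x_3$ and (a fortiori) $u_{yx}/u_x$ and $V_{yx}(y,\cdot)/u_x(y,\cdot)$ are increasing, Lemma~\ref{l:S><0} gives $M(y_1,y_1)>0$. For general $(t,s)\in[y_1,y_2]^2$, $M(t,s)$ has the same ``$V_y$--$u$--$u_y$'' shape but with the distinct action-arguments $t$, $y_1$, $s$ in its three rows. Crucially the middle row is still at $y_1$, so the state at which it vanishes is $\chi(y_1)\in(x_1,x_3)$; and since $t,s\ge y_1$, the hypothesis that $V_{yx}(y_2,x)/u_x(y_1,x)$ is increasing for $y_1\le y_2$ applies to the pair $(y_1,t)$, while log-supermodularity of $u_x$ (the content of ``$u_{yx}/u_x$ increasing'') lets one compare $u_x$ at $s$ and at $y_1$ and thereby control the $u_y(s,\cdot)$ row. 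Hence the reduction used to prove Lemma~\ref{l:S><0}---pull the positive factors $u_x(y_1,x_j)$ out of the columns, pass to first- and second-order difference quotients (each a mean-value derivative), and read off the sign from the monotonicity of the resulting quotients of $V_{yx}$ and $u_{yx}$ against $u_x(y_1,\cdot)$---carries over and yields $M(t,s)>0$ throughout $[y_1,y_2]^2$, with strictness coming from ``at least one hypothesis strict'' exactly as there. Integrating over $(t,s)$ gives $|R|=|A|>0$.

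I expect the main obstacle to be precisely this last step: keeping track of signs when $u(y_1,x_2)$ may be positive or negative (that is, $x_2$ on either side of $\chi(y_1)$), verifying that the column rescaling and the divided-difference reduction preserve signs, and---because the three rows of $M(t,s)$ live at the different actions $t$, $y_1$, $s$---using log-supermodularity of $u_x$ to compare $u_x$ across those actions so that the relevant quotients are genuinely monotone. Concretely this amounts to rerunning the proof of Lemma~\ref{l:S><0} with its single action $y$ replaced by $(t,y_1,s)$ in the three rows, and checking that the hypotheses transfer because $t,s\in[y_1,y_2]$.
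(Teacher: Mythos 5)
Your algebraic reduction is correct and even illuminating: $R=DAD$ with $D=\mathrm{diag}(1,-1,1)$ does give $|R|=|A|$, and multilinearity does give $|R|=\int_{y_1}^{y_2}\int_{y_1}^{y_2}M(t,s)\,\df s\,\df t$. But the step you yourself flagged as the main obstacle is where the argument genuinely breaks: $M(t,s)$ is \emph{not} pointwise positive on $[y_1,y_2]^2$ under the stated hypotheses, so Lemma \ref{l:S><0} does not ``carry over'' to the mixed-action determinant. A concrete counterexample: take the translation-invariant, state-independent-sender case $u(y,x)=T(x-y)$ with $T(z)=\int_0^z e^{-w^2}\df w$ and $V_y\equiv 1$. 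Then $u_{yx}/u_x=2(x-y)$ is strictly increasing in $x$ and $V_{yx}(y_2,x)/u_x(y_1,x)\equiv 0$, so the lemma's hypotheses hold. With $y_1=0$, $y_2=1$, $(x_1,x_2,x_3)=(-1,0,1)$ (so $\chi(y_1)=0\in(x_1,x_3)$), the first row of $M$ is $(1,1,1)$ and
\begin{equation*}
M(t,s)=\det\begin{pmatrix}1&1&1\\ -T(1)&0&T(1)\\ -e^{-(1+s)^2}&-e^{-s^2}&-e^{-(1-s)^2}\end{pmatrix}
=T(1)\bigl(2e^{-s^2}-e^{-(1-s)^2}-e^{-(1+s)^2}\bigr),
\end{equation*}
which at $s=1$ equals $T(1)(2e^{-1}-1-e^{-4})\approx -0.28\,T(1)<0$. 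The double integral is still positive ($T(1)(2T(1)-T(2))>0$, consistent with the lemma), but your proof needs the integrand to be signed, and it is not. The underlying reason is the one you anticipated: the second and third rows of $M(t,s)$ sit at different actions $y_1$ and $s$, and the cross-action analogue of the single-crossing inequality $u(y,x_1)u_y(y,x_3)-u(y,x_3)u_y(y,x_1)>0$ (Lemma \ref{l:ASC}(2)) simply fails for $s\neq y_1$ — the normalization $u=g\tilde u$ introduces a $g'(s)$ term of indeterminate sign.

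The paper's proof avoids this by never integrating down to a single-action-per-row determinant. It keeps the double-difference structure of $R$, rearranges (a positively rescaled) $|R|$ into a sum of two products of signed factors, and signs each factor separately: the $2\times 2$ minor in $u(y_1,\cdot)$ and $u(y_2,\cdot)$ is handled via the normalization $\tilde u=u/g$ of Lemma \ref{l:ASC}(3) (so that only $\tilde u$ and $\tilde u_y$, both of determinate sign, appear); and the two second-difference determinants are written as double or triple integrals over $(y,\tilde x,x)\in[y_1,y_2]\times[x_2,x_3]\times[x_1,x_2]$ whose integrands, e.g.\ $V_{yx}(y,\tilde x)u_x(y_1,x)-V_{yx}(y,x)u_x(y_1,\tilde x)$ and $u_x(y_1,x)u_x(y_2,\tilde x)-u_x(y_1,\tilde x)u_x(y_2,x)$, \emph{are} pointwise signed by exactly the stated monotonicity hypotheses. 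If you want to salvage your approach, you would need to integrate $M(t,s)$ in at least one of $t,s$ before attempting to sign anything — at which point you are essentially reconstructing the paper's decomposition.
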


\begin{lemma}\label{l:ySDD}
If $u_{yx}(y,x)/u_{x}(y,x)$ is increasing in $x$ for all $y$, then for all $x_1<x_2<x_3$ and all $y_2>y_1$ such that $x_1< \chi(y_1)< x_3$, we have
\begin{align*}
u(y_2,x_3)u(y_1,x_1)> u(y_2,x_1)u(y_1,x_3),\\
u(y_2,x_2)u(y_1,x_1)> u(y_2,x_1)u(y_1,x_2),\\
u(y_2,x_3)u(y_1,x_2)> u(y_2,x_2)u(y_1,x_3).
\end{align*}
\end{lemma}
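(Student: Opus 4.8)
The plan is to reduce all three inequalities to the strict monotonicity, on each side of $\chi(y_1)$, of the single ratio $\lambda(x):=u(y_2,x)/u(y_1,x)$. First I would normalise: by Lemma \ref{l:ASC}(3) there is a differentiable $g>0$ such that replacing $u$ by $u/g$ yields $u_y<0$ everywhere. This replacement leaves $\chi$ unchanged, multiplies each of the three desired inequalities by the positive constant $g(y_1)g(y_2)$, and changes $u_{yx}/u_x$ only by the term $-g'/g$, which is independent of $x$; so the hypothesis and the three conclusions are all unaffected, and I may assume $u_y(y,x)<0$ for every $(y,x)$. Write $c_1=\chi(y_1)$ and $c_2=\chi(y_2)$; then $c_1<c_2$ since $\chi$ is strictly increasing, and $c_1\in(0,1)$ since $x_1<c_1<x_3$. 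Because $u_x>0$, the map $u(y_1,\cdot)$ is negative on $[0,c_1)$, zero at $c_1$, and positive on $(c_1,1]$, so $\lambda$ is well defined on $[0,1]\setminus\{c_1\}$.

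The crux is to show $\lambda'>0$ on $[0,c_1)$ and on $(c_1,1]$. Set $\rho(x)=u_x(y_2,x)/u_x(y_1,x)>0$; since $\log\rho(x)=\int_{y_1}^{y_2}\bigl(u_{yx}(s,x)/u_x(s,x)\bigr)\df s$ and $u_{yx}/u_x$ is increasing in $x$ for each $s$, differentiating under the integral sign gives $\rho'\ge 0$. The numerator of $\lambda'(x)$ is $N(x):=u_x(y_2,x)u(y_1,x)-u(y_2,x)u_x(y_1,x)$, and substituting $u_x(y_2,x)=\rho(x)u_x(y_1,x)$ factors it as $N(x)=u_x(y_1,x)H(x)$ with $H(x):=\rho(x)u(y_1,x)-u(y_2,x)$. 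The key cancellation is that $H'(x)=\rho'(x)u(y_1,x)+\rho(x)u_x(y_1,x)-u_x(y_2,x)=\rho'(x)u(y_1,x)$, whose sign is that of $u(y_1,x)$; hence $H$ is non-increasing on $[0,c_1]$ and non-decreasing on $[c_1,1]$, so it is minimised at $c_1$, where $H(c_1)=-u(y_2,c_1)>0$ because $c_1<c_2$. Thus $H>0$ on $[0,1]$, so $N=u_x(y_1,\cdot)\,H>0$ and $\lambda'>0$ off $c_1$, giving strict monotonicity on each interval. I expect this factorisation — spotting that $N$ splits as $u_x(y_1,\cdot)\,H$ with $H'=\rho'\,u(y_1,\cdot)$, so that the hypothesis enters exactly through $\rho'\ge 0$ — to be the one nontrivial idea; everything after it is bookkeeping.

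Finally I would record the ``threshold'' fact and do the case split. Since $u_y<0$, we have $u(y_2,x)-u(y_1,x)=\int_{y_1}^{y_2}u_y(s,x)\df s<0$ for all $x$; dividing by $u(y_1,x)$ gives $\lambda(x)>1$ on $[0,c_1)$ and $\lambda(x)<1$ on $(c_1,1]$. Each of the three inequalities has the form $u(y_2,x_b)u(y_1,x_a)>u(y_2,x_a)u(y_1,x_b)$ with $x_a<x_b$ drawn from $\{x_1,x_2,x_3\}$. If $x_a\ne c_1\ne x_b$ and both lie on the same side of $c_1$, then $u(y_1,x_a)u(y_1,x_b)>0$ and $\lambda(x_a)<\lambda(x_b)$, and multiplying $\lambda(x_b)>\lambda(x_a)$ through by $u(y_1,x_a)u(y_1,x_b)$ yields the claim. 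If $x_a<c_1<x_b$, then $u(y_1,x_a)u(y_1,x_b)<0$ and $\lambda(x_a)>1>\lambda(x_b)$, and multiplying $\lambda(x_a)>\lambda(x_b)$ through by the negative quantity $u(y_1,x_a)u(y_1,x_b)$ — which reverses the inequality — again yields the claim. The remaining boundary case $x_a=c_1$ or $x_b=c_1$ (which, given $x_1<c_1<x_3$, can occur only for the second and third inequalities, with $x_b=x_2=c_1$ or $x_a=x_2=c_1$ respectively) is immediate: one side of the inequality is $0$ and the other is a product of two factors whose signs are known and either opposite or equal. This exhausts the three inequalities.
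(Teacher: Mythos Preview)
Your proof is correct and takes a genuinely different route from the paper's. The paper treats the three inequalities somewhat separately: the first is obtained via the normalisation of Lemma~\ref{l:ASC}(3) (as in the proof of Lemma~\ref{l:R>0}) followed by a direct sign argument on the integrand $-\tilde u(y_1,x_3)\tilde u_y(y,x_1)+\tilde u(y_1,x_1)\tilde u_y(y,x_3)$, while the second and third are handled by a case split on the sign of $u(y_1,x_2)$, using the secant-slope inequality
\[
\frac{u(y_2,x_3)-u(y_2,x_2)}{u(y_1,x_3)-u(y_1,x_2)}\ \ge\ \frac{u(y_2,x_2)-u(y_2,x_1)}{u(y_1,x_2)-u(y_1,x_1)}
\]
(which encodes the same log-supermodularity of $u_x$) together with the already-proven first inequality applied to a sub-pair. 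Your argument is more unified: you reduce all three inequalities to the strict monotonicity of $\lambda(x)=u(y_2,x)/u(y_1,x)$ on each side of $\chi(y_1)$, and the factorisation $N=u_x(y_1,\cdot)\,H$ with $H'=\rho'\,u(y_1,\cdot)$ is a clean way to see why the hypothesis enters precisely through $\rho'\ge 0$. Both approaches use the same two ingredients---the normalisation from Lemma~\ref{l:ASC}(3) and the monotonicity of $u_x(y_2,\cdot)/u_x(y_1,\cdot)$---but yours packages them into a single monotone-ratio statement that dispatches all three inequalities by one case split, whereas the paper's version stays closer to the determinant computations already set up for Lemma~\ref{l:R>0}.
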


\begin{lemma}\label{l:ySPD}
If $V_{yx} (y_2,x)/u_x(y_1,x)$ is decreasing in $x$ for all $y_2\leq y_1$, then for all $x_1<x_2<x_3$ and all $y_2<y_1$ such that $x_1< \chi(y_1)< x_3$, we have
\begin{align*}
\frac{u(y_1,x_1)}{V(y_1,x_1)-V(y_2,x_1)}<\frac{u(y_1,x_2)}{V(y_1,x_2)-V(y_2,x_2)}<\frac{u(y_1,x_3)}{V(y_1,x_3)-V(y_2,x_3)}.
\end{align*}
\end{lemma}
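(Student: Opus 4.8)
The final statement to prove is Lemma~\ref{l:ySPD}. I note that this lemma is a close analogue of Lemma~\ref{l:ySDD}, but in the single-peaked direction (signs flipped, $y_2<y_1$ instead of $y_2>y_1$) and with the sender's utility increment $V(y_1,x)-V(y_2,x)$ playing the role that $u(y_1,x)$ plays in Lemma~\ref{l:ySDD}. So the plan is to mimic the proof of Lemma~\ref{l:ySDD}, translating ``log-supermodularity of $u_x$'' into ``monotonicity of the ratio $V_{yx}(y_2,x)/u_x(y_1,x)$.''

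\textbf{Approach.} The three claimed inequalities all have the form $\frac{u(y_1,x)}{V(y_1,x)-V(y_2,x)}$ is strictly increasing as $x$ moves through $x_1<x_2<x_3$. So it suffices to show: for any $x<x'$ with $x<\chi(y_1)<x'$ — wait, more precisely, I need to be careful about where the interval $(\chi(y_1),\cdot)$ sits — it suffices to show the map $x\mapsto \frac{u(y_1,x)}{V(y_1,x)-V(y_2,x)}$ is strictly increasing on the relevant range. First I would establish the sign facts: since $x_1<\chi(y_1)<x_3$ and $u$ is increasing in $x$ (Assumption~\ref{a:ord}) with $u(y_1,\chi(y_1))=0$, we have $u(y_1,x_1)<0$ and $u(y_1,x_3)>0$; and since $V$ is strictly increasing in $y$ (Assumption~\ref{a:ord}) and $y_2<y_1$, we have $V(y_1,x)-V(y_2,x)>0$ for all $x$. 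Then the three inequalities are equivalent to monotonicity of the stated ratio, and because the ratio changes sign (negative at $x_1$, positive at $x_3$) only the middle comparisons involving $x_2$ carry content beyond the sign observations; but treating all three uniformly via a single monotonicity claim is cleanest.

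\textbf{Key computation.} Write $D(x)=V(y_1,x)-V(y_2,x)$, so $D(x)>0$ and $D'(x)=V_x(y_1,x)-V_x(y_2,x)$, and $D_{x}$ need not have a sign — but that doesn't matter. I want $\frac{d}{dx}\frac{u(y_1,x)}{D(x)}>0$, i.e. $u_x(y_1,x)D(x)-u(y_1,x)D'(x)>0$. Hmm, this is not obviously what the hypothesis gives directly. Let me instead differentiate in the ``cross'' direction: the hypothesis is about $V_{yx}(y_2,x)/u_x(y_1,x)$ being decreasing in $x$ for $y_2\le y_1$. Integrating $V_{yx}$ over $y\in[y_2,y_1]$ gives $D'(x)=\int_{y_2}^{y_1}V_{yx}(t,x)\,dt$. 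The decreasing-ratio hypothesis says $V_{yx}(t,x) = r_t(x)\,u_x(y_1,x)$ where each $r_t$ is decreasing in $x$; hence $D'(x) = \big(\int_{y_2}^{y_1} r_t(x)\,dt\big)u_x(y_1,x) =: \rho(x) u_x(y_1,x)$ with $\rho$ decreasing. Then $u_x(y_1,x)D(x)-u(y_1,x)D'(x) = u_x(y_1,x)\big(D(x)-\rho(x)u(y_1,x)\big)$, and since $u_x(y_1,x)>0$ I need $D(x)>\rho(x)u(y_1,x)$. Differentiating $D(x)-\rho(x)u(y_1,x)$: its derivative is $D'(x)-\rho'(x)u(y_1,x)-\rho(x)u_x(y_1,x) = -\rho'(x)u(y_1,x)$. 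The sign of this depends on the sign of $u(y_1,x)$, which flips at $\chi(y_1)$: so $D(x)-\rho(x)u(y_1,x)$ is increasing on $(\chi(y_1),\infty)$ and decreasing on $(-\infty,\chi(y_1))$ (using $\rho'\le 0$), hence it attains its minimum at $x=\chi(y_1)$, where $u(y_1,\chi(y_1))=0$ and so it equals $D(\chi(y_1))>0$. Therefore $D(x)-\rho(x)u(y_1,x)>0$ everywhere, giving the strict monotonicity and hence all three inequalities. (Strictness: $D(\chi(y_1))>0$ is strict since $V$ is strictly increasing in $y$, so the derivative of the ratio is strictly positive.)

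\textbf{Main obstacle.} The delicate point is the reduction $D'(x)=\rho(x)u_x(y_1,x)$ with $\rho$ decreasing: I need to make sure the hypothesis ``$V_{yx}(y_2,x)/u_x(y_1,x)$ decreasing in $x$ for all $y_2\le y_1$'' is used correctly — it applies pointwise for each fixed $y_2\in[y_2^{\text{low}},y_1]$, and then decreasingness is preserved under integration over that parameter (a sum/integral of decreasing functions is decreasing), which is the step that converts the ``derivative'' hypothesis into a statement about the finite increment $D$. I should double-check the direction of all inequalities against the single-peaked convention, and verify that nothing requires $u_x(y_1,x)\ne 0$ beyond Assumption~\ref{a:ord} (which already gives $u_x>0$ strictly). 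Everything else is routine; this is essentially the mirror image of Lemma~\ref{l:ySDD}'s proof, so I would present it compactly and, if space permits, simply note the analogy and give the one new computation.
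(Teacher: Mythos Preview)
Your proof is correct, and it takes a genuinely different route from the paper's. The paper first converts the hypothesis into a finite-difference secant inequality
\[
\frac{D(x_3)-D(x_2)}{u(y_1,x_3)-u(y_1,x_2)}\ \le\ \frac{D(x_2)-D(x_1)}{u(y_1,x_2)-u(y_1,x_1)},
\qquad D(x):=V(y_1,x)-V(y_2,x),
\]
and then performs a case split on the sign of $u(y_1,x_2)$, handling each side of $\chi(y_1)$ by separate algebraic manipulations. You instead show directly that $x\mapsto u(y_1,x)/D(x)$ has strictly positive derivative on all of $[0,1]$: writing $D'(x)=\rho(x)\,u_x(y_1,x)$ with $\rho(x)=\int_{y_2}^{y_1}\tfrac{V_{yx}(t,x)}{u_x(y_1,x)}\,dt$ decreasing by hypothesis, you reduce to $g(x):=D(x)-\rho(x)u(y_1,x)>0$, and observe that $g'(x)=-\rho'(x)\,u(y_1,x)$ forces $g$ to be minimized at $\chi(y_1)$ where $g(\chi(y_1))=D(\chi(y_1))>0$. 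This avoids the case analysis, gives the slightly stronger statement that the ratio is strictly monotone on the whole interval (not just across the three chosen points), and makes the role of $\chi(y_1)$ transparent as the location of the minimum of $g$. The only thing you should state explicitly is that $\rho$ is differentiable (hence $\rho'\le 0$ from monotonicity): this follows because $\rho(x)=D'(x)/u_x(y_1,x)$ and both $D'$ and $u_x$ are $C^2$ under Assumption~\ref{a:smooth}. The paper's finite-difference argument has the minor advantage of not invoking differentiability of the ratio, but given the paper's $C^3$ assumptions your calculus route is equally rigorous and more economical.
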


\begin{lemma}\label{l:stab}
Suppose that $V^n$ is a sequence of functions satisfying Assumption \ref{a:smooth} such that $V_y^n$ converges uniformly to $V_y$, and suppose that the contact sets $\Lambda^n$ under $V^n$ are single-dipped (-peaked). Then there exists a single-dipped (-peaked) optimal signal under $V$.
\end{lemma}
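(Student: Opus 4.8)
The plan is to take optimal signals $\tau^n$ for the approximating sender utilities $V^n$, pass to a weak* limit $\tau$, and verify that $\tau$ is both optimal under $V$ and single-dipped.

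I would first normalize: replacing $V^n$ by $V^n-V^n(0,\cdot)$ and $V$ by $V-V(0,\cdot)$ shifts the objective by a constant that does not depend on the signal (using \eqref{PS1}) and leaves the contact set from \eqref{e:L} unchanged, so it affects neither optimality nor single-dippedness. After this normalization $V^n(y,x)=\int_0^y V^n_y(t,x)\,\df t$, so uniform convergence $V^n_y\to V_y$ on $Y\times X$ upgrades to uniform convergence $V^n\to V$. Since $\gamma$ depends only on $u$, the indirect utilities $W^n(\mu)=\int_X V^n(\gamma(\mu),x)\,\df\mu(x)$ then converge uniformly on $\Delta(X)$ to $W$; each $W^n$ is continuous (indeed Lipschitz) by the proof of Lemma \ref{l:dual}, so $W$ is continuous as their uniform limit. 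Because feasibility for \eqref{PS} is the same for every $V^n$ and for $V$, and the objectives differ uniformly by $\sup_\mu|W^n-W|\to 0$, the optimal value of \eqref{PS} under $V^n$ converges to its value $V^\star$ under $V$.

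Next, by Lemma \ref{l:dual} I would fix an optimal $\tau^n$ for \eqref{PS} under $V^n$; by the argument following \eqref{e:L} its support lies in $\Lambda^n$, which is single-dipped by hypothesis, so $\tau^n$ is single-dipped. By Prokhorov's theorem $\Delta(\Delta(X))$ is weak*-compact, so along a subsequence $\tau^n\to\tau$. Weak*-continuity of $\tau'\mapsto\int\mu\,\df\tau'(\mu)$ gives $\int\mu\,\df\tau(\mu)=\phi$, so $\tau$ is feasible for \eqref{PS} under $V$; and $\int W\,\df\tau=\lim_n\int W\,\df\tau^n=\lim_n\int W^n\,\df\tau^n=V^\star$ --- the first equality by continuity of $W$, the second since $\|W-W^n\|_\infty\to 0$, the third since $\tau^n$ attains the value under $V^n$ --- so $\tau$ is optimal under $V$.

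I expect the main obstacle to be passing single-dippedness to the limit, since supports need not vary continuously under weak* convergence. The tool is the lower-hemicontinuity fact (from the Portmanteau inequality $\liminf_n\nu_n(U)\ge\nu(U)$ for open $U$): if $\nu_n\to\nu$ weak* in a compact metric space and $z\in\supp\nu$, then $z=\lim z_n$ for some $z_n\in\supp\nu_n$. Suppose, toward a contradiction, that $\supp\tau$ contains a strictly single-peaked triple: $\mu_1,\mu_2\in\supp\tau$ and $x_1<x_2<x_3$ with $x_1,x_3\in\supp\mu_1$, $x_2\in\supp\mu_2$, and $\gamma(\mu_1)<\gamma(\mu_2)$. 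Apply the fact in $\Delta(\Delta(X))$ to get $\mu_i^n\in\supp\tau^n$ with $\mu_i^n\to\mu_i$, and in $\Delta(X)$ to get $x_1^n,x_3^n\in\supp\mu_1^n$ and $x_2^n\in\supp\mu_2^n$ with $x_j^n\to x_j$; since the $x_j$ are strictly ordered, $x_1^n<x_2^n<x_3^n$ for large $n$, so single-dippedness of $\supp\tau^n$ yields $\gamma(\mu_1^n)\ge\gamma(\mu_2^n)$. Letting $n\to\infty$ and using continuity of $\gamma$ (Berge's theorem) gives $\gamma(\mu_1)\ge\gamma(\mu_2)$, contradicting the choice of triple. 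Hence $\supp\tau$ is single-dipped, so $\tau$ is the desired optimal signal; the single-peaked case is verbatim the same with the inequalities reversed.
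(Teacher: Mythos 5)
Your proof is correct and follows essentially the same route as the paper's: extract a weak* limit $\tau$ of optimal signals $\tau^n$, verify optimality via uniform convergence of the objectives, and rule out a limiting single-peaked triple by lower hemicontinuity of supports together with continuity of $\gamma$. The only cosmetic difference is that the paper passes through the Hausdorff limit $\bar\Lambda$ of the contact sets $\Lambda^n$ and shows $\supp(\tau)\subset\bar\Lambda$ with $\bar\Lambda$ single-dipped, whereas you argue directly with $\supp(\tau^n)$ and $\supp(\tau)$; both hinge on the same Portmanteau/lower-hemicontinuity facts.
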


Now, the set $\Lambda$ is single-dipped (-peaked) by Theorem \ref{l:ssdd} with 
\[
\b =
\begin{pmatrix}
u(y_2,x_3)u(y_1,x_2)-u(y_2,x_2)u(y_1,x_3)\\
u(y_2,x_3)u(y_1,x_1)-u(y_2,x_1)u(y_1,x_3)\\
u(y_2,x_2)u(y_1,x_1)-u(y_2,x_1)u(y_1,x_2)
\end{pmatrix}
\quad
\left( \b =
\begin{pmatrix}
\frac{u(y_2,x_1)}{V(y_2,x_1)-V(y_1,x_1)} \\
\frac{u(y_2,x_2)}{V(y_2,x_2)-V(y_1,x_2)} \\
\frac{u(y_2,x_3)}{V(y_2,x_3)-V(y_1,x_3)}
\end{pmatrix}\right),
\]
as follows from Lemma \ref{l:R>0} and Lemma \ref{l:ySDD} (Lemma \ref{l:ySPD}). Moreover, $|\supp (\mu)|\leq 2$ for all $\mu\in \Lambda$ by Theorem \ref{t:pairwise} and Lemma \ref{l:S><0}, showing that $\Lambda$ is strictly single-dipped (-peaked).

Finally, we prove the first part of the theorem. Consider $V^n$ such that
\[
V_y^n(y,x)=V_y(y,x)+\int_0^x \frac{\tilde v(x)}{n}u_x (y,\tilde x) \df \tilde x,
\]
where $\tilde v (x)$ is a continuous, strictly positive, and strictly increasing (decreasing) function on $[0,1]$. Then $V_y^n(y,x)>0$ because $V_y(y,x)>0$ and $u_x(y,x)>0$ for all $(y,x)$, by Assumption \ref{a:ord}. Moreover, for all $y_2\geq (\leq)y_1$,
\[
\frac{V_{yx}^n(y_2,x)}{u_x(y_1,x)}=\frac{V_{yx}(y_2,x)}{u_x(y_1,x)} + \frac{\tilde v(x)}{n} \frac{u_x(y_2,x)}{u_x(y_1,x)}
\]
is strictly increasing (decreasing) in $x$, because  $\tilde v (x)$ is strictly positive and strictly increasing (decreasing) in $x$; ${V_{yx}(y_2,x)}/{u_x(y_1,x)}$ is increasing (decreasing) in $x$; and ${u_x(y_2,x)}/{u_x(y_1,x)}$ is increasing in $x$, since $u_{yx}(y,x)/u_{x}(y,x)$ is increasing (decreasing) in $x$. Thus, by Lemma \ref{l:stab}, there exists an optimal single-dipped (-peaked) signal.

\subsection{Proof of Theorem \ref{t:brenier}}
The proof of Theorem \ref{t:brenier} remains valid if Assumption \ref{a:ord} is replaced with strict single-crossing of $u(y,x)$ in $x$.

We start with the following lemma, which is also used in the proof of Theorem \ref{t:NAD}.
\begin{lemma}\label{l:nad}
If $X=[0,1]$ and $\Lambda$ is strictly single-dipped, then for each $y$ in $Y_\Lambda$ there exists $y'\leq y$ in $Y_\Lambda$ such that $\chi_1(y)\leq \chi_1(y')=\chi_2(y')\leq \chi_2(y)$.
\end{lemma}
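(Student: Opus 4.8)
The plan is to exhibit the required disclosed state $y'$ directly, as the minimizer of $\gamma$ over those posteriors in $\Lambda$ whose support lies inside $[\chi_1(y),\chi_2(y)]$, and then to argue that this minimizer must be a point mass — because otherwise an entire subinterval of states would be left uncovered by the optimal signal, contradicting $\supp(\phi)=[0,1]$.

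First, if $\chi_1(y)=\chi_2(y)$ then $y'=y$ works, so assume $\chi_1(y)<\chi(y)<\chi_2(y)$. Fix an optimal signal $\tau$ (it exists by Lemma~\ref{l:dual}); by the discussion preceding Theorem~\ref{t:contact}, $\supp(\tau)\subseteq\Lambda$, and by \eqref{PS1}, $\int\mu\,\df\tau(\mu)=\phi$, so every open set of positive $\phi$-measure meets $\supp(\mu')$ for some $\mu'\in\supp(\tau)\subseteq\Lambda$. Let $\mathcal L=\{\mu'\in\Lambda:\supp(\mu')\subseteq[\chi_1(y),\chi_2(y)]\}$. This set is nonempty (it contains the posterior $\mu_y\in\Lambda$ inducing $y$, whose support is $\{\chi_1(y),\chi_2(y)\}$ by Remark~\ref{r:SDD}), and it is closed in $\Lambda$ since $\mu'\mapsto\mu'([\chi_1(y),\chi_2(y)])$ is upper semicontinuous by the Portmanteau theorem; hence $\mathcal L$ is compact. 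Since $\gamma$ is continuous, $y_\flat:=\min\{\gamma(\mu'):\mu'\in\mathcal L\}$ is attained by some $\mu_\flat\in\mathcal L$, and $y_\flat\le y$ because $\mu_y\in\mathcal L$.

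The heart of the argument is the claim that $\mu_\flat$ is degenerate. Suppose not, so $\chi_1(y_\flat)<\chi(y_\flat)<\chi_2(y_\flat)$ with $[\chi_1(y_\flat),\chi_2(y_\flat)]\subseteq[\chi_1(y),\chi_2(y)]$; I show that no $\mu'\in\Lambda$ has a support point $z$ in the nonempty open interval $(\chi_1(y_\flat),\chi(y_\flat))$. Since strict single-dippedness forces $|\supp(\mu')|\le 2$ and, by Remark~\ref{r:SDD}, $\chi_1(\gamma(\mu'))=\min\supp(\mu')$ and $\chi_2(\gamma(\mu'))=\max\supp(\mu')$, there are three cases. (i) If $\mu'=\delta_z$, then $\delta_z\in\mathcal L$ yet $\gamma(\delta_z)=\chi^{-1}(z)<y_\flat$, contradicting minimality of $y_\flat$. (ii) If $z=\min\supp(\mu')$, then $\chi_1(\gamma(\mu'))=z\in(\chi_1(y_\flat),\chi_2(y_\flat))$, so the property in Remark~\ref{r:SDD} that $\chi_1(\gamma(\mu'))\notin(\chi_1(y_\flat),\chi_2(y_\flat))$ whenever $\gamma(\mu')>y_\flat$ forces $\gamma(\mu')\le y_\flat$; monotonicity of $\chi_2$ then gives $\supp(\mu')\subseteq(\chi_1(y_\flat),\chi_2(y_\flat)]\subseteq[\chi_1(y),\chi_2(y)]$, so $\mu'\in\mathcal L$ and hence $\gamma(\mu')=y_\flat$, so $\mu'=\mu_\flat$ by uniqueness of the posterior inducing a given action, giving $z=\chi_1(y_\flat)$ — a contradiction. (iii) If $z=\max\supp(\mu')$, then $z=\chi_2(\gamma(\mu'))<\chi_2(y_\flat)$, so monotonicity of $\chi_2$ gives $\gamma(\mu')\le y_\flat$; if $\min\supp(\mu')\ge\chi_1(y_\flat)$ then $\supp(\mu')\subseteq[\chi_1(y_\flat),z]\subseteq[\chi_1(y),\chi_2(y)]$, so again $\mu'\in\mathcal L$ and $\mu'=\mu_\flat$, forcing $z=\chi_2(y_\flat)$; while if $\min\supp(\mu')<\chi_1(y_\flat)$, then $\chi_1(y_\flat)\in\supp(\mu_\flat)$ lies strictly between the two support points of $\mu'$, so single-dippedness (comparing $\mu'$ with $\mu_\flat$) gives $\gamma(\mu')\ge y_\flat$, hence once more $\mu'=\mu_\flat$ and $z=\chi_2(y_\flat)$ — all contradictions. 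Thus $(\chi_1(y_\flat),\chi(y_\flat))$ is disjoint from $\supp(\mu')$ for every $\mu'\in\supp(\tau)\subseteq\Lambda$, contradicting $\phi((\chi_1(y_\flat),\chi(y_\flat)))>0$, which holds because $X=\supp(\phi)=[0,1]$. Hence $\mu_\flat=\delta_{z^\dagger}$ for some $z^\dagger$, and $\mu_\flat\in\mathcal L$ gives $z^\dagger\in[\chi_1(y),\chi_2(y)]$; setting $y'=y_\flat\le y$ yields $\chi_1(y)\le\chi_1(y')=\chi_2(y')=z^\dagger\le\chi_2(y)$, as required.

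The main obstacle is organizing the case analysis in the degeneracy claim: an arbitrary $\mu'\in\Lambda$ that touches the interval need not have its whole support inside $[\chi_1(y),\chi_2(y)]$, so one must first use the order properties in Remark~\ref{r:SDD} and monotonicity of $\chi_2$ to confine $\supp(\mu')$, and only then invoke membership in $\mathcal L$ together with uniqueness of the action-indexed posterior to reach a contradiction (the purely degenerate sub-case instead uses minimality of $y_\flat$ directly). Everything else — compactness of $\mathcal L$, existence of the optimal $\tau$, and the covering property derived from Bayes plausibility — is routine.
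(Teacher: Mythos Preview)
Your proof is correct and follows essentially the same approach as the paper: both define $y'$ (your $y_\flat$) as the minimum action among posteriors in $\Lambda$ whose support is nested inside $[\chi_1(y),\chi_2(y)]$, and then argue by contradiction that the minimizer must be degenerate, using that $\supp(\phi)=[0,1]$ forces some posterior in $\Lambda$ to touch the interior of $(\chi_1(y_\flat),\chi_2(y_\flat))$. Your version spells out the case analysis (degenerate / $z=\min\supp(\mu')$ / $z=\max\supp(\mu')$) that the paper compresses into a single sentence, but the underlying idea is identical.
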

\begin{proof}
We prove that a required $y'$ can be constructed as
\begin{gather*}
y'=\inf \{\tilde y\in Y_\Lambda:\chi_1(y)\leq \chi_1(\tilde y)\leq \chi_2(\tilde y)\leq \chi_2(y)\}.
\end{gather*}
By definition, $y'\leq y$. Moreover, $y'\in Y_\Lambda$, because $Y_\Lambda$ is compact. Suppose by contradiction that $\chi_1(y')<\chi_2(y')$. Let $X^\star = \cup_{\mu\in \Lambda} \supp (\mu)$. Since there exists an optimal signal $\tau$, which satisfies $\supp (\tau)\subset \Lambda$ and $\int \mu \df \tau =\phi$, we have $\phi (X^\star)=1$, so the closure of $X^\star $ is $X=[0,1]$. Thus, there exists $y''\in Y_\Lambda$ such that $\chi(y'')$ or $\chi_2(y'')$ is in  $(\chi_1(y'),\chi_2(y'))$. Since $\Lambda$ is strictly single-dipped, we have $y''<y'$ and $\chi_1(y')\leq \chi_1(y'')\leq \chi_2(y'')\leq \chi_2(y')$, contradicting the definition of $y'$.
\end{proof}
Next, we claim that if $y\in Y_\Lambda$ and $\varepsilon>0$ are such that $\chi_1(\tilde y)<\chi_2(\tilde y)$ for all $\tilde y\in (y-\varepsilon,y)\cap Y_\Lambda$, then $\chi_1(\tilde y_1)<\chi_2(\tilde y_2)$ for all $\tilde y_1,\tilde y_2\in (y-\varepsilon,y)\cap Y_\Lambda$. Suppose by contradiction that there exist $\tilde y_1,\tilde y_2\in (y-\varepsilon,y)\cap Y_\Lambda$ such that $\chi_1(\tilde y_1)\geq \chi_2(\tilde y_2)$. By Lemma \ref{l:nad}, there exists $\tilde y_1'\leq \tilde y_1$ in $Y_\Lambda$ such that
 $\chi_2(\tilde y_1')=\chi(\tilde y_1')=\chi_1(\tilde y_1')\geq \chi_1(\tilde y_1)\geq \chi_2(\tilde y_2)\geq \chi (\tilde y_2)$, so $\tilde y_1'\in (y-\varepsilon,y)\cap Y_\Lambda$ and $\chi_1(\tilde y_1')=\chi_2(\tilde y_1')$, yielding a contradiction.

Suppose now that $\phi $ has a density. Suppose by contradiction that there are two distinct optimal signals, $\tau$ and $\tau'$. Since $\Lambda$ is strictly single-dipped, for each $y\in Y_\Lambda$, there is a unique $\mu$ in $\Lambda$ such that $\gamma(\mu)=y$, namely $\mu=\rho_y \delta_{\chi_1(y)}+(1-\rho_y)\delta_{\chi_2(y)}$ where
\[
\rho_y=
\begin{cases}
\frac{u(y,\chi_2(y))}{u(y,\chi_2(y))-u(y,\chi_1(y))}, &\chi_1(y)<\chi_2(y),\\
0, &\chi_1(y)=\chi_2(y).
\end{cases}
\]
Thus, distinct signals $\tau$ and $\tau'$ must induce distinct distributions $\alpha$ and $\alpha'$ over actions $\gamma(\mu)$.  Let $\hat y=\sup \{y\in Y:\alpha([0,y])\neq\alpha'([0,y])\}\in Y_\Lambda$, where the inclusion follows from $\alpha\neq \alpha'$ and $\alpha(Y_\Lambda)=\alpha'(Y_\Lambda)=1$. By the regularity condition and the claim above, there exists $\varepsilon>0$ such that either (i) $\chi_1(\tilde y) =\chi_2(\tilde y)$ for all $\tilde y\in (\hat y-\varepsilon,\hat y)\cap Y_\Lambda$ or (ii) $\chi_1(\tilde y_1)< \chi_2(\tilde y_2)$ for all $\tilde y_1, \tilde y_2\in (\hat y-\varepsilon,\hat y)\cap Y_\Lambda$. We will now show that $\alpha([0,\tilde y])=\alpha'([0,\tilde y])$ for all $\tilde y\in (\hat y-\varepsilon,\hat y)$ contradicting the definition of $\hat y$. Since $\chi_2$ is increasing, states $x>\chi_2(\tilde y)$ can only induce actions $y>\tilde y$.  Thus, since $\gamma$ is bijective from $\Lambda$ to $Y_\Lambda$ and since $\alpha([0,y])=\alpha'([0,y])$ for all $y\geq \hat y$, in both cases (i) and (ii), we have, for all $\tilde y\in (\hat y-\varepsilon,\hat y)\cap Y_\Lambda$, 
\begin{align*}
\phi((\chi_2(\tilde y),1])-\phi([\chi_2(\tilde y),1])&\leq  \int_{[\tilde y,\hat y]} (1-\rho_y)\df \alpha(y) -\int_{[\tilde y,\hat y]} (1-\rho_y)\df \alpha'(y)\\ &\leq \phi([\chi_2(\tilde y),1])-\phi((\chi_2(\tilde y),1]).
\end{align*}
Moreover, since $\phi$ has a density, we have $\phi((\chi_2(\tilde y),1])=\phi([\chi_2(\tilde y),1])$, and hence
\[
\int_{[\tilde y,\hat y]} (1-\rho_y)\df \alpha(y) =\int_{[\tilde y,\hat y]} (1-\rho_y)\df \alpha'(y).
\]
Then, since $1-\rho_y>0$ for all $y\in Y_\Lambda$, and since $\supp (\alpha)\subset Y_\Lambda$ and $\supp (\alpha')\subset Y_\Lambda$, it follows that $\alpha([\tilde y,\hat y])=\alpha'([\tilde y,\hat y])$ for all $\tilde y\in (\hat y-\varepsilon,\hat y)$. Thus, since  $\alpha([0,y])=\alpha'([0,y])$ for all $y\geq \hat y$, it follows that $\alpha([0,\tilde y])=\alpha'([0,\tilde y])$ for all $\tilde y\in (\hat y-\varepsilon,\hat y)$.

\subsection{Proof of Remark \ref{t:FDu}} The proof of Remark \ref{t:FDu} remains valid if Assumption \ref{a:ord} is replaced with strict single-crossing of $u$ in $x$. 

Suppose by contradiction that $\Lambda$ contains $\mu=\rho \delta_{x_1}+(1-\rho)\delta_{x_2}$, with $x_1<x_2$ and $\rho \in (0,1)$. Denote $y=\gamma (\mu)$ and $x=\chi (y)$. By strict single-crossing of $u$ in $x$, we have $x_1<x<x_2$. Since $X=[0,1]$ and full disclosure is optimal, we have $\delta_{x}\in \Lambda$. Thus, 
\begin{gather*}
\rho p(x_1) +(1-\rho) p(x_2)=\rho V(y,x_1) +(1-\rho)V(y,x_2)\quad \text{and}\quad p(x) = V(y,x).
\end{gather*}
Adding the two equalities gives
\[
\tfrac \rho 2 p(x_1)+\tfrac 12p(x) +\tfrac {1-\rho}2  p(x_2)=\tfrac 12 \rho V(y,x_1) + \tfrac 1 2 V(y,x) +\tfrac 12(1-\rho)V(y,x_2),
\]
which shows that $\eta=\rho\delta_{x_1} /2 +\delta_x/2+(1-\rho)\delta_{x_2}/2$, contradicting that $\Lambda$ is pairwise.

\subsection{Proof of Theorem \ref{p:full}} The proof of Lemma \ref{l:dual} remains valid without Assumption \ref{a:ord} and when $X$ is an arbitrary compact metric space. The support of the full disclosure signal is the set of all degenerate posteriors on $X$. Thus, by Lemmas \ref{l:dual} and \ref{l:Deq}, full disclosure is optimal iff there exists $q\in B(Y)$ such that
\begin{gather*}
V(\gamma(\delta_x),x)\geq V(y,x) +q(y) u(y,x), \quad \text{for all $(y,x)\in Y\times X$},\\
\iff \frac{V(y,x_1)-V(\gamma(\delta_{x_1}),x_1)}{-u(y,x_1)}\leq q(y)\leq \frac{V(\gamma(\delta_{x_2}),x_2)-V(y,x_2)}{u(y,x_2)},
\end{gather*}
for all $y\in Y$ and $x_1,x_2\in X$ such that $u(y,x_1)<0<u(y,x_2)$.
As shown in the proof of Lemma \ref{l:Deq}, the left-hand side and right-hand side functions are bounded on $Y\times X$, so full disclosure is optimal iff, for all $y\in Y$ and $x_1,x_2\in X$ such that $u(y,x_1)<0<u(y,x_2)$, we have
\begin{gather*}
\frac{V(y,x_1)-V(\gamma(\delta_{x_1}),x_1)}{-u(y,x_1)}\leq \frac{V(\gamma(\delta_{x_2}),x_2)-V(y,x_2)}{u(y,x_2)},\\
\iff u(y,x_2)V(y,x_1)-u(y,x_1)V(y,x_2)\leq u(y,x_2)V(\gamma(\delta_{x_1}),x_1)-u(y,x_1)V(\gamma(\delta_{x_2}),x_2),\\
\iff \rho V(\gamma(\mu),x_1) +(1-\rho )V(\gamma(\mu),x_2)\leq \rho V(\gamma(\delta_{x_1})),x_1) +(1-\rho )V(\gamma(\delta_{x_2}),x_2),
\end{gather*}
where $\rho =u(y,x_2)/(u(y,x_2)-u(y,x_1))$, $\mu=\rho \delta_{x_1}+(1-\rho )\delta_{x_2}$, and $\gamma(\mu)=y$, by the definition of $\gamma (\mu)$. To complete the proof that full disclosure is optimal iff \eqref{e:full} holds for all $\mu$, note that for each $y$ and $x_1,x_2\in X$ such that $u(y,x_1)<0<u(y,x_2)$, we have $\rho =u(y,x_2)/(u(y,x_2)-u(y,x_1))\in (0,1)$; and conversely, for each $x_1<x_2$ and $\rho \in (0,1)$, there exists a unique $y\in (\gamma(\delta_{x_1}),\gamma(\delta_{x_2}))$ such that $\rho =u(y,x_2)/(u(y,x_2)-u(y,x_1))$.

Finally, assume that \eqref{e:full} holds with strict inequality for all $\mu$. Suppose by contradiction that full disclosure is not uniquely optimal. Then, by Lemmas \ref{l:dual} and \ref{l:Deq}, there exist $\eta\in \Lambda$ and distinct $x_1,x_2\in \supp (\eta)$. By the definition of $\gamma(\eta)$, without loss, we can assume that either $u(\gamma(\eta),x_1)= 0= u(\gamma(\eta),x_2)$ or $u(\gamma(\eta),x_1)< 0< u(\gamma(\eta),x_2)$. In the case $u(\gamma(\eta),x_1)= 0= u(\gamma(\eta),x_2)$, we have  $\gamma(\mu )=\gamma(\delta_{x_1})=\gamma(\delta_{x_1})$ for $\mu=\delta_{x_1}/2+\delta_{x_2}/2$, so
\[
\tfrac 12 V(\gamma(\mu ),x_1) +\tfrac 12 V(\gamma(\mu ),x_2)= \tfrac 12 V(\gamma(\delta_{x_1}),x_1) +\tfrac 12 V(\gamma(\delta_{x_2}),x_2),
\]
contradicting that \eqref{e:full} holds with strict inequality for $\mu $. In the case $u(\gamma(\eta),x_1)< 0< u(\gamma(\eta),x_2)$, we have $\gamma(\mu )=\gamma (\eta)$ for $\mu=\rho \delta_{x_1}+(1-\rho )\delta_{x_2}$ with $\rho =u(\gamma(\eta),x_2)/(u(\gamma(\eta),x_2)-u(\gamma(\eta),x_1)\in (0,1)$. Since $\eta\in \Lambda$ and $x_1,x_2\in \supp (\eta)$, we have
\begin{align*}
V(\gamma(\eta),x_1)+q(\gamma(\eta))u(\gamma(\eta),x_1)\geq V(\gamma(\delta_{x_1}),x_1),\\
V(\gamma(\eta),x_2)+q(\gamma(\eta))u(\gamma(\eta),x_2)\geq V(\gamma(\delta_{x_2}),x_1).	
\end{align*}
Adding the first inequality multiplied by $\rho$ and the second inequality multiplied by $1-\rho$ gives
\[
\rho V(\gamma(\mu ),x_1) +(1-\rho )V(\gamma(\mu ),x_2)\geq \rho V(\gamma(\delta_{x_1}),x_1) +(1-\rho )V(\gamma(\delta_{x_2}),x_2),
\]
contradicting that \eqref{e:full} holds with strict inequality for $\mu $.

\subsection{Proof of Corollary \ref{c:fd}'}
Condition \eqref{e:full} holds because 
\begin{align*}
 \rho & V(\rho x_1 +(1-\rho )x_2,x_1 )+(1-\rho )V(\rho x_1 +(1-\rho )x_2,x_2) \\
\leq & \rho (\rho V(x_1 ,x_1 )+(1-\rho )V(x_2,x_1))+(1-\rho )(\rho V(x_1 ,x_2)+(1-\rho )V(x_2,x_2)) \\
 \leq &\rho V(x_1 ,x_1 )+(1-\rho )V(x_2,x_2),
\end{align*}%
where the first inequality holds because $V(y ,x )$ is convex in $y$, and the second holds because $V(x_1 ,x_2)+V(x_2,x_1 )\leq V(x_1 ,x_1 )+V(x_2,x_2)$.

\subsection{Proof of Theorem \ref{t:NAD}}
The proof of Theorem \ref{t:NAD} remains valid if Assumption \ref{a:ord} is replaced with strict single-crossing of $u(y,x)$ in $x$. Since $X=[0,1]$, $\Lambda$ is strictly single-dipped, and for all $x_1<x_2$ there exists $\rho\in (0,1)$ such that \eqref{e:nd} holds, it follows that $ \chi_1(y_2)\leq \chi_1(y_1)$ for all $y_1<y_2$ in $Y_\Lambda$, and thus $\Lambda$ is negative assortative. Suppose by contradiction that there exist $y_1<y_2$ in $Y_\Lambda$ such that $\chi_1(y_1)<\chi_1(y_2)$. Then $\chi_2(y_1)\leq \chi_1(y_2)$, as otherwise there would exist $\mu_1,\mu_2\in \Lambda$ such that $\gamma(\mu_1)=y_1<y_2=\gamma(\mu_2)$, and $\chi_1(y_1)<\chi_1(y_2)<\chi_2(y_1)$ contradicting that $\Lambda$ is single-dipped. By Lemma \ref{l:nad}, there exist $y_1'\leq y_1$ and $y_2'\leq y_2$ in $Y_\Lambda$ such that $\chi_1(y_1)\leq \chi_1(y_1')=\chi_2(y_1')\leq \chi(y_1)\leq  \chi_2(y_1)\leq\chi_1(y_2)\leq \chi_1(y_2')=\chi_2(y_2')\leq \chi_2(y_2)$, and thus $y_1'\leq y_2'$. In fact, $y_1'<y_2'$, as otherwise we would have $\chi_1(y_1)\leq \chi(y_1)= \chi_2(y_1)=\chi_1(y_2)$, which implies $\chi_1(y_1)=\chi_2(y_1)=\chi_1(y_2)$, contradicting $\chi_1(y_1)<\chi_1(y_2)$. Thus, denoting $x_1=\chi(y_1')<\chi (y_2')=x_2$, we have $\delta_{x_1},\delta_{x_2}\in \Lambda$. For any $\mu=\rho  \delta_{x_1}+(1-\rho )\delta_{x_2}$ with $\rho \in (0,1)$, we have 
\begin{align*}
	p(x_1)&=V(\gamma(x_1),x_1)\geq V(\gamma(\mu),x_1)+q(\gamma(\mu))u(\gamma(\mu),x_1),\\
	p(x_2)&=V(\gamma(x_2),x_2)\geq V(\gamma(\mu),x_2)+q(\gamma(\mu))u(\gamma(\mu),x_2),
\end{align*}
by \eqref{DS1} and the definition of $\Lambda$. Adding the first inequality multiplied by $\rho $ and the second inequality multiplied by $1-\rho $, we obtain that \eqref{e:nd} fails for all $\rho \in (0,1)$, yielding a contradiction.

Now assuming that $\phi$ has a density $f$ and $\Lambda$ is negative assortative, we will show that the functions $\chi_1$ and $\chi_2$ are continuous and satisfy the differential equations \eqref{e:obed}--\eqref{e:q'} and the boundary condition \eqref{e:boundary}. Since the closure of $X^\star=\cup_{\mu\in \Lambda}\supp (\mu)$ is $X=[0,1]$, it follows that the closure of the union of the images of the functions $\chi_1$ and $\chi_2$ must also be equal to $[0,1]$. 
Since $\chi_1$ is decreasing and $\chi_2$ is increasing on the compact domain $Y_\Lambda$, and since $\chi_1(y)\leq \chi (y)\leq \chi_2(y)$ for all $y\in Y_\Lambda$, it follows that $\chi_1$ and $\chi_2$ are continuous functions such that $\chi_1(\ul y)=\chi (\ul y)=\chi_2(\ul y)$, $\chi_1(y)<\chi (y)<\chi_2(y)$ for all $y>\ul y$ in $Y_\Lambda$, $\chi_1(\ol y)=0$, $\chi_2(\ol y)=1$, and $(\chi_1(\ul z_i),\chi_2(\ul z_i)) =(\chi_1(\ol z_i), \chi_2(\ol z_i))$ for all $i$, where $\{(\ul z_i,\ol z_i)\}_i$ is an at most countable set of disjoint open intervals comprising the set $[\ul y,\ol y]\setminus Y_\Lambda$. Since $\phi $ has a density, the measure of the endpoints of these intervals is zero, and hence the set of optimal signals is unaffected if we  extend the domain of $\chi_1$ and $\chi_2$ to $[\ul y,\ol y]$ by setting $\chi_1(y)=\chi_1(\ul z_i)=\chi_1(\ol z_i)$ and $\chi_2(y)=\chi_2(\ul z_i)=\chi_2(\ol z_i)$ for all $y\in (\ul z_i,\ol z_i)$. In sum, without loss of generality, we can assume that $\chi_1$ and $\chi_2$ are continuous monotone functions defined on $[\ul y,\ol y]$ that satisfy \eqref{e:boundary} and $\chi_1(y)<\chi (y)<\chi_2(y)$ for all $y\in (\ul y,\ol y]$.

Since $\chi_1$ is continuously decreasing on $[\ul y, \ol y]$, $\chi_2$ is continuously increasing on $[\ul y, \ol y]$, and $\phi$ has a density, we can rewrite \eqref{PO2} for $\widetilde Y=[y,y']$, with $\ul y\leq y< y'\leq \ol y$, as
\[
\int_{y}^{y'} u(\tilde y,\chi_1(\tilde y))(-\df \phi ([0,\chi_1(\tilde y)]))+\int _{y}^{y'} u(\tilde y,\chi_2(\tilde y))\df \phi ([0,\chi_2(\tilde y)])=0.
\]
Taking the limit $y'\downarrow y$, we obtain  \eqref{e:obed} for all $y\in [\ul y,\ol y]$.

By Theorem \ref{t:contact}, for all $y> \ul y$ in $Y_\Lambda$,
\begin{gather*}
V_y(y,\chi_1(y))+ q(y)u_y(y,\chi_1(y))+q'(y)u(y,\chi_1(y))=0,\\
V_y(y,\chi_2(y))+ q(y)u_y(y,\chi_2(y))+q'(y)u(y,\chi_2(y))=0.
\end{gather*}
Solving for $q(y)$ and $q'(y)$, we get, for all $y> \ul y$ in $Y_\Lambda$,
\begin{gather*}
q(y)= \frac{V_y(y,\chi_1(y))u(y,\chi_2(y))-V_y(y,\chi_2(y))u(y,\chi_1(y))}{u(y,\chi_1(y))u_y(y,\chi_2(y))-u(y,\chi_2(y))u_y(y,\chi_1(y))},\\
q'(y)=\frac{V_y(y,\chi_1(y))u_y(y,\chi_2(y))-V_y(y,\chi_2(y))u_y(y,\chi_1(y))}{u_y(y,\chi_1(y))u(y,\chi_2(y))-u_y(y,\chi_2(y))u(y,\chi_1(y))},
\end{gather*}
where the denominators in the expressions for $q(y)$ and $q'(y)$ are not equal to $0$, by Assumption \ref{a:qc}.
Noting that $q'$ is the derivative of $q$ gives \eqref{e:q'} for all $y> \ul y$ in $Y_\Lambda$.

\subsection{Proof of Corollary \ref{c:ndSDD}}
Noting that $\rho u(\gamma (\mu),x_1)+(1-\rho )u(\gamma(\mu),x_2)=0$ and denoting $y=\gamma (\mu)$, we infer that \eqref{e:nd} fails if there exist $x_1<x_2$ such that for all $y\in(\gamma (\delta_{x_1}),\gamma (\delta_{x_2}))$, we have
\[
u(y,x_2)(V(y,x_1)-V(\gamma(\delta_{x_1}),x_1))-u(y,x_1)(V(y,x_2)-V(\gamma(\delta_{x_2}),x_2))\leq 0.
\]
By Taylor's theorem and some algebra, we get
\begin{gather*}
u(y,x_2)(V(y,x_1)-V(\gamma(\delta_{x_1}),x_1))-u(y,x_1)(V(y,x_2)-V(\gamma(\delta_{x_2}),x_2))\\
=\frac 12u_y(y,\chi(y))\left( V_{yy}(y,\chi (y)) - \frac{V_y(y,\chi (y)) u_{yy}(y,\chi (y))}{u_y(y,\chi (y))} \right.\\
\left.-2\frac{V_{yx}(y,\chi (y)) u_y(y,\chi (y)) - V_y(y,\chi (y)) u_{yx}(y,\chi (y))}{u_x(y,\chi (y))}\right)\\
\cdot (y-\gamma(\delta_{x_1}))(\gamma(\delta_{x_2})-y)(\gamma(\delta_{x_2})-\gamma(\delta_{x_1})) \\
+o((y-\gamma(\delta_{x_1}))(\gamma(\delta_{x_2})-y)(\gamma(\delta_{x_2})-\gamma(\delta_{x_1}))).
\end{gather*}
Hence, if \eqref{e:ndSDD} fails at some $y$, then there exist $x_2>x_1$ with $\gamma(\delta_{x_2})-y>0$ and $y-\gamma(\delta_{x_1})>0$ small enough such that \eqref{e:nd} fails for all $\rho \in (0,1)$.

Note that ${\df\chi (y)}/{\df y}=-{u_y(y,\chi(y))}/{u_x(y,\chi(y))}$, by the implicit function theorem applied to $u(y,\chi (y))=0$.
Thus, the derivative of $q(y)=-{V_y(y,\chi(y))}/{u_y(y,\chi(y))}$ is given by
\[
q'(y)=-\frac{V_{yy}(y,\chi(y))}{u_y(y,\chi(y))}+\frac{V_{yx}(y,\chi(y))}{u_x(y,\chi(y))}+\frac{V_y(y,\chi(y)) u_{yy}(y,\chi(y))}{(u_y(y,\chi(y)))^2}-\frac{V_y(y,\chi(y)) u_{yx}(y,\chi(y))}{u_y(y,\chi(y))u_x(y,\chi(y))}.
\]

Conversely, suppose that \eqref{e:ndSDD}, together with all other assumptions of the corollary, holds. Then, for $y>\gamma (\delta_x)$, we have
\begin{align*}
	&V(y,x)-\frac{V_y(y,\chi(y))}{u_y(y,\chi(y))}u(y,x)-V(\gamma(\delta_x),x)\\
	=&(V(\tilde y,x) +q(\tilde y)u(\tilde y,x))|^{y}_{\gamma(\delta_x)}\\
	=&\int^{y}_{\gamma(\delta_x)} [V_y(\tilde y,x)+q(\tilde y)u_y(\tilde y,x)+q'(\tilde y)u(\tilde y,x)]\df \tilde y\\
	\geq &\int^{y}_{\gamma(\delta_x)} \left [V_y(\tilde y,x)-\frac{V_y(\tilde y,\chi(\tilde y))}{u_y(\tilde y,\chi(\tilde y))} u_y(\tilde y,x)\right ]\df \tilde y\\
	&+\int^{y}_{\gamma(\delta_x)} \left [\frac{V_y(\tilde y,\chi (\tilde y)) u_{yx}(\tilde y,\chi (\tilde y))}{u_y(\tilde y,\chi (\tilde y))u_x(\tilde y,\chi (\tilde y))} - \frac{V_{yx} (\tilde y,\chi (\tilde y))}{u_x (\tilde y,\chi (\tilde y))}\right ]u(\tilde y,x)\df \tilde y\\
	=&\int^{y}_{\gamma(\delta_x)}\int ^{\chi (\tilde y)}_{x} \left [\frac{V_y(\tilde y,\chi(\tilde y))}{u_y(\tilde y,\chi(\tilde y))} u_{yx}(\tilde y,\tilde x)-V_{yx} (\tilde y,\tilde x)\right ]\df \tilde x \df \tilde y\\
	&+\int^{y}_{\gamma(\delta_x)}\int ^{\chi (\tilde y)}_{x} \left [ \frac{V_{yx} (\tilde y,\chi (\tilde y))}{u_x (\tilde y,\chi (\tilde y))} - \frac{V_y(\tilde y,\chi (\tilde y)) u_{yx}(\tilde y,\chi (\tilde y))}{u_x(\tilde y,\chi (\tilde y))} \right ]u_x(\tilde y,\tilde x)\df \tilde x \df \tilde y\\
	= &\int^{y}_{\gamma(\delta_x)}\int ^{\chi (\tilde y)}_{x} \left [\frac{V_{yx} (\tilde y,\chi (\tilde y))}{u_x (\tilde y,\chi (\tilde y))} -\frac{V_{yx} (\tilde y,\tilde x)}{u_x(\tilde y,\tilde x)}\right ]u_x(\tilde y,\tilde x)\df \tilde x \df \tilde y\\
	&+\int^{y}_{\gamma(\delta_x)}\int ^{\chi (\tilde y)}_{x} \frac{V_y(\tilde y,\chi(\tilde y))}{-u_y(\tilde y,\chi(\tilde y))} \left [\frac{u_{yx}(\tilde y,\chi(\tilde y))}{u_x (\tilde y,\chi(\tilde y))}-\frac{u_{yx}(\tilde y,\tilde x)}{u_x (\tilde y,\tilde x)} \right]u_x(\tilde y,\tilde x)\df \tilde x \df \tilde y>0,
\end{align*}
where the first and last equalities are by rearrangement, the second and third equalities are by the fundamental theorem of calculus, the first inequality is by \eqref{e:ndSDD} and substitution of $q(\tilde y)$ and $q'(\tilde y)$, and the last inequality is by our assumptions imposed in the corollary.

By Taylor's theorem, we have, for $x_1<x_2$ and $y\in(\gamma(\delta_{x_1}),\gamma(\delta_{x_2}))$,
\begin{gather*}
u(y,x_2)(V(y,x_1)-V(\gamma(\delta_{x_1}),x_1))-u(y,x_1)(V(y,x_2)-V(\gamma(\delta_{x_2}),x_2))\\
=\left[V(\gamma(\delta_{x_2}),x_1)-\frac{V_y(\gamma(\delta_{x_2}),x_2)}{u_y(\gamma(\delta_{x_2}),x_2)}u(\gamma(x_2),x_1)-V(\gamma(\delta_{x_1}),x_1)\right]\\
\cdot(-u_y(\gamma (\delta_{x_2}),x_2))(\gamma(\delta_{x_2})-y)+o(\gamma(\delta_{x_2})-y).
\end{gather*}
Hence \eqref{e:nd} holds for sufficiently small $\rho >0$.

\newpage

\begin{center}
\Large{Online Appendix}
\end{center}

\renewcommand{\thesection}{F}

\section{Additional Proofs} \label{Additional Proofs}

\subsection{Proof of Lemma \ref{l:ASC}} \label{proof:ASC}

$(1)\implies (2).$ It is easy to see that Assumption \ref{a:qc} for $\mu=\delta_x$ such that $u(y,x)=0$ yields \eqref{1}. Similarly, Assumption \ref{a:qc} for $\mu=\rho \delta_x+(1-\rho )\delta_x$ such that $u(y,x)<0<u(y,x')$ and $\rho u(y,x)+(1-\rho )u(y,x')=0$  yields \eqref{2}.

$(2)\implies (1).$ By Lemma \ref{l:Choquet}, for any $y\in Y$ and $\mu\in \Delta(X)$ such that $\int u(y,x)\df\mu=0$, there exists $\sigma_\mu \in \Delta (\Delta(X))$ such that $\int \eta \df\sigma_\mu =\mu$, and for each $\eta\in \supp (\sigma_\mu)$ there exist $x,x'\in X$ and $\rho \in [0,1]$ such that $\eta = \rho \delta_x +(1-\rho )\delta_{x'}$ and
\begin{equation}\label{Eu=0}
\rho u(y,x)+(1-\rho )u(y,x')=0.
\end{equation}
It suffices to show that
\begin{equation}\label{3}
\rho u_y(y,x)+(1-\rho )u_y(y,x')<0.
\end{equation}
There are two cases to consider. First, if $\rho u(y,x)=0$, then \eqref{3} follows from \eqref{1} and \eqref{Eu=0}. Second, if $\rho u(y,x)\neq 0$, then \eqref{3} follows from \eqref{2} and \eqref{Eu=0}.

$(3)\implies (1).$ 
Notice that
\begin{equation*}
\int u(y,x)\df \mu =0 \iff \int \tilde u(y,x)\df \mu=0.	
\end{equation*}
Hence, if $\tilde u_y(y,x)<0$ for all $(y,x)$ and $\int u(y,x)\df \mu=0$, then 
\begin{equation*}
\int u_y(y,x)\df \mu = g(y)\int \tilde u_y(y,x)\df \mu+  g'(y) \int \tilde u(y,x)\df \mu=g(y)\int \tilde u_y(y,x)\df \mu<0,
\end{equation*}
yielding Assumption \ref{a:qc}.

$(1)\implies (3).$ We rely on the following lemma.
\begin{lemma}\label{l:ASC*}
If Assumptions \ref{a:smooth} and \ref{a:qc} hold, then there exists a continuous function $h(y)$ such that
\begin{equation}\label{e:ASC*}
u_y(y,x)+h (y) u(y,x) <0, \quad \text{for all } (y,x)\in Y\times X.
\end{equation}
\end{lemma}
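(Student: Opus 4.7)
The plan is first to establish, for each fixed $y$, pointwise existence of a scalar $h$ satisfying $u_y(y,x)+h\,u(y,x)<0$ for every $x\in X$, and then to splice these pointwise choices into a single continuous function $h(y)$ by a compactness-plus-partition-of-unity argument on $Y$.

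Step 1 (pointwise existence). Fix $y$ and set $H(y)=\{h\in\mathbb{R}:u_y(y,x)+h\,u(y,x)<0\ \text{for every}\ x\in X\}$. Each $x$ contributes an open constraint on $h$: a half-line $h<-u_y(y,x)/u(y,x)$ when $u(y,x)>0$; a half-line $h>-u_y(y,x)/u(y,x)$ when $u(y,x)<0$; and all of $\mathbb{R}$ when $u(y,x)=0$, using implication \eqref{1} from Lemma \ref{l:ASC} that $u_y(y,x)<0$ at such $x$. Intersecting gives $H(y)=(A(y),B(y))$, where $A(y)=\sup\{-u_y(y,x)/u(y,x):u(y,x)<0\}$ and $B(y)=\inf\{-u_y(y,x)/u(y,x):u(y,x)>0\}$ (with $\sup\emptyset=-\infty$, $\inf\emptyset=+\infty$). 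If either defining set is empty, $H(y)$ is unbounded and hence non-empty. Otherwise, as $x$ approaches the zero set of $u(y,\cdot)$ from $\{u(y,\cdot)<0\}$ the ratio $-u_y/u$ tends to $-\infty$ (since $u_y(y,\cdot)<0$ near zeros of $u(y,\cdot)$ by \eqref{1} and continuity), so any maximising sequence for $A(y)$ stays in a compact subset of $\{u(y,\cdot)<0\}$ and admits a convergent subsequence, showing $A(y)$ is attained at some $x^\star$; symmetrically $B(y)$ is attained at some $x'^\star$ with $u(y,x'^\star)>0$. Applying \eqref{2} at the pair $(x^\star,x'^\star)$ and dividing by the negative product $u(y,x^\star)u(y,x'^\star)$ yields $-u_y(y,x^\star)/u(y,x^\star)<-u_y(y,x'^\star)/u(y,x'^\star)$, i.e.\ $A(y)<B(y)$, so $H(y)\neq\emptyset$.

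Step 2 (continuous selection). Let $\psi(y,h):=\max_{x\in X}\bigl(u_y(y,x)+h\,u(y,x)\bigr)$, so $H(y)=\{h:\psi(y,h)<0\}$. By Berge's maximum theorem, $\psi$ is continuous on $Y\times\mathbb{R}$. For each $y^\star\in Y$ pick any $h^\star\in H(y^\star)$; continuity of $\psi$ produces an open neighbourhood $U_{y^\star}\subset Y$ on which $\psi(\cdot,h^\star)<0$, i.e.\ $h^\star\in H(y)$ for every $y\in U_{y^\star}$. Compactness of $Y$ yields a finite subcover $U_{y_1},\dots,U_{y_n}$ with associated scalars $h_1,\dots,h_n$. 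Taking a continuous partition of unity $\lambda_1,\dots,\lambda_n$ subordinate to this subcover, the function $h(y):=\sum_i \lambda_i(y)\,h_i$ is continuous; and for every $y$ it is a convex combination of those $h_i$ for which $y\in U_{y_i}$, all of which lie in the convex set $H(y)$. Since $H(y)$ is an interval, $h(y)\in H(y)$, which is exactly \eqref{e:ASC*}.

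The main obstacle is Step 1: non-emptiness of $H(y)$ is essentially a one-dimensional separation statement dual to strict aggregate single-crossing, and the crucial technical ingredient is that $-u_y/u$ diverges to $-\infty$ (resp.\ $+\infty$) at the zero set of $u(y,\cdot)$ from the negative (resp.\ positive) side, which is what reduces the sup and inf to attainable values and lets \eqref{2} at a single pair of attainers deliver strict inequality $A(y)<B(y)$. Step 2 is then a routine gluing: the correspondence $y\mapsto H(y)$ has open convex values, so any convex combination produced by a partition of unity automatically lands inside $H(y)$.
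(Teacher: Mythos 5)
Your proof is correct, but the pointwise-existence step takes a genuinely different route from the paper's. The paper establishes non-emptiness of $H(y)=\{h:u_y(y,x)+h\,u(y,x)<0 \text{ for all }x\}$ directly from Assumption \ref{a:qc} via a separation argument in $\mathbb{R}^3$: it forms the closed convex cone of moment vectors $\left(\int u\,\df\mu,\int u_y\,\df\mu-z,\int\df\mu\right)$ over positive measures $\mu$ and $z\geq 0$, shows $(0,0,1)$ lies outside it, and reads $h(y)$ off the separating hyperplane (with a case distinction on whether the relevant coefficient vanishes). You instead work only with the two-point consequences \eqref{1}--\eqref{2} of Assumption \ref{a:qc} (legitimately, since the paper proves $(1)\Rightarrow(2)$ of Lemma \ref{l:ASC} independently of Lemma \ref{l:ASC*}), and obtain $A(y)<B(y)$ by an elementary compactness argument whose crux is the divergence of $-u_y/u$ at the zero set of $u(y,\cdot)$; this avoids infinite-dimensional separation and the closedness verification for the cone, at the cost of needing the attainment argument for the sup and inf. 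Your gluing step is essentially the paper's: the paper notes that $y\mapsto H(y)$ is nonempty, convex-valued, with open lower sections, and cites Browder's selection theorem, whereas you reprove that theorem by hand via a finite subcover and a partition of unity, using convexity of the interval $H(y)$ in exactly the way Browder's proof does. One small point worth tightening: when $\{x:u(y,x)<0\}=\emptyset$ you assert $H(y)$ is "unbounded and hence non-empty," but non-emptiness there still requires $B(y)>-\infty$; this follows from the same attainment/divergence argument (near the zero set $-u_y/u\to+\infty$ from the positive side, and away from it the ratio is bounded), so it is a one-line fix rather than a gap.
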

Given this lemma, the required $g$ is given by
\[
g(y)=e^{-\int_0^y h (\tilde y)\df \tilde y},
\]
as follows from
\[
\tilde u_y(y,x)= \frac{\partial }{\partial y}\left(\frac{u(y,x)}{e^{-\int_0^y h (\tilde y)\df \tilde y}}\right) =\frac{u_y(y,x)+h(y)u(y,x)}{e^{-\int_0^y h (\tilde y)\df \tilde y}}<0.
\]

\begin{proof}[Proof of Lemma \ref{l:ASC*}] Fix $y\in Y$. Let $M_+(X)$ be the set of positive Borel measures on $X$. Define the set $C\subset \R^3$ as follows 
\[C=\left\{\left(\int u(y,x)\df \mu,\int u_y(y,x)\df \mu -z,\int \df \mu\right)\ \big|\ \mu\in M_+(X), \ z\geq 0 \right\}.\]
Clearly, $C$ is a convex cone. 

Moreover, $C$ is closed, because $u(y,x)$ and $u_y(y,x)$ are continuous in $x$. To see this, let sequences $\mu_n\in M_{+}(X)$ and $z_n\in \R_+^n$ be such that
\[
\int u(y,x) \df \mu_n\rightarrow c_1,\ \int u_y(y,x)\df \mu_n -z_n\rightarrow c_2,\ \int \df \mu_n\rightarrow c_3
\] 
for some $(c_1,c_2,c_3)\in \R^3$. It follows from $\int \df \mu_n\rightarrow c_3$ that all $\mu_n$ belong to a compact subset of positive measures whose total variation is bounded by $\sup_n \int \df \mu_n$, and hence, up to extraction of a subsequence, $\mu_n\rightarrow \mu\in M_+(X)$, with $\int \df \mu =c_3$. 
Since $u(y,x)$ and $u_y(y,x)$ are continuous in $x$, we get $\int u(y,x)\df \mu_n\rightarrow \int u(y,x)\df \mu=c_1$ and $\int u_y(y,x)\df \mu_n\rightarrow \int u_y(y,x)\df \mu$. Hence, $z_n\rightarrow \int u_y(y,x)\df \mu-c_2=z\geq 0$. In sum,
\[
\int u(y,x) \df \mu=c_1,\ \int u_y(y,x)\df \mu -z= c_2,\ \int \df \mu= c_3,
\]
showing that $C$ is closed.

Next, notice that Assumption \ref{a:qc} implies that $(0,0,1)\notin C.$ Thus, by the separation theorem (e.g., Corollary 5.84 in \citealt{aliprantis2006}), there exists $\beta \in \R^3$ such that, for all $\mu\in M_+(X)$ and $z\geq 0$,
\begin{align*}
0\b_1+0\b_2+1\b_3<0&\leq \left(\int u(y,x)\df \mu\right)\b_1+\left(\int u_y(y,x)\df \mu -z\right)\b_2+\left(\int \df \mu\right)\b_3,
\end{align*}
or equivalently
\begin{equation}\label{Cy}
\begin{aligned}
u(y,x) \b_1 + u_y(y,x) \b_2 +\b_3 &\geq 0, \quad \text{for all}\ x\in X,\\
-\b_2 &\geq 0,\\
\b_3 &<0.
\end{aligned}
\end{equation}
We now show that there exists a scalar $h(y) \in \R$ satisfying
\begin{equation}\label{e:gamma}
u_y(y,x)+h(y) u(y,x)<0, \quad \text{for all }  x\in X.	
\end{equation}
There are two cases. First, if $\b_2<0$ then $h(y) = \b_1/\b_2\in \R$ satisfies \eqref{e:gamma}. Second, if $\b_2=0$ then \eqref{Cy} implies that
\[
u(y,x)\b_1\geq -\b_3>0, \quad \text{for all } x\in X.
\]
Thus, we have either (i) $u(y,x)>0$ for all $x\in X$, so, taking into account continuity of $u(y,x)$ and $u_y(y,x)$ in $x$,  
\[h(y)=\min_{x\in X}\left\{-\frac{u_y(y,x)}{u(y,x)}\right\}-1\in \R\] 
satisfies \eqref{e:gamma}; or (ii) $u(y,x)<0$ for all $x\in X$, so  
\[h(y) =\max_{x\in X}\left\{-\frac{u_y(y,x)}{u(y,x)}\right\}+1\in \R\] 
satisfies \eqref{e:gamma}. 

It remains to show that if for all $y\in Y$ there exists $h(y)\in \R$ satisfying \eqref{e:gamma}, then there exists a continuous function $\tilde h:Y\rightarrow \R$ satisfying \eqref{e:gamma}. Define a correspondence $\varphi:Y\rightrightarrows \R$,
\[
\varphi (y)=\{r\in \R: u_y(y,x)+ru(y,x)<0, \quad  \text{for all } x\in X\}.
\]
Note that $\varphi$ is nonempty valued by assumption, and is clearly convex valued. In addition, $\varphi$ has open lower sections, because for each $r\in \R$ the set 
\[\{y\in Y:u_y(y,x)+r u(y,x)<0,\quad \text{for all }x\in X\}\]
is open, since $u_y$ and $u$ are continuous on the compact set $Y\times X$. Thus, by Browder's Selection Theorem (Theorem 17.63 in \citealt{aliprantis2006}), $\varphi$ admits a continuous selection $\tilde h$, which by construction satisfies \eqref{e:gamma}.
\end{proof}

\subsection{Proof of Lemma \ref{l:S><0}}
We consider the case where $u_{y x}/u_x$ and $V_{yx}/u_x$ are increasing in $x$; the case where $u_{y x}/u_x$ and $V_{yx}/u_x$ are decreasing in $x$ is analogous and thus omitted. 

Fix $x_1<x_2<x_3$ and $y$ such that $u(y,x_1)<0<u(y,x_3)$. The inequality $|S|>0$ follows from the following displayed equations: 
\[
u(y,x_3)-u(y,x_1)=\int_{x_1}^{x_3} u_x (y,x)\df x>0,
\]
where the inequality holds by Assumption \ref{a:ord};
\[
\begin{vmatrix}
	u(y,x_1) &u(y,x_3)\\
	u_{y}(y,x_1) &u_{y}(y,x_3)
\end{vmatrix}=-u(y,x_3)u_{y}(y,x_1) + u(y,x_1)u_{y}(y,x_3)>0,
\]
where the inequality holds by part (2) of Lemma \ref{l:ASC};
\[
\begin{vmatrix}
	V_y(y,x_1) & V_y(y,x_3)\\
	u(y,x_1) & u(y,x_3)
\end{vmatrix} 
=u(y,x_3)V_y(y,x_1) -u(y,x_1)V_y(y,x_3)>0,
\]
where the inequality holds by Assumption \ref{a:ord}; 
\begin{gather*}
-\begin{vmatrix}
V_y(y,x_2)-V_y(y,x_1) &V_y(y,x_3)-V_y(y,x_2)\\
u(y,x_2)-u(y,x_1) &u(y,x_3)-u(y,x_2)
\end{vmatrix}\\
=(V_y(y,x_3)-V_y(y,x_2))(u(y,x_2)-u(y,x_1))-(V_y(y,x_2)-V_y(y,x_1))(u(y,x_3)-u(y,x_2))\\
=\int_{x_2}^{x_3}\int_{x_1}^{x_2} (V_{yx}(y,\tilde x)u_x(y, x) -V_{yx}(y,x)u_x(y,\tilde x))\df x \df \tilde x \geq (>) 0,
\end{gather*}
where the inequality holds by Assumption \ref{a:ord} and (strict) monotonicity of $V_{yx} /u_x$ in $x$;
\begin{gather*}
\begin{vmatrix}
u(y,x_2)-u(y,x_1) &u(y,x_3)-u(y,x_2)\\
u_y(y,x_2)-u_y(y,x_1) &u_y(y,x_3)-u_y(y,x_2)
\end{vmatrix}\\
=(u(y,x_2)-u(y,x_1))(u_y(y,x_3)-u_y(y,x_2))-(u(y,x_3)-u(y,x_2))(u_y(y,x_2)-u_y(y,x_1))\\
=\int_{x_2}^{x_3}\int_{x_1}^{x_2} (u_x(y,x) u_{yx}(y,\tilde x) -u_x(y,\tilde x)u_{yx}(y,x)) \df x \df \tilde x \geq(>)0,
\end{gather*}
where the inequality holds by Assumption \ref{a:ord} and (strict) monotonicity of $u_{yx}/u_{x}$ in $x$;
\begin{align*}
&\frac{
\begin{vmatrix}
V_y(y,x_1) &V_y(y,x_2) &V_y(y,x_3)\\
u(y,x_1) &u(y,x_2) &u(y,x_3) \\
u_y(y,x_1) &u_y(y,x_2) &u_y(y,x_3) 
\end{vmatrix}
}
{
\begin{vmatrix}
	u(y,x_1) &u(y,x_3)\\
	u_{y}(y,x_1) &u_{y}(y,x_3)
\end{vmatrix}
}
(u(y,x_3)-u(y,x_1))\\
=-&\begin{vmatrix}
V_y(y,x_2)-V_y(y,x_1) &V_y(y,x_3)-V_y(y,x_2)\\
u(y,x_2)-u(y,x_1) &u(y,x_3)-u(y,x_2)
\end{vmatrix}\\
+&
\frac{
\begin{vmatrix}
	V_y(y,x_1) & V_y(y,x_3)\\
	u(y,x_1) & u(y,x_3)
\end{vmatrix} 
}
{
\begin{vmatrix}
	u(y,x_1) &u(y,x_3)\\
	u_y(y,x_1) &u_y(y,x_3)
\end{vmatrix}
}
\begin{vmatrix}
u(y,x_2)-u(y,x_1) &u(y,x_3)-u(y,x_2)\\
u_y(y,x_2)-u_y(y,x_1) &u_y(y,x_3)-u_y(y,x_2)
\end{vmatrix},
\end{align*}
where the equality holds by rearrangement.

\subsection{Proof of Lemma \ref{l:R>0}}
We consider the case where $u_{yx}/u_{x}$ and $V_{yx}/u_x$ are increasing in $x$; the case where $u_{y x}/u_x$ and $V_{yx}/u_x$ are decreasing in $x$ is analogous and thus omitted. 

Fix $x_1<x_2<x_3$ and $y_2>y_1$ such that $u(y_1,x_1)<0<u(y_1,x_2)$. The inequality $|R|>0$ follows from the following displayed equations: 
\[
u(y_1,x_3)-u(y_1,x_1)=\int_{x_1}^{x_3} u_x (y_1,x)\df x>0,
\]
where the inequality holds by Assumption \ref{a:ord};
\begin{gather*}
\begin{vmatrix}
	u(y_1,x_1) &u(y_1,x_3)\\
	u(y_2,x_1) &u(y_2,x_3)
\end{vmatrix}\\
=-u(y_1,x_3)u(y_2,x_1) + u(y_1,x_1)u(y_2,x_3)\\
=-g(y_1)\tilde u(y_1,x_3)g(y_2)\tilde u(y_2,x_1) + g(y_1)\tilde u(y_1,x_1)g(y_2)\tilde u(y_2,x_3)\\
=g(y_1)g(y_2)[-\tilde u(y_1,x_3)(\tilde u(y_2,x_1)-\tilde u(y_1,x_1))+\tilde u(y_1,x_1)(\tilde u(y_2,x_3)-\tilde u(y_1,x_3))]\\
=g(y_1)g(y_2)\int_{y_1}^{y_2}[-\tilde u(y_1,x_3) \tilde u_y(y,x_1)+\tilde u(y_1,x_1)\tilde u_y (y,x_3) ]\df y>0,
\end{gather*}
where  the inequality and the second equality hold by parts (2) and (3) of Lemma \ref{l:ASC};
\begin{gather*}
\begin{vmatrix}
	V(y_2,x_1)-V(y_1,x_1) & V(y_2,x_3)-V(y_1,x_3)\\
	u(y_1,x_1) & u(y_1,x_3)
\end{vmatrix} \\
=u(y_1,x_3)\int_{y_1}^{y_2} V_y(y,x_1)\df y -u(y_1,x_1)\int_{y_1}^{y_2} V_y(y,x_3)\df y>0,	
\end{gather*}
where the inequality holds by Assumption \ref{a:ord};
\begin{gather*}
-\begin{vmatrix}
V(y_2,x_2)-V(y_1,x_2)-V(y_2,x_1)+V(y_1,x_1) &V(y_2,x_3)-V(y_1,x_3)-V(y_2,x_2)+V(y_1,x_2)\\
u(y_1,x_2)-u(y_1,x_1) &u(y_1,x_3)-u(y_1,x_2)
\end{vmatrix}\\
=(V(y_2,x_3)-V(y_1,x_3)-V(y_2,x_2)+V(y_1,x_2))(u(y_1,x_2)-u(y_1,x_1))\\
-(V(y_2,x_2)-V(y_1,x_2)-V(y_2,x_1)+V(y_1,x_1))(u(y_1,x_3)-u(y_1,x_2))\\
=\int_{y_1}^{y_2}\int_{x_2}^{x_3}\int_{x_1}^{x_2} (V_{yx}(y,\tilde x)u_x(y_1, x) -V_{yx}(y,x)u_x(y_1,\tilde x))\df x \df \tilde x \df y \geq (>) 0,
\end{gather*}
where the inequality holds by Assumption \ref{a:ord} and (strict) monotonicity of $V_{yx} /u_x$ in $x$;
\begin{gather*}
\begin{vmatrix}
u(y_1,x_2)-u(y_1,x_1) &u(y_1,x_3)-u(y_1,x_2)\\
u(y_2,x_2)-u(y_2,x_1) &u(y_2,x_3)-u(y_2,x_2)
\end{vmatrix}\\
=(u(y_1,x_2)-u(y_1,x_1))(u(y_2,x_3)-u(y_2,x_2))-(u(y_1,x_3)-u(y_1,x_2))(u(y_2,x_2)-u(y_2,x_1))\\
= \int_{x_2}^{x_3}\int_{x_1}^{x_2}  (u_x(y_1,x)u_x(y_2,\tilde x)-u_x(y_1,\tilde x)u_x(y_2,x) ) \df x \df \tilde x  \geq(>)0,
\end{gather*}
where the inequality holds by Assumption \ref{a:ord} and (strict) monotonicity of $u_{yx}/u_{x}$ in $x$, which imply that, for $y_2>y_1$ and $\tilde x>x$, we have
\begin{align*}
\ln \frac{u_x(y_1,x)u_x(y_2,\tilde x)}{u_x(y_1,\tilde x)u_x(y_2,x)} 
&= \int_{y_1}^{y_2}\frac{\partial }{\partial y} [\ln u_x (y,\tilde x)-\ln u_x (y,x)]\df y 
=\int_{y_1}^{y_2} \left[
\frac{u_{yx}(y,\tilde x)}{u_x(y,\tilde x)}-\frac{u_{yx}(y,x)}{u_x(y,x)} \right]\df y\geq (>)0;
\end{align*}
\begin{align*}
&\frac{
\begin{vmatrix}
V(y_2,x_1)-V(y_1,x_1) &-(V(y_2,x_2)-V(y_1,x_2)) &V(y_2,x_3)-V(y_1,x_3)\\
-u(y_1,x_1) &u(y_1,x_2) &-u(y_1,x_3) \\
u(y_2,x_1) &-u(y_2,x_2) &u(y_2,x_3)
\end{vmatrix}
}
{
\begin{vmatrix}
	u(y_1,x_1) &u(y_1,x_3)\\
	u(y_2,x_1) &u(y_2,x_3)
\end{vmatrix}
}
(u(y_1,x_3)-u(y_1,x_1))\\
=-&\begin{vmatrix}
V(y_2,x_2)-V(y_1,x_2)-V(y_2,x_1)+V(y_1,x_1) &V(y_2,x_3)-V(y_1,x_3)-V(y_2,x_2)+V(y_1,x_2)\\
u(y_1,x_2)-u(y_1,x_1) &u(y_1,x_3)-u(y_1,x_2)
\end{vmatrix}\\
+&
\frac{
\begin{vmatrix}
	V(y_2,x_1)-V(y_1,x_1) & V(y_2,x_3)-V(y_1,x_3)\\
	u(y_1,x_1) & u(y_1,x_3)
\end{vmatrix}
}
{
\begin{vmatrix}
	u(y_1,x_1) &u(y_1,x_3)\\
	u(y_2,x_1) &u(y_2,x_3)
\end{vmatrix}
}
\begin{vmatrix}
u(y_1,x_2)-u(y_1,x_1) &u(y_1,x_3)-u(y_1,x_2)\\
u(y_2,x_2)-u(y_2,x_1) &u(y_2,x_3)-u(y_2,x_2)
\end{vmatrix},
\end{align*}
where the equality holds by rearrangement.

\subsection{Proof of Lemma \ref{l:ySDD}}
Fix $x_1<x_2<x_3$ and $y_2>y_1$ such that $u(y_1,x_1)<0<u(y_1,x_3)$. The first claimed inequality follows as in the proof of Lemma \ref{l:R>0}, by Assumption \ref{a:qc} and $u(y_1,x_1)<0<u(y_1,x_3)$. We thus focus on the second and third inequalities.

As in the proof of Lemma \ref{l:R>0}, Assumption \ref{a:ord} and monotonicity of $u_{yx}/u_{x}$ in $x$ yield
\begin{gather*}
u(y_1,x_3)>u(y_1,x_2)>u(y_1,x_1),\\
\frac{u(y_2,x_3)-u(y_2,x_2)}{u(y_1,x_3)-u(y_1,x_2)}\geq\frac{u(y_2,x_2)-u(y_2,x_1)}{u(y_1,x_2)-u(y_1,x_1)}.
\end{gather*}
There are three cases to consider.

(1) $u(y_1,x_2)=0$. In this case, $u(y_2,x_2)<0$, by Assumption \ref{a:qc}. Thus,
\begin{align*}
u(y_2,x_2)u(y_1,x_1)&>0=u(y_2,x_1)u(y_1,x_2),\\
u(y_2,x_3)u(y_1,x_2)&=0> u(y_2,x_2)u(y_1,x_3).
\end{align*}

(2) $u(y_1,x_2)>0$. In this case, as follows from the proof of Lemma \ref{l:R>0},
\[
u(y_2,x_2)u(y_1,x_1)>u(y_2,x_1)u(y_1,x_2),
\]
by Assumption \ref{a:qc} and $u(y_1,x_1)<0<u(y_1,x_2)$. Thus,
\begin{gather*}
\frac{u(y_2,x_3)-u(y_2,x_2)}{u(y_1,x_3)-u(y_1,x_2)}\geq\frac{u(y_2,x_2)-u(y_2,x_1)}{u(y_1,x_2)-u(y_1,x_1)}>\frac{u(y_2,x_2)}{u(y_1,x_2)}\\
\implies u(y_2,x_3)u(y_1,x_2)> u(y_2,x_2)u(y_1,x_3).
\end{gather*}

(3) $u(y_1,x_2)<0$. In this case, as follows from the proof of Lemma \ref{l:R>0},
\[
u(y_2,x_3)u(y_1,x_2)> u(y_2,x_2)u(y_1,x_3),
\]
by Assumption \ref{a:qc} and $u(y_1,x_2)<0<u(y_1,x_3)$. Thus,
\begin{gather*}
\frac{u(y_2,x_2)}{u(y_1,x_2)}>\frac{u(y_2,x_3)-u(y_2,x_2)}{u(y_1,x_3)-u(y_1,x_2)}\geq\frac{u(y_2,x_2)-u(y_2,x_1)}{u(y_1,x_2)-u(y_1,x_1)}\\
\implies u(y_2,x_2)u(y_1,x_1)>u(y_2,x_1)u(y_1,x_2).\qedhere
\end{gather*}

\subsection{Proof of Lemma \ref{l:ySPD}}
Fix $x_1<x_2<x_3$ and $y_2<y_1$ such that $x_1< \chi(y_1)< x_3$. As in the proof of Lemma \ref{l:R>0}, Assumption \ref{a:ord} and monotonicity of $V_{yx} /u_x$ in $x$ yield
\begin{gather}
V(y_1,x_j)-V(y_2,x_j)>0\quad \text{for $j=1,2,3$},\label{e:dV>0}\\
u(y_1,x_3)>u(y_1,x_2)>u(y_1,x_1),\label{e:du>0}\\
\begin{gathered}
\frac{V(y_1,x_3)-V(y_2,x_3)-V(y_1,x_2)+V(y_2,x_2)}{u(y_1,x_3)-u(y_1,x_2)}\\ \leq\frac{V(y_1,x_2)-V(y_2,x_2)-V(y_1,x_1)+V(y_2,x_1)}{u(y_1,x_2)-u(y_1,x_1)}.\label{d:dV/du}
\end{gathered}
\end{gather}

There are two cases to consider.

(1) $u(y_1,x_2)\geq 0$. In this case,  we have
\[
\frac{u(y_1,x_1)}{V(y_1,x_1)-V(y_2,x_1)}<\frac{u(y_1,x_2)}{V(y_1,x_2)-V(y_2,x_2)},
\]
by \eqref{e:dV>0} and $u(y_1,x_1)<0\leq u(y_1,x_2)$, and
\[
\frac{u(y_1,x_2)}{V(y_1,x_2)-V(y_2,x_2)}<\frac{u(y_1,x_3)}{V(y_1,x_3)-V(y_2,x_3)},
\]
by
\begin{align*}
u(y_1,x_2)(V(y_1,x_3)-V(y_2,x_3))
\leq &u(y_1,x_2) \frac{u(y_1,x_3)-u(y_1,x_1)}{u(y_1,x_2)-u(y_1,x_1)}(V(y_1,x_2)-V(y_2,x_2))  \\
<&u(y_1,x_3)(V(y_1,x_2)-V(y_2,x_2)),
\end{align*}
where the first inequality holds by \eqref{d:dV/du}, $V(y_1,x_1)>V(y_2,x_1)$, $u(y_1,x_3)>u(y_1,x_2)$, and $u(y_1,x_2)\geq 0$, and the second inequality holds by $V(y_1,x_2)>V(y_2,x_2)$, $u(y_1,x_3)>u(y_1,x_2)$, and $u(y_1,x_1)<0$.

(2) $u(y_1,x_2)\leq 0$. In this case, we have  
\[
\frac{u(y_1,x_2)}{V(y_1,x_2)-V(y_2,x_2)}<\frac{u(y_1,x_3)}{V(y_1,x_3)-V(y_2,x_3)},
\]
by \eqref{e:dV>0} and $u(y_1,x_2)\leq 0< u(y_1,x_3)$, and
\[
\frac{u(y_1,x_1)}{V(y_1,x_1)-V(y_2,x_1)}<\frac{u(y_1,x_2)}{V(y_1,x_2)-V(y_2,x_2)},
\]
by
\begin{align*}
 -u(y_1,x_2)(V(y_1,x_1)-V(y_2,x_1))
\leq & -u(y_1,x_2)\frac{u(y_1,x_3)-u(y_1,x_1)}{u(y_1,x_3)-u(y_1,x_2)}(V(y_1,x_2)-V(y_2,x_2)) \\
<&-u(y_1,x_1)(V(y_1,x_2)-V(y_2,x_2)),
\end{align*}
where the first inequality holds by \eqref{d:dV/du}, $V(y_1,x_3)>V(y_2,x_3)$, $u(y_1,x_3)>u(y_1,x_2)$, and $u(y_1,x_2)\leq 0$, and the second inequality holds by $V(y_1,x_2)>V(y_2,x_2)$, $u(y_1,x_2)>u(y_1,x_1)$, and $u(y_1,x_3)>0$.

\subsection{Proof of Lemma \ref{l:stab}}
We give the proof for the single-dipped case. The proof remains valid if Assumption \ref{a:ord} is replaced with strict single-crossing of $u(y,x)$ in $x$. Let  $\tau^n$ be any optimal signal under $V^n$, so that $\supp (\tau^n)\subset \Lambda^n$. Since the set of compact subsets of a compact set is compact (in the Hausdorff topology), taking a subsequence if necessary, $\Lambda^n$ converges to some compact set $\ol\Lambda\subset \Delta(X)$. Since the set of signals is compact (in the weak* topology), taking a subsequence if necessary, $\tau^n$ converges weakly to some signal $ \tau$. Finally, since $\Lambda^n\rightarrow \ol\Lambda$, $\tau^n\rightarrow \tau$, and $\supp( \tau^n)\subset \Lambda^n$, it follows that $\supp (\tau) \subset \ol \Lambda$, by  Box 1.13 in \citet{santambrogio}. 

We claim that $\tau$ is optimal under $V$. Since $V_y^n$ converges uniformly to $V_y$, for each $\delta>0$ there exists $n_\delta\in \N$ such that, for all $n\geq n_\delta$, we have $|V_y^n(y,x)-V_y(y,x)|\leq \delta$ for all $(y,x)$. Since $\tau^n$ is optimal under $V^n$, for each signal $\tilde\tau$ we have
\begin{align*}
\int_{\Delta(X)}\int_X \int_0^y V_y (\tilde y,x)\df \tilde y\df \mu(x)\df \tau ^n(\mu) &\geq \int_{\Delta(X)}\int_X  \int_0^y V_y^n (\tilde y,x)\df \tilde y\df \mu(x)\df \tau ^n(\mu)-\delta\\
&\geq \int_{\Delta(X)}\int_X \int_0^y V_y^n (\tilde y,x)\df \tilde y\df \mu(x)\df \tilde \tau(\mu)	-\delta\\
&\geq \int_{\Delta(X)}\int_X \int_0^y V_y(\tilde y,x)\df \tilde y\df \mu(x)\df \tilde \tau(\mu)-2\delta.
\end{align*}
Passing to the limit as $\delta\rightarrow 0$ and $n\rightarrow \infty$ establishes the optimality of $\tau$  under $V$.	

Suppose by contradiction that $\ol \Lambda$ is not single-dipped. Then there exist $\mu_1,\mu_2\in \ol \Lambda$ and $x_1<x_2<x_3$ such that $x_1,x_3\in \supp(\mu_1)$, $x_2\in \supp (\mu_2)$, and $\gamma(\mu_1)<\gamma(\mu_2)$. Since $\Lambda^n\rightarrow \ol \Lambda$, there exist $\mu_1^n,\mu_2^n\in \Lambda^n$ such that $\mu_1^n\rightarrow \mu_1$ and $\mu_2^n\rightarrow \mu_2$. Since $\gamma(\mu)$ is continuous in $\mu$ and since the support correspondence is lower hemicontinuous, by Theorem 17.14 in \citet{aliprantis2006}, it follows that there exists $n$, $\mu_1^n,\mu_2^n\in \Lambda^n$, and $x_1^n<x_2^n<x_3^n$ such that $x_1^n,x_3^n\in \supp(\mu_1^n)$, $x_2^n\in \supp (\mu_2^n)$, and $\gamma(\mu_1^n)<\gamma(\mu_2^n)$, contradicting that $\Lambda^n$ is single-dipped.

\subsection{Proof for Example \ref{e:quantile}} \label{s:quantileproof}
First, notice that the outcome $\pi$ that corresponds to the proposed signal is implementable. \eqref{PO1} holds because, for all $y\in [\ul y,1]$, the marginal distribution over actions satisfies
\[
\alpha_{\pi}([a,1])=\phi([0,\chi_1(y)])+\phi ([a,1]),
\]
and the posterior conditional on $y$ is
\begin{gather*}
\pi_y=\frac{\df\phi ([0,\chi_1(y)])}{\df\phi ([0,\chi_1(y)]+\df\phi ([a,1])}\delta_{\chi_1(y)}+\frac{\df\phi ([a,1])}{\df\phi ([0,\chi_1(y)]+\df\phi ([a,1])}\delta_{\chi_1(y)},
\end{gather*}
as follows from $\kappa \phi ([0,\chi_1(y)])=(1-\kappa)\phi ([a,1])$, which implies that $\kappa \df\phi ([0,\chi_1(y)])=(1-\kappa)\df\phi ([a,1])$ and that $\chi_1$ is a continuous, strictly decreasing function. \eqref{PO2} holds because, for all $y\in [\ul y,1]$, 
\[
\E_{\pi_y}[u(y,x)]=\E_{\pi_y}[\1 \{x\geq y\}-\kappa]=\pi_y([y,1])-\kappa=0.
\]
Consider now any other implementable outcome $\tilde\pi$. By \eqref{PO2}, there exists $\tilde\pi_y $ with $\tilde\pi_y([y,1])\geq \kappa$, as otherwise $\E_{\tilde \pi_y}[u(y,x)]<0$. Thus, by \eqref{PO1}, $\alpha_{\tilde \pi} ([y,1])\leq \phi([y,1])/\kappa$, as follows from
\[
\phi([y,1])=\int_Y \tilde \pi_{\tilde y}([y,1])\df \alpha_{\tilde \pi} (\tilde y)\geq\int_{y}^1 \tilde \pi_{\tilde y}([y,1])\df \alpha_{\tilde \pi} (\tilde y)  \geq \kappa \alpha_{\tilde \pi} ([y,1]).
\]
Since $\alpha_\pi([y,1])=\phi([y,1])/\kappa$, it follows that $\alpha_{\pi}$ first-order stochastically dominates $\alpha_{\tilde \pi}$, and thus, for an increasing $V$, 
\[
\int_{Y\times X} V(y)\df \pi(y,x)=\int_{Y} V(y)\df \alpha_{\pi}(y)\geq \int_{Y} V(y)\df \alpha_{\tilde \pi}(y)=\int_{Y\times X} V(y)\df \tilde \pi(y,x),
\]
showing that $\pi$ is optimal.

\subsection{Proof for Example \ref{ex:segpair}} \label{s:segpair}
We will show that $\Lambda=\{\delta_{\chi_1(y)}/2+\delta_{\chi_2(y)}/2:y\in [-1,1]\}$. Then, by Theorem \ref{t:brenier}, there is a unique optimal signal. Consider a signal $\tau$ that induces the distribution over actions $\alpha$ and the only posterior inducing each action $y\in \supp(\alpha)$ is $\mu = \delta_{\chi_1(y)}/2+\delta_{\chi_2(y)}/2$. By construction, $\supp (\tau)\subset \Lambda$. Moreover,  $\int \mu \df \tau=\phi$, because, for each $y\in [0,1]$,
\begin{gather*}
\phi ([\chi_2(y),3])=\phi([3y,3])= \tfrac{1}{2}\alpha ([y,1]),\\
\phi ([-1,\chi_1(-y)])=\phi ([-1,-y])=\alpha ([-1,-y])+\tfrac{1}{2}\alpha ([y,1]).
\end{gather*}
Hence $\tau$ is optimal. Finally, the following lemma shows that $\Lambda$ is as stated.
\begin{lemma}\label{l:gerry}
Functions
\[
p(x)=
\begin{cases}
T(2x), &x\in [-1,0),\\
3 T(\frac 23 x), &x\in [0,3],
\end{cases}
\quad\text{and}\quad 
q(y)=
\begin{cases}
\frac{2T'(2y)}{T'(0)}, &y\in [-1, 0),\\
2, &y\in [0,3].
\end{cases}
\]
satisfy \eqref{DO1} with equality if $y\in [-1,1]$ and $x\in \{\chi_1(y),\chi_2(y)\}$, and strict inequality otherwise.
\end{lemma}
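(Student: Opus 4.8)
Throughout write
\[
\Phi(y,x):=p(x)-V(y,x)-q(y)\,u(y,x)=p(x)-T(2y)-q(y)\,T(x-y),\qquad (y,x)\in[-1,3]^{2},
\]
so that the lemma is the assertion that $\Phi\ge 0$ on $[-1,3]^{2}$, with $\Phi(y,x)=0$ if and only if $y\in[-1,1]$ and $x\in\{\chi_1(y),\chi_2(y)\}$. The plan is to prove this by direct verification, first recording the elementary facts used everywhere: $p$ and $q$ are continuous (the branches agree at the switch points, $p(0^{-})=T(0)=0=3T(0)=p(0^{+})$ and $q(0^{-})=2T'(0)/T'(0)=2=q(0^{+})$); $T$ is odd on the relevant range (recall the specification, where $T(s)=J(s)-\kappa$ with $\kappa=\tfrac12$ and $J$ a symmetric c.d.f.); and, since $T'>0$ is even and strictly log-concave, $T'$ is strictly decreasing in $|s|$ with its maximum at $0$, $T''(0)=0$, $T''<0$ on $(0,\infty)$, $T''>0$ on $(-\infty,0)$, and $T$ is strictly concave on $[0,\infty)$. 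The equality claims are then one-line substitutions: for $y\in[-1,0]$, $\chi_1(y)=\chi_2(y)=y$ and $\Phi(y,y)=p(y)-T(2y)-q(y)T(0)=T(2y)-T(2y)=0$; and for $y\in(0,1]$, where $q(y)=2$, $\Phi(y,3y)=3T(2y)-T(2y)-2T(2y)=0$ and, using oddness, $\Phi(y,-y)=T(-2y)-T(2y)-2T(-2y)=-\bigl(T(2y)+T(-2y)\bigr)=0$.

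For the inequality I would fix $y$ and minimise $x\mapsto\Phi(y,x)$ over $[-1,3]$. Here $\partial_x\Phi(y,x)=p'(x)-q(y)T'(x-y)$ with $p'(x)=2T'(2x)$ for $x\le 0$ and $p'(x)=2T'(\tfrac23x)$ for $x\ge 0$, so the stationarity condition equates two values of $T'$; since $T'$ is even and strictly unimodal this reduces, on each branch, to an \emph{affine} equation in $x$. Solving it shows that the only interior stationary points are precisely $\chi_1(y)$ and $\chi_2(y)$ for $y$ in the appropriate range, together with one harmless local maximum: e.g.\ on $x\ge0$ with $y\in(0,1]$, $T'(\tfrac23x)=T'(x-y)$ forces either $\tfrac23x=x-y$, i.e.\ $x=3y=\chi_2(y)$, or $\tfrac23x=y-x$, i.e.\ $x=\tfrac35 y$, and the sign of $\partial_x\Phi$ around these points shows the first is a strict local minimum and the second a strict local maximum; the sign checks use $T''<0$ on $(0,\infty)$ and $T''>0$ on $(-\infty,0)$ (and the identity $q(y)T'(0)=2T'(2y)$, which is exactly what makes $x=y$ stationary when $y<0$). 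Because $\Phi(y,\cdot)$ is strictly monotone on each complementary subinterval, its global minimum over $[-1,3]$ is attained at some $\chi_i(y)$ (value $0$) or at $x\in\{-1,3\}$, and the endpoint values are nonnegative: $\Phi(y,-1)=-T(2)-T(2y)+2T(1+y)\ge 0$ by strict concavity of $T$ on $[0,\infty)$ applied to $1+y=\tfrac12\cdot 2+\tfrac12\cdot 2y$, with equality only at $y=1$ — where $-1=\chi_1(1)$, so this is one of the admissible zeros — and similarly $\Phi(y,3)=3T(2)-T(2y)-2T(3-y)\ge 0$ with equality only at $y=1$, where $3=\chi_2(1)$; for $y<0$ one instead bounds $\Phi(y,-1)=-T(2)+T(2|y|)+q(y)T(1-|y|)\ge 0$, with equality only at $y=-1=\chi_1(-1)$. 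Assembling these cases gives $\Phi(y,\cdot)\ge 0$, vanishing exactly at $\{\chi_1(y),\chi_2(y)\}$ when $y\in[-1,1]$ and nowhere when $y\in(1,3]$.

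The part I expect to be the main obstacle is the regime $y\in[-1,0)$, where $q(y)=2T'(2y)/T'(0)\in(0,2)$ is genuinely non-constant. The affine reduction of the stationarity condition still works, but to rule out a spurious stationary point of $\Phi(y,\cdot)$ on $x\ge 0$ dipping to $0$ one needs strict log-concavity of $T'$ in the sharp quantitative form $T'(z)^{2}>T'(0)T'(2z)$ for $z>0$ (used together with $T'(0)>T'(z)$): this is what yields $\partial_x\Phi(y,0^{+})>0$ and $\Phi(y,0)=T(2|y|)-q(y)T(|y|)>0$, and hence confines the minimum over $x\ge 0$ away from zero. The bookkeeping for $y\in[-1,0)$, together with the strict-concavity endpoint estimates for $y\in(1,3]$, is where essentially all the work lies; the remainder is substitution and single-variable calculus, and the proof for the complementary construction (that $\Lambda$ is exactly $\{\delta_{\chi_1(y)}/2+\delta_{\chi_2(y)}/2:y\in[-1,1]\}$) then follows from Lemma~\ref{l:gerry} via complementary slackness (Theorem~\ref{t:contact}) and Theorem~\ref{t:brenier}, as indicated in the example.
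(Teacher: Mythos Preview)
Your approach—fix $y$ and minimise $\Phi(y,\cdot)$ over $x\in[-1,3]$ by locating stationary points and checking endpoints—is different from the paper's, which partitions $(y,x)$-space into six regions and verifies the inequality in each by a direct concavity or log-concavity estimate. For $y\in[0,3]$ (where $q(y)=2$) your method is clean and correct: stationarity reads $T'(\cdot)=T'(\cdot)$, which is affine in $x$ since $T'$ is even and strictly unimodal, and your endpoint checks via strict concavity of $T$ on $[0,\infty)$ are valid.

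The genuine gap is at $y\in[-1,0)$. First, your claim that ``the affine reduction of the stationarity condition still works'' is incorrect: on $x<0$ the equation is $T'(2x)\,T'(0)=T'(2y)\,T'(x-y)$, which is \emph{not} $|2x|=|x-y|$; the root $x=y$ comes from direct substitution, not from that reduction. Second, your partial argument for $x\ge 0$—that $\Phi(y,0)>0$ and $\partial_x\Phi(y,0^+)>0$—does not by itself ``confine the minimum''; what actually closes this region is the stronger observation that $\partial_x\Phi(y,x)>0$ on \emph{all} of $[0,3]$, since $\tfrac{2}{3}x<x-y$ gives $T'(\tfrac{2}{3}x)>T'(x-y)$ and $q(y)<2$ (and similarly $\partial_x\Phi>0$ on $(y,0)$). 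Third—and this is the real hole—the region $x\in[-1,y)$ is not handled: there one cannot sign $\partial_x\Phi$ by the simple comparisons above, and your endpoint assertion $\Phi(y,-1)\ge 0$ for $y\in(-1,0)$ is stated without proof. This is precisely the paper's case~(6), whose crux is the strict-log-concavity inequality $T'(z)\,T'(-2y)>T'(0)\,T'(z-2y)$ for $z>0$, used inside an integral to bound $\dfrac{T(-2x)-T(-2y)}{T'(-2y)(2y-2x)}$ below $\dfrac{T(2(y-x))}{T'(0)\cdot 2(y-x)}$ and then below $\dfrac{T(y-x)}{T'(0)(y-x)}$ via concavity of $T$. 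Without this step (or an equivalent), the lemma is not proved on $[-1,y)$.
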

\begin{proof}[Proof of Lemma \ref{l:gerry}]
Since $T$ is symmetric about 0 (i.e., $T(x-y)=-T(y-x)$) and $T'$ is strictly log-concave, it follows that $T'(0)>T'(z)$ for all $z\neq 0$ and $T(z)$ is strictly concave for $z\geq 0$. Hence, if $z_1'\leq z_1\leq z_2\leq z_2'$, $(z_1',z_2')\neq (z_1,z_2)$, and $\rho'z_1'+(1-\rho')z_2'=\rho z_1+(1-\rho)z_2$, for some $z_1,z_2,z_1',z_2'\geq 0$ and $\rho,\rho'\in (0,1)$, then $\rho'T(z_1')+(1-\rho)T(z_2')<\rho T(z_1)+(1-\rho)T(z_2)$, by Jensen's inequality.

We split the analysis into six cases.

(1) For $y\in [0,3]$ and $x\in [y,3]$, \eqref{DO1} simplifies to
\[
3T(\tfrac 23 x) \geq T(2y) + 2T(x-y),
\]
which holds with equality for $x=3y=\chi_2(a)$ and strict inequality for $x\neq 3y$.

(2) For $y\in [0,3]$ and $x\in (0,y)$, \eqref{DO1} simplifies to
\[
3T(\tfrac 23 x)+2T(y-x) \geq T(2y) + 4T(0),
\]
which always holds with strict inequality.

(3) For $y\in [0,3]$ and $x \in [-1,0]$, \eqref{DO1} simplifies to
\[
2T(y-x) \geq  T(2y)+T(-2x),
\]
which holds with equality for $x=-y=\chi_1(y)$ and strict inequality for $x\neq -y$.

(4) For $y\in [-1,0)$ and $x\in [0,3]$, \eqref{DO1} simplifies to
\[
3T(\tfrac 23 x)+ T(-2y) \geq q (y)T(x-y)+2T(0),
\]
which always holds with strict inequality because $q(y)<2$ and $T(x-y)>0$.

(5) For $y\in [-1,0)$ and $x\in (y,0)$, \eqref{DO1} simplifies to
\[
T(-2y)  \geq T(-2x)+ q (y)T(x-y),
\]
which is equivalent to 
\[
\frac{T(-2y)-T(-2x)}{T'(-2y) (-2y+2x)}\geq \frac{T(x-y)-T(0)}{T'(0)(x-y)},
\]
which always holds with strict inequality because $T(z)$ is strictly concave for $z\geq 0$, and thus the left-hand side is strictly greater than 1 whereas the right-hand side is strictly less than 1.

(6) For $y\in [-1,0)$ and $x\in [-1,y]$, \eqref{DO1} simplifies to
\[
T(-2y) +q (y)T(y-x) \geq T(-2x),
\]
which holds with equality for $x=y=\chi_1(y)$. For $x<y$, the inequality is equivalent to
\[
\frac{T(y-x)-T(0)}{T'(0)(y-x)}\geq \frac{T(-2x)-T(-2y)}{T'(-2y)(-2x+2y)},
\]
which always holds with strict inequality because
\begin{align*}
\frac{T(-2x)-T(-2y)}{T'(-2y)(2y-2x)}&=\frac{1}{2y-2x}\int_{0}^{2(y-x)} \frac{T'(z-2y)}{T'(-2y)}\df z\\
&<\frac{1}{2y-2x}\int_{0}^{2(y-x)} \frac{T'(z)}{T'(0)}\df z\\
& =\frac{T(2y-2x)-T(0)}{T'(0)(2y-2x)}\\
&<\frac{T(y-x)-T(0)}{T'(0)(y-x)},
\end{align*}
where the first inequality holds because $T'$ is strictly log-concave, and the second inequality holds because $T(z)$ is strictly concave for $z\geq 0$.
\end{proof}

\subsection{Proof of Proposition \ref{p:ZZ}}
Recall that most results remain valid if the condition $u_x (y,x)>0$ in Assumption \ref{a:ord} is replaced with strict single-crossing of $u(y,x)$ in $x$.
Clearly, $\gamma(\mu)={\E_\mu [x]}/({1+\E_\mu[x^2]})$. To ensure that Assumption \ref{a:int} holds, we normalize $Y=[\min_{x\in [\ul x,\ol x]} \gamma(\delta_x),\max_{x\in [\ul x,\ol x]} \gamma(\delta_x)]$. Assumptions \ref{a:smooth} and \ref{a:qc} obviously hold. Moreover, since $\gamma(\delta_x)$ is strictly increasing on $[0,1]$ and strictly decreasing on $[1,+\infty)$, it follows that $u(\gamma(\delta_x),x')>0$ if $x<x'\leq 1$ and if $1\leq x'<x$. Thus, if $\ol x\leq 1$, then $u(y,x)$ satisfies strict single-crossing in $x$, whereas, if $\ul x \geq 1$, $u(y,x)$ also satisfies strict single-crossing in $x$ once the state is redefined as $-x$. So Theorems \ref{t:pairwise}, \ref{l:ssdd}, \ref{p:full}, and \ref{t:NAD} apply.

Lemma \ref{l:ZZ1} replicates Lemma 1 and Proposition 3 in \citet{ZZ}.
\begin{lemma}\label{l:ZZ1}
If $x_1<x_2$  and $x_1x_2> (<)1$, then  $\rho V(\gamma(\delta_{x_1}),x_1)+(1-\rho )V(\gamma(\delta_{x_2}),x_2)>(<)\rho V(\gamma(\mu),x_1)+(1-\rho )V(\gamma(\mu),x_2)$ for all $\rho\in (0,1)$.
\end{lemma}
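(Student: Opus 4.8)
\textbf{Proof proposal for Lemma \ref{l:ZZ1}.}

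The plan is to reduce the claim to a statement about the convexity (or concavity) of a single explicit function of the state, exploiting the fact that in this model $V(y,x)=y/x$ and $u(y,x)=x-(1+x^2)y$, so that $V(\gamma(\mu),x)=\gamma(\mu)/x$ is \emph{linear} in $1/x$ for fixed $\mu$. First I would rewrite the inequality. Writing $y=\gamma(\mu)$, the posterior $\mu=\rho\delta_{x_1}+(1-\rho)\delta_{x_2}$ induces $y$ determined by $\rho u(y,x_1)+(1-\rho)u(y,x_2)=0$, i.e. $y=(\rho x_1+(1-\rho)x_2)/(1+\rho x_1^2+(1-\rho)x_2^2)$. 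The claimed inequality
\[
\rho\frac{\gamma(\delta_{x_1})}{x_1}+(1-\rho)\frac{\gamma(\delta_{x_2})}{x_2}\;>\;(<)\;\;\rho\frac{y}{x_1}+(1-\rho)\frac{y}{x_2}
\]
is, after multiplying through, equivalent to showing that the function $x\mapsto \gamma(\delta_x)=x/(1+x^2)$ evaluated ``pointwise'' beats its value at the pooled belief, once reweighted by $1/x$. Concretely, defining $\Phi(x):=\gamma(\delta_x)/x=1/(1+x^2)$, the right-hand sum is $y\cdot(\rho/x_1+(1-\rho)/x_2)$ while the left-hand sum is $\rho\Phi(x_1)+(1-\rho)\Phi(x_2)$.

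The key step is to express $y(\rho/x_1+(1-\rho)/x_2)$ as a weighted average of $\Phi(x_1)$ and $\Phi(x_2)$ with weights that are themselves functions of $\rho,x_1,x_2$, and then compare the two convex combinations. Using the obedience equation one computes $y=(\rho x_1+(1-\rho)x_2)/(1+\rho x_1^2+(1-\rho)x_2^2)$, so
\[
y\Big(\tfrac{\rho}{x_1}+\tfrac{1-\rho}{x_2}\Big)
=\frac{\bigl(\rho x_1+(1-\rho)x_2\bigr)\bigl(\rho/x_1+(1-\rho)/x_2\bigr)}{1+\rho x_1^2+(1-\rho)x_2^2}.
\]
Expanding the numerator, $(\rho x_1+(1-\rho)x_2)(\rho/x_1+(1-\rho)/x_2)=1+\rho(1-\rho)(x_1/x_2+x_2/x_1-2)=1+\rho(1-\rho)(x_2-x_1)^2/(x_1x_2)$. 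So the difference between the two sides of the target inequality, call it $D$, has the same sign as
\[
\bigl(\rho\Phi(x_1)+(1-\rho)\Phi(x_2)\bigr)\bigl(1+\rho x_1^2+(1-\rho)x_2^2\bigr)-1-\rho(1-\rho)\tfrac{(x_2-x_1)^2}{x_1x_2}.
\]
Substituting $\Phi(x_i)=1/(1+x_i^2)$ and clearing denominators $(1+x_1^2)(1+x_2^2)x_1x_2$, this should collapse — after routine algebra — to something of the form $\rho(1-\rho)(x_2-x_1)^2\,(x_1x_2-1)\cdot(\text{positive factor})/\bigl((1+x_1^2)(1+x_2^2)x_1x_2\bigr)$. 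The sign is then governed entirely by $x_1x_2-1$, which gives exactly the dichotomy in the statement.

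The main obstacle I anticipate is bookkeeping: verifying that the polynomial identity really does factor with $(x_1x_2-1)$ as the only sign-relevant term and that the remaining factor is strictly positive on the relevant domain (so that the inequalities are strict for $\rho\in(0,1)$ and $x_1\neq x_2$). A cleaner alternative, which I would try first to avoid the brute-force expansion, is to invoke Theorem \ref{p:full}: condition \eqref{e:full} for this $V$ reads precisely as the non-strict version of the claimed inequality, and one knows from Theorems \ref{t:SDPD} and \ref{c:ndSDD} that in the linear receiver case here — with $w(x)=1/x$ strictly convex and $w'<0$ — pooling is strictly preferred, while on the region $x_1x_2>1$ one can reparametrize the state to $-x$ and apply the full-disclosure characterization in the opposite direction. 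That is, the dichotomy $x_1x_2\gtrless 1$ is exactly the boundary at which $\chi(y)=1$, i.e. where $u(\gamma(\delta_x),x')$ changes sign, so the two regimes correspond to the two orientations of single-crossing already discussed in the proof of Proposition \ref{p:ZZ}. I would present the short algebraic verification as the primary proof and remark on this structural interpretation.
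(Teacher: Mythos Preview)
Your direct algebraic route is correct and does carry through: after the manipulations you set up, the difference equals
\[
\rho(1-\rho)(x_2-x_1)^2\left[\frac{(x_1+x_2)^2}{(1+x_1^2)(1+x_2^2)}-\frac{1}{x_1x_2}\right],
\]
and the bracket simplifies to $(x_1x_2-1)(x_1^2+x_1x_2+x_2^2+1)/[(1+x_1^2)(1+x_2^2)x_1x_2]$, whose second factor is strictly positive. So the sign is indeed governed solely by $x_1x_2-1$, exactly as you conjectured.

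The paper takes a different and somewhat cleaner route: it sets $\varphi(\rho)=\gamma(\mu)\bigl(\rho/x_1+(1-\rho)/x_2\bigr)$, observes that the claimed inequality is $\rho\varphi(1)+(1-\rho)\varphi(0)\gtrless\varphi(\rho)$, and then shows $\varphi$ is strictly convex or concave by computing $\gamma'(\mu)$ and $\gamma''(\mu)$ in $\rho$ (both of which carry the factor $(x_1x_2-1)$). Your approach trades the second-derivative computation for a polynomial factorization; both are short, but the convexity argument makes the role of Jensen's inequality transparent and avoids the need to guess the factorization.

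One correction on your proposed ``cleaner alternative'': this model is \emph{not} the linear receiver case, since $u(y,x)=x-(1+x^2)y$ rather than $x-y$; so the linear-receiver simplifications of Theorem~\ref{t:SDPD} and Corollary~\ref{c:ndSDD} do not apply directly. More importantly, invoking Theorem~\ref{p:full} or condition~\eqref{e:nd} here would be circular, since Lemma~\ref{l:ZZ1} is precisely what the proof of Proposition~\ref{p:ZZ} uses to verify those conditions. Stick with your algebraic verification (or the paper's convexity argument) as the actual proof.
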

\begin{proof} 
For $\mu=\rho \delta_{x_1}+(1-\rho )\delta_{x_2}$, $\gamma(\mu)={(\rho x_1 +(1-\rho )x_2)}/{(1+\rho x_1^2+(1-\rho )x_2^2)}$. 
Thus, if $x_1<x_2$ and $x_1x_2>(<)1$, we have
\begin{gather*}
\frac{\df}{\df \rho }\gamma(\mu) =  \frac{(x_2-x_1)(x_1x_2-1)}{(1+\rho x_1^2+(1-\rho )x_2^2)^2}>(<)0,\\
\frac{\df^2}{\df \rho^2 }\gamma(\mu)  =  \frac{(x_2-x_1)(x_1x_2-1)(x_2^2-x_1^2)}{(1+\rho x_1^2+(1-\rho )x_2^2)^3}>(<)0.
\end{gather*}
Define $\varphi (\rho)=\gamma(\mu)\left({\rho }/{x_1}+{(1-\rho )}/{x_2}\right) $.
Thus, if $x_1<x_2$ and $x_1x_2>(<)1$, we have
\[
\varphi''(\rho )={\left(\frac{\rho }{x_1}+\frac{1-\rho }{x_2}\right)}{\frac{\df^2}{\df \rho^2 }\gamma(\mu)} +2{\left(\frac{1}{x_1}-\frac{1}{x_2}\right)}\frac{\df}{\df \rho }\gamma(\mu)>(<)0,
\]
so $\varphi$ is strictly convex (concave), and $\rho\varphi(1)+(1-\rho) \varphi(0)>(<)\varphi (\rho)$.
\end{proof}
If $\ul x\geq 1$, then $x_1x_2>1$ for all $\ul x_1\leq x_1<x_2$, so full disclosure is uniquely optimal by Theorem \ref{p:full} and Lemma \ref{l:ZZ1}. Assume henceforth that $\ul x\leq 1$. 

After some algebra,  we get, for all $y$ and $x_1<x_2<x_3$,
\[
|S|=\frac{(x_3-x_2)(x_3-x_1)(x_2-x_1)(1-x_2x_3-x_1x_3-x_1x_2)}{x_1x_2x_3}
\]
If $\ol x\leq 1/\sqrt 3$ ($ \ul x\geq 1/\sqrt 3$), then $|S|>(<)0$ for all $x_1<x_2<x_3\leq \ol x$  ($\ul x\leq x_1<x_2<x_3$), so $\Lambda$ is pairwise by Theorem \ref{t:pairwise}. Proposition 4 in \citet{ZZ} derives a version of this result for a finite set $X$.

Moreover, if $\ol x \leq 1/\sqrt 3$ ($\ul x \geq 1/\sqrt 3$), then $\Lambda$ is single-dipped (-peaked), as follows from Theorem \ref{l:ssdd} with
\begin{gather*}
\beta=
\begin{pmatrix}
u(y_2,x_3)u(y_1,x_2)-u(y_2,x_2)u(y_1,x_3) \\
u(y_2,x_3)u(y_1,x_1)-u(y_2,x_1)u(y_1,x_3) \\
u(y_2,x_2)u(y_1,x_1)-u(y_2,x_1)u(y_1,x_2)
\end{pmatrix}\\
\begin{pmatrix}
\beta=-
\begin{pmatrix}
u(y_2,x_3)u(y_1,x_2)-u(y_2,x_2)u(y_1,x_3) \\
u(y_2,x_3)u(y_1,x_1)-u(y_2,x_1)u(y_1,x_3) \\
u(y_2,x_2)u(y_1,x_1)-u(y_2,x_1)u(y_1,x_2)
\end{pmatrix}
\end{pmatrix},	
\end{gather*}
because, for $y<y'$ and $x<x'$ with $xx'<1$, we have
\[
u(y',x')u(y,x)-u(y',x)u(y,x')=(y'-y)(x'-x)(1-xx')>0,
\]
and 
\[
R\b=
\begin{pmatrix}
(y_2-y_1)^2|S|\\
0\\
0
\end{pmatrix}
\gneq 
\begin{pmatrix}
0\\
0\\
0
\end{pmatrix}
\begin{pmatrix}
R\b =
\begin{pmatrix}
-(y_2-y_1)^2|S|\\
0\\
0
\end{pmatrix}
\gneq 
\begin{pmatrix}
0\\
0\\
0
\end{pmatrix}
\end{pmatrix}.
\]
Thus $\Lambda$ is strictly single-dipped (-peaked) if $\ol x \leq 1/\sqrt 3$ ($\ul x \geq 1/\sqrt 3$). Finally, since, by Lemma \ref{l:ZZ1}, \eqref{e:nd} holds for all $\rho \in (0,1)$, Theorem \ref{t:NAD} yields that, if $\ol x \leq 1/\sqrt 3$ ($\ul x \geq 1/\sqrt 3$), then the optimal signal is unique and single-dipped (-peaked) negative assortative.

\subsection{Proof of Proposition \ref{c:spd1}} Suppose by contradiction that an optimal outcome assigns positive probability to a strictly single-dipped triple $(y_1,x_1)$, $(y_2,x_2)$, $(y_1,x_3)$, with $x_1 <x_2<x_3$, $y_2 <y_1$, and $x_1\leq x _{0}\leq x_3$. Consider a perturbation that reallocates mass $\beta_1\varepsilon$ on $x_1$ and mass $\beta_3\varepsilon$ on $x_3$ from $y_1$ to $y_2$, while reallocating mass $\beta_2\varepsilon$ on $x_2$ from $y_2$ to $y_1$ where $\varepsilon>0$ is small enough and $\beta=(\b_1,\b_2,\b_3)$ is given by
\[\b=
\begin{cases}
\left(0,\frac{1}{(x_2-x_0)g(y_2|x_2)},\frac{1}{(x_2-x_0)g(y_2|x_3)}\right), &x_2>x_0,\\
\left(0,1,0\right), &x_2=x_0, \\
\left(\frac{1}{(x_0-x_1)g(y_1|x_1)},\frac{1}{(x_0-x_2)g(y_1|x_2)},0\right), &x_2<x_0,
\end{cases}
\]
where $x_1 <x_2<x_3$, $y_2 <y_1$, and $x_1\leq x _{0}\leq x_3$.
 We focus on the case $x _{0}<x_2$, as the other cases are analogous. The above perturbation increases action $y_1$, because, by strict log-submodularity of $g$, 
\begin{equation*}
u(y_1,x_2)y_2-u(y_1,x_3)y_3= \frac{g(y_1|x _2)}{g(y_2 |x_2)}-\frac{g(y_1|x _3)}{g(y_2 |x _3)} >
0.
\end{equation*}%
Moreover, the same perturbation also increases the sender's expected utility for fixed $y_1 ,y_2$. This follows because
\begin{align*}
& (V(y_1,x_2)-V(y_2,x_2))y_2-(V(y_1,x_3)-V(y_2,x_3))y_3 \\
=&  \left( \frac{G(y_1|x_2)-G(y_2|x_2)}{(x_2-x _{0})g(y_2 |x_2)}-\frac{G(y_1|x_3)-G(y_2|x_3)}{(x _3-x _{0})g(y_2 |x_3)}\right)  \\
>& \frac{1}{(x_2-x _{0})}\left( \frac{G(y_1|x_2)-G(y_2 |x_2)}{g(y_2 |x_2)}-\frac{G(y_1|x_3)-G(y_2|x_3)}{g(y_2 |x_3)}\right)  \\
=& \frac{1 }{(x _2-x _{0})}\int_{y_2}^{y_1}\left( \frac{g(t|x_2)}{g(y_2|x_2)}-\frac{g(t|x_3)}{g(y_2 |x_3)}\right) \df t\geq 0,
\end{align*}
where the first inequality is by $x _{0}<x_2<x_3$ and the second inequality is by log-submodularity of $g$. Thus, this perturbation is strictly profitable for the sender, so every optimal outcome is single-peaked.

\subsection{Proof of Proposition \ref{p:GL}}
As shown by \citet{KG}, there exists an optimal outcome with a finite support. Suppose the support contains a strictly single-peaked triple $(y_1,x_1)$, $(y_2,x_2)$, $(y_1,x_3)$, with $x_1<x_2<x_3$, $y_1<a_2$, and $x_1<a_1<x_3$. Notice that $V(y_1,x_3)\neq-\infty$ (so $y_1\geq \sigma(x_3))$, as otherwise the sender's expected utility would be $-\infty$, which cannot be optimal. Taking into account that $\sigma(x)=x$ for $x\leq x_0$ gives $y_1 >x_0$. Thus, the first row in $R$ is zero. Consider a perturbation that shifts weights $\b_1=(x_3-x_2)\varepsilon$ and $\b_3=(x_2-x_1)\varepsilon$ on $x_1$ and $x_3$ from $y_1$ to $y_2$ and shifts weight $\b_2=(x_3-x_1)\varepsilon$ from $y_2$ to $y_1$, where $\varepsilon$ takes the maximum value such that $\b_1\leq \pi(\{(y_1,x_1\})$, $\b_2\leq \pi(\{(y_2,x_2\})$, $\b_3\leq \pi(\{(y_1,x_3\})$, so that a strictly single-peaked triple is removed. This perturbation holds fixed $y_1$ and $y_2$ and thus does not change the sender's expected utility, since the first row in $R$ is zero. Repeating such perturbations until all strictly single-peaked triples are removed (a finite number of times since $\supp (\pi)$ is finite) yields a single-dipped outcome that is weakly preferred by the sender.

\end{document}